\DeclareMathAlphabet{\mathpzc}{OT1}{pzc}{m}{it}
\def\references{\bibliography{fdarep.bib}}
\renewcommand{\theequation}{\thesection\arabic{equation}}
\newtheorem{thm}{Theorem}
\newtheorem{lem}{Lemma}
\newtheorem{prop}{Proposition}
\theoremstyle{definition}
\newcommand{\bea}{\begin{eqnarray*}}
\newcommand{\eea}{\end{eqnarray*}}
\newcommand{\ed}{\end{document}}
\newcommand{\btab}{\begin{tabular}}
\newcommand{\etab}{\end{tabular}}
\newcommand{\bfi}{\begin{figure}}
\newcommand{\efi}{\end{figure}}
\newcommand{\ben}{\begin{enumerate}}
\newcommand{\een}{\end{enumerate}}
\newcommand{\bay}{\begin{array}}
\newcommand{\eay}{\end{array}}
\def\bco{\iffalse}
\newcommand{\cp}{\citep}
\newcommand{\bc}{\begin{center}}
\newcommand{\ec}{\end{center}}
\DeclareMathOperator*{\argmin}{argmin}
\newcommand{\diag}{\,\text{diag}}
\newcommand{\tr}{\,\text{tr}}
\newcommand{\E}{\,\text{E}}
\newcommand{\Var}{\,\text{Var}}
\newcommand{\var}{\,\text{var}}
\newcommand{\Cov}{\,\text{Cov}}
\newcommand{\cov}{\,\text{cov}}
\renewcommand{\exp}{\,\text{exp}}
\newcommand*{\bbR}{\mathbb{R}}
\newcommand*{\cA}{\mathcal{A}}
\newcommand*{\cT}{\mathcal{T}}
\newcommand*{\cW}{\mathcal{W}}
\newcommand*{\cG}{\mathcal{G}}
\newcommand*{\cR}{\mathcal{R}}
\newcommand*{\cD}{\mathcal{D}}
\newcommand*{\hcD}{\hat{\cD}}
\newcommand*{\cP}{\mathcal{P}}
\newcommand*{\hcP}{\hat{\cP}}
\newcommand*{\tR}{\tilde{R}}
\newcommand*{\hC}{\hat{C}}
\newcommand*{\tcP}{\tilde{\cP}}
\newcommand*{\tZ}{\tilde{Z}}
\newcommand{\sumin}{\sum_{i=1}^n}
\newcommand{\sumjn}{\sum_{j=1}^n}
\newcommand{\sumM}{\sum_{m=1}^M}
\newcommand{\sumlm}{\sum_{l=1}^m}
\newcommand{\hf}{\hat{f}}
\newcommand{\hI}{\hat{I}}
\newcommand{\hr}{\hat{r}}
\newcommand{\hu}{\hat{u}}
\newcommand{\hbeta}{\hat{\beta}}
\newcommand{\heta}{\hat{\eta}}
\newcommand{\hsigma}{\hat{\sigma}}
\newcommand{\hmu}{\hat{\mu}}
\newcommand{\hF}{\hat{F}}
\newcommand{\hxi}{\hat{\xi}}
\newcommand{\hXi}{\hat{\Xi}}
\newcommand{\hlambda}{\hat{\lambda}}
\newcommand{\hphi}{\hat{\phi}}
\newcommand{\hbphi}{\hat{\bphi}}
\newcommand{\hGamma}{\hat{\Gamma}}
\newcommand{\tf}{\tilde{f}}
\newcommand{\tsigma}{\tilde{\sigma}}
\newcommand{\tu}{\tilde{u}}
\newcommand{\tF}{\tilde{F}}
\newcommand{\tmu}{\tilde{\mu}}
\newcommand{\teta}{\tilde{\eta}}
\newcommand{\tX}{\tilde{X}}
\newcommand{\txi}{\tilde{\xi}}
\newcommand{\bX}{\mathbf{X}}
\newcommand{\bY}{\mathbf{Y}}
\newcommand{\bU}{\mathbf{U}}
\newcommand{\bT}{\mathbf{T}}
\newcommand{\bA}{\mathbf{A}}
\newcommand{\bW}{\mathbf{W}}
\newcommand{\bmu}{\boldsymbol{\mu}}
\newcommand{\hbmu}{\hat{\boldsymbol{\mu}}}
\newcommand{\bxi}{\boldsymbol{\xi}}
\newcommand{\bphi}{\boldsymbol{\phi}}
\newcommand{\bPhi}{\boldsymbol{\Phi}}
\newcommand{\beps}{\boldsymbol{\epsilon}}
\newcommand{\bepsi}{\beps_i}
\newcommand{\bSigma}{\boldsymbol{\Sigma}}
\newcommand{\bmui}{\bmu_i}
\newcommand{\hbmui}{\hat{\bmu}_i}
\newcommand{\bYi}{\bY_i}
\newcommand{\bXi}{\bX_i}
\newcommand{\bTi}{\bT_i}
\newcommand{\be}{\mathbf{e}}
\newcommand{\bejs}{\mathbf{e}_j^*}
\newcommand{\bejsT}{\mathbf{e}_j^{*T}}
\newcommand{\hbe}{\hat{\mathbf{e}}}
\newcommand{\hbejs}{\hat{\mathbf{e}}_j^*}
\newcommand{\hbejsT}{\hat{\mathbf{e}}_j^{*T}}
\newcommand{\bLambda}{\boldsymbol{\Lambda}}
\newcommand{\hbLambda}{\hat{\bLambda}}
\newcommand{\hbSigma}{\hat{\bSigma}}
\newcommand*{\bbeta}{\boldsymbol{\beta}}
\newcommand*{\hbbeta}{\hat{\boldsymbol{\beta}}}
\newcommand*{\tbxi}{\tilde{\boldsymbol{\xi}}}
\newcommand{\tbxiiK}{\tilde{\bxi}_{iK}}
\newcommand{\hbxiiK}{\hat{\bxi}_{iK}}
\newcommand{\hbxiKs}{\hat{\bxi}_{K}^*}
\newcommand{\hbxiKsT}{\hat{\bxi}_{K}^{*T}}
\newcommand{\tbxiKs}{\tilde{\bxi}_{K}^*}
\newcommand{\tbxiKsT}{\tilde{\bxi}_{K}^{*T}}
\newcommand*{\hbPhi}{\hat{\boldsymbol{\Phi}}}
\newcommand{\ntoinf}{n \rightarrow \infty}
\newcommand{\toinf}{\rightarrow \infty}
\newcommand{\tozero}{\rightarrow 0}
\def\emptyset{\varnothing}
\newcommand{\txiik}{\txi_{ik}}
\newcommand{\xiik}{\xi_{ik}}
\DeclarePairedDelimiterX{\inner}[2]{\langle}{\rangle}{#1, #2}
\newcommand{\bphiik}{\bphi_{ik}}
\newcommand{\bPhiiK}{\bPhi_{iK}}
\newcommand{\hbPhiiK}{\hbPhi_{iK}}
\newcommand{\tinT}{t\in \cT}
\newcommand{\supt}{\sup_{t\in\cT}}
\newcommand{\almostsure}{\text{a.s.}}
\newcommand{\quadas}{\quad \almostsure}
\newcommand{\norm}[1]{\left\lVert#1\right\rVert}
\newcommand{\normtwo}[1]{\left\lVert#1\right\rVert_2}
\newcommand{\normop}[1]{\left\lVert#1\right\rVert_{op}}
\newcommand{\normInf}[1]{\left\lVert#1\right\rVert_\infty}
\newcommand*{\Xic}{X_i^c}
\newcommand*{\inv}{^{-1}}
\newcommand{\sumjK}{\sum_{j=1}^K}
\newcommand{\sumkK}{\sum_{k=1}^K}
\newcommand{\sumkinf}{\sum_{k=1}^\infty}
\newcommand{\sumkKinf}{\sum_{k=K+1}^\infty}
\newcommand{\Xii}{\Xi_{i}}
\newcommand{\hXii}{\hXi_{i}}
\newcommand{\XiiK}{\Xi_{iK}}
\newcommand{\hXiiK}{\hXi_{iK}}
\newcommand{\cGi}{\cG_i}
\newcommand{\hcGi}{\hat{\cG}_i}
\newcommand{\cGKi}{\cG_{iK}}
\newcommand{\hcGKi}{\hat{\cG}_{iK}}
\newcommand{\hcGKs}{\hat{\cG}_{K}^*}
\newcommand{\hGammaiK}{\hat{\Gamma}_{iK}}
\newcommand{\bSigmaiK}{\bSigma_{iK}}
\newcommand{\hbSigmaiK}{\hat{\bSigma}_{iK}}
\newcommand{\hDelta}{\hat{\Delta}}
\newcommand{\single}{\renewcommand{\baselinestretch}{1.2}\normalsize}
\newcommand{\double}{\renewcommand{\baselinestretch}{2}\normalsize}
\theoremstyle{plain}
\theoremstyle{remark}
\begin{document}


\renewcommand{\baselinestretch}{2}



\renewcommand{\thefootnote}{}
$\ $\par


\fontsize{12}{14pt plus.8pt minus .6pt}\selectfont \vspace{0.8pc}
\centerline{\large\bf Predictive  Distributions and the Transition from Sparse to Dense Functional Data\footnote{This research was done while Alvaro Gajardo was a PhD student at the University of California, Davis.}}
\vspace{.4cm} 
\centerline{Alvaro Gajardo$^1$, Xiongtao Dai$^2$, and Hans-Georg M\"uller$^1$} 
\vspace{.4cm} 
\centerline{\it $^1$Department of Statistics, University of California, Davis, USA}
\centerline{\it $^2$Division of Biostatistics, University of California, Berkeley, USA}
 \vspace{.55cm} \fontsize{9}{11.5pt plus.8pt minus.6pt}\selectfont


\begin{quotation}
\noindent {\it Abstract:}
Gaussian distributed sparsely sampled longitudinal data can be represented as Gaussian distributions of   their functional principal component scores, conditional on the available data.
Since these conditional distributions reflect the entire information available about these scores and therefore about the unknown trajectories that constitute the realizations of the stochastic process that generates the functional data,  they are referred to  as   predictive distributions. This  motivates a deeper investigation of    the convergence of the predicted functional principal component scores given noisy longitudinal observations towards the true but unobservable scores as the designs transition from sparse (longitudinal) to dense (functional) and of the  shrinkage of the predictive distributions towards a point mass located at the true score as the number of observations per subject increases.  
Our study is motivated by the theoretical and practically relevant  challenge that point predictions in the sparse sampling regime are not consistent for the true functional principal component scores. Our proposal is to change the  perspective towards a  focus  on predictive distributions, which  can be consistently estimated.  The emphasis is thus shifted to uncertainty quantification.  This   approach is also demonstrated  for the case of  sparsely sampled longitudinal predictors in  functional linear models where again one does not have consistent point predictors. Theoretical justification is provided through the asymptotic rates of convergence for the $2$-Wasserstein metric between true and estimated predictive distributions. The application of the predictive distribution approach for functional principal component analysis  is illustrated for longitudinal data from the Baltimore Longitudinal Study of Aging.

\vspace{9pt}
\noindent {\it Key words and phrases:}
Functional Data Analysis, Functional Principal Components, Functional Regression, Longitudinal Data, Sparse Design, Sparse-to-Dense, Uncertainty Quantification, Wasserstein Metric.
\par
\end{quotation}\par

\def\thefigure{\arabic{figure}}
\def\thetable{\arabic{table}}

\renewcommand{\theequation}{\thesection.\arabic{equation}}

\fontsize{12}{14pt plus.8pt minus .6pt}\selectfont

\section{Introduction}\label{sec: intro}

\subsection{General perspective and background} Functional Data Analysis  has found a wide range of applications 
\citep{rams:05,horv:12,mull:16}. These include longitudinal studies, where functional principal component analysis \citep{klef:73, castr:86}, a core technique of Functional Data Analysis, was  shown to play  a central role, due to its interpretability and ease of implementation.  
A key feature of many longitudinal  studies is the sparsity of the available observations per subject, which are inherently correlated and are often available at only a few irregular times and usually contaminated with measurement error.

When subjects are recorded densely over time, one can consistently recover underlying  random trajectories from the Karhunen--Lo\`eve representation. Starting from 
 the auto-covariance function of the process $X$ given by 
\begin{align} \label{auto}
	\Gamma(s, t)& = \cov(X(s), X(t)    ) = \sum_{k=1}^\infty \lambda_k \phi_k(s) \phi_k(t),\quad  s,t\in\cT ,
\end{align}
where $\lambda_1> \lambda_2 > \dots \ge 0$ are the ordered eigenvalues, satisfying   $\sum_{k=1}^\infty \lambda_k < \infty$, and $\phi_k$, $k\ge 1$, are the orthonormal eigenfunctions associated with the Hilbert--Schmidt operator $\Xi(g)=\int_\cT \Gamma(\cdot,t) g(t) dt$, $g\in L^2(\cT)$. Define eigengaps $\delta_k=\min(\lambda_{k-1}-\lambda_{k},\lambda_{k}-\lambda_{k+1})$, $k=1,2,\dots$, and denote by $\mu(t) = E(X_i(t))$ the mean function,  by $\Xic(t) = X_i(t) -\mu(t)$ the centered process, and by $\xi_{ik} = \int_\cT \Xic(t) \phi_k(t) d{t}$ the $k$th functional principal component, $k=1,2,\dots$, which satisfies  $E(\xi_{ik})=0$, $E(\xi_{ik}^2)=\lambda_k$ and $E(\xi_{ik}\xi_{il})=0$ for $k,l=1,2,\dots,\, l\neq k$. 
Trajectories can then be represented through the Karhunen--Lo\`eve decomposition, also referred to as functional principal component analysis (FPCA),
\begin{align} \label{kl}
X_i(t)=\mu(t)+\sum_{k=1}^\infty \xi_{ik}\phi_k(t), \end{align} where in practice it is often useful to consider a truncated expansion using the first $K>0$ components that explain most of the variation, for example through the fraction of variance explained or FVE criterion \citep{mull:05:4}. 

 A  common approach is to employ Riemann sums to recover the integrals that represent the projections of the trajectories on the eigenfunctions of the auto-covariance operator of the underlying stochastic process. These integrals  correspond to the functional principal components  and their approximation by Riemann sums is known to  improve  as the number of observations per subject increases \citep{mull:05:8}. However, when functional data are sparsely observed, which means that only a finite number of observations are available for each subject, this approximation is not feasible. 
To address this challenge,  \cite{mull:05:4} introduced the Principal Analysis through Conditional Expectation (PACE) approach,  which  aims to recover the underlying trajectories by targeting the best  predictions  conditional on the observations under Gaussianity assumptions and otherwise the best linear predictor. A related  approach has been Gaussian Processes \cp{shi:11, shi:14}. Best predictions of the functional principal components can be consistently estimated based on consistent nonparametric estimates of  mean and covariance functions that are obtained by pooling all observations across subjects, borrowing strength from the entire sample. 
While  these best  predictions  are unbiased, they do not lead to consistent trajectory recovery \citep{mull:05:4}.

A second scenario where consistent predictions are unavailable in the sparse case when one has only a finite number of observations per subject is the  Functional Linear Regression Model  for the relationship between a scalar or functional response $Y$  and functional predictors $X(t),\, t \in \cT$,  a compact interval \citep{rams:05,hall:07:1,shi:11,
knei:16,chio:16}, 
\begin{equation}
	E[Y\vert X]=\mu_Y + \int_\mathcal{T} \beta(t) X^c(t) dt. \label{flm}
\end{equation}
Here $\mu_Y=E(Y)$, $X^c(t)=X(t)-E(X(t))$ and the slope function $\beta$ lies in $L^2(\cT)$. 
The unavailability of consistent predictions in the functional linear model is a consequence of the fact that the integral appearing in \eqref{flm} cannot be consistently approximated in the sparse sampling case, even in the case where the slope function $\beta$ is known.
In contrast to the prediction task,  a consistent estimate of the slope function $\beta$ in model \eqref{flm} can be obtained through consistent cross-covariance estimation by pooling the sparse data \citep{mull:05:5}. 

\bco

even in the presence of sparse observations. To achieve this one  can use the fact that the linear model structure allows to express the slope in terms of the cross covariance and covariance functions of the predictor process $X$ and the response, which are quantities that can be consistently estimated under mild assumptions \citep{mull:05:5}, even in the sparse design case.  Alternative multiple imputation methods based on conditioning on both the predictor observations and the response $Y$ have also  been explored  \citep{petr:18}, and also rely on cross-covariance estimation, for which consistent estimation is available.  However, these consistent estimates of the slope parameter function $\beta$  do not lead to consistent predictions, i.e. consistent 
estimates of $E[Y\vert X]$, as explained above.
 
 \fi
 
 The behavior of estimating mean functions and covariance functions, as well as the associated eigenvalues and eigenfunctions as per \eqref{auto} in dense and sparse cases   has been the subject of 
 numerous studies \citep{cai:06,hall:07:1,chio:16,knei:16,hall:06:1,mull:10}.   Specifically, the effect  of the transition from sparse to dense designs on the convergence rates for the estimation of mean functions, auto-covariance functions and cross-covariance functions and related phase transitions have been  studied in detail  \cp{li:10, zhan:16,zhan:28}.
 However, there are only very few studies about the sparse to dense behavior of estimates of the principal components $\xi$ in FPCA \eqref{kl}  \citep{mull:05:8,dai:16:1}
 and we are not aware of any study about  the behavior of predictions in the FLM \eqref{flm}. 

\subsection{Innovation and outline of proposed approach} In this paper we  address the challenge of obtaining consistent predictions for trajectories or responses 
when one has sparse data in the context of functional principal component analysis (FPCA)  \eqref{auto} or the functional linear model (FLM)  \eqref{flm}. Specifically, this work includes  (1) The study of the behavior of functional principal components predicted from data when the data sampling transitions from the sparse to the dense case, which  complements previous studies on the behavior of mean and covariance function estimates under this transition;  
(2) the idea to replace the inherently inconsistent point estimates of functional principal components and of responses in functional linear models under sparse sampling  by consistent estimates of predictive distributions that correspond to the conditional distributions of outcomes of interest (functional trajectories or predicted responses); these distributions  indicate where the quantities to be predicted are likely situated based on the available data without providing a precise location; of interest is the shrinkage of these predictive distributions in the transition from sparse to dense designs; (3) consistent estimates for 
predictive distributions and their convergence to the true predictive distributions. 

Our proposal is to rephrase the prediction problem for  trajectories in the FPCA case and of scalar outcomes in the FLM case by shifting the target from point prediction, i.e., the problem of predicting conditional expectations for which consistency is unachievable,  to the problem of estimating a  predictive distribution, i.e., a conditional distribution rather than a conditional expectation.  This new perspective leads to a target for which consistent estimation is indeed feasible.      
To  study the behavior of predictive distributions under Gaussian assumptions, it proves  useful to consider a map from sparse and irregularly sampled data  to a multivariate Gaussian predictive distribution for a vector of truncated functional principal components and to investigate  its  behavior as the number of observations per subject increases. 

One of our goals is to quantify the accompanying  shrinkage of the conditional predictive distributions given the data and their convergence towards a point mass located at the true but unobserved functional component scores.
To predict  the expected response $E[Y\vert X]$ in model  \eqref{flm} in the sparse case,  a feasible approach is to construct  predictive distributions for the  expected response given the information available for a subject. 
These predictive distributions  can be consistently estimated in both the Wasserstein and Kolmogorov metric  \citep{vill:03} and we adopt a  Wasserstein discrepancy measure  to assess the predictability of the response by the predictive distribution. This measure is interpretable, can be consistently recovered under mild assumptions and is supported in simulations for various sparse designs and noise levels.

The paper is structured as follows: Preliminary results are in Section \ref{S:setup}, where  the convergence of the best predicted FPCs towards the true unobserved FPCs is established when transitioning from sparse to dense data. Crucially, this study does not require distributional assumptions. The concept of representing sparse functional/longitudinal data by predictive distributions for the FPCs in the case of Gaussian processes and our main results are the theme of Section \ref{S:preliminariesGaussian}, followed by an analysis of the shrinkage of the predictive distributions towards a point mass
located at the true scores. Extensions to the shrinkage of the entire functional predictive distribution
in the $2$-Wasserstein metric are also presented in Section \ref{S:preliminariesGaussian}. This is  followed by a study of the prediction of scalar responses $Y$ in a Functional Linear Model \eqref{flm} when predictors are sparsely
observed in Section \ref{S:FLM}, extending  the concept of predictive distributions for the predictable part of the response
and assessing  the predictability of $Y$ by the predictive distribution through a Wasserstein discrepancy measure. Asymptotic results for the consistent estimation of both the predictive distributions and Wasserstein discrepancy in the sparse case are presented and a study of the  behavior of the predictive distribution in the transition from sparse to dense sampling is also included.
This is followed by simulation results in Section \ref{S:Simulations} to demonstrate the finite sample performance of the proposed methods. Finally, data illustrations for the proposed predictive distributions  are presented
in Section \ref{S:DataApplication}. The paper concludes with a discussion of the new predictive perspective in Section \ref{S:DataApplication}.  Proofs and auxiliary results can be found in the Supplement.

\section{Convergence of Predicted Functional Principal Components When Transitioning  from Sparse to Dense Sampling}\label{S:setup}
Assume that for each individual $i = 1,\dots, n$,  there is an underlying unobserved function $X_i(t)$, where the functions $X_i$ are i.i.d.\ realizations of a $L^2$-stochastic process $X(t), \, t\in \cT$, and $\cT$ is a closed and bounded interval on the real line. Without loss of generality let $\cT=[0, 1]$. Sparsely sampled and error-contaminated observations $\tX_{ij} = X_i(T_{ij}) + \epsilon_{ij}$, $j = 1, \dots, n_i$, $n_i \le N_0$ for a finite $N_0$ are obtained at random times $T_{ij}\in \cT$ that are distributed according to a continuous smooth distribution  $F_T$, and write  $\bTi=(T_{i1},\dots,T_{in_i})^T$ to denote  the vector of sampling time points for the $i$th subject. 
The following condition is required: 
\begin{enumerate}[label=(X\arabic*)]
	\item \label{a:fbelow} $\{ T_{ij}: i=1, \dots, n,\, j=1, \dots, n_i \}$ are i.i.d.\ copies of a random variable $T$ defined on $\cT$, and $n_i$ are non-random. The density $f(\cdot)$ of $T$ is bounded below, $\min_{\tinT} f(t)\ge  m_f  >0$.
\end{enumerate}
Assumption \ref{a:fbelow} is standard  \citep{zhan:16,dai:16:1} to ensure there are no systematic sampling gaps. The measurement errors $\epsilon_{ij}$ are assumed to be  i.i.d.\ with mean 0 and variance $\sigma^2$, and independent of the underlying process $X_i(\cdot)$. 
The proposed method is motivated by Gaussian distributions, but as we will show shortly, the distributional assumption can be relaxed in our key results in this section. 
Throughout, our analysis is conditional on the random number of observations per subject $n_i$ \citep{zhan:16}. 

\bco

Denote the auto-covariance function of the process $X$ by
\begin{align*}
	\Gamma(s, t)& = \cov(X(s), X(t)    ) = \sum_{k=1}^\infty \lambda_k \phi_k(s) \phi_k(t),\quad  s,t\in\cT ,
\end{align*}
where $\lambda_1> \lambda_2 > \dots \ge 0$ are the ordered eigenvalues, where  $\sum_{k=1}^\infty \lambda_k < \infty$, and $\phi_k$, $k\ge 1$, are the orthonormal eigenfunctions associated with the Hilbert--Schmidt operator $\Xi(g)=\int_\cT \Gamma(\cdot,t) g(t) dt$, $g\in L^2(\cT)$. Define eigengaps $\delta_k=\min(\lambda_{k-1}-\lambda_{k},\lambda_{k}-\lambda_{k+1})$, $k=1,2,\dots$, and denote by $\mu(t) = E(X_i(t))$ the mean function,  by $\Xic(t) = X_i(t) -\mu(t)$ the centered process, and by $\xi_{ik} = \int_\cT \Xic(t) \phi_k(t) d{t}$ the $k$th functional principal component, $k=1,2,\dots$, which satisfies  $E(\xi_{ik})=0$, $E(\xi_{ik}^2)=\lambda_k$ and $E(\xi_{ik}\xi_{il})=0$ for $k,l=1,2,\dots,\, l\neq k$. This can be interpreted as an extension of classical  principal component analysis \cp{mull:00:6, hsin:15}.

Trajectories can then be represented through the Karhunen--Lo\`eve decomposition $X_i(t)=\mu(t)+\sum_{k=1}^\infty \xi_{ik}\phi_k(t)$, where in practice it is often useful to consider a truncated expansion using the first $K>0$ components that explain most of the variation, for example through the fraction of variance explained or FVE criterion \citep{mull:05:4}. \fi

Denoting by $\bTi=(T_{i1},\dots,T_{in_i})^T$ the sampling time points for the $i$th subject and  
writing $\bXi = (\tX_{i1}, \dots, \tX_{in_i})^T$ and conditional on $\bTi$, it follows from \eqref{auto} and \eqref{kl} that  $\cov(\tX_{ij}, \xi_{ik}\vert \bTi) = \lambda_k\phi_k(T_{ij})$, $j = 1,\dots, n_i, \, k=1,\dots, K$. Define
\[
\bPhiiK=
\begin{pmatrix}
	\phi_1(T_{i1})&\dots&\phi_K(T_{i1})\\
	\vdots & \vdots & \vdots \\
	\phi_1(T_{in_i})&\dots&\phi_K(T_{in_i})
\end{pmatrix},
\]
$\bmu_i = \E(\bXi\vert \bTi)=(\mu(T_{i1}),\dots,\mu(T_{in_i}))^T$ and the $n_i\times n_i$ conditional covariance matrix $\bSigma_i = \cov(\bXi\vert \bTi)$, for which the $(j,l)$ entry is given by $\sigma^2\delta_{jl} + \Gamma(T_{ij}, T_{il})$, where $\delta_{jl} = 1$ if $j = l$ and 0 otherwise. To predict the functional principal components  $\bxi_{iK} = (\xi_{i1}, \xi_{i2}, \dots, \xi_{iK})^T$, we utilize  best linear unbiased predictors \citep{rice:01} of $\bxi_{iK}$ given $\bXi$ and $\bT_i$, which are given by \begin{equation} \label{blp} \tbxiiK =\bLambda_K\bPhiiK^T\bSigma_i^{-1}(\bXi-\bmu_i) \,\text{ with }\,  \bLambda_K = \diag(\lambda_1, \dots, \lambda_K). \end{equation}
As  the number of observations for an individual increases as the functional sampling gets denser, the predicted functional principal components  $\tbxiiK$ converge to  their targets  $\bxi_{iK}$  under 
the following assumptions.
\begin{enumerate}[label=(X\arabic*)]
	\setcounter{enumi}{1}
	\item \label{a:Diff} The process $X(t)$ is continuously differentiable a.s.\  for  $t\in\cT$.
	\item \label{a:GammaDiff} $\partial\Gamma(s, t)/\partial s $ exists and is continuous, for $s,t\in\cT$.
\end{enumerate}

Assumptions \ref{a:Diff}--\ref{a:GammaDiff} are requirements for the smoothness of the original process and the covariance function, respectively.
The following result does not require Gaussian assumptions.
\begin{prop} \label{thm:xiEst}
	Suppose that \ref{a:fbelow}--\ref{a:GammaDiff} hold and the number of observations $n_i$ for the $i$th subject satisfies  $n_i=m\to\infty$, $i=1,\dots,n$. Then, for any fixed $K\ge 1$, $k=1,\dots, K$, and $i=1,\dots,n$, as $m \rightarrow \infty$, 
	\begin{align}
		|\txiik - \xiik| & = O_p(m^{-1/2}). \label{xi-tilde} 
	\end{align}
\end{prop}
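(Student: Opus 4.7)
The plan is to exploit the optimality of the BLUP: since $\tbxiiK = \bLambda_K \bPhiiK^T \bSigma_i^{-1}(\bXi - \bmu_i)$ is, conditionally on $\bT_i$, the linear combination of the centered observations with minimum mean-squared prediction error, it suffices to exhibit a single linear predictor $\hat{\xi}^*_{ik}$ of the form $\sumjm a_j(\bT_i)(\tX_{ij} - \mu(T_{ij}))$ whose unconditional MSE is $O(1/m)$. That bound then transfers to $\txiik$ coordinatewise, and Markov's inequality yields $|\txiik - \xi_{ik}| = O_p(m^{-1/2})$.

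For the benchmark I would use the inverse-density-weighted sample-average estimator
\begin{equation*}
\hat{\xi}^*_{ik} = \frac{1}{m}\sumjm \frac{(\tX_{ij} - \mu(T_{ij}))\phi_k(T_{ij})}{f(T_{ij})},
\end{equation*}
motivated by the identity $\E[\Xic(T)\phi_k(T)/f(T) \mid X_i] = \int_\cT \Xic(t)\phi_k(t)\,dt = \xi_{ik}$ when $T\sim F_T$. Decomposing $\hat{\xi}^*_{ik} - \xi_{ik} = A_{im} + B_{im}$, with signal part $A_{im} = m^{-1}\sumjm \Xic(T_{ij})\phi_k(T_{ij})/f(T_{ij}) - \xi_{ik}$ and noise part $B_{im} = m^{-1}\sumjm \epsilon_{ij}\phi_k(T_{ij})/f(T_{ij})$, standard i.i.d.\ second-moment calculations give $\E[A_{im}^2\mid X_i] = m^{-1}\var(\Xic(T)\phi_k(T)/f(T)\mid X_i)$ and $\E[B_{im}^2] = \sigma^2 m^{-1}\int_\cT \phi_k^2(t)/f(t)\,dt$. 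Using $f\ge m_f>0$ from \ref{a:fbelow}, boundedness of $\phi_k$ (continuous on the compact $\cT$ via the continuity of $\Gamma$ implied by \ref{a:GammaDiff}), and $\int_\cT \Gamma(t,t)\,dt < \infty$, both terms are $O(1/m)$. The cross term vanishes because $\bepsi$ is independent of $(X_i,\bT_i)$, hence $\E[(\hat{\xi}^*_{ik} - \xi_{ik})^2] = O(1/m)$.

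The main subtlety to verify is that BLUP optimality applies when the competing weights are themselves random: since $a_j(\bT_i) = \phi_k(T_{ij})/(m f(T_{ij}))$ depends only on $\bT_i$ and not on $\bXi$, the inequality $\E[(\txiik - \xi_{ik})^2\mid \bT_i] \le \E[(\hat{\xi}^*_{ik} - \xi_{ik})^2\mid \bT_i]$ holds almost surely in $\bT_i$ and is preserved upon taking expectations, which is what we need. Beyond that, the argument reduces to a Monte Carlo variance calculation and, consistent with the wording of the proposition, does not invoke Gaussianity; assumptions \ref{a:Diff}--\ref{a:GammaDiff} enter only to guarantee continuity (hence boundedness on $\cT$) of $\phi_k$ and of $\Gamma(t,t)$. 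An alternative that directly analyzes $\tbxiiK - \bxi_{iK}$ through the explicit matrix form would require controlling $\bLambda_K \bPhiiK^T \bSigma_i^{-1}\bPhiiK$ and the Bessel remainder from the untruncated tail, which is the heavier route that the BLUP-optimality shortcut sidesteps.
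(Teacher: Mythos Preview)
Your argument is correct and takes a genuinely different route from the paper. The paper's proof works \emph{directly} on the BLUP formula: it introduces quadrature weights $\bW=\diag(w_l)$ from the left-endpoint rule on the order statistics of $\{T_{ij}\}$, defines the discretization error $\be_k = \int_\cT \Gamma(\bTi,t)\phi_k(t)\,dt - \bSigma_i\bW\bphi_{ik}$, and expands $\lambda_k\bphi_{ik}^T\bSigma_i^{-1}(\bXi-\bmu_i)$ into three terms, $\bphi_{ik}^T\bW(\bYi-\bmui)+\bphi_{ik}^T\bW\bepsi+\be_k^T\bSigma_i^{-1}(\bXi-\bmui)$, each controlled via gap bounds $E(w_l^p)=O(m^{-p})$ and $\|\bSigma_i^{-1/2}\|_{\text{op},2}\le\sigma^{-1}$. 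Your BLUP-optimality shortcut bypasses all of this: you dominate the conditional MSE of $\txiik$ by that of the inverse-density Monte Carlo estimator, whose i.i.d.\ variance is immediately $O(1/m)$ under \ref{a:fbelow}. Your route is shorter and more elementary, and in fact uses \ref{a:Diff} only incidentally (continuity of $\Gamma$ suffices for you, whereas the paper needs $X$ continuously differentiable for its Riemann-sum error). The trade-off is that the paper's quadrature machinery --- the objects $\bW$, $\be_k$, and the identity $\lambda_k\bphi_{ik}=\bSigma_i\bW\bphi_{ik}+\be_k$ --- is reused verbatim in the proofs of Proposition~\ref{thm:xiEst2} and Theorems~\ref{thm:xiEstHat}, \ref{thm:functionalShrinkage}, and \ref{thm:functionalShrinkage_estimated}, where one must analyze $\bSigmaiK$ and the estimated scores explicitly and the optimality inequality no longer applies; your argument is self-contained for this proposition but does not extend to those results. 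One small point: your identity $\E[\Xic(T)\phi_k(T)/f(T)\mid X_i]=\xi_{ik}$ relies on $T\perp X$, which is implicit in the paper's setup (and stated explicitly as \ref{a:indep}), though not literally part of \ref{a:fbelow}.
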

Note that this result is for increasingly dense sampling across all subjects and indicates how this leads to better approximation of $\xiik$. The rate of convergence is the same as derived previously in \cite{dai:16:1}  for the functional principal components  of the derivative process $X'(t)$ under Gaussian assumptions. This previous analysis utilized  convergence results for nonparametric posterior distributions \citep{shen:02} that are tied to the Gaussian assumption, whereas this paper presents  a  novel direct approach that does not require distributional assumptions on $X$. 

Next, we study scenarios where the unknown population  quantities are estimated from  the available data, and the subjects are assumed to be observed either on dense designs, with $ n_i=m\to\infty$, or on sparse designs, 
with  $2 \le n_i\le N_0<\infty$ for a fixed number $N_0<\infty$. 
Consider   sequences   
\begin{align}
	a_{n1}&=h_\mu^2 + \left\{\frac{\log(n)}{nh_\mu}\right\}^{1/2}, \quad b_{n1} = h_G^2 + \left\{\frac{\log(n)}{nh_G^2}\right\}^{1/2} ,\nonumber \\
	a_{n2}&=h_\mu^2 + \left\{\left(1 + \frac{1}{mh_\mu}\right)\frac{\log(n)}{n}\right\}^{1/2}, \quad b_{n2} = h_G^2 + \left(1 + \frac{1}{mh_G}\right)\left\{\frac{\log(n)}{n}\right\}^{1/2} \label{eq:bwMuCov}
\end{align}
with bandwidths $h_\mu$ and $h_G$.  
For sparse designs, define sequences  $a_n=a_{n1}$ and $b_n=b_{n1}$, while for dense designs these sequences will be defined as    $a_n=a_{n2}$ and $b_n=b_{n2}$. Note  that for dense designs the rates $a_n$ and $b_n$ also depend on $m$. 

The estimation of mean function $\mu$ and covariance surface $\Gamma$ utilizes local linear smoothers,  in analogy  to \cite{zhan:16}, with further details in the Supplement Section~\ref{AA1}. For the covariance smoothing step  $n_i\ge 2$ is assumed throughout. 
The estimation of remaining population quantities such as $\sigma^2$ and eigenpairs $(\lambda_k,\phi_k)$, $k\ge 1$, is carried out analogously as in equations $(2)$ and $(3)$ in \cite{mull:05:4}.
Denote by $\hXi$ the estimated counterpart of the Hilbert--Schmidt integral operator $\Xi$ with eigenpairs $(\hlambda_k, \hphi_k)$ such that $\langle \hphi_k,\phi_k\rangle_{L^2}\ge 0$, where $\langle \cdot,\cdot \rangle_{L^2}$ denotes the $L^2$ inner product and $k\ge 1$. 

Consider the estimated functional principal components  in \eqref{kl} for a new independent subject $i^*$ that is not part of the training data sample ($i=1,\dots,n$) and  for which measurements are available over a dense but possibly irregular grid. Then as the design gets denser,  these estimates converge to the true functional principal components, irrespective of whether  the subjects in the training set are observed under  sparse or  dense designs. Specifically, for  a realization $X^*$ of the process $X$ that is  independent of $X_1,\dots,X_n$, 
assume  one has  measurements of the process $X^*$ at times $T_j^*$ ($j=1,\dots,m^*)$ with added noise $\bX^*=(X^*(T_1^*)+\epsilon_1^*,\dots,X^*(T_{m^*}^*)+\epsilon_{m^*}^*)$. Here $m^*\to\infty$ and the  errors $\epsilon_j^*$ have mean zero and variance $\sigma^2$ and are independent of all other random quantities. 
Consider  
estimates $\hxi_{k}^*=\hlambda_k \hphi_{k}(\bT^*)^T \hbSigma^{*-1}(\bX^*-\hat{\bmu}^*)$, where $\hat{\bmu}^*=\hat{\mu}(\bT^*):=(\hat{\mu}(T_1^{*}),\dots,\hat{\mu}(T_{m^*}^{*}))^T$, $\hphi_{k}(\bT^*)=(\hphi_{k}(T_1^{*}),  \dots,\hphi_{k}(T_{m^{*}}^{*})   )^T$, $\bT^*=(T_1^*,\dots,T_{m^*}^*)^T$, and $\hbSigma^{*-1}$ is analogous to $\bSigma_i^{-1}$ but replacing the $T_{ij}$ with $T_{j}^*$ and the population quantities by their estimated counterparts. A requirement is that 
\begin{enumerate}[label=(B\arabic*)]
	\item \label{a:eigendecay} The eigenvalues $\lambda_1 > \lambda_2 > \dots > 0$ are all distinct. 
\end{enumerate}
The following result does not require Gaussianity of $X$.
\begin{thm}  \label{thm:xiEstHat}
	Suppose that assumptions \ref{a:Diff}, \ref{a:eigendecay} and \ref{a:K}--\ref{a:Ubeta} in the Supplement Section~\ref{AA1} are satisfied. Consider either a sparse design setting where $2 \le  n_i\le N_0<\infty$ or a dense design where  $n_i=m\to\infty$, $i=1,\dots,n$. Setting  $a_n=a_{n1}$ and $b_n=b_{n1}$ for the sparse case, and $a_n=a_{n2}$ and $b_n=b_{n2}$ for the dense case, for  a new independent subject $i^*$ and $k\ge 1$, if $m^{*}(a_n+b_n)=o(1)$ as $\ntoinf$, where $m^*=m^*(n)\to\infty$, 
	\begin{equation*}
		|\hxi_{k}^* - \xi_{k}^*| = O_p(m^{*-1/2}+m^{*} (a_n+b_n)).	
	\end{equation*}
\end{thm}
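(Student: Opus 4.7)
The plan is to split via the triangle inequality against the oracle BLUP $\txi_k^{*}=\lambda_k\phi_k(\bT^*)^T\bSigma^{*-1}(\bX^*-\bmu^*)$,
\begin{equation*}
|\hxi_k^{*}-\xi_k^{*}|\le |\hxi_k^{*}-\txi_k^{*}|+|\txi_k^{*}-\xi_k^{*}|.
\end{equation*}
Proposition~\ref{thm:xiEst} applied to the new subject $i^*$ with $m^*$ observations in place of $m$ immediately furnishes $|\txi_k^*-\xi_k^*|=O_p(m^{*-1/2})$, which accounts for the first half of the target rate. The remaining work is to bound the plug-in error $|\hxi_k^*-\txi_k^*|$ at rate $O_p(m^{*}(a_n+b_n))$.

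To that end I would telescope through four nuisance-parameter swaps,
\begin{align*}
\hxi_k^*-\txi_k^*&= (\hlambda_k-\lambda_k)\,\hphi_k(\bT^*)^T\hbSigma^{*-1}(\bX^*-\hbmu^*)\\
&\quad+\lambda_k\,(\hphi_k-\phi_k)(\bT^*)^T\hbSigma^{*-1}(\bX^*-\hbmu^*)\\
&\quad+\lambda_k\,\phi_k(\bT^*)^T(\hbSigma^{*-1}-\bSigma^{*-1})(\bX^*-\hbmu^*)\\
&\quad+\lambda_k\,\phi_k(\bT^*)^T\bSigma^{*-1}(\bmu^*-\hbmu^*),
\end{align*}
and bound each summand separately. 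Under the appendix assumptions, local-linear theory in the style of \cite{zhan:16} supplies the uniform rates $\supt|\hmu(t)-\mu(t)|=O_p(a_n)$ and $\supst|\hGamma(s,t)-\Gamma(s,t)|=O_p(b_n)$ in both the sparse and dense regimes, and Hilbert--Schmidt perturbation under the distinct-eigenvalue hypothesis \ref{a:eigendecay} yields $|\hlambda_k-\lambda_k|=O_p(b_n)$ and $\supt|\hphi_k(t)-\phi_k(t)|=O_p(b_n)$. Combined with the routine bookkeeping $\|\phi_k(\bT^*)\|_2=O(\sqrt{m^*})$, $\|\bSigma^{*-1}\|_{op}\le\sigma^{-2}$, $\|\bX^*-\hbmu^*\|_2=O_p(\sqrt{m^*})$, $\|(\hphi_k-\phi_k)(\bT^*)\|_2=O_p(\sqrt{m^*}\,b_n)$ and $\|\bmu^*-\hbmu^*\|_2=O_p(\sqrt{m^*}\,a_n)$, a single application of Cauchy--Schwarz bounds each of the first, second and fourth summands by $O_p(m^*(a_n+b_n))$.

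The main obstacle is the third summand, which controls the inverse-covariance perturbation. Writing $\hbSigma^{*-1}-\bSigma^{*-1}=\bSigma^{*-1}(\bSigma^*-\hbSigma^*)\hbSigma^{*-1}$ and using the entrywise bound $\sup_{j,l}|(\bSigma^*-\hbSigma^*)_{jl}|=O_p(b_n)$, the crude operator-norm estimate $\|\bSigma^*-\hbSigma^*\|_{op}\le m^* b_n$ combined with Cauchy--Schwarz on the flanking vectors of $\ell_2$-size $\sqrt{m^*}$ would overshoot the target by a factor of $m^*$. The fix I would use is to exploit the contractive nature of the BLUP weight vector: via the Sherman--Morrison--Woodbury decomposition $\bSigma^{*-1}=\sigma^{-2}\bigl(I-\Gamma^*(\sigma^2 I+\Gamma^*)^{-1}\bigr)$ with $\Gamma^*=\sum_l\lambda_l\phi_l(\bT^*)\phi_l(\bT^*)^T$, together with the near-orthogonality $m^{*-1}\phi_l(\bT^*)^T\phi_k(\bT^*)\to\int_\cT\phi_l\phi_k\,f\,dt$ (a consequence of the law of large numbers under \ref{a:fbelow}), one obtains the sharper estimate $\|\bSigma^{*-1}\lambda_k\phi_k(\bT^*)\|_2=O_p(m^{*-1/2})$ rather than $O_p(\sqrt{m^*})$. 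Decomposing $\bX^*-\hbmu^*$ into Karhunen--Lo\`eve signal and noise components and using this refined norm on the left factor brings the third summand down to $O_p(m^* b_n)$ as well. Summing the four contributions and combining with the Proposition~\ref{thm:xiEst} term yields the claimed rate $O_p(m^{*-1/2}+m^{*}(a_n+b_n))$.
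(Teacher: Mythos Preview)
Your overall architecture is sound and matches the paper: split through the oracle predictor $\txi_k^*$, invoke Proposition~\ref{thm:xiEst} for $|\txi_k^*-\xi_k^*|=O_p(m^{*-1/2})$, and bound the plug-in error. Terms 1, 2 and 4 of your telescope go through as you say.

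The gap is in your treatment of the third summand. The claim $\|\bSigma^{*-1}\lambda_k\phi_k(\bT^*)\|_2=O_p(m^{*-1/2})$ is true, but your proposed justification via Sherman--Morrison--Woodbury and ``near-orthogonality'' does not deliver it. The law of large numbers gives $m^{*-1}\phi_l(\bT^*)^T\phi_k(\bT^*)\to\int_\cT\phi_l(t)\phi_k(t)f(t)\,dt$, and this integral is \emph{not} $\delta_{kl}$ unless the design density $f$ is uniform, which assumption~\ref{a:T} does not impose. Consequently $\phi_k(\bT^*)$ is not an approximate eigenvector of $\Gamma^*$, and the SMW cancellation you are relying on does not take place in general. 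The identity you wrote for $\bSigma^{*-1}$ is also tautological and does not by itself produce the needed reduction.

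The paper resolves this with the quadrature identity that already drives Proposition~\ref{thm:xiEst}: writing $\lambda_k\phi_k(\bT^*)=\bSigma^*\bW^*\phi_k(\bT^*)+\be_k^*$ with left-endpoint weights $\bW^*$ and quadrature error $\be_k^*$, one gets $\bSigma^{*-1}\lambda_k\phi_k(\bT^*)=\bW^*\phi_k(\bT^*)+\bSigma^{*-1}\be_k^*$, and both pieces have $\ell_2$-norm $O_p(m^{*-1/2})$ directly. The weights $\bW^*$ are precisely what restore Lebesgue (rather than $f$-weighted) integration. In fact the paper applies this identity to both $\hxi_k^*$ and $\txi_k^*$ from the outset, obtaining the decomposition
\[
\hxi_k^*-\txi_k^*=\hbe_k^{*T}\hbSigma^{*-1}(\bX^*-\hbmu^*)+\hbphi_k^{*T}\bW^*(\bX^*-\hbmu^*)-\be_k^{*T}\bSigma^{*-1}(\bX^*-\bmu^*)-\bphi_k^{*T}\bW^*(\bX^*-\bmu^*),
\]
which sidesteps the inverse-perturbation term $\hbSigma^{*-1}-\bSigma^{*-1}$ altogether; the $\hbSigma^{*-1}$ only ever appears multiplied by the small vector $\hbe_k^*$. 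If you plug the quadrature identity into your third summand you recover the paper's bound, but as written your SMW route does not close the gap.
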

{ This  result concerns  the transition of sparse to dense sampling specifically for a new subject. A related result was obtained previously in \cite{dai:16:1} in the Gaussian case. The present result is more general as it does not require Gaussian or any other distributional assumptions.
 
 Write for two sequences $\theta_n$ and $\gamma_n$ that  $\theta_n\asymp \gamma_n$ whenever $c_1\theta_n\le \gamma_n \le c_2 \theta_n$ holds for some constants $c_1,c_2>0$ as $\ntoinf$. 
For dense designs, if the number of individual observations $m=m(n)$ satisfies $m\asymp (n/\log n)^{q}$ for some $q\in [1/4,\infty)$, $h_\mu\asymp (\log n/n)^{1/4}$, $h_G\asymp (\log n/n)^{\rho}$ with $\rho \in (0,1/4)$, $\alpha$ defined in \ref{a:Ualpha} satisfies $\alpha>4$, $\beta_\gamma$ defined in \ref{a:Ubeta} is such that $\beta_\gamma> 2/(1-4\rho)$, where the assumptions are introduced in the Supplement Section~\ref{AA1}, then $a_n+b_n\asymp(\log n / n)^{2\rho}$. 

A larger value of $\rho \in (0,1/4)$ along with the existence of a suitable $\beta_\gamma=\beta_\gamma(\rho)$ as before leads to a rate $a_n+b_n$ closer to $(\log n / n)^{1/2}$. Here the choice $0<\rho<1/4$, which entails the rate for the covariance smoothing bandwidth $h_G$, is required in order to satisfy condition \ref{a:Ubeta}.
 If $m^*\asymp (a_n+b_n)^{-\rho_1}$ for some $\rho_1\in(0,1)$,  the condition $m^{*}(a_n+b_n)=o(1)$  is satisfied and the rate in Theorem \ref{thm:xiEstHat} becomes $O_p((\log n / n)^{\rho_1 \rho} + (\log n / n)^{2\rho (1-\rho_1)} )$.  Hence, larger values of $\rho\in(0,1/4)$ along with the optimal choice $\rho_1=2/3$  lead to an optimal rate arbitrarily close to $O_p((\log n / n)^{1/6})$. 

For  sparse designs, choosing bandwidths $h_\mu\asymp (\log n/n)^{1/5}$ and $h_G\asymp (\log n/n)^{1/6}$ leads to $a_n+b_n \asymp (\log n/n)^{1/3}$. Taking $m^*\asymp (a_n+b_n)^{-\rho_1}$ for some $\rho_1\in(0,1)$, the condition $m^{*}(a_n+b_n)=o(1)$ is satisfied and the rate in Theorem \ref{thm:xiEstHat} becomes $O_p((\log n / n)^{\rho_1/6} + (\log n / n)^{(1-\rho_1)/3})$. The optimal rate  then becomes $O_p((\log n / n)^{1/9})$, which is achieved when $\rho_1=2/3$.

\section{Predictive Distributions for Gaussian Processes}\label{S:preliminariesGaussian}

Using Gaussianity, for any positive integer $K$, $\bxi_{iK} = (\xi_{i1}, \xi_{i2}, \dots, \xi_{iK})^T \sim \textit{N}(0, \bLambda_K)$, where as above $\bLambda_K = \diag(\lambda_1, \dots, \lambda_K)$ and $\lambda_k=E(\xi_{ik}^2)$.  Conditional on $\bTi$, it follows that $\bxi_{iK}$ and $\bXi$ are jointly normal 
\[
\left( \begin{array}{c}
	\bXi \\ 
	\bxi_{iK}
\end{array}  \right) 
\sim N\left( \left(
\begin{array}{c}
	\bmu_i \\ 
	0
\end{array}  
\right), \left(
\begin{array}{cc}
	\bSigma_i & \bPhiiK\bLambda_K \\ 
	\bLambda_K\bPhiiK^T & \bLambda_K
\end{array}  
\right) \right).
\]
By a well-known property of multivariate normal distributions (see e.g. \cite{mard:79}),
\begin{equation} \label{eq:cond}
	\bxi_{iK}|\bXi,\bTi \sim N_K(\tbxiiK , \bSigmaiK),
\end{equation}
where $\tbxiiK = E(\bxi_{iK}|\bXi,\bTi)$ given in \eqref{blp} 
is the best linear unbiased predictor of $\bxi_{iK}$ 
 and $\bSigmaiK = \bLambda_K - \bLambda_K \bPhiiK^T \bSigma_i^{-1} \bPhiiK\bLambda_K$ is the conditional variance. The relation in \eqref{eq:cond} was previously exploited, for example in \cite{mull:05:4}, to construct
simultaneous confidence bands for estimated trajectories; compare also \cite{shi:14}. We refer to the conditional distribution in \eqref{eq:cond} as  {\it $K$-truncated predictive distribution} since it is a distributional representation  for the subject's truncated true but unobserved scores $\bxi_{iK}$. 

Note that (\ref{xi-tilde}) implies that the center of the $K$-truncated predictive distribution converges to the true FPCs $\bxi_{iK}$ as the design gets denser, i.e. as $m\to\infty$. 
Next, it will be shown that the entire  $K$-truncated predictive distribution shrinks to a point mass located at its true $K$-truncated FPCs. 
Recall that $\bSigmaiK$ is the conditional covariance as in \eqref{eq:cond} and for a matrix $A\in \bbR^{p\times q}$ denote by $\lVert A\rVert_{\text{op},2}=\sup_{\rVert v\rVert_2 = 1} \lVert Av\rVert_2$ the $2$-matrix norm, where $\lVert \cdot \rVert_2$ is the Euclidean norm in $\bbR^p$, $p,q>0$.  For the following, Gaussianity will be required, i.e.,  
\begin{enumerate}[label=(X\arabic*)]
	\setcounter{enumi}{3}
	\item \label{a:GaussProcess} The process $X(t)$, $t\in\cT$, and the measurement errors are jointly Gaussian.
\end{enumerate}
\begin{prop} \label{thm:xiEst2}
	Suppose that \ref{a:fbelow}--\ref{a:GaussProcess} hold and the number of observations for the $i$th subject diverges, i.e. $n_i=m\to\infty$, $i=1,\dots,n$. Then for any fixed $K\ge 1$ 
	\begin{align*}
		\lVert \bSigmaiK \rVert_{\text{op},2} & = O_p(m^{-1}).
	\end{align*}
\end{prop}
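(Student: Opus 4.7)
The plan hinges on combining the Gaussian structure encoded in \eqref{eq:cond} with the $O_p(m^{-1/2})$ rate for the predictive mean already established in Proposition \ref{thm:xiEst}. The starting observation is that $\bSigmaiK$ is $\bTi$-measurable and, being the conditional covariance $\Var(\bxi_{iK} \mid \bXi, \bTi)$, does not depend on $\bXi$. Consequently, integrating out $\bXi$ gives $(\bxi_{iK} - \tbxiiK) \mid \bTi \sim N(0, \bSigmaiK)$. Coordinatewise this yields the representation
\begin{equation*}
	\xiik - \txiik \;=\; \sqrt{[\bSigmaiK]_{kk}}\, Z_{ik}, \qquad k = 1, \dots, K,
\end{equation*}
where $Z_{ik} := (\xiik - \txiik)/\sqrt{[\bSigmaiK]_{kk}}$ is standard normal conditional on $\bTi$ by assumption \ref{a:GaussProcess}, hence marginally $N(0,1)$ and independent of the $\bTi$-measurable factor $[\bSigmaiK]_{kk}$.

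I would then upgrade the marginal $O_p(m^{-1/2})$ rate for $\xiik - \txiik$ to an $O_p(m^{-1})$ rate for $[\bSigmaiK]_{kk}$. Using this independence, for every $c > 0$
\begin{equation*}
	P\bigl(|\xiik - \txiik| > c\, m^{-1/2}\bigr) \;\ge\; P\bigl([\bSigmaiK]_{kk} > 4 c^2 / m\bigr)\cdot P\bigl(|Z_{ik}| > 1/2\bigr),
\end{equation*}
with $P(|Z_{ik}| > 1/2)$ a positive constant free of $m$. Since Proposition \ref{thm:xiEst} delivers $|\xiik - \txiik| = O_p(m^{-1/2})$ without any distributional assumption, the left-hand probability can be made arbitrarily small by choosing $c$ large, which forces $[\bSigmaiK]_{kk} = O_p(m^{-1})$ for each fixed $k$.

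To conclude, I would note that $\bSigmaiK$ is positive semidefinite, so its operator norm is bounded by its trace, and
\begin{equation*}
	\lVert \bSigmaiK \rVert_{\text{op},2} \;\le\; \sum_{k=1}^K [\bSigmaiK]_{kk} \;=\; O_p(K\, m^{-1}) \;=\; O_p(m^{-1}),
\end{equation*}
since $K$ is fixed. I expect the main subtlety to lie in the middle step, namely the probabilistic decoupling that upgrades the product-rate $\sqrt{V}|Z| = O_p(m^{-1/2})$ to $V = O_p(m^{-1})$. This is precisely where assumption \ref{a:GaussProcess} is indispensable: without Gaussianity one cannot factor $\xiik - \txiik$ as a $\bTi$-measurable scale times a standard normal independent of $\bTi$, which is what makes the decoupling argument work. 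This also clarifies why Proposition \ref{thm:xiEst} required no distributional assumption whereas the present variance-shrinkage result does.
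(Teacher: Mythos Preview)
Your argument is correct and takes a genuinely different route from the paper. Both proofs bound $\lVert \bSigmaiK \rVert_{\text{op},2}$ by the trace, but diverge in how they control the diagonal entries $\lambda_k - \lambda_k^2 \bphi_{ik}^T \bSigma_i^{-1}\bphi_{ik}$. The paper works directly with the explicit formula: it substitutes $\lambda_k\bphi_{ik} = \bSigma_i\bW\bphi_{ik} + \be_k$ from the quadrature construction in the proof of Proposition~\ref{thm:xiEst}, expands $\lambda_k^2 \bphi_{ik}^T \bSigma_i^{-1}\bphi_{ik} = \be_k^T\bSigma_i^{-1}\be_k + 2\be_k^T\bW\bphi_{ik} + \bphi_{ik}^T\bW\bSigma_i\bW\bphi_{ik}$, and shows that the last term equals $\lambda_k + O_p(m^{-1})$ by another Riemann-sum argument, while the first two are $O_p(m^{-1})$ from the bounds $\lVert\be_k\rVert_2,\,\lVert\bW\bphi_{ik}\rVert_2 = O_p(m^{-1/2})$ already derived there. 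You instead treat Proposition~\ref{thm:xiEst} as a black box and use the Gaussian factorisation $\xiik-\txiik = \sqrt{[\bSigmaiK]_{kk}}\,Z_{ik}$ with $Z_{ik}\perp\bTi$ to decouple the rate for the variance from the rate for the error.

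What each approach buys: your proof is shorter and makes transparent \emph{why} Gaussianity enters. The paper's proof, by contrast, never uses the distributional structure beyond the closed form of $\bSigmaiK$; this is why the remark following the proposition can assert that the same $O_p(m^{-1})$ rate holds for the matrix $\bLambda_K - \bLambda_K\bPhiiK^T\bSigma_i^{-1}\bPhiiK\bLambda_K$ (reinterpreted as $E[\Var(\bxi_{iK}\mid\bXi,\bTi)\mid\bTi]$) even without \ref{a:GaussProcess}. Your decoupling step would not survive that relaxation, since independence of $Z_{ik}$ and $\bTi$ hinges on the conditional law of $\xiik-\txiik$ given $\bTi$ being exactly $N(0,[\bSigmaiK]_{kk})$. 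A minor technical point you might make explicit is that $[\bSigmaiK]_{kk}>0$ a.s.\ for every finite $m$ (the Schur-complement covariance is positive definite because $\sigma^2>0$), so that $Z_{ik}$ is well defined.
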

Note that Gaussianity is used only to derive the explicit form of the conditional covariance $\bSigmaiK$ of the FPCs given the data $(\bXi,\bTi)$. We are   not aware of any  other results in the literature studying the shrinkage of conditional covariance in the dense sampling case. 

If Gaussianity does not hold, using the explicit form of $E(\bxi_{iK}|\bXi,\bTi)$ in Section \ref{S:setup} and the relation $ \var(\bxi_{iK} \mid \bTi)=\bLambda_K$, by a conditioning argument  $\bSigmaiK \coloneqq E[\var(\bxi_{iK}|\bXi,\bTi) \mid \bTi] = \var(\bxi_{iK} \mid \bTi) - \var( E(\bxi_{iK}|\bXi,\bTi) \mid \bTi)$ share the same definition as in the Gaussian case, and therefore  Proposition~\ref{thm:xiEst2} continues to hold for this  $\bSigmaiK$}.  
Propositions~\ref{thm:xiEst} and \ref{thm:xiEst2} demonstrate  that the $K$-truncated  predictive distribution of a given subject shrinks to the true $K$-truncated FPCs $\bxi_{iK}$ at a root-$m$ rate as the number of observations per subject diverges. The size of a $K$-truncated  predictive distribution defined through the covariance norm corresponding to the Gaussian distribution \eqref{eq:cond} implicitly reflects the number of available observations.

To discuss this further, consider an independent densely measured subject $i^*$ as in Section \ref{S:setup}. The next result quantifies the shrinkage of the conditional variance corresponding to the $K$-truncated distribution as the number of 
observations for the subject $i^*$ increases.
\begin{thm} \label{thm:xiEstHat2}
	Suppose that \ref{a:Diff}, \ref{a:GaussProcess}, \ref{a:eigendecay} and \ref{a:K}--\ref{a:Ubeta} in the Supplement Section~\ref{AA1} hold. Let $K>0$ be fixed and consider either a sparse design setting when $n_i\le N_0<\infty$ or a dense design when $n_i=m\to\infty$, $i=1,\dots,n$. Set $a_n=a_{n1}$ and $b_n=b_{n1}$ for the sparse case, and $a_n=a_{n2}$ and $b_n=b_{n2}$ for the dense design. For a new independent subject $i^*$, if $m^*(a_n+b_n)=o(1)$ as $\ntoinf$, where $m^*=m^*(n)\to\infty$, 
	\begin{align*}
		\lVert  \hbSigma_{K}^{*} -\bSigma_{K}^{*}\rVert_{\text{op},2}&=O_p(a_n+b_n).
	\end{align*}
\end{thm}

As outlined in  Section \ref{S:setup},    the estimated covariance $\hbSigma_{K}^{*}$ for a new  subject $i^*$  and thus its  $K$-truncated predictive distribution can be consistently recovered. The shrinkage effect for predictive distributions from sparse to dense is illustrated in  Figure  \ref{fig:sim_shrinkage}.  
\single 
\begin{figure}[t]
	\centering
	\includegraphics[width=1.89 in]{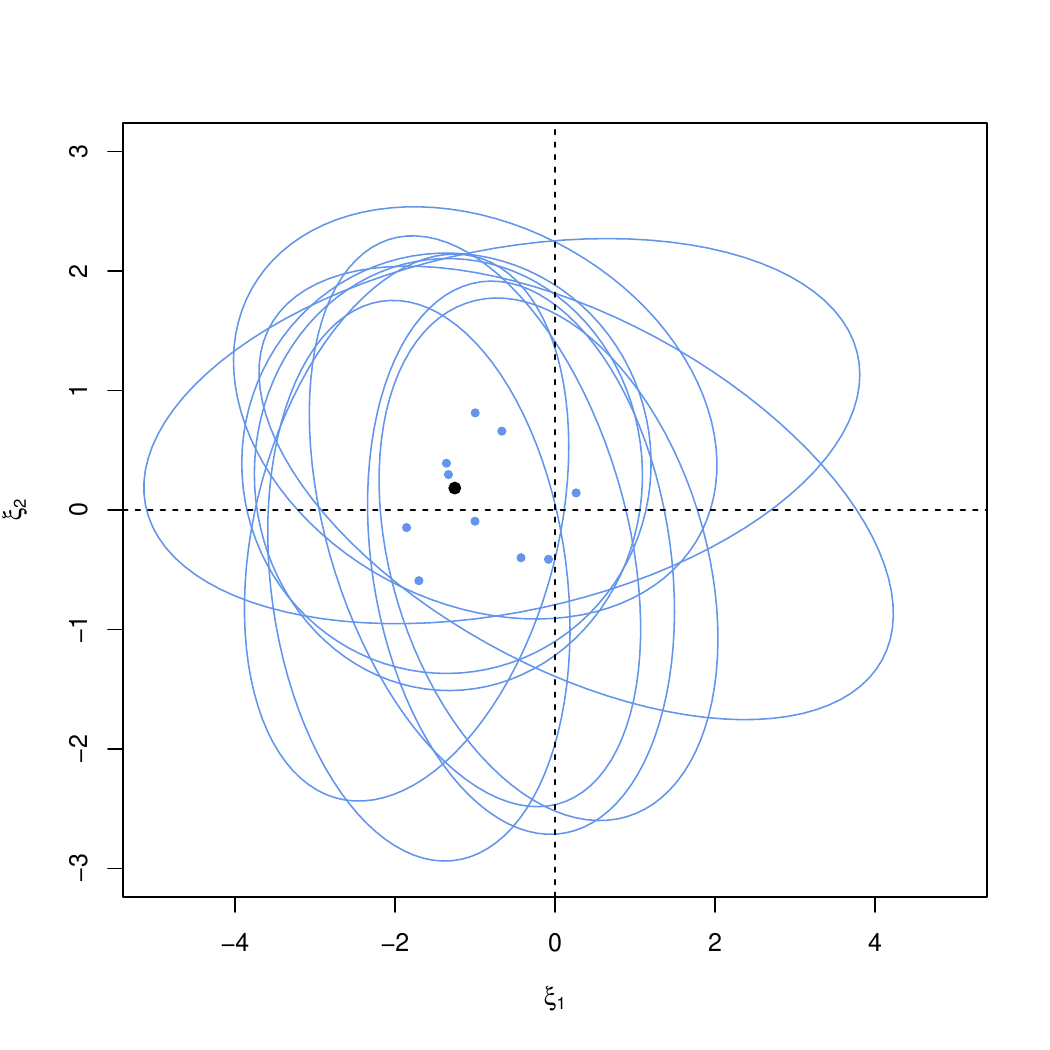}%
	\includegraphics[width=1.89 in]{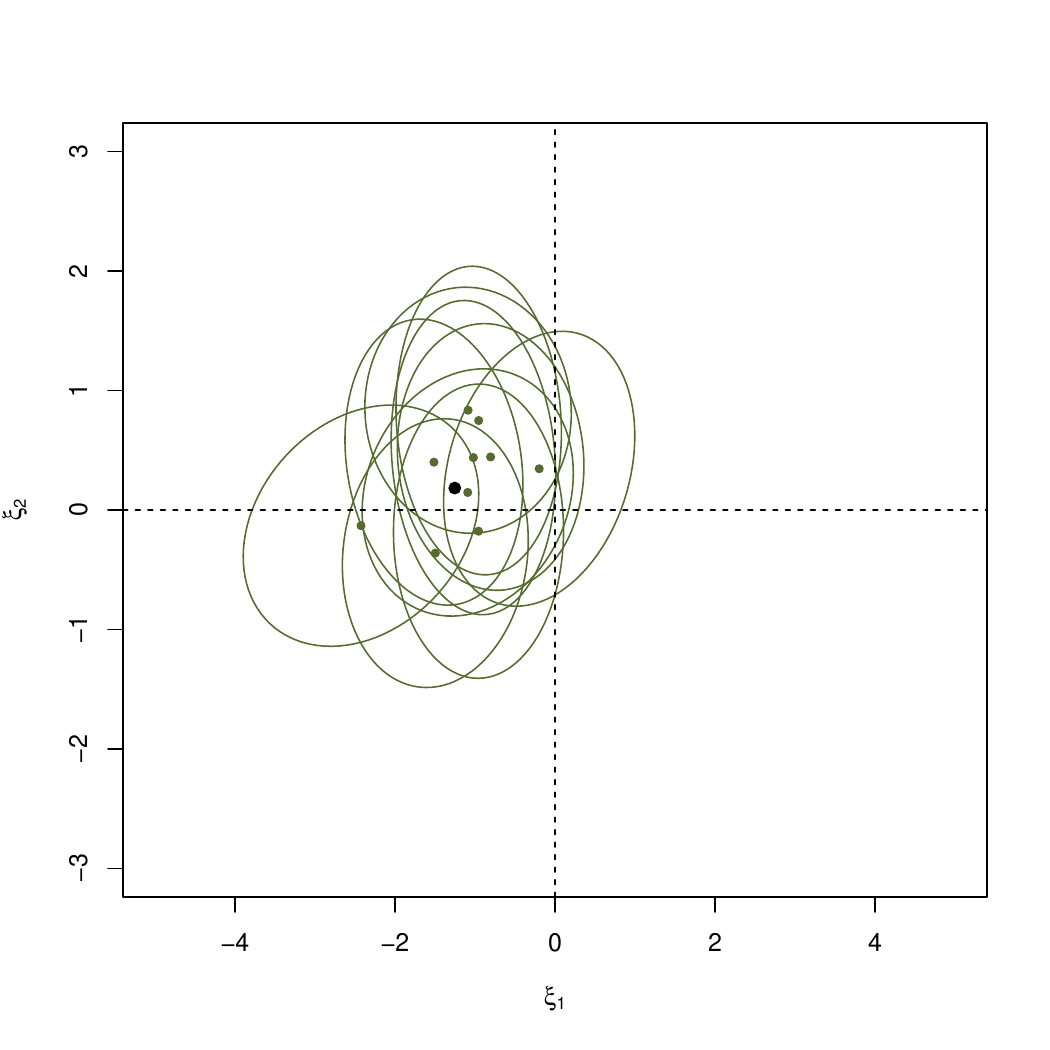}%
	\includegraphics[width=1.89 in]{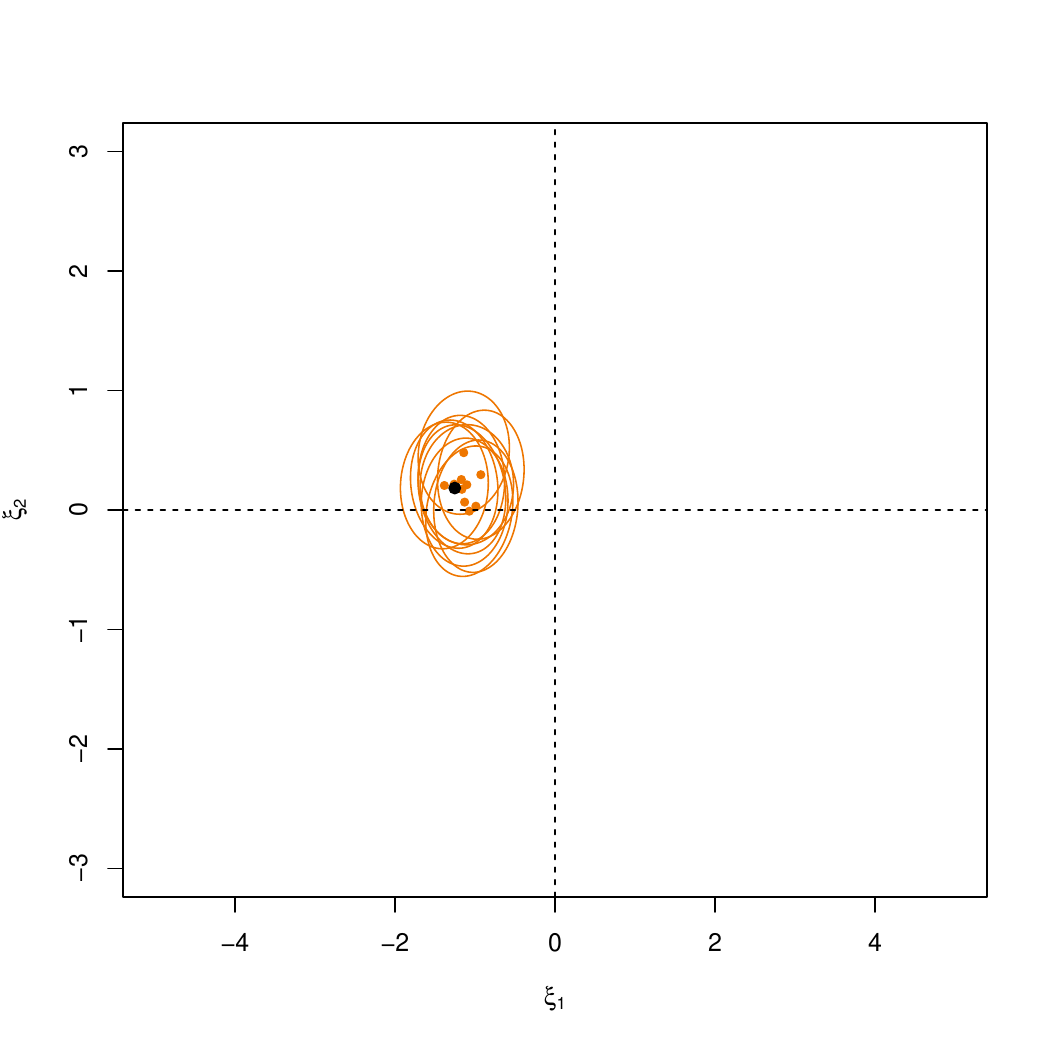}
	\caption{The $95\%$ contours for $10$ predictive distributions for the joint distribution of the first two functional principal components with $K=2$ obtained by random sampling of the data of a new subject when varying the  number of observations  $n_i$ per subject in the transition from sparse to dense, for 
	 $n_i=2$ (very sparse; left panel), $n_i=10$ (medium sparse; middle panel), and $n_i=50$ (dense; right panel), for error variance $\sigma=0.5^2$ and eigenfunctions $\phi_1(t)=-\cos(\pi t/10)/\sqrt{5}$, $\phi_2(t)=\sin(\pi t/10)/\sqrt{5}$, $\mu(t)=t+\sin(t)$, $t\in\cT=[0,10]$. The time points are sampled from a uniform distribution on $\cT$. As expected, the predictive distributions  shrink towards a point mass located at the true unobserved functional principal components (black dot) as the data gets denser. The colored dots  correspond to the centers of the simulated predictive distributions.} \label{fig:sim_shrinkage}
\end{figure}
\double
The following theoretical framework is a direct consequence of the theory of square integrable Gaussian processes. For the separable real Hilbert space $\mathcal{H}=L^2(\cT)$ with inner product $\langle \ ,\  \rangle:=\langle \ ,\  \rangle_{L^2(\cT)}$, a probability measure $\nu$ defined over the Borel sets $\mathcal{B}(\mathcal{H})$ is Gaussian if for any $h\in \mathcal{H}^*$, where $\mathcal{H}^*$ denotes the dual space consisting of continuous and linear functionals on $\mathcal{H}$, $\mu \circ h$ is a Gaussian measure on $\mathbb{R}$ \citep{gelb:90}. Such measures $\nu$ are characterized by their mean $m_\nu\in \mathcal{H}$ and covariance operator $\Xi_\nu:\mathcal{H}\to \mathcal{H}$ \citep{hkuo:75},  defined through 
\begin{align*}
	\langle m_\nu , a \rangle &=\int_{\mathcal{H}} \langle x , a \rangle \nu(dx), \quad a\in \mathcal{H},\\
	\langle \Xi_\nu(a), b \rangle &=\int_{\mathcal{H}} \langle x-m_\nu , a \rangle \langle x-m_\nu , b \rangle \nu(dx),  \quad a,b\in \mathcal{H}.
\end{align*}

Denote the Gaussian measure $\nu$ (depending on the context, in  $\mathbb{R}^p$ or $L^2$) by $\mathbb{G}(m_\nu,\Xi_\nu)$. The  $K$-truncated predictive distribution of the centered process $X_i^c(\cdot)$ given  $(\bXi,\bTi)$ is 
defined as 
\begin{equation*}
	\cGKi = (\text{The conditional distribution of } \bxi_{iK}^T\bPhi_K \mid  \bXi,\bTi) = \mathbb{G}(\tmu_{iK}, \XiiK),
\end{equation*}
where $\tmu_{iK}={\tbxi_{iK}}^T \bPhi_K$, $\bPhi_K = (\phi_1, \dots, \phi_K)^T$ are the first $K$ eigenfunctions, and $\XiiK:L^2(\cT) \rightarrow L^2(\cT)$ is the integral operator associated with the covariance function
\newline  $\Gamma_{iK}(s,t) \coloneqq \sum_{1\le j,l\le K} [\bSigmaiK]_{jl} \phi_j(s)\phi_l(t)$,  with  $[\bA]_{ij}$ denoting  the $(i,j)$th entry of  a matrix $\bA$.  This  is the functional
counterpart of the $K$-truncated predictive distribution in \eqref{eq:cond}.   
We refer to $\cGKi $ as the {\it $K$-truncated predictive distribution} of the $i$th subject's  latent trajectory. 
The $K$-truncated predictive distribution $\cGKi$ approximates the  {\it true infinite-dimensional predictive distribution},
\begin{equation} 
	\cGi = (\text{The conditional distribution of } (X - \mu) \mid \bXi,\bTi) = \mathbb{G}(\tmu_i, \Xii), \label{eq:true}
\end{equation}
where $\tmu_i = \Gamma(\cdot, \bTi)\bSigma_i^{-1}(\bXi - \bmui)$, $t\in\cT$ and $\Xii$ is the integral operator associated with the covariance function  $\Gamma_i(s,t) = \Gamma(s,t) - \Gamma(s, \bTi) \bSigma_i^{-1} \Gamma(\bTi, t)$, $s,t\in\cT$, under the convention that $\Gamma(s, \bTi)$ and $\Gamma(\bTi, t)$ are row and column vectors containing the evaluations of $\Gamma$, respectively. 

Studied next 
is the approximation to the true latent trajectory as the truncation point $K$ increases  
where estimated versions are obtained by replacing population quantities by their estimates, leading to the estimate  $\hcGKi = \mathbb{G}(\hmu_{iK},\hXiiK)$  of the $K$-truncated   predictive distribution $\cGKi$. Here 
$\hmu_{iK}=\hbxiiK^T \hbPhi_K$, and $\hXiiK$ is the integral operator associated with the covariance function $\hGammaiK(s,t) \coloneqq  \sum_{1\le j,l\le K} [\hbSigmaiK]_{jl} \hphi_j(s)\hphi_l(t)$. 
The corresponding infinite-dimensional version is 
\begin{equation} 
	\hcGi = \mathbb{G}(\hmu_i, \hXii), \label{eq:hattrue}
\end{equation}
where $\hmu_i(t) = \hGamma(t, \bTi)\hbSigma_i^{-1}(\bXi - \hbmui)$, $t\in\cT$ and $\hXii$ is the integral operator with kernel   $\hGamma_i(s,t) \coloneqq \hGamma(s,t) - \hGamma(s, \bTi) \hbSigma_i^{-1} \hGamma(\bTi, t)$, $s,t\in\cT$. 

To quantify the discrepancy between estimated and true 
 predictive distributions, 
 we adopt the $2$-Wasserstein distance $\cW_2$  
 \citep{vill:03}, which  
 for two measures $\nu$ and $\tau$ is 
\begin{equation}
	\cW_2(\nu, \tau) = \left\{\inf_{A\sim \nu, B\sim \tau} E(\lVert A - B\rVert^{2})\right\}^{1/2},  \label{W2}
\end{equation}
where the norm $\lVert \cdot\rVert$ is either the Euclidean norm for measures supported on $\bbR^d$, $d\geq 1$, or $L^2$-norm for measures on the $L^2$ space, and the infimum is taken over all pairs of random variables $A$ and $B$ with marginal distribution $\nu$ and $\tau$, respectively. 
The shrinkage  of the 
distributions $\cGKi$ towards an atomic point mass measure $\mathcal{A}_{X_i^c}$ located at the unobserved latent centered process $X_i^c$  when the number of observations $n_i=m$ diverges and the truncation point $K=K(m)$ suitably grows with $m$ can then be characterized as follows. 
\begin{thm} \label{thm:functionalShrinkage}
	Suppose that \ref{a:fbelow}--\ref{a:GaussProcess} and \ref{a:eigendecay} hold. Consider a given subject $i\in\{1,\dots,n\}$ for which one has  $m$ measurements with $m\toinf$. If $K=K(m)\toinf$ is chosen such that $\sumkK \lambda_k\inv \asymp m^{1-\delta}$ for some  $\delta\in(1/2,1)$, then 
	\begin{align}
		\cW_2^2(\cGKi,\mathcal{A}_{X_i^c})&=O_p\left( m^{-(2\delta-1)} + \sum_{k=K(m)+1}^\infty \lambda_k \right). \label{Ws} 
	\end{align}
\end{thm}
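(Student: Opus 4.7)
The plan is to exploit the closed-form expression for the $2$-Wasserstein distance between a Gaussian measure on a Hilbert space and a Dirac mass, together with the Karhunen--Lo\`eve expansion and the conditional Gaussian structure from Section~\ref{S:preliminariesGaussian}. First, for any Gaussian measure $\mathbb{G}(m,\Xi)$ on $L^2(\cT)$ and any $y\in L^2(\cT)$, the trivial product coupling of $Z\sim\mathbb{G}(m,\Xi)$ with the deterministic $y$ is $\cW_2$-optimal, so
\[
\cW_2^2(\mathbb{G}(m,\Xi),\mathcal{A}_y)=\|m-y\|_{L^2}^2+\tr(\Xi).
\]
Applying this to $\cGKi=\mathbb{G}(\tmu_{iK},\XiiK)$ yields $\cW_2^2(\cGKi,\mathcal{A}_{X_i^c})=\|\tmu_{iK}-X_i^c\|_{L^2}^2+\tr(\XiiK)$. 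Using $X_i^c=\sum_{k\ge1}\xi_{ik}\phi_k$ and $\tmu_{iK}=\sum_{k=1}^K\txiik\phi_k$, orthonormality of $\{\phi_k\}$ decomposes the first piece as $\sum_{k=1}^K(\txiik-\xi_{ik})^2+\sum_{k>K}\xi_{ik}^2$, while the same orthonormality applied to the kernel of $\XiiK$ gives $\tr(\XiiK)=\int_\cT\Gamma_{iK}(t,t)\,dt=\sum_{j,l=1}^K[\bSigmaiK]_{jl}\delta_{jl}=\tr(\bSigmaiK)$.

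Next I would bound each of the three resulting pieces. Since $E\xi_{ik}^2=\lambda_k$, Markov's inequality yields $\sum_{k>K}\xi_{ik}^2=O_p(\sum_{k>K}\lambda_k)$, which produces the second term in \eqref{Ws}. For the head sum, Gaussianity \ref{a:GaussProcess} gives $\bxi_{iK}\mid\bXi,\bTi\sim N(\tbxi_{iK},\bSigmaiK)$, so conditional on $(\bXi,\bTi)$ the expectation of $\sum_{k=1}^K(\txiik-\xi_{ik})^2$ equals $\tr(\bSigmaiK)$ (which is $\bTi$-measurable under the identifications in Section~\ref{S:setup}); Markov then gives $\sum_{k=1}^K(\txiik-\xi_{ik})^2=O_p(\tr(\bSigmaiK))$. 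The entire problem therefore reduces to showing $\tr(\bSigmaiK)=O_p(m^{-(2\delta-1)})$.

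For this trace bound I would start from the Bayesian posterior identity $\bSigmaiK=(\bLambda_K^{-1}+\bPhiiK^T\bOmega_i^{-1}\bPhiiK)^{-1}$, where $\bOmega_i:=\bSigma_i-\bPhiiK\bLambda_K\bPhiiK^T\succeq\sigma^2 I_m$ (the tail covariance that is subtracted from $\Gamma(\bTi,\bTi)$ is positive semidefinite, so the measurement-error part dominates). Under \ref{a:fbelow}, the Gram matrix $m^{-1}\bPhiiK^T\bPhiiK$ concentrates on the positive-definite limit $\int\phi_j(t)\phi_l(t)f(t)\,dt$, whose smallest eigenvalue is at least $m_f$; together with $\|\bOmega_i^{-1}\|_{op}\gtrsim\sigma^{-2}$, this yields $\bPhiiK^T\bOmega_i^{-1}\bPhiiK\succeq c(m/\sigma^2)I_K$ with high probability for some $c>0$. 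Consequently $[\bSigmaiK]_{kk}\lesssim\lambda_k/(1+cm\lambda_k/\sigma^2)$, and summing over $k=1,\dots,K$ while invoking $\sum_{k=1}^K\lambda_k^{-1}\asymp m^{1-\delta}$ (e.g.\ via the AM--GM bound $\lambda_k/(1+cm\lambda_k/\sigma^2)\le\sigma\sqrt{\lambda_k}/(2\sqrt{cm})$ combined with Cauchy--Schwarz on $\sum\sqrt{\lambda_k}$) produces the claimed rate $m^{-(2\delta-1)}$.

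The main obstacle is this last step: Proposition~\ref{thm:xiEst2} is stated only for fixed $K$, so a uniform version of the Gram-matrix concentration is needed when $K=K(m)\to\infty$. The restriction $\delta>1/2$ in the hypothesis is precisely what keeps $K\le\lambda_1\sum_{k\le K}\lambda_k^{-1}=O(m^{1-\delta})$ at $o(\sqrt m)$, the regime in which such concentration remains controllable uniformly in $K$; carefully aggregating the per-coordinate posterior variances to recover exactly the exponent $2\delta-1$ under the eigenvalue calibration is the delicate bookkeeping that this step demands.
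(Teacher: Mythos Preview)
Your Wasserstein decomposition and the reduction of both $\|\tmu_{iK}-X_i^c\|_{L^2}^2$ and $\tr(\XiiK)$ to the single quantity $\tr(\bSigmaiK)$ (via $E[\sum_{k\le K}(\txiik-\xi_{ik})^2\mid\bTi]=\tr(\bSigmaiK)$ under \ref{a:GaussProcess}) are correct, and in fact cleaner than the paper, which treats the mean term by a separate direct calculation. The genuine gap is in your bound on $\tr(\bSigmaiK)$. The posterior identity $\bSigmaiK=(\bLambda_K^{-1}+\bPhiiK^T\bOmega_i^{-1}\bPhiiK)^{-1}$ is valid, but $\bOmega_i\succeq\sigma^2I_m$ gives $\bOmega_i^{-1}\preceq\sigma^{-2}I_m$---an \emph{upper} bound---which is the wrong direction for the lower bound on $\bPhiiK^T\bOmega_i^{-1}\bPhiiK$ you need. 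The obstruction is $\|\bOmega_i\|_{\text{op},2}$: since $\bOmega_i=\sigma^2I_m+R_i$ with $R_i$ the matrix of the tail kernel $\sum_{k>K}\lambda_k\phi_k(s)\phi_k(t)$ evaluated on $\bTi$, one has $\|R_i\|_{\text{op},2}$ of order $m\sup_t\sum_{k>K}\lambda_k\phi_k^2(t)$, which under the stated hypotheses need not be $O(1)$. Consequently $\lambda_{\min}(\bOmega_i^{-1})$ is only of order $(m\epsilon_K)^{-1}$ for an $\epsilon_K$ whose decay is not controlled by $\sum_{k\le K}\lambda_k^{-1}\asymp m^{1-\delta}$, and the factor $m$ coming from the Gram matrix is cancelled. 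Establishing $\bPhiiK^T\bOmega_i^{-1}\bPhiiK\succeq cmI_K$ would require showing that the dominant eigendirections of $R_i$ are nearly orthogonal to the range of $\bPhiiK$ uniformly as $K\to\infty$, a random-matrix statement you have not supplied; the Gram-matrix concentration you flag is a secondary issue by comparison.

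The paper avoids this obstacle by a quadrature device applied directly to $\bSigmaiK=\bLambda_K-\bLambda_K\bPhiiK^T\bSigma_i^{-1}\bPhiiK\bLambda_K$. Writing $\lambda_k\bphi_{ik}=\bSigma_i\bW\bphi_{ik}+\be_k$, with $\bW$ the diagonal of left-endpoint spacings and $\be_k$ the Riemann error (as in the proof of Proposition~\ref{thm:xiEst}), one expands $[\bSigmaiK]_{kk}=\lambda_k-(\be_k+\bSigma_i\bW\bphi_{ik})^T\bSigma_i^{-1}(\be_k+\bSigma_i\bW\bphi_{ik})$; after the natural cancellations the dominant contribution is $\sum_k\be_k^T\bSigma_i^{-1}\be_k\le\sigma^{-2}\sum_k\|\be_k\|_2^2$, which uses only the easy inequality $\|\bSigma_i^{-1}\|_{\text{op},2}\le\sigma^{-2}$. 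From \ref{a:GammaDiff} one gets $\|\phi_k\|_\infty,\|\phi_k'\|_\infty=O(\lambda_k^{-1})$, hence $E\|\be_k\|_2^2=O(m^{-1}\lambda_k^{-2})$ and $\sum_k E\|\be_k\|_2^2\le Cm^{-1}(\sum_k\lambda_k^{-1})^2=O(m^{1-2\delta})$. The same identity handles the mean term, so neither Gram-matrix concentration nor any lower bound on $\bOmega_i^{-1}$ is invoked anywhere.
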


The expectation that implicitly appears in the definition \eqref{W2} of the $2$-Wasserstein distance is taken here  conditionally on the data for the $i$th subject $(\bXi,\bTi)$ and the unobserved latent trajectory $X_i^c$ so that the point mass $\mathcal{A}_{X_i^c}$ is well defined. This also explains why we have an $O_p$ rather than an $O$ convergence. Shrinkage of the $K$-truncated predictive distribution towards the latent centered process is tied to the eigenvalue decay. The rate of convergence in (\ref{Ws}) can be illustrated under  polynomial and exponential eigenvalue decay:  
\begin{enumerate}[label=(D\arabic*)]
	\item \label{D:polynomialEigenDecay} $\lambda_k=k^{-\alpha_0}$ for a constant $\alpha_0>1$ and all $k\ge 1$,
\end{enumerate}
\begin{enumerate}[label=(D\arabic*)]
	\setcounter{enumi}{1}
	\item \label{D:expoEigenDecay}$\lambda_k=\exp(-\alpha_1 k)$ for a constant $\alpha_1>0$ and all $k\ge 1$.
\end{enumerate}

Under polynomial decay  \ref{D:polynomialEigenDecay}, it follows that $\sumkK \lambda_k\inv \asymp K^{1+\alpha_0}$ and also $\sum_{k=K+1}^\infty \lambda_k \asymp K^{1-\alpha_0}$, so the condition in Theorem \ref{thm:functionalShrinkage} 
implies that $K\asymp m^{(1-\delta)/(1+\alpha_0)}$ and the optimal rate in (\ref{Ws}) 
is given by $m^{(1-\alpha_0)/(1+3\alpha_0)}$. This is achieved by choosing $\delta=2\alpha_0/(1+3\alpha_0)$ and $K\asymp m^{1/(1+3\alpha_0)}$. Faster eigenvalue decay rates for larger 
$\alpha_0$ are associated with slower growth rates for $K=K(m)$ as $\delta$ approaches $2/3$. In this case the optimal rate approaches $m^{-1/3}$, which is slower than $m^{-1/2}$. The latter rate can be achieved for a finite-dimensional process,  where $\lambda_k=0$ for all $k\ge k_0$ and some $k_0>0$. Under exponential eigenvalue  \ref{D:expoEigenDecay}, the optimal rate in Theorem \ref{thm:functionalShrinkage} is again $m^{-1/3}$, which is obtained by selecting $\delta=2/3$ and $K\asymp \log(m^{1/3})$. {Note  that the result in Theorem 3 is at the population level and does not involve estimation or the sample size $n$.}

The bound \eqref {Ws} utilizes the population level $K$-truncated  predictive distribution $\cGKi$, which depends upon unknown quantities that must be estimated in practice,  which introduces additional  errors.  
The following result establishes  consistency of the estimated $K$-truncated  predictive distribution counterpart $\hcGKs$ under this scenario for a new  subject as described in Section \ref{S:setup}. Let $\gamma_K(p,q)=\sumkK \lambda_k^{-p}\delta_k^{-q}$, where $p,q$ are non-negative integers and $\delta_k$ are the eigengaps. As in the previous result, $K$ is allowed to diverge.
\begin{thm} \label{thm:functionalShrinkage_estimated}
	Suppose that assumptions \ref{a:Diff}, \ref{a:GaussProcess}, \ref{a:eigendecay} and \ref{a:K}--\ref{a:Ubeta} in the Supplement Section~\ref{AA1} are satisfied. Consider either a sparse design setting with  $2 \le n_i\le N_0<\infty$ or a dense design when $n_i=m\to\infty$, $i=1,\dots,n$. Set $a_n=a_{n1}$ and $b_n=b_{n1}$ for the sparse case, and $a_n=a_{n2}$ and $b_n=b_{n2}$ for the dense case. For a new subject $i^*$, suppose that $m^*=m^*(n)\to\infty$ is such that $m^* (a_n+b_n)=o(1)$ as $\ntoinf$. If $K=K(m^*)$ satisfies $(a_n+b_n) \gamma_K(1/2,1)=o(1)$, $m^* (a_n+b_n)^2 \gamma_K(2,2)=o(1)$, $m^{*2} (a_n+b_n)^2 \gamma_K(2,0)=o(1)$, $m^{*4} (a_n+b_n)^4 \gamma_K(2,2)=o(1)$, $(a_n+b_n) \gamma_K(2,1)=o(1)$ as $\ntoinf$, and  $\sumkK \lambda_k\inv \asymp m^{*(1-\delta)}$ for some $\delta\in (1/2,1)$, then
	\begin{align*}
		\cW_2^2(\hcGKs,\mathcal{A}_{X^{*c}})
		&=o_p(1).
	\end{align*}
\end{thm}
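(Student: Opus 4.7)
The natural strategy is to insert the population-level $K$-truncated predictive distribution $\cGKs$ for the new subject as an intermediate step and apply the triangle inequality
\begin{equation*}
\cW_2(\hcGKs,\mathcal{A}_{X^{*c}}) \le \cW_2(\hcGKs,\cGKs) + \cW_2(\cGKs,\mathcal{A}_{X^{*c}}).
\end{equation*}
The second term is handled directly by Theorem \ref{thm:functionalShrinkage} applied to the new subject with $m$ replaced by $m^*$: the hypothesis $\sumkK \lambda_k\inv \asymp m^{*(1-\delta)}$ with $\delta\in(1/2,1)$ is exactly what is needed, and the eigenvalue tail $\sum_{k=K+1}^\infty \lambda_k$ necessarily vanishes since $K=K(m^*)\to\infty$. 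So the work is in showing $\cW_2(\hcGKs,\cGKs)=o_p(1)$, i.e.\ quantifying how the plug-in estimation of $\mu$, $\Gamma$, $\sigma^2$, and of the eigenpairs $(\lambda_k,\phi_k)$ perturbs a Gaussian measure on $L^2(\cT)$.

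For this I would use the explicit Gaussian $\cW_2$ formula together with the standard bound (Gelbrich)
\begin{equation*}
\cW_2^2(\hcGKs,\cGKs)\ \le\ \lVert \hmu_K^* - \tmu_K^*\rVert_{L^2}^2 + \lVert \hXi_K^{*1/2}-\Xi_K^{*1/2}\rVert_{HS}^2,
\end{equation*}
and then estimate the two summands separately. For the mean term, I would decompose
\begin{equation*}
\hmu_K^* - \tmu_K^* = \sumkK (\hxi_k^* - \txi_k^*)\hphi_k + \sumkK \txi_k^*(\hphi_k-\phi_k),
\end{equation*}
and control $|\hxi_k^*-\txi_k^*|$ by a plug-in analogue of Theorem \ref{thm:xiEstHat} (the error carried purely by the estimation of $\mu$, $\Gamma$, $\lambda_k$, $\phi_k$, given the same data $(\bX^*,\bT^*)$), while bounding $\lVert\hphi_k-\phi_k\rVert_{L^2}\lesssim (a_n+b_n)/\delta_k$ by the standard Hilbert--Schmidt perturbation argument for $\hGamma-\Gamma$. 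These steps yield contributions that sum to $O_p((a_n+b_n)^2\gamma_K(1,2))$-type quantities after using $\E(\txi_k^{*2}\mid \bT^*)\le\lambda_k$; the conditions $(a_n+b_n)\gamma_K(1/2,1)=o(1)$ and $m^{*2}(a_n+b_n)^2\gamma_K(2,0)=o(1)$ are tailored exactly to make the cross-terms and the BLUP-inversion terms $\bLambda_K\bPhi_K^{*T}\bSigma^{*-1}$ vanish.

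For the covariance term, I would first pass from the square-root-Hilbert--Schmidt distance to an operator/trace norm comparison using $\lVert\hXi_K^{*1/2}-\Xi_K^{*1/2}\rVert_{HS}^2 \le c\,\lVert \hXi_K^*-\Xi_K^*\rVert_{\mathrm{Tr}}$ and then exploit the explicit kernel $\sum_{j,l\le K}[\bSigma_K^*]_{jl}\phi_j(s)\phi_l(t)$ to split the difference into three additive effects: (i) $\hbSigma_K^*-\bSigma_K^*$ with the true basis $\phi_j\otimes\phi_l$, (ii) $[\bSigma_K^*]_{jl}(\hphi_j\otimes\hphi_l-\phi_j\otimes\phi_l)$, and (iii) the pure eigenbasis perturbation on the estimated coefficients. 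Piece (ii)--(iii) deliver $\gamma_K(2,2)$- and $\gamma_K(2,1)$-weighted errors via $\lVert\hphi_k-\phi_k\rVert\lesssim(a_n+b_n)/\delta_k$ and the fact that the $(j,l)$ entry of $\bLambda_K\bPhi_K^{*T}\bSigma^{*-1}\bPhi_K^*\bLambda_K$ is bounded by $\lambda_j\lambda_l$. Piece (i) is the genuine hard step: one must expand $\hbSigma_K^* = \hbLambda_K - \hbLambda_K\hbPhi_K^{*T}\hbSigma^{*-1}\hbPhi_K^*\hbLambda_K$ around its population version, propagating the errors in $\hmu$, $\hGamma$, $\hsigma^2$, $\hlambda_k$ and $\hphi_k$, and invert both $\bSigma^*$ and $\hbSigma^*$ uniformly over $m^*$ observations. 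Assumptions \ref{a:K}--\ref{a:Ubeta} from the Appendix should give a uniform lower bound on the smallest eigenvalue of $\bSigma^*$, so the Woodbury-type identity gives a Lipschitz bound on the inverse with a factor that is absorbed into the $m^{*2}$ and $m^{*4}$ prefactors in the $\gamma_K$ conditions.

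The main obstacle I anticipate is precisely step (i): controlling $\hbSigma_K^*-\bSigma_K^*$ (with dimension $K$ diverging) uniformly in $K$, since every inversion involving $\hbSigma^{*-1}$ or $\hbLambda_K\inv$ inflates the error by a factor proportional to the smallest eigenvalue of the matrix being inverted. The four separate moment conditions $m^*(a_n+b_n)^2\gamma_K(2,2)$, $m^{*2}(a_n+b_n)^2\gamma_K(2,0)$, $m^{*4}(a_n+b_n)^4\gamma_K(2,2)$, and $(a_n+b_n)\gamma_K(2,1)$ in the hypothesis are exactly designed to absorb, respectively, the linear-in-error, quadratic-in-error, cross, and eigenbasis-coupling contributions of this expansion, and the proof should come down to matching each decomposed term to one of these four conditions and collecting everything as $o_p(1)$.
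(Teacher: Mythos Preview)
Your overall strategy is sound and would work, but it takes a longer route than the paper. The key simplification you are missing is that one of the two measures is a \emph{point mass}, so the Wasserstein distance has the exact closed form
\[
\cW_2^2(\hcGKs,\cA_{X^{*c}}) = \lVert \hmu_K^* - X^{*c}\rVert_{L^2}^2 + \tr(\hXi_K^*) = \lVert \hmu_K^* - X^{*c}\rVert_{L^2}^2 + \tr(\hbSigma_K^*),
\]
the last identity because the estimated eigenfunctions $\hphi_j$ are orthonormal. The paper therefore bypasses both the triangle inequality through $\cGKs$ and the Gelbrich/Powers--St{\o}rmer machinery: it splits $\tr(\hbSigma_K^*)=\tr(\hbSigma_K^*-\bSigma_K^*)+\tr(\bSigma_K^*)$ and $\lVert \hmu_K^* - X^{*c}\rVert\le \lVert \hmu_K^*-\tmu_K^*\rVert+\lVert \tmu_K^*-X^{*c}\rVert$, handling the ``population'' pieces by the arguments of Theorem~\ref{thm:functionalShrinkage} and the ``estimation'' pieces by two dedicated lemmas (Lemma~\ref{lem:secondMomentFPCscoresNorm} for $\lVert\hbxiKs-\tbxiKs\rVert_2^2$ and Lemma~\ref{lem:traceSigmaK} for $\tr(\hbSigma_K^*-\bSigma_K^*)$). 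Your approach instead requires controlling the trace \emph{norm} $\lVert\hXi_K^*-\Xi_K^*\rVert_{\mathrm{Tr}}$, which is genuinely harder than the scalar $\tr(\hbSigma_K^*-\bSigma_K^*)$: the latter feels no eigenfunction perturbation at all (orthonormality of both $\{\phi_k\}$ and $\{\hphi_k\}$ kills it), whereas the former forces you into the three-piece decomposition (i)--(iii) you describe. Both routes land on the same $\gamma_K$ conditions, but the paper's avoids the square-root operator entirely and is correspondingly shorter.
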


Under polynomial eigenvalue decay  \ref{D:polynomialEigenDecay} and taking $m^*=m^*(n)\asymp (a_n+b_n)^{-q}$ for some $q\in(0,2/3)$, it follows from the proof of Theorem \ref{thm:functionalShrinkage_estimated} that the optimal rate is  $(a_n+b_n)^{q(\alpha_0-1)/(3\alpha_0+1)}$, which is achieved by taking $\delta=2\alpha_0/(3\alpha_0+1)\in (1/2,1)$ and $K\asymp m^{*(1-\delta)/(1+\alpha_0)}$. Thus the optimal rate can be
arbitrarily close to $(a_n+b_n)^{2(\alpha_0-1)/(3(1+3\alpha_0))}$ by assuming faster growth rates of $m^*$ with $q\uparrow 2/3$. Faster eigenvalue decay rates, i.e. larger values of $\alpha_0$, lead to 
a rate closer to $(a_n+b_n)^{2/9}$. If the eigenvalues exhibit exponential decay   \ref{D:expoEigenDecay} and $m^*=m^*(n)\asymp (a_n+b_n)^{-q}$, $q\in(0,1)$, the optimal rate is $(a_n+b_n)^{2/9}$.

\section{Predictive Distributions in the Functional Linear Model}\label{S:FLM}

The concept of predictive distributions is also sensible in the functional linear model \eqref{flm}  when predictor processes are sparsely sampled. Suppose one has an infinite-dimensional Gaussian predictor process $X(t)$, $t\in\cT$ with a Euclidean response $Y\in\mathbb{R}$. Utilizing the Karhunen--Lo\`eve representation 
\eqref{kl} of predictor processes $X$  and a representation of the slope function  $\beta(t)=\sum_{j=1}^\infty \beta_j \phi_j(t)$ in the eigenbasis with $\beta_j=\int_\cT \beta(t) \phi_j(t)$, $j=1,2,\dots,$ leads to  
\begin{equation}
	E(Y\vert X^c)=\beta_0+ \sum_{j=1}^\infty \xi_j \beta_j=:\eta. \label{eq:FLM}
\end{equation}
Here $\beta_0=E(Y)$ is the intercept and $\eta$ is the linear predictor with  responses $Y=\beta_0+ \sum_{j=1}^\infty \xi_j \beta_j+\epsilon_Y$, where $\epsilon_Y\sim N(0,\sigma_Y^2)$ is independent of all other random quantities \cp{mull:05:5,cai:06,mull:25}.

Predicting the scalar response $Y$ \citep{hall:07:1} based on a sparsely observed predictor process $X$ is clearly of  interest. 
When shifting  the  focus from point prediction  to 
predictive distributions of the linear predictor $\eta$, 
instead of targeting  the  distribution for the observed response $Y$, which   contains the additional error $\epsilon_Y$ that is independent of all other random quantities and thus is aleatoric and inherently unpredictable,  
the focus is on the distribution of  the predictable part of the response $Y$. Consider $\eta_K=\beta_0+\bbeta_K^T \bxi_K$, the truncated real-valued predictor employing the first $K$ principal components, where in practice $K$ can be chosen by a suitable criterion and $\bbeta_K=(\beta_1,\dots,\beta_K)^T$.  Thus $\eta=\eta_K + \cR_K$, where $\cR_K=\sum_{j\ge K+1}\xi_j\beta_j$ corresponds to a term  that remains unexplained by $\eta_K$. This term  decreases asymptotically as $E(\cR_K)=0$ and $\Var(\cR_K)=\sum_{j\ge K+1} \lambda_j \beta_j^2=o(1)$ as $K$ increases, where the latter rate can be further specified and can be made arbitrarily fast under additional assumptions \citep{hall:07:1}.  

Since $X$ is a Gaussian process, given $\bbeta_K$ one  obtains predictive distributions  \begin{equation} \label{pd} \cP_{iK} \overset{d}{=} N(\beta_0 +\bbeta_K^T \tbxiiK, \bbeta_K^T \bSigmaiK \bbeta_K) \end{equation} as before.  
According to Theorems 1 and 2, 
these predictive distributions collapse into a point mass located at the true but unobserved  predictable part $\eta_{iK}$ of the response $Y_i$ 
 in the transition from sparse to dense sampling.  
To quantify the performance of the predictive distribution $\cP_{iK}$ \eqref{pd} in the sparse case, it is sensible to employ the 2-Wasserstein distance between two probability measures $\nu_1, \nu_2$, which for multivariate distributions is as previously given in \eqref{W2}. 
For our current purpose the predictive distributions are one-dimensional, and for this case \eqref{W2} greatly simplifies and can be expressed as  \citep{vill:03}  
\begin{align} \label{eq:WassUnif}
	\cW_2^2(\nu_1,\nu_2)&= \int_0^1 (Q_1(p)-Q_2(p))^2 dp,
\end{align}
where $Q_j(p)=\inf\{s\in\bbR \colon F_j(s)\ge p\}$, $p\in(0,1)$, is the quantile function of $\nu_j$, $j=1,2$. 

To quantify the discrepancy of this predictive distribution, it makes sense to utilize 
the average Wasserstein distance between $\cP_{iK}$ and the atomic measure $\cA_{Y_i}$ located at $Y_i$. Formally,
\begin{align}
	\cD_{nK}:=n\inv \sumin \cW_2^2(\cA_{Y_i}, \mathcal{P}_{iK})&=n\inv \sumin (Y_i - \teta_{iK})^2 + n\inv \sumin \bbeta_K^T \bSigmaiK \bbeta_K, \label{14} 
\end{align}
where $\teta_{iK}=E(\eta_{iK}\vert \bXi)=\beta_0+\bbeta_K^T \tbxiiK$ is the best prediction of the truncated linear predictor. 
Note that \eqref{14}  follows from \eqref{eq:WassUnif} and similar ideas as in \cite{amar:21} when computing the Wasserstein distance between the predictive distribution and an atomic measure.

If the number of observations $n_i=m_0<N_0$ is common across subjects, so that the $\bSigmaiK$ form an  i.i.d. sequence of random positive definite matrices,  the proof of Theorem 5 below 
shows that $\cD_{nK} $ converges to the population-level Wasserstein discrepancy 
\begin{align}
	\cD_K=2\bbeta_K^T E(\bSigma_{1K})\bbeta_K+\sigma_Y^2 
	+\sum_{k\ge K+1} \lambda_k \beta_k^2  -2\bbeta_K^T E\Big[ \bLambda_K\bPhi_{1K}^T\bSigma_{1}\inv  \sum_{k\ge K+1} \phi_k(\bT_1)\lambda_k \beta_k \Big] \label{eq:populationWassDiscrep}.
\end{align}

The first term in \eqref{eq:populationWassDiscrep} reflects  both the number of observations and the time locations, where increased values of $m_0$ 
lead to smaller $\bbeta_K^T E(\bSigma_{1K})\bbeta_K$ 
and thus lower discrepancies,  i.e. increased predictability. Similarly, increased predictor and response noise levels  $\sigma^2$ and  $\sigma_Y^2$ are associated with 
worse predictability. The last two terms come from the unexplained linear predictor part $\cR_K$ 
and become smaller as $K$ increases. 

Consider an example with eigenbasis $\phi_k(t)=\sin(k\pi t)/\sqrt{2}$, $t\in\cT$. If the Fourier coefficients $\beta_k$ and eigenvalues $\lambda_k$ exhibit polynomial decay $\lvert \beta_k\rvert =O(k^{-\alpha_1})$ and $\lambda_k=O(k^{-\alpha_2})$,  $\alpha_1,\alpha_2>1$, the Cauchy--Schwarz inequality implies   $\sum_{k\ge K+1} \lambda_k \beta_k^2=O(K^{1-2\alpha_1-\alpha_2})$ and similarly \newline $\bbeta_K^T E[\bLambda_K\bPhi_{1K}^T\bSigma_1\inv  \sum_{k\ge K+1} \phi_k(\bT_1)\lambda_k \beta_k ]=O(K^{1-\alpha_1-\alpha_2})$ with $K^{1-\alpha_1-\alpha_2}\le K\inv$, where one uses  that $\lVert \bbeta_K\rVert_2 \le \lVert \beta \rVert_{L^2}$ and a uniform bound on the remaining quantities; see  the proof of Supplement Lemma~ \ref{lem:eigenFuncK}. 
In practice, the predictive distributions $\cP_{iK}$ and therefore also the $\cD_{nK}$ are unknown as they depend on unknown  population quantities; substituting estimates for these quantities results in estimates $\hcP_{iK}$ and $\hcD_{nK}$. 
To obtain estimates of 
$\beta_0$ and slope coefficients $\bbeta_K$, one can  adopt a standard approach under the following assumption (B2), 
\begin{enumerate}[label=(B\arabic*)]
	\setcounter{enumi}{1}
	\item \label{a:betaSeries} $\lVert \beta \rVert_{L^2}^2=\sum_{m=1}^\infty \sigma_m^2/\lambda_m^2<\infty$. 
\end{enumerate}

With $C(t)=\Cov(X(t),Y)=\sumkinf E(Y \xi_k) \phi_k(t)$ denoting  the cross-covariance function between the process $X$ and response $Y$ and   $\sigma_k=\int_\cT C(t) \phi_k(t)dt =E(Y \xi_k)$, $k=1,2,\dots$, one can  estimate $C(t)$ using a local linear smoother on the raw covariances $C_i(T_{ij})=(\tX_{ij}-\hmu(T_{ij}))Y_i$ \citep{mull:05:5}, leading to an estimate $\hC(t)$ that depends  on a bandwidth $h$; see Lemma~\ref{lem:auxLemma_beta_4} in the Supplementary Material for  details. Since $\sigma_k=\lambda_k \beta_k$, under (B2), 
it holds that $\beta(t)=\sum_{m=1}^\infty \sigma_m\phi_m(t)/\lambda_m$, $t\in\cT$. 
This motivates to estimate  $\beta$  by
\begin{align*}
	\hbeta_M(t)=\sum_{m=1}^M \frac{\hsigma_m}{\hlambda_m} \hphi_m(t), \quad \tinT,
\end{align*}
where $\hsigma_k=\int_\cT \hC(t) \hphi_k(t)dt$ is an estimate of $\sigma_k$ and $M=M(n)$ is a positive integer sequence that diverges as $\ntoinf$. The intercept $\beta_0=E(Y)$ is estimated by $\hbeta_0=n\inv\sumin Y_i$. Convergence of $\hbeta_M$ towards $\beta$ is  tied to the eigengaps of $X$ \citep{cai:06,mull:10}.

With estimates $\hbeta_M$ of $\beta$ in hand, one  readily obtains estimates of the predictive distributions $\hcP_{iK}$. 
We assume for simplicity that the optimal asymptotic tuning parameters  are used for estimating the mean, covariance and cross-covariance,  $h_\mu\asymp(\log n/n)^{1/5}$, $h_G\asymp(\log n/n)^{1/6}$ \citep{dai:16:1} and $h\asymp n^{-1/3}$ in the sparse design situation; in particular, this implies $c_n:=\max(a_n,b_n)\asymp(\log n/n)^{1/3}$. Defining sequences  $\upsilon_M=\sumM \delta_m\inv$, $\tau_M=\sumM \lambda_m\inv$ and a remainder term $\Theta_M=\lVert \sum_{m\ge M+1} (\sigma_m/\lambda_m)\phi_m \rVert_{L^2}$, where $\delta_m$ are the eigengaps,  note that  $M=M(n)$ should not grow too fast with sample size $n:$, 
\begin{enumerate}[label=(B\arabic*)]
	\setcounter{enumi}{2}
	\item \label{a:Mrate} The integer sequence $M=M(n)\to\infty$ as $n\to\infty$ is such that \newline $\sum_{m=1}^M  \lambda_m^{-1/2} \delta_m\inv=O( c_n^{\rho-1})$ for some $\rho\in(1/3,1)$,
\end{enumerate}
with an additional regularity assumption to obtain uniform convergence, 
\begin{enumerate}[label=(C\arabic*)]
	\item \label{a:LM_uncond}  There exists a scalar $\kappa_0>0$ such that $\lambda_{\min}(\bSigmaiK) \ge \kappa_0$ almost surely, for all $i\ge 1$.
\end{enumerate}
\ref{a:LM_uncond} is a  mild assumption, as  $\bSigmaiK$ corresponds to the conditional variance of  $\bxi_{iK}-\tbxi_{iK}$ given $\bTi$, which is positive definite and does shrink to zero in the sparse case, due to  $n_i\le N_0<\infty$.  

Our next result demonstrates that $\hcP_{iK}$ is consistent for $\cP_{iK}$  in the $2$-Wasserstein metric, the Kolmogorov metric 
and  in the $L^2$ metric. 
Let $F_{iK}$  denote  the cumulative distribution function corresponding to $\cP_{iK}$ in \eqref{pd} 
 and  $\hF_{iK}$  that obtained by replacing $\tbxiiK$ and $\bSigmaiK$ by $\hbxiiK$ and $\hbSigmaiK$, respectively, and $\beta_0$ and $\bbeta_K$ by the above  estimates. Denote the estimated and true predictive densities by $\hat{f}_i(t)=d\hat{F}_i(t)/dt$ and $f_i(t)=dF_i(t)/dt$. The $L^2$ norm of a   function $g\colon \cT \to \bbR$ is  $\lVert g \rVert_{L^2(\bbR)}=(\int_\bbR g^2(s)ds)^{1/2}$. 

\begin{enumerate}[label=(B\arabic*)]
	\setcounter{enumi}{3}
	\item \label{a:rate} Let $c_n = \max(a_n, b_n) \tozero$ as $\ntoinf$, where $a_n$ and $b_n$ are defined in \eqref{eq:bwMuCov}.
\end{enumerate}

\begin{thm} \label{thm:predictiveConsistency}
	Suppose that \ref{a:GaussProcess}, \ref{a:eigendecay}--\ref{a:rate}, \ref{a:K}--\ref{a:Ubeta} in the Supplement Section~\ref{AA1} hold, and consider a sparse design with  $n_i\le N_0<\infty$.  For a fixed $K \ge 1$,  setting  $a_n=a_{n1}$ and $b_n=b_{n1}$,
	\begin{align}
		\cW_2(\hcP_{iK},\cP_{iK})&=O_p(\alpha_n) \label{eq:wassCons}, \\
		\sup_{t\in\mathbb{R}} \ \lvert  \hat{F}_{iK}(t)-F_{iK}(t)\rvert  &= O_p(\alpha_n) \label{eq:wassKS}, \\
		\lVert  \hat{f}_{iK}-f_{iK} \rVert_{L^2(\mathbb{R})}  &= O_p(\alpha_n) \label{eq:wassDensL2},
	\end{align}
	as $\ntoinf$, where $\alpha_n=c_n \upsilon_M+c_n^{\rho}\tau_M^{1/2}+\Theta_M$ and the $O_p(\alpha_n)$ terms are uniform in $i$.
\end{thm}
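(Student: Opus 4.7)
The plan is to exploit the fact that both $\cP_{iK}$ and $\hcP_{iK}$ are univariate Gaussian measures with explicit parameters $(m_{iK}, s_{iK}^2) = (\beta_0+\bbeta_K^T\tbxi_{iK},\, \bbeta_K^T\bSigmaiK\bbeta_K)$ and corresponding hatted versions. The closed-form identity $\cW_2^2(N(m_1,s_1^2), N(m_2,s_2^2)) = (m_1-m_2)^2 + (s_1-s_2)^2$ reduces \eqref{eq:wassCons} to separately controlling $|\hm_{iK} - m_{iK}|$ and $|\hat{s}_{iK} - s_{iK}|$. Moreover, standard Gaussian calculations give
\begin{align*}
\sup_{t\in\bbR} \bigl|\Phi\bigl((t-m_1)/s_1\bigr) - \Phi\bigl((t-m_2)/s_2\bigr)\bigr| = O\!\left(\frac{|m_1-m_2|+|s_1-s_2|}{\min(s_1,s_2)}\right),
\end{align*}
and an analogous $L^2$ bound on the densities obtained by Taylor-expanding the closed-form expression for $\lVert N(m_1,s_1^2) - N(m_2,s_2^2)\rVert_{L^2(\bbR)}^2$ around $(m_1,s_1)=(m_2,s_2)$. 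Hence \eqref{eq:wassKS} and \eqref{eq:wassDensL2} will follow from \eqref{eq:wassCons} together with a uniform-in-$i$ lower bound on $s_{iK}$, which is provided by assumption \ref{a:LM_uncond} since $s_{iK}^2 \ge \kappa_0 \lVert\bbeta_K\rVert_2^2$.

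For the mean difference I would decompose
\begin{align*}
\hm_{iK} - m_{iK} = (\hbeta_0 - \beta_0) + (\hbbeta_K - \bbeta_K)^T\tbxi_{iK} + \hbbeta_K^T(\hbxi_{iK} - \tbxi_{iK}).
\end{align*}
The intercept term is $O_p(n^{-1/2})$ by the CLT. The slope contribution is handled through the $L^2$ approximation $\lVert\hbeta_M - \beta\rVert_{L^2} \le \lVert\hbeta_M - \beta_M\rVert_{L^2} + \Theta_M$ with $\beta_M = \sum_{m=1}^M (\sigma_m/\lambda_m)\phi_m$; the first term splits via coordinate-wise errors $|\hsigma_k/\hlambda_k - \sigma_k/\lambda_k|$ combining the rates $c_n$ for $\hC$ and $\hlambda_k$ with the eigengap-dependent rate $c_n/\delta_k$ for $\hphi_k$. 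Summing up to $M$ and applying assumption \ref{a:Mrate} via a Cauchy--Schwarz step against $\tau_M^{1/2}$ produces the $c_n^\rho\tau_M^{1/2}$ piece, the eigenfunction-only contribution yields $c_n\upsilon_M$, and the truncation bias gives $\Theta_M$. The BLUP difference $\hbxi_{iK} - \tbxi_{iK}$ is handled by a first-order perturbation of $\tbxi_{iK} = \bLambda_K\bPhiiK^T\bSigma_i^{-1}(\bXi-\bmui)$ in its four factors; since $n_i\le N_0$ is bounded, $\bSigma_i$ and its inverse are controlled uniformly in $i$, so the pointwise rates of $\hmu$, $\hGamma$, $\hphi_k$, $\hlambda_k$ transfer to the BLUP, giving $O_p(c_n\upsilon_K) = O_p(c_n\upsilon_M)$.

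The variance difference $\hat{s}_{iK}^2 - s_{iK}^2$ is treated analogously: bound $\lVert\hbSigmaiK - \bSigmaiK\rVert_{\text{op},2}$ via a perturbation of $\bSigmaiK = \bLambda_K - \bLambda_K\bPhiiK^T\bSigma_i^{-1}\bPhiiK\bLambda_K$ in its factors, and combine with the $\lVert\hbbeta_K - \bbeta_K\rVert_2$ rate from the mean step. The identity $|\hat{s}_{iK} - s_{iK}| = |\hat{s}_{iK}^2 - s_{iK}^2|/(\hat{s}_{iK}+s_{iK})$ together with the uniform lower bound on $s_{iK}$ then converts the squared bound into the desired $\alpha_n$ rate.

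The main obstacles will be (i) combining the eigengap-weighted rates for $\hphi_k$ with the $\hsigma_k$ and $\hlambda_k$ rates so as to recover the specific sum structure $c_n\upsilon_M + c_n^\rho\tau_M^{1/2}$ prescribed by \ref{a:Mrate}, rather than the cruder $c_n\sum_m 1/(\lambda_m\delta_m)$ one would get from a naive bound, and (ii) securing uniformity in $i$. The latter follows because, in the sparse regime $n_i\le N_0$, all subject-specific quantities --- the entries of $\bPhiiK$ (bounded since continuous $\phi_k$ on $\cT$ are bounded), $\lVert\bSigma_i^{-1}\rVert_{\text{op},2}$ (finite-dimensional, bounded away from singular once $\sigma^2>0$), and $\lVert\tbxi_{iK}\rVert_2$ (Gaussianity plus bounded $n_i$) --- admit uniform a.s.\ bounds that absorb into the $O_p$ constants, and the remaining estimation rates for $\hmu,\hGamma,\hC,\hphi_k,\hlambda_k,\hsigma_k$ are uniform-in-$i$ by construction as they are derived from global pooled smoothers under \ref{a:K}--\ref{a:Ubeta}.
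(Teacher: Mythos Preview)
Your approach is essentially the paper's: both reduce all three metrics to controlling $|\hm_{iK}-m_{iK}|$ and $|\hat s_{iK}^2-s_{iK}^2|$ via the Gaussian closed forms, split the mean exactly as you do (the paper routes this through an intermediate $\tcP_{iK}$ with true $\bbeta_K$ but estimated $\hbxi_{iK},\hbSigmaiK$, which is the same decomposition reorganized), and establish $\lVert\hbeta_M-\beta\rVert_{L^2}=O_p(\alpha_n)$ as a separate lemma with precisely the $c_n\upsilon_M+c_n^\rho\tau_M^{1/2}+\Theta_M$ structure you sketch. Your invocation of \ref{a:LM_uncond} for the lower bound $s_{iK}^2\ge\kappa_0\lVert\bbeta_K\rVert_2^2$ is correct and indeed the paper's own proof relies on it throughout, even though the theorem statement omits it from the hypotheses; and your phrase ``uniform a.s.\ bounds'' for $\lVert\tbxi_{iK}\rVert_2$ should read uniform $O_p(1)$ (it is Gaussian, not bounded), but this is cosmetic and changes nothing in the argument.
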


Under the conditions of Theorem \ref{thm:predictiveConsistency}, $\alpha_n \rightarrow 0$  is a consequence of  $\tau_M\le \upsilon_M=O(c_n^{\rho-1})$, which implies $\alpha_n\le O(c_n^{(3\rho-1)/2}+\Theta_M)$. 
There is a trade-off between how fast $M$ can grow and the rate of convergence for the estimates of the population quantities, where  a larger $M$ entails a lower remainder term $\Theta_M$ but affects the rate at which $\beta$ is recovered through $\hbeta_M$, which involves $M$ components, and vice versa. Since the former term is connected to the decay of the covariance terms $\sigma_m/\lambda_m$, the optimal growth rate of $M(n)$ is inherently tied to the decay rate of $\sigma_m$, $\lambda_m$ and the eigengaps $\delta_m$.

It is of interest to consider the special case where $X$ is a  Brownian motion, for which  the $\lambda_m$ and $\phi_m$ are well-known \citep{hsin:15}. Although Brownian motion does not satisfy the smoothness assumptions required, it still  provides insight into how the convergence rate is related to the eigenvalue decay of the process. by Lemma \ref{lem:BrownianExample} in the Supplement, if $M=M(n)\asymp ( \log n/n)^{(\rho-1)/15}$, then $M$ satisfies \ref{a:Mrate} with $\sumM \lambda_m^{-1/2}\delta_m\inv \asymp c_n^{\rho-1}$. Moreover, if the decay of $\sigma_m$ is such that $\sigma_m^2 \le C m^{-(8+\delta)}$ for some constant $C>0$ and $\delta>0$, then  \ref{a:betaSeries} is satisfied, the remainder $\Theta_M=O\left( M^{-(1+\delta/2)}\right)$ and the rate $\alpha_n$ satisfies the following conditions as stated in Lemma \ref{lem:BrownianExample}: $\,\,$ If $\rho\le (5+\delta)/(15+\delta)$, then $\alpha_n=O((\log n/n)^{(13\rho-3)/30})$ while if $\rho> (5+\delta)/(15+\delta)$ it holds that $\alpha_n=O((\log n/n)^{(1-\rho)(1+\delta/2)/15})$. The optimal rate is achieved when $\rho= (5+\delta)/(15+\delta)$ and leads to $\alpha_n=O((\log n/n)^q)$, where $q=((2+\delta)/(15+\delta))/3$. A sufficiently large $\delta$ implies that $q$ is closer to $1/3$ so that the rate $\alpha_n$ approaches $c_n=(\log n/n)^{1/3}$, which is the rate at which population quantities such as the covariance function $\Gamma$ are uniformly recovered (see e.g. Theorem $5.2$ in \cite{zhan:16}).

Regarding the Wasserstein discrepancy $\cD_{nK}$,  the proposed predictability measure and the response measurement error variance $\sigma_Y^2$ can be consistently estimated in the sparse case. Consider the special case where  the number of observations $n_i=m_0<N_0$ is common across subjects. Then the estimated Wasserstein discrepancy measure $\hcD_{nK}$ converges to the population target $\cD_K$. 
\begin{thm} \label{thm:WassDiscrep}
	Suppose that \ref{a:GaussProcess}, \ref{a:eigendecay}--\ref{a:rate}, \ref{a:LM_uncond}, \ref{a:K}--\ref{a:Ubeta} in the Supplement Section~\ref{AA1} hold and consider a sparse design with  $n_i=m_0\le N_0<\infty$, setting  $a_n=a_{n1}$ and $b_n=b_{n1}$. For  $K\ge 1$, 
	\begin{align}\label{eq:WassDiscrepancy1}
		\hcD_{nK}&=\cD_{K}+O_p(\alpha_n), \quad \alpha_n=c_n \upsilon_M+c_n^{\rho}\tau_M^{1/2}+\Theta_M,  	\end{align}	and furthermore 
	\begin{align}\label{eq:WassDiscrepancy2}
		n\inv\sumin (Y_i-\bar{Y}_n)^2-	\sumM \hlambda_j \hbeta_j^2&= \sigma_Y^2 +O_p(\alpha_n) + \sum_{m\ge M+1} \lambda_m \beta_m^2\end{align} with $\bar{Y}_n=n\inv\sumin Y_i.$
\end{thm}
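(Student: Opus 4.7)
The proof has two parts, one for each displayed equation. For \eqref{eq:WassDiscrepancy1}, the plan is to decompose $\hcD_{nK} - \cD_K = (\hcD_{nK} - \cD_{nK}) + (\cD_{nK} - \cD_K)$ and bound each piece separately. Since $n_i = m_0$ is common across subjects, the summands $(Y_i - \teta_{iK})^2 + \bbeta_K^T\bSigma_{iK}\bbeta_K$ defining $\cD_{nK}$ form an i.i.d.\ sequence with finite variance, so the law of large numbers gives $\cD_{nK} - E[\cD_{nK}] = O_p(n^{-1/2}) = o_p(\alpha_n)$ because $c_n \asymp (\log n / n)^{1/3} \gg n^{-1/2}$. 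It remains to verify $E[\cD_{nK}] = \cD_K$. Writing $Y_i - \teta_{iK} = \bbeta_K^T(\bxi_{iK} - \tbxi_{iK}) + \sum_{k\ge K+1}\xi_{ik}\beta_k + \epsilon_{Y,i}$, I combine (i) the Gaussian identity $E[(\bxi_{iK} - \tbxi_{iK})(\bxi_{iK} - \tbxi_{iK})^T \mid \bXi,\bTi] = \bSigma_{iK}$, (ii) the independence of $\epsilon_{Y,i}$ from the process and design, and (iii) an explicit computation of the cross-covariance term, obtained by substituting $\bXi - \bmui = \bPhiiK\bxi_{iK} + \sum_{k\ge K+1}\phi_k(\bTi)\xi_{ik} + \bepsi$ into $\tbxi_{iK} = \bLambda_K\bPhiiK^T\bSigma_i^{-1}(\bXi-\bmui)$ and invoking orthogonality of the FPC scores. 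Adding $E[\bbeta_K^T\bSigma_{iK}\bbeta_K]$ from the second piece of $\cD_{nK}$ reproduces the four terms of $\cD_K$ in \eqref{eq:populationWassDiscrep}.

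For the estimation error $\hcD_{nK} - \cD_{nK}$, I invoke Theorem \ref{thm:predictiveConsistency}, which yields $\cW_2(\hcP_{iK},\cP_{iK}) = O_p(\alpha_n)$ uniformly in $i$. Since the $2$-Wasserstein distance between two univariate Gaussians $N(m_j, s_j^2)$ equals $\{(m_1-m_2)^2 + (s_1-s_2)^2\}^{1/2}$, this produces the uniform bounds $|\hteta_{iK} - \teta_{iK}| = O_p(\alpha_n)$ and $|\sqrt{\hbbeta_K^T\hbSigma_{iK}\hbbeta_K} - \sqrt{\bbeta_K^T\bSigma_{iK}\bbeta_K}| = O_p(\alpha_n)$. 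Expanding $(Y_i - \hteta_{iK})^2 - (Y_i - \teta_{iK})^2 = (\teta_{iK} - \hteta_{iK})(2Y_i - \hteta_{iK} - \teta_{iK})$ and applying the Cauchy--Schwarz inequality gives $O_p(\alpha_n)\cdot O_p(1) = O_p(\alpha_n)$ once I note that $n\inv\sumin (2Y_i - \hteta_{iK} - \teta_{iK})^2 = O_p(1)$, which follows from finiteness of $E[Y_1^2]$ and $E[\teta_{1K}^2]$ under Gaussianity. For the variance piece, factoring $\hat{s}_i^2 - s_i^2 = (\hat{s}_i - s_i)(\hat{s}_i + s_i)$ together with the deterministic bound $\bbeta_K^T\bSigma_{iK}\bbeta_K \le \norm{\beta}_{L^2}^2\lambda_1$ (and its estimated analogue, bounded with high probability by consistency) shows $|\hbbeta_K^T\hbSigma_{iK}\hbbeta_K - \bbeta_K^T\bSigma_{iK}\bbeta_K| = O_p(\alpha_n)$ uniformly in $i$, so averaging preserves the rate.

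For \eqref{eq:WassDiscrepancy2}, I use that under the FLM and Gaussianity $\Var(Y_1) = \sigma_Y^2 + \sum_{m\ge 1}\lambda_m\beta_m^2$, so the law of large numbers yields $n\inv\sumin(Y_i - \bar Y_n)^2 = \sigma_Y^2 + \sum_{m\ge 1}\lambda_m\beta_m^2 + O_p(n^{-1/2})$. Since $\hbeta_m = \hsigma_m/\hlambda_m$, one has $\sum_{m=1}^M \hlambda_m\hbeta_m^2 = \sum_{m=1}^M \hsigma_m^2/\hlambda_m$, and I control its deviation from $\sum_{m=1}^M \sigma_m^2/\lambda_m = \sum_{m=1}^M \lambda_m\beta_m^2$ termwise via $|\hsigma_m^2/\hlambda_m - \sigma_m^2/\lambda_m| \le |\hsigma_m^2 - \sigma_m^2|/\hlambda_m + \sigma_m^2\,|1/\hlambda_m - 1/\lambda_m|$, using the convergence rates for the cross-covariance and eigen-estimates from Section S.0.1 of the supplement and the fact that $\hlambda_m$ is bounded away from zero uniformly in $m \le M$ in the regime controlled by assumption \ref{a:Mrate}. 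Summation over $m \le M$ introduces the factors encoded by $\tau_M$ and $\upsilon_M$ in $\alpha_n$, delivering $\sum_{m=1}^M \hlambda_m\hbeta_m^2 = \sum_{m=1}^M \lambda_m\beta_m^2 + O_p(\alpha_n)$. Subtracting from the sample-variance expansion and isolating the deterministic tail $\sum_{m\ge M+1}\lambda_m\beta_m^2$ completes the proof.

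The main obstacle lies in Part 2, where the termwise errors from the estimated $(\hsigma_m, \hlambda_m, \hphi_m)$ must be summed over $m \le M$ while preserving the overall $O_p(\alpha_n)$ rate; this is precisely the role of the eigengap- and eigenvalue-dependent constants $\upsilon_M = \sum_{m=1}^M \delta_m^{-1}$ and $\tau_M = \sum_{m=1}^M \lambda_m^{-1}$ in $\alpha_n$, whose interplay with the perturbation-theoretic rates for $\hphi_m$ is delicate and must be matched against the growth constraint in \ref{a:Mrate}. A secondary subtlety is the explicit population-level calculation $E[\cD_{nK}] = \cD_K$, where the Gaussian BLUP couples $\bxi_{iK}$ with the tail $\{\xi_{ik}\}_{k\ge K+1}$ through the shared inverse $\bSigma_i^{-1}$, generating the distinctive cross term $-2\bbeta_K^T E[\bLambda_K\bPhiiK^T\bSigma_i^{-1}\sum_{k\ge K+1}\phi_k(\bTi)\lambda_k\beta_k]$ that separates $\cD_K$ from a naive mean-squared-error target.
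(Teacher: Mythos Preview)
Your plan is correct and arrives at the same conclusions as the paper, but via a somewhat different organization in both parts.

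For \eqref{eq:WassDiscrepancy1} the paper does not split $\hcD_{nK}-\cD_K$ through the intermediate $\cD_{nK}$; instead it expands $\hcD_{nK}$ directly by writing $Y_i-\heta_{iK}=(\eta_{iK}-\heta_{iK})+R_{iK}+\epsilon_{iY}$, obtaining seven terms, and then replaces $\heta_{iK}$ by $\teta_{iK}$ inside each at cost $O_p(\alpha_n)$ before applying the CLT to identify the limits (Lemmas S8 and S9 in the supplement). Your two-step decomposition is cleaner conceptually because it reuses Theorem~\ref{thm:predictiveConsistency} as a black box, while the paper essentially reproves the relevant averaged bound $n^{-1}\sum_i(\heta_{iK}-\teta_{iK})^2=O_p(\alpha_n^2)$ from scratch via Lemma~S3. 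One caveat: the phrase ``uniform in $i$'' in Theorem~\ref{thm:predictiveConsistency} does \emph{not} mean $\sup_i|\heta_{iK}-\teta_{iK}|=O_p(\alpha_n)$ (the subject-specific factors $\|\bXi-\bmui\|_2$ are unbounded in $i$); what it means, and what makes your Cauchy--Schwarz step work, is that the bound factors as an a.s.\ rate times i.i.d.\ random quantities, so the \emph{average} of squares is $O_p(\alpha_n^2)$ by the law of large numbers. This is exactly the content of the paper's Lemma~S3.

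For \eqref{eq:WassDiscrepancy2} your decomposition $\hlambda_m\hbeta_m^2=\hsigma_m^2/\hlambda_m$ followed by $|\hsigma_m^2/\hlambda_m-\sigma_m^2/\lambda_m|\le |\hsigma_m^2-\sigma_m^2|/\hlambda_m+\sigma_m^2|1/\hlambda_m-1/\lambda_m|$ differs from the paper, which instead splits $|\hlambda_m\hbeta_m^2-\lambda_m\beta_m^2|$ directly via $|\hlambda_m-\lambda_m|$ and $|\hbeta_m-\beta_m|$, bringing in an auxiliary Lemma~S10 to control $\sum_m\lambda_m/\delta_m^2$. Your route is arguably more direct since it works with the primitive estimates $(\hsigma_m,\hlambda_m)$ rather than the derived $\hbeta_m$, and the key summed bound $\sum_{m\le M}|\hsigma_m-\sigma_m|/\lambda_m=\tau_M\|\hC-C\|_{L^2}+O(c_n^\rho)\tau_M^{1/2}$ is already available from Lemma~S5; together with $\hlambda_m\ge \lambda_m/2$ for $m\le M$ (a consequence of \ref{a:Mrate}) and $\sum_m\sigma_m^2/\lambda_m^2<\infty$, this delivers $O_p(\alpha_n)$ as you claim.
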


Of interest is also the  behavior of the estimated predictive distributions under the  transition from sparse to dense sampling for  a new independent subject $i^*$. 
\begin{thm} \label{thm:predictiveShrinkage_FLM}
	Suppose that assumptions \ref{a:Diff}, \ref{a:GaussProcess}, \ref{a:eigendecay}--\ref{a:rate} and \ref{a:K}--\ref{a:Ubeta} in the Supplement Section S2 are satisfied. Consider either a sparse design setting when $2 \le n_i\le N_0<\infty$ or a dense design when $n_i=m\to\infty$, $i=1,\dots,n$. Set $a_n=a_{n1}$ and $b_n=b_{n1}$ for the sparse case and $a_n=a_{n2}$ and $b_n=b_{n2}$ for the dense case. Let $K>0$ be fixed and take $h=n^{-1/3}$. For a new independent subject $i^*$, suppose that $m^*=m^*(n)\to\infty$ is such that $m^* (a_n+b_n)=o(1)$ as $\ntoinf$. Then
	\begin{align*}
		&\cW_2^2(\cP_{K}^*,\cA_{\beta_0 +\bbeta_K^T \bxi_K^*})=O_p(m^{*-1}),\\
		&\cW_2^2(\hcP_{K}^*,\cA_{\beta_0 +\bbeta_K^T \bxi_K^*})
		=
		O_p\left( m^{*2} (a_n+b_n)^2 + m^{*-1} + a_n+b_n +r_n^{*2}\right),
	\end{align*}
	where $r_n^*=c_n \upsilon_M+c_n^{\rho}\tau_M^{1/2} + \tau_M  \Big[ n^{-1/3}+a_n\Big]+ \Theta_M$.
\end{thm}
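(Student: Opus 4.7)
The plan is to exploit the closed-form expression for the 2-Wasserstein distance between a univariate Gaussian $N(\mu,\sigma^2)$ and a Dirac mass $\cA_a$ at $a\in\mathbb{R}$, namely $\cW_2^2(N(\mu,\sigma^2),\cA_a)=(\mu-a)^2+\sigma^2$, and then reduce everything to the already-established convergence rates for $\tbxi_K^*-\bxi_K^*$, $\bSigma_K^*$, $\hbxi_K^*-\tbxi_K^*$, $\hbSigma_K^*-\bSigma_K^*$, and $\hbbeta_K-\bbeta_K$. For the first claim, applying this identity to $\cP_K^*\overset{d}{=}N(\beta_0+\bbeta_K^T\tbxi_K^*,\,\bbeta_K^T\bSigma_K^*\bbeta_K)$ yields
\begin{equation*}
\cW_2^2(\cP_K^*,\cA_{\beta_0+\bbeta_K^T\bxi_K^*})=\bigl(\bbeta_K^T(\tbxi_K^*-\bxi_K^*)\bigr)^2+\bbeta_K^T\bSigma_K^*\bbeta_K.
\end{equation*}
The first summand is controlled via Cauchy--Schwarz and Proposition \ref{thm:xiEst} applied componentwise to the new subject $i^*$ (legitimate since that result is distribution-free), giving $O_p(m^{*-1})$ as $K$ is fixed. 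The second summand is at most $\lVert\bbeta_K\rVert_2^2\,\lVert\bSigma_K^*\rVert_{\text{op},2}=O_p(m^{*-1})$ by Proposition \ref{thm:xiEst2}, where Gaussianity enters.

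For the second claim I would apply the triangle inequality to split
\begin{equation*}
\cW_2^2(\hcP_K^*,\cA_{\beta_0+\bbeta_K^T\bxi_K^*})\le 2\,\cW_2^2(\hcP_K^*,\cP_K^*)+2\,\cW_2^2(\cP_K^*,\cA_{\beta_0+\bbeta_K^T\bxi_K^*}),
\end{equation*}
so that only $\cW_2^2(\hcP_K^*,\cP_K^*)$ remains. Using the univariate Gaussian formula once more,
\begin{equation*}
\cW_2^2(\hcP_K^*,\cP_K^*)=\bigl(\hbeta_0+\hbbeta_K^T\hbxi_K^*-\beta_0-\bbeta_K^T\tbxi_K^*\bigr)^2+\bigl(\sqrt{\hbbeta_K^T\hbSigma_K^*\hbbeta_K}-\sqrt{\bbeta_K^T\bSigma_K^*\bbeta_K}\bigr)^2.
\end{equation*}
I would decompose the mean-difference piece as $(\hbeta_0-\beta_0)+\hbbeta_K^T(\hbxi_K^*-\tbxi_K^*)+(\hbbeta_K-\bbeta_K)^T\tbxi_K^*$ and treat each term separately. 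The intercept contributes $O_p(n^{-1/2})$ by the standard CLT; the BLUP perturbation $\lVert\hbxi_K^*-\tbxi_K^*\rVert_2$ is $O_p(m^{*-1/2}+m^*(a_n+b_n))$ by combining Theorem \ref{thm:xiEstHat} with Proposition \ref{thm:xiEst} via $\hbxi_K^*-\tbxi_K^*=(\hbxi_K^*-\bxi_K^*)-(\tbxi_K^*-\bxi_K^*)$, producing the $m^{*2}(a_n+b_n)^2+m^{*-1}$ contributions after squaring; and $\lVert\hbbeta_K-\bbeta_K\rVert_2$ is bounded by the rate driving $r_n^*$, using the same slope-coefficient analysis as in the proof of Theorem \ref{thm:predictiveConsistency} applied to the first $K$ coordinates of $\hbeta_M-\beta$, with the additional $\tau_M(n^{-1/3}+a_n)$ term arising from propagated errors of the mean and cross-covariance smoothers into $\hsigma_k/\hlambda_k$.

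For the variance-difference piece I would use $|\sqrt{a}-\sqrt{b}|^2\le|a-b|$ for $a,b\ge 0$ and expand
\begin{equation*}
\hbbeta_K^T\hbSigma_K^*\hbbeta_K-\bbeta_K^T\bSigma_K^*\bbeta_K=(\hbbeta_K-\bbeta_K)^T\hbSigma_K^*\hbbeta_K+\bbeta_K^T(\hbSigma_K^*-\bSigma_K^*)\hbbeta_K+\bbeta_K^T\bSigma_K^*(\hbbeta_K-\bbeta_K),
\end{equation*}
bounding each cross term by Cauchy--Schwarz. Here $\lVert\hbSigma_K^*-\bSigma_K^*\rVert_{\text{op},2}=O_p(a_n+b_n)$ follows from entry-wise perturbation bounds for the $K\times K$ matrix $\hbLambda_K-\hbLambda_K\hbPhi_K^{*T}\hbSigma^{*-1}\hbPhi_K^*\hbLambda_K$ via Weyl and Davis--Kahan combined with uniform consistency of the local-linear mean and covariance smoothers; together with $\lVert\bSigma_K^*\rVert_{\text{op},2}=O_p(m^{*-1})$ this produces the $O(a_n+b_n)$ summand in the final rate. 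Collecting all contributions yields exactly the claimed bound $O_p\bigl(m^{*2}(a_n+b_n)^2+m^{*-1}+a_n+b_n+r_n^{*2}\bigr)$.

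The main obstacle will be careful bookkeeping of the perturbations so that the operator-norm error on $\hbSigma_K^*$ enters the rate only as $a_n+b_n$ rather than being amplified by $m^*$, which requires working with the additive decomposition of $\hbSigma_K^*-\bSigma_K^*$ at the level of the fixed $K\times K$ conditional-covariance matrix (where the $m^*$-dependence in $\bSigma_K^*$ is isolated) rather than through the $n_i^*\times n_i^*$ operator $\hbSigma^{*-1}$; and proving that the $\tau_M(n^{-1/3}+a_n)$ tail in $r_n^*$ can be cleanly attributed to $\lVert\hbbeta_K-\bbeta_K\rVert_2$ alone, without interaction terms with $\tbxi_K^*$ or $\bSigma_K^*$ that would spoil the rate in the transition regime $m^*\to\infty$.
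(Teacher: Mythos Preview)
Your proposal is correct and rests on the same ingredients as the paper (Propositions \ref{thm:xiEst} and \ref{thm:xiEst2}, Theorem \ref{thm:xiEstHat}, Theorem \ref{thm:xiEstHat2}, and Lemma \ref{lem:Lemma_beta} with $h=n^{-1/3}$), but the second claim is organized differently. The paper does not triangulate through $\cP_K^*$; it applies the Gaussian-to-Dirac identity directly to $\hcP_K^*$, writing
\[
\cW_2^2(\hcP_K^*,\cA_{\beta_0+\bbeta_K^T\bxi_K^*})=\hbbeta_K^T\hbSigma_K^*\hbbeta_K+(\hbeta_0+\hbbeta_K^T\hbxi_K^*-\beta_0-\bbeta_K^T\bxi_K^*)^2,
\]
then bounds $\hbbeta_K^T\hbSigma_K^*\hbbeta_K\le \lVert\hbbeta_K\rVert_2^2(\lVert\hbSigma_K^*-\bSigma_K^*\rVert_{\text{op},2}+\lVert\bSigma_K^*\rVert_{\text{op},2})$ and anchors the mean difference at $\bxi_K^*$ rather than at $\tbxi_K^*$. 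This avoids both the triangle inequality step and the $|\sqrt{a}-\sqrt{b}|^2\le|a-b|$ device, and invokes $\lVert\hbxi_K^*-\bxi_K^*\rVert_2$ directly via Theorem \ref{thm:xiEstHat} instead of passing through $\hbxi_K^*-\tbxi_K^*$. Your route works and yields the same rate; the paper's is slightly shorter because the estimated variance $\hbbeta_K^T\hbSigma_K^*\hbbeta_K$ is itself already $O_p(a_n+b_n+m^{*-1})$, so there is no need to compare it to $\bbeta_K^T\bSigma_K^*\bbeta_K$ via the square-root inequality.
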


\section{Simulations}\label{S:Simulations}

To illustrate the theoretical results in  Propositions \autoref{thm:xiEst} and \autoref{thm:xiEst2} pertaining to the convergence of the best linear unbiased predictors $\txiik$ to the FPC scores $\xiik$ and the shrinkage of the conditional variance $\bSigmaiK$ in the transition from sparse to dense sampling designs,  consider a finite-dimensional Gaussian process $X(t)$, $t\in \cT=[0,10]$, generated from four principal components with population quantities given by $\phi_1(t)=-\cos(\pi t/10)/\sqrt{5}$, $\phi_k(t)=\sin((2k-3)\pi t/10)/\sqrt{5}$, $k=2,\dots, 4$, $\mu(t)=t + \sin(t)$, $\lambda_k=5-k$, $k=1,\dots,4$, and $\sigma=0.5$. It is of interest to  consider a range of sampling designs including very sparse ($n_i=m=2$), medium sparse ($n_i=m=10$), and dense ($n_i=m=50$) designs. The time points are selected at random without replacement from an equispaced grid of $2,000$ points over $\cT$.

\autoref{fig:sim_proposition1_2} shows the boxplot for $||\tbxi_{iK} - \bxi_{iK} ||_2$ and  $||\bSigma_{iK} ||_\text{op,2}$ in the transition from sparse to dense sampling across $200$ simulations and for a truncation parameter $K=2$. Clearly, both errors terms shrink towards zero as the sampling design gets denser, indicating the convergence of each $\txiik$ to the FPC score $\xiik$, $k=1,2$, since $|\txiik - \xiik | \leq  ||\tbxi_{iK} - \bxi_{iK} ||_2$, but also the shrinkage of the entire conditional distribution.
To demonstrate how the  finite sample results  conform with  the theory  for the FLM model, consider the following population quantities: $\mu(t)=t/2$, $\lambda_k=4/(1+k)^2$, $k=1,\dots,4$, $K=4$, and the intercept and slope coefficients  $\beta_0=0.5$, $\beta_1=1$, $\beta_2=-1$, $\beta_3=0.5$ and $\beta_4=-0.5$. We investigate various noise levels for  the predictor process $X$ and response $Y$ as well as a variety of  sparse settings, where we generate $n_i=m_0$ random time points  for the $i$th subject, $i=1,\dots,n$. Here  $m_0=2$ reflects a very sparse design, $m_0=8$ a medium sparse and $m_0=20$ a dense design. 
We then select the time points at random and without replacement from an equi-spaced grid of $100$ points over $\cT$. Finally, we performed $2,000$ simulations with Julia, interfacing with  R and the  fdapace package \citep{fdapace}.

Table \ref{table:predictionWassUnif} presents the results for the Wasserstein discrepancy $\hcD_{nK}$ under different sparsity designs and noise levels in both the functional predictor  and scalar response $Y$. The discrepancy $\hcD_{nK}$  reflects the improvements in predictability for lower noise levels and under increasingly denser designs and increases monotonically in both $\sigma$ and $\sigma_Y$ and  decreases monotonically as the design becomes denser when keeping the noise level $\sigma$ and $\sigma_Y$ fixed. As an additional  measure of performance for $\cP_{iK}$, we computed the estimated $2$-Wasserstein distance  between the empirical distribution of  $\hF_{iK}(\beta_0+\int_\cT \beta(s) (X_i(s)-\mu(s))ds)$, $i=1,\dots,n$ and a uniform distribution on $(0,1)$. Further results and discussion can be found in the Supplement Section~\ref{s:AdditionalSimulation}. 

Figure \ref{fig:sim_discrepancy_boxplots} displays  boxplots for the true underlying discrepancy measure $\cD_{nK}$ as the sampling design gets denser for different noise levels in the predictor process and response, which further demonstrates the improvement in predictability for lower noise levels and denser sampling designs as observed before for the estimated discrepancy measure.
In addition, Figure \ref{fig:sim_predictive_distr} illustrates  predictive distributions for sparse to dense sampling design scenarios for  the underlying predictor process $X$ for noise level $\sigma=\sigma_Y=0.5$. Clearly, as the sampling design becomes less sparse, the predictive distributions shrink towards the predictable part of the response.

\single

\begin{figure}[H]
	\centering
	\includegraphics[scale=0.33]{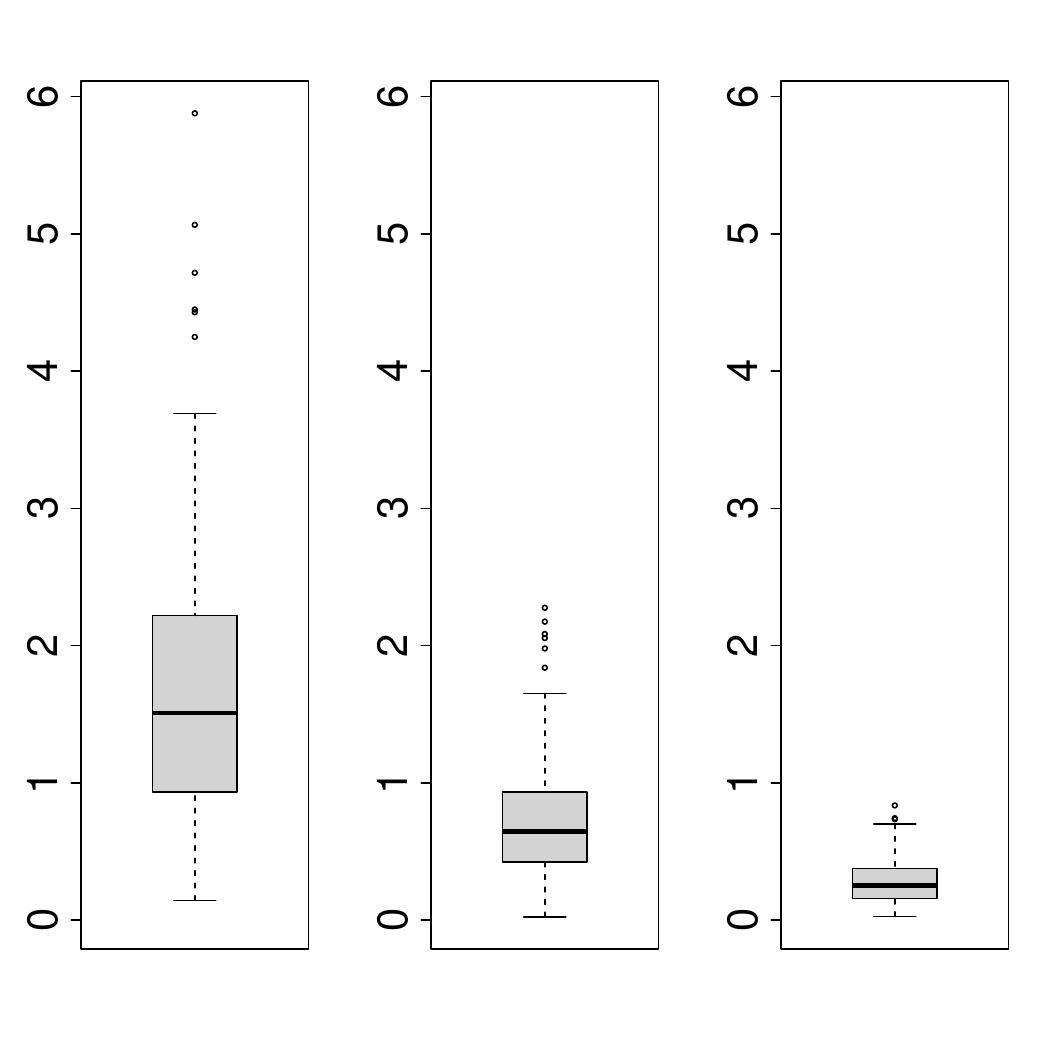}

	\includegraphics[scale=0.33]{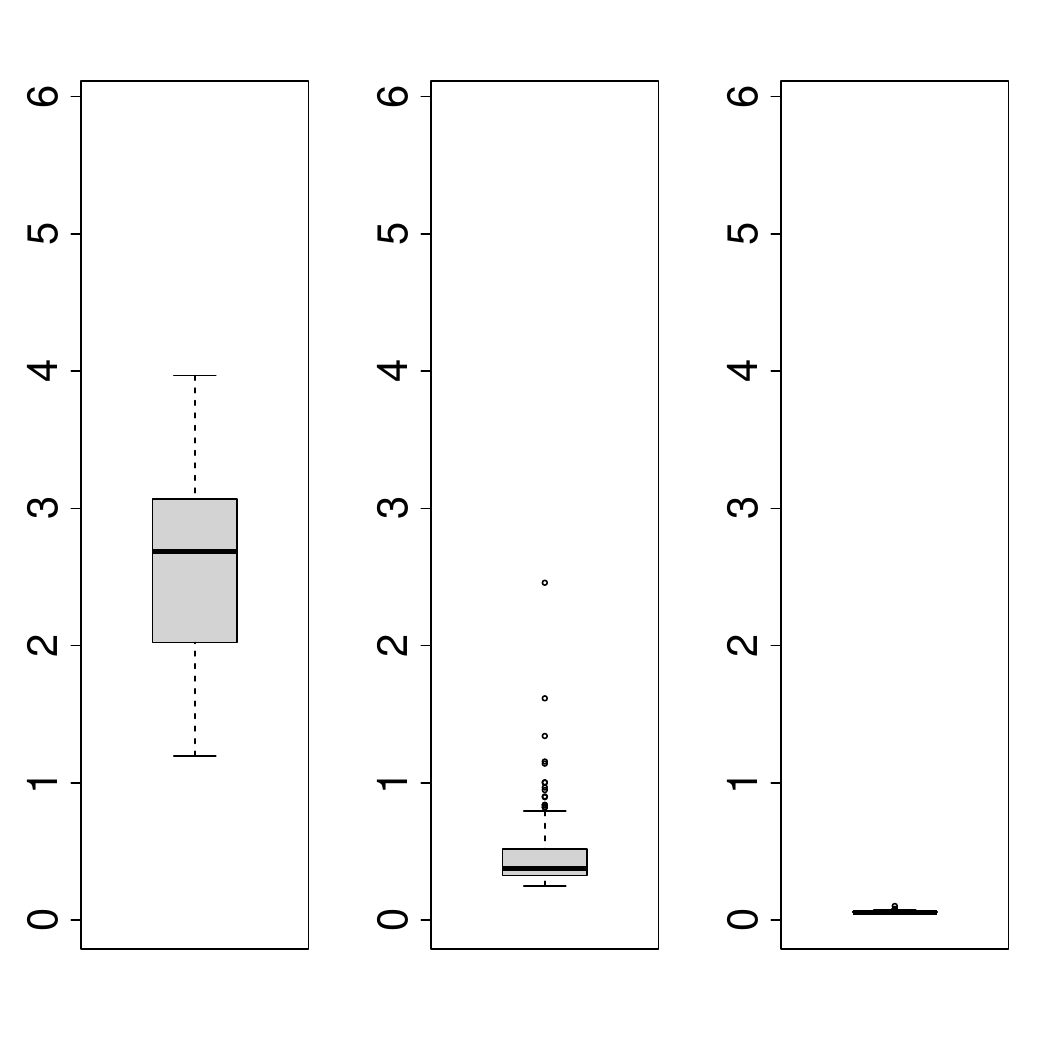}
	\caption{Simulation results illustrating  Propositions  \autoref{thm:xiEst} and \autoref{thm:xiEst2} with $K=2$. The upper panel shows  boxplots across $200$ simulations of the error term $||\tbxi_{iK} - \bxi_{iK} ||_2$ for very sparse ($m=2$, left), less  sparse ($m=10$, middle)  and  more dense ($m=50$, right) designs.  The lower panel shows the  corresponding results for $||\bSigma_{iK} ||_\text{op,2}$.} \label{fig:sim_proposition1_2}
\end{figure}

\double

\single 

\begin{table}
	\aboverulesep=0.0ex
	\belowrulesep=0.0ex
	\caption{Averages of the Wasserstein discrepancy $\hcD_{nK}$ \eqref{14},  which measures the predictability of the responses $Y_i$ in the functional linear model by the predictive distribution $\cP_{iK}$, obtained for   2000 simulation runs. The true regression parameters are $\beta_0=0.5$ and  $\bbeta_K=(1,-1,0.5,-0.5)^T$. Results are for various  predictor and response measurement error  and sparsity levels.  The very sparse case corresponds to $m=2$, the  sparse case to $m=8$ and the less sparse design case to $m=20$ observations per subject.} 
	\centering
	\begin{tabular}[c]{|c|c|c|c|c|c|c|c|}
		\hline
		\multicolumn{2}{|c|}{Measurement Error Noise level} & \multicolumn{6}{c|}{Sparsity setting} \\
		\hline
		\multicolumn{1}{|c|}{Predictor} & \multicolumn{1}{c|}{Response} & \multicolumn{2}{c|}{Very Sparse} & \multicolumn{2}{c|}{Sparse} & \multicolumn{2}{c|}{Less Sparse} \\
		\hline
		$\sigma$ & $\sigma_Y$ & $n=500$ & $n=2000$ & $n=500$ & $n=2000$ & $n=500$ & $n=2000$\\
		\hline
		& 0.5 & 3.008 & 2.645 & 1.492 & 1.477 & 0.863 & 0.853\\
		\cmidrule{2-8}
		\multirow{-2}{*}{\centering 0.5} & 1.0 & 3.863 & 3.421 & 2.255 & 2.237 & 1.612 & 1.606\\
		\cmidrule{1-8}
		1.0 & 0.5 & 3.639 & 3.449 & 2.540 & 2.418 & 1.729 & 1.715\\
		\hline
	\end{tabular}
		\label{table:predictionWassUnif}
\end{table}

\double

\section{Data Illustration}\label{S:DataApplication}

The concept of predictive distributions for longitudinal data in the context of functional linear regression models is demonstrated for the body mass index (BMI) and systolic blood pressure (SBP) data in the Baltimore Longitudinal Study of Aging \citep[BLSA,][]{shoc:84}, with sparse longitudinal measurements for each subject. This dataset has been analyzed previously in \cite{mull:05:5}, where one can find further details.  We consider a sample of 713 male subjects aged between  50 and 80 years  for which their SBP and BMI measurements are within the corresponding $1\%$ and $99\%$ quantiles across all subjects.  For the estimation of population quantities the fdapace R package \citep{fdapace} was used and  estimated predictive distributions were constructed as described in Section \ref{S:FLM}, regressing SBP (in mm Hg)  at the last age where it is measured as  scalar response against  the sparsely observed functional predictor (BMI in kg/$m^2$).  We utilize the first $K=3$ functional principal component scores of the BMI trajectory, which  explain more than $98\%$ of the variation, and choose $M=K$ components and the cross-covariance bandwidth $h$ by leave-one-out cross-validation. 

The estimated eigenfunctions are  in \autoref{fig:data_application_eigenfunctions}. They reflect the  modes of variation in the sample of functional data, where the bandwidths  used for the mean and covariance smoothing steps are $1.3$ and $2.6$, respectively. The first eigenfunction reflects a variation in the overall BMI base level across all ages,  whereas the higher order eigenfunctions reflect different BMI contrasts between younger and older ages. For example, the second eigenfunction reflects a mode of variation that differentiates higher BMI levels at ages below 62 years old from lower BMI levels afterwards.

\single 
\begin{figure}[H]
	\centering
	\includegraphics[scale=0.45]{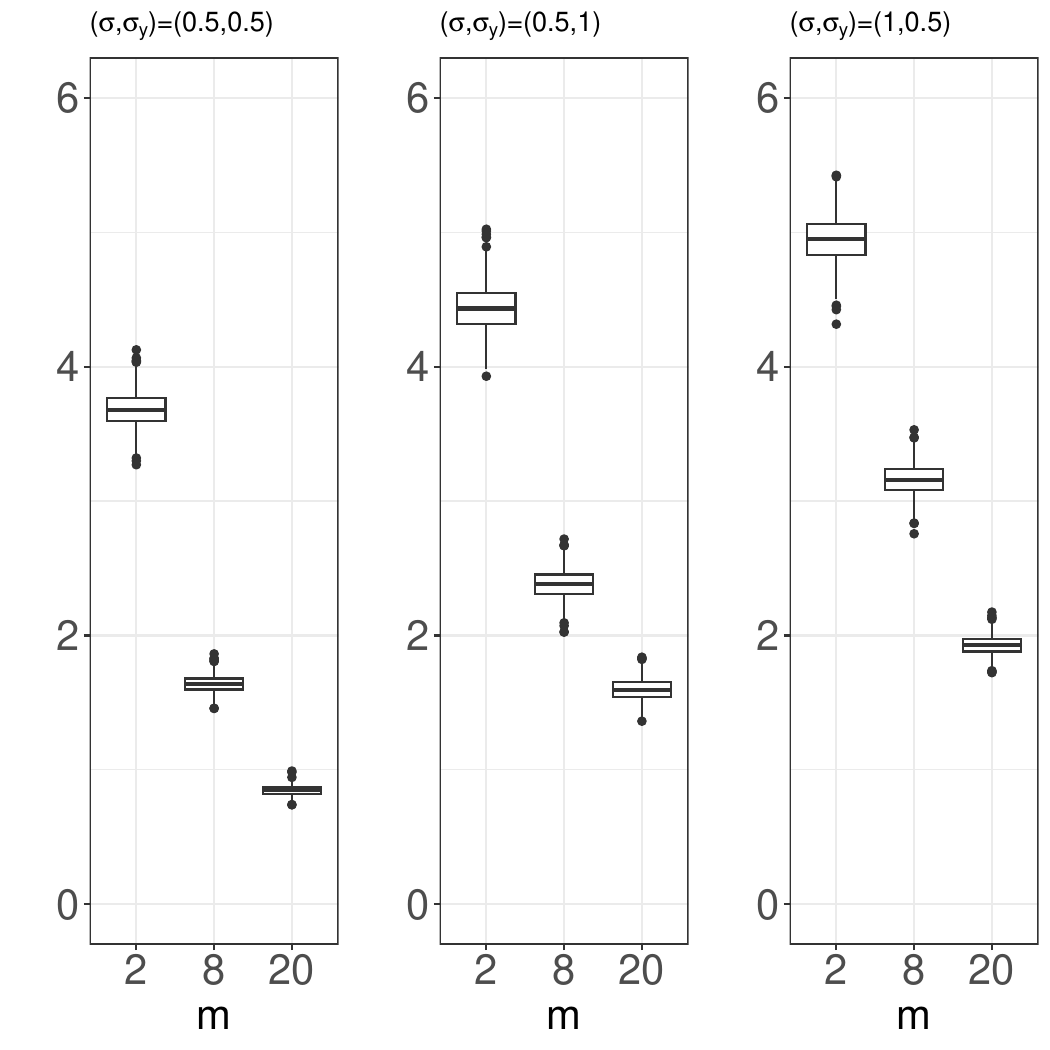}

	\caption{Boxplots of the true underlying Wasserstein discrepancy measure $\cD_{nK}$ \eqref{14} in the functional linear model  for $1000$ simulations and  sample size $n=500$, for  increasingly less sparse  sampling designs and various  noise levels for the predictor process $X$ and response $Y$.} \label{fig:sim_discrepancy_boxplots}
\end{figure}
\begin{figure}[H]
	\centering
	\includegraphics[scale=0.45]{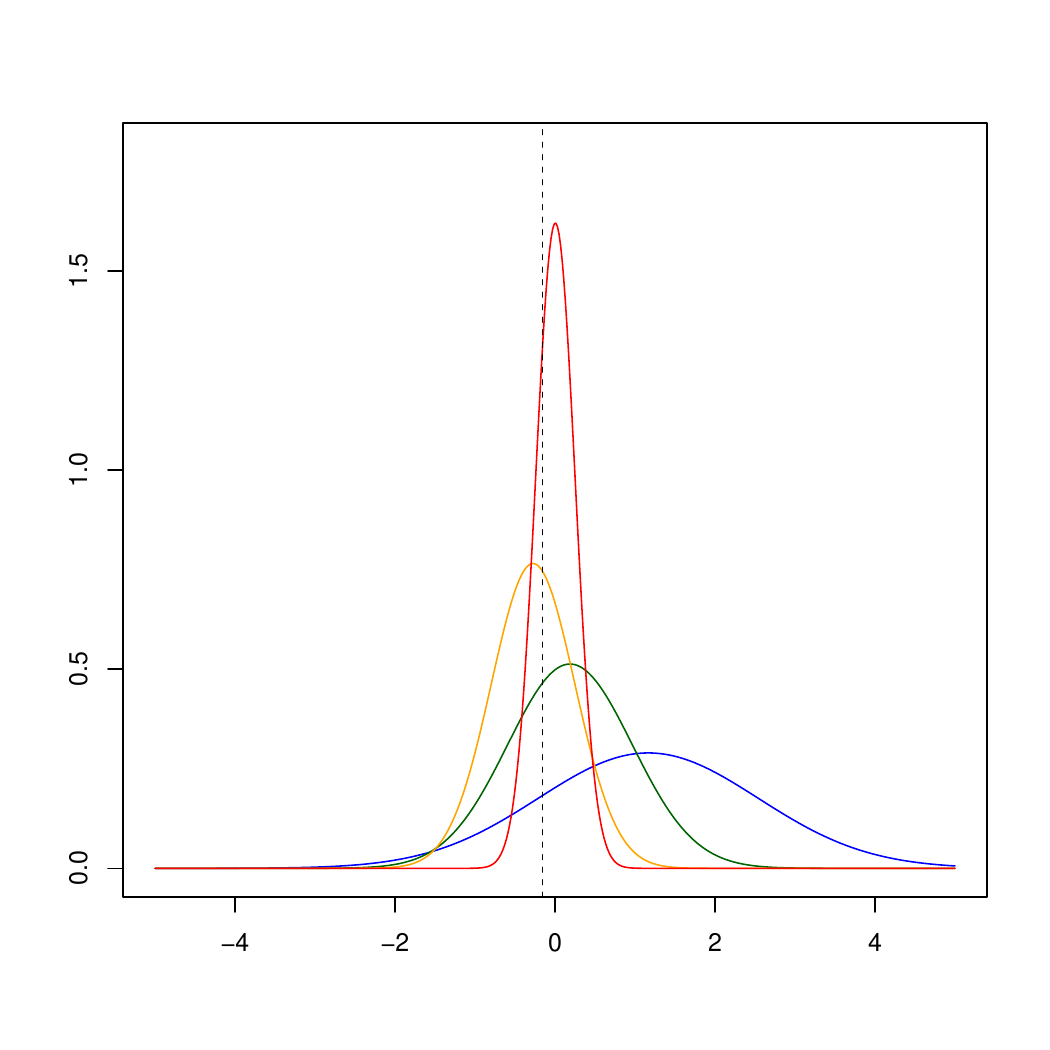}
	
	\caption{Predictive distributions $\cP_{K}$ for the response in the functional linear model obtained  by simulating different sampling design scenarios for a given realization of the  predictor process $X$, for very sparse $m=2$ (blue), sparse $m=8$ (green), less sparse  $m=20$ (orange) and dense design $m=100$ (red), with $\sigma=\sigma_Y=0.5$. The vertical line corresponds to the (unobserved) predictable part $\eta_K$ of the response.}\label{fig:sim_predictive_distr}
\end{figure}
\double 

Figure \ref{fig:predDistr_blsa} illustrates  predictive distribution intervals constructed from the $5\%$ and $95\%$ quantiles of the predictive  distribution $\hcP_{iK}$ for $20$  subjects, where we order them from lowest to largest mean of the predictive distribution. Here it is necessary to emphasize that these intervals are for the prediction intervals for $E(Y|X)$ in the functional linear model and  not for the responses $Y$, which for SBP are known to have a large variance, which means  that $E(Y|X)$ will usually be far from $Y$.

\single

\begin{figure}[H]
	\centering
	\includegraphics[width=0.3\textwidth]{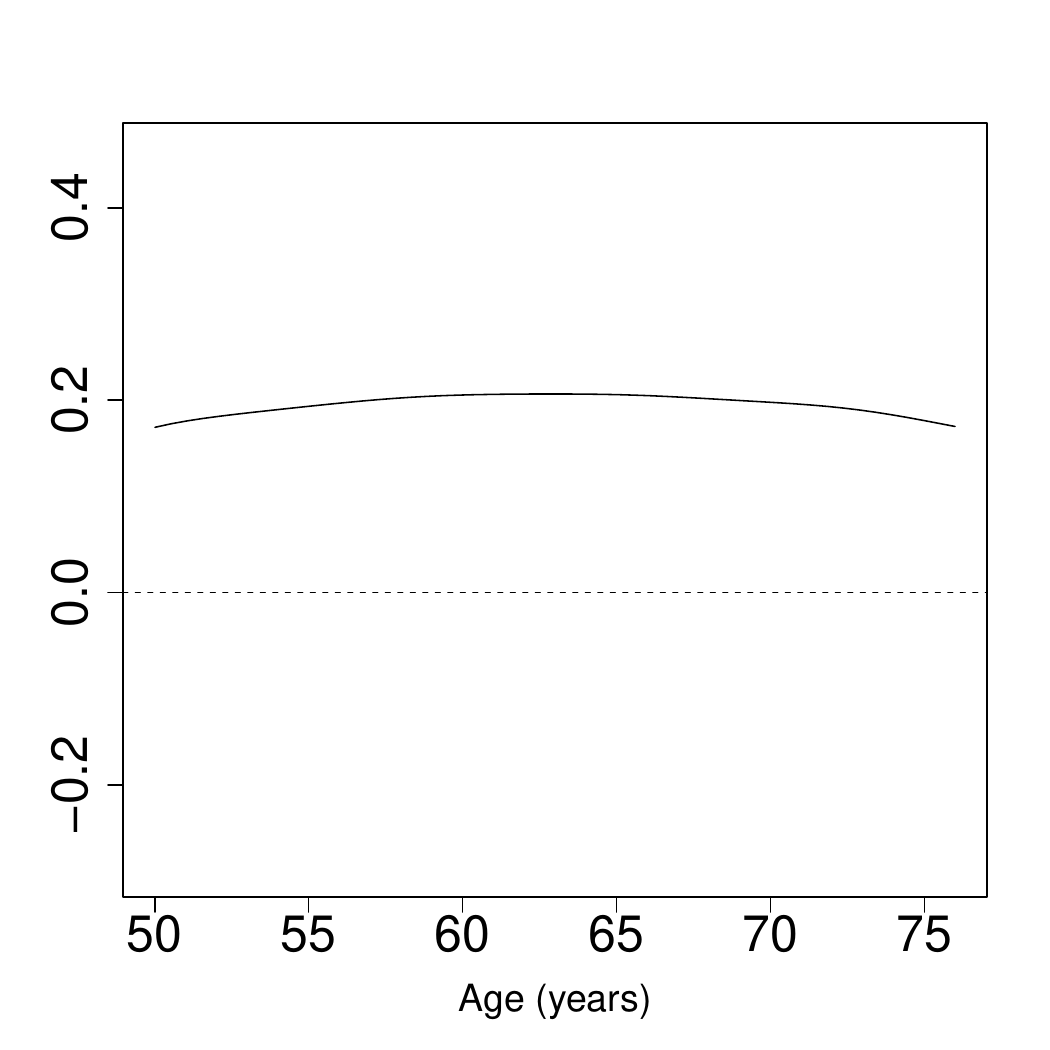}%
	\includegraphics[width=0.3\textwidth]{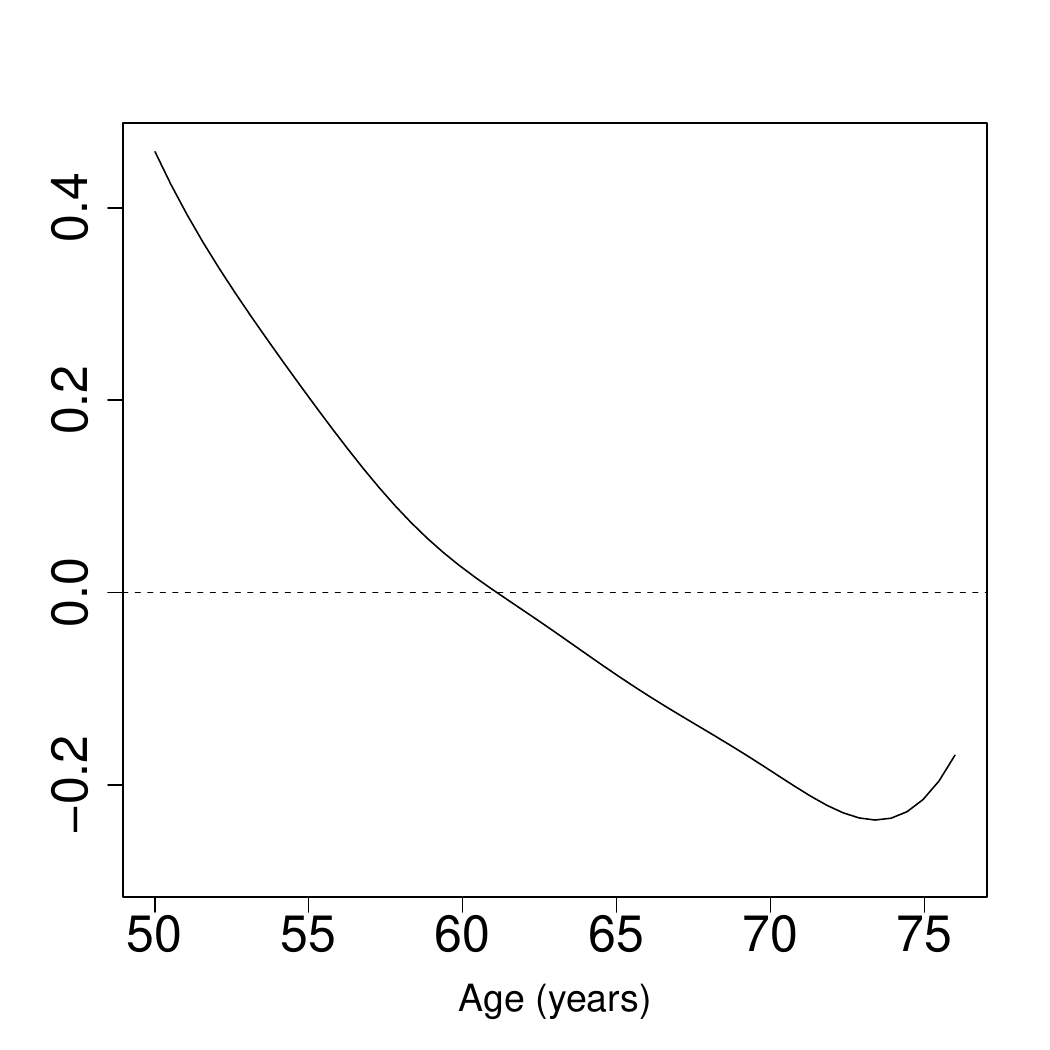}%
	\includegraphics[width=0.3\textwidth]{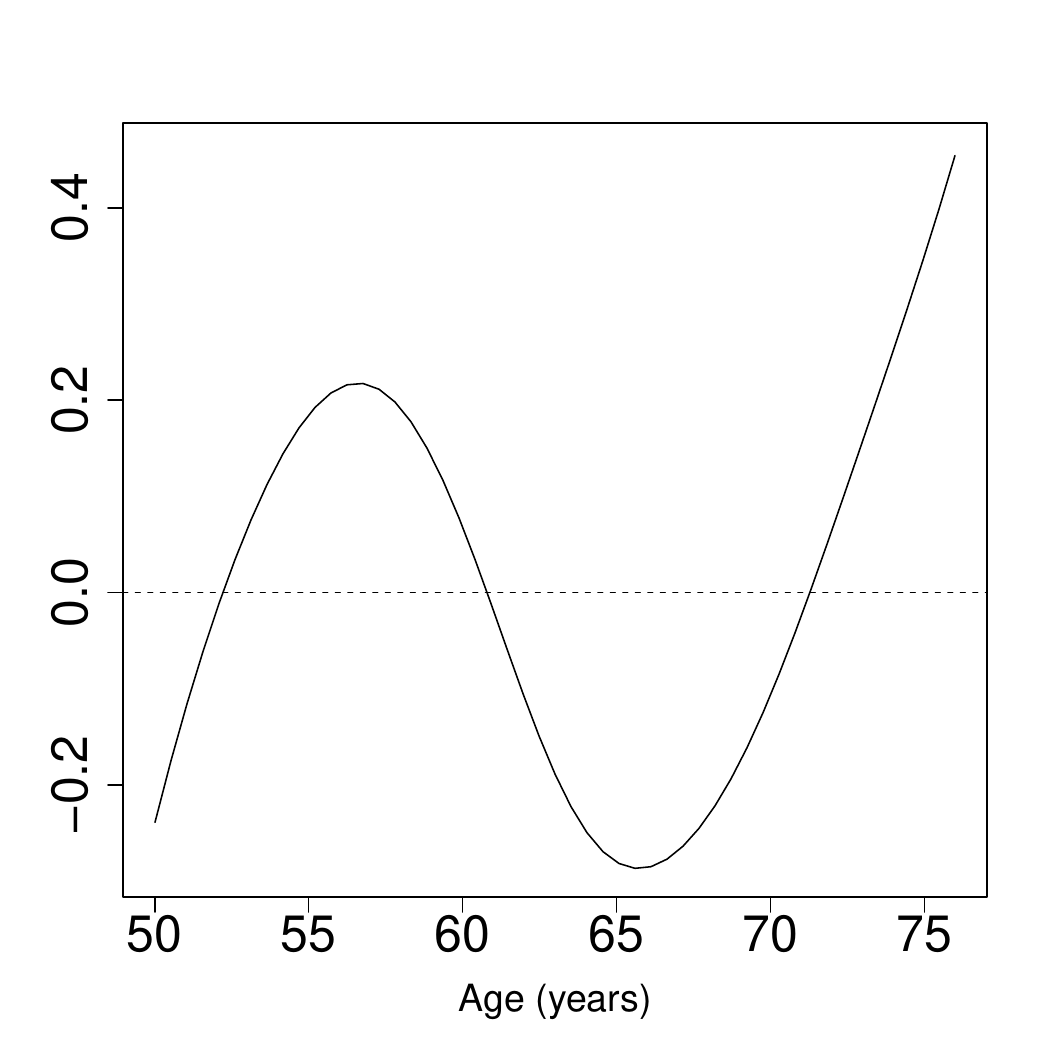}
	\caption{The first three estimated  eigenfunctions reflecting the main modes of variation in the sample of sparsely observed BMI functional data from the Baltimore Longitudinal Study of Aging.} \label{fig:data_application_eigenfunctions}
\end{figure}
\double

\section{Discussion and Concluding Remarks}\label{S:Discussion}

The main message of our paper concerns   the common scenario of sparsely observed functional data, which  covers many longitudinal designs, where one has only few time points at which noisy measurements of the function are available.  In these situations 
a point estimation perspective is not productive for the functional principal components that figure in the Karhunen--Lo\`eve expansion of FPCA  since consistent estimators are unavailable.  We advocate not to target  point estimates of functional principal components and responses given sparse measurements of functional predictors  but instead to target  predictive distributions, for which consistent estimators are available, leading to prediction regions as the targets of interest.  

The inherent uncertainty caused by the sparsity of the measurement times is even present for more densely sampled functional  data but to a lesser extent and may then be ignored.   This paper provides  a formal analysis and precise characterization of the decline in uncertainty as designs get denser.   The increasing information content in a design as it gets denser is accurately reflected in the shrinkage of the conditional distributions as delineated in  Propositions $1$ and $2$.

When one aims at a response in a functional linear regression model the predictive distribution targets the (truncated)  predictable part of the response $Y_i$, which is the part that is not contaminated with unpredictable measurement error $\epsilon_{iY}$. Therefore, the observed  $Y_i$, which includes measurement error,  is not necessarily located within the prediction region  constructed from $\cP_{iK}$.  Instead, the predictive intervals target the true truncated predictable part $\eta_{iK}=\beta_0+\bbeta_K^T \bxi_{iK}$ of the observed response $Y_i$, which is close to the linear predictor  $\eta_i=\beta_0+\sumkinf \beta_k \xi_{ik}$  for a large enough truncation point $K$.

\single
\begin{figure}[H]
	\centering
	\includegraphics[width=3.5in]{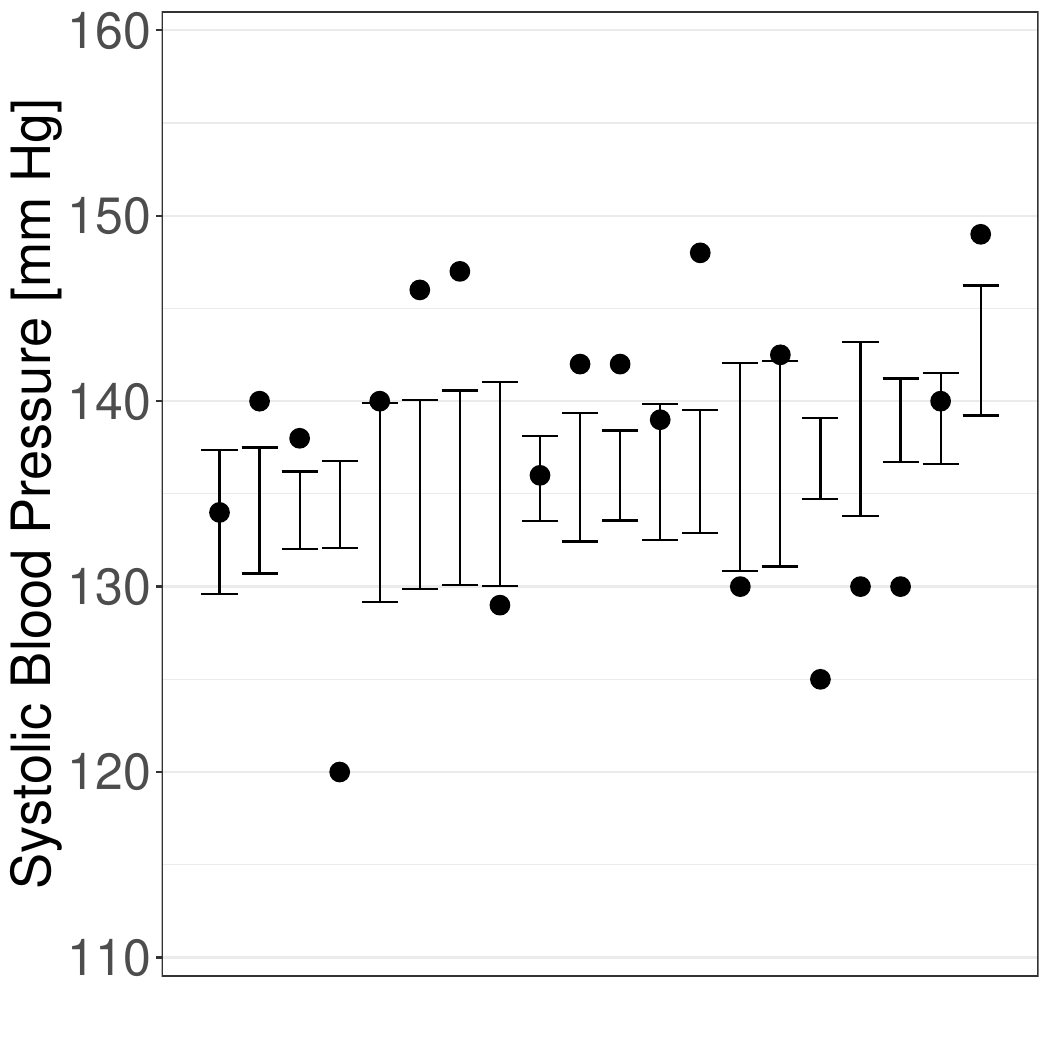}
	\caption{Predictive distribution intervals for $E(Y|X)$ where $X$ are sparsely observed BMI trajectories and $Y$ is the last observed systolic blood pressure. The intervals are ordered from left to right  by 
	size of the mean of the predictive distribution, where the interval for the smallest mean is at the left end and the interval with the largest mean at the right end. The dots are the observed responses $Y$ which carry a large random component that is unpredictable.}\label{fig:predDistr_blsa}
\end{figure}
\double

From a  practical perspective, the main implication  of the predictive distribution approach that we advocate here  is  to abandon  inconsistent point estimates of functional principal components and their associated trajectories and of predicted responses in the presence of sparsely sampled functional predictors. Instead one should focus on obtaining and using predictive distributions. Since  under Gaussian assumptions consistent estimation of these predictive distributions is feasible and theoretically supported with convergence rates, this  approach provides for valid uncertainty quantification of functional trajectories and the predictable part of the response  whenever  predictors are sparsely sampled, as is common in longitudinal designs. Obtaining these predictive distributions
is straightforward  and can also be used to simulate the effects of different sampling schemes on the uncertainty of the resulting prediction of trajectories and responses.

\section*{Acknowledgments}

The research of XD has been supported by NSF grant DMS-2329879 and the research of 
HGM  by NSF grant  DMS-2310450. We express our thanks to the reviewers for helpful comments that led to numerous improvements.

\newpage 

\centerline{\large\bf Supplement:  Additional  Results and Proofs}

\section{Additional Simulation Results} \label{s:AdditionalSimulation}

 Here we report simulation results for an additional  measure of performance for $\cP_{iK}$, where we computed the estimated $2$-Wasserstein distance  between the empirical distribution of  $\hF_{iK}(\beta_0+\int_\cT \beta(s) (X_i(s)-\mu(s))ds)$, $i=1,\dots,n$, and a uniform distribution on $(0,1)$.  This 
 is of interest, observing that \newline  $F_{1K}(\eta_{1K}),\dots,F_{nK}(\eta_{nK})$ constitute an i.i.d.\ sample from a uniform random variable $U$ in $(0,1)$. A conditioning argument gives $P(F_{iK}(\eta_{iK})\le p)=E(P(F_{iK}(\eta_{iK})\le p\vert \bXi))=E(P(\eta_{iK}\le F_{iK}\inv(p) \vert \bXi))=p$, $p\in(0,1)$. 
Thus, if we denote by $F_K(\eta_K)$ a generic probability transformation of the linear response $\eta_K$ 
one would expect the random variable $F_K(\eta_K)$ to be close to a uniform distribution over $(0,1)$, in terms of 
\begin{align} \label{eq:Wass_againstUnif}
	\cW_2^2(F_K(\eta_K),U)&= \int_0^1 (Q_K(p)-p)^2 dp,
\end{align}
where $Q_K$ is the quantile function of the random variable $F_K(\eta_K)$. Since the quantities $F_{1K}(\eta_{1K}),\dots,F_{nK}(\eta_{nK})$  are i.i.d. and share the same distribution with $F_K(\eta_K)$, we may estimate $Q_K$ by the empirical quantile of the $F_{iK}(\eta_{iK})$. 

Defining $Z_i$ to be the $i$th order statistic of the $F_{jK}(\eta_{jK})$, $j=1,\dots,n$, a natural estimate $U_\mathcal{W}$ of $W_2^2(F_K(\eta_K),U)$ in \eqref{eq:Wass_againstUnif}  is \citep{amar:21}  
\begin{align*}
	U_\mathcal{W}&=\sumin \frac{z_i^2}{n}-z_i\left(\frac{i^2}{n^2}-\frac{(i-1)^2}{n^2}\right)+\frac{1}{3}\left(\frac{i^3}{n^3}-\frac{(i-1)^3}{n^3}\right),
\end{align*}
and we define $\hat{U}_\mathcal{W}$ analogously after replacing population quantities by their estimated versions. The simulation results are in Table~\ref{tab:WassersteinDiscrepancy}. One finds that as $n$ increases, the distance $\hat{U}_\mathcal{W}$ diminishes,  which reflects better performance of the predictive distributions $\cP_{iK}$. Higher noise levels lead to worse performance as it becomes harder to estimate population quantities with the same sample size. Similarly, denser designs have a lower average value of  $\hat{U}_\mathcal{W}$ as expected. 

\single 

\begin{table}
	\aboverulesep=0.0ex
	\belowrulesep=0.0ex
	\caption{Simulation results for the Wasserstein discrepancy  against a uniform distribution $\hat{U}_\mathcal{W}$ defined through \eqref{eq:Wass_againstUnif} for the same settings as in Table 1, displaying the averages of $\hat{U}_\mathcal{W}$ based on 2000 simulation runs.  Averages are  scaled by a factor $1,000$. Smaller discrepancies indicate improved estimation of predictive distributions.} \label{tab:WassersteinDiscrepancy}
	
	\centering
	\begin{tabular}[c]{|c|c|c|c|c|c|c|c|}
		\hline
		\multicolumn{2}{|c|}{Measurement Error Noise level} & \multicolumn{6}{c|}{Sparsity setting} \\
		\hline
		\multicolumn{1}{|c|}{Predictor} & \multicolumn{1}{c|}{Response} & \multicolumn{2}{c|}{Very Sparse} & \multicolumn{2}{c|}{Medium Sparse} & \multicolumn{2}{c|}{Dense} \\
		\hline
		$\sigma$ & $\sigma_Y$ & $n=500$ & $n=2000$ & $n=500$ & $n=2000$ & $n=500$ & $n=2000$\\
		\hline
		& 0.5 & 1.74 & 0.62 & 0.85 & 0.46 & 0.76 & 0.37\\
		\cmidrule{2-8}
		\multirow{-2}{*}{\centering 0.5} & 1.0 & 2.18 & 0.75 & 1.22 & 0.58 & 1.25 & 0.52\\
		\cmidrule{1-8}
		1.0 & 0.5 & 2.95 & 1.54 & 1.05 & 0.44 & 0.82 & 0.45\\
		\hline
	\end{tabular}
\end{table}

\double

\vspace{1cm} 

\section{Assumptions and Main Proofs} \label{s:AssumptionsAndProofs}
\subsection{Assumptions}\label{AA1}

We assume the following regularity conditions \ref{a:K}--\ref{a:Ubeta}, which are similar to those in \cite{zhan:16} and \cite{dai:16:1}, and are compiled  here in one place to facilitate reading.  Recall that $w_i=\left(\sum_{j=1}^n n_j\right)\inv$ and $v_i=\left(\sum_{j=1}^n n_j(n_j-1)\right)\inv$.
\begin{enumerate}[label=(A\arabic*)]
	\item \label{a:K} $K(\cdot)$ is a symmetric probability density function on $[-1, 1]$ and  is Lipschitz continuous: There exists $0 < L < \infty$ such that $|K(u) - K(v)| \le L|u - v|$ for any $u, v \in [0, 1]$. 
	\item \label{a:T} $\{ T_{ij}: i=1, \dots, n,\, j=1, \dots, n_i \}$ are i.i.d.\ copies of a random variable $T$ defined on $\cT$, and $n_i$ are regarded as fixed. The density $f(\cdot)$ of $T$ is bounded below and above,
	\[ 0 < m_f \le \min_{\tinT} f(t) \le \max_{\tinT} f(t) \le M_f < \infty. \]
	Furthermore $f^{(2)}$, the second derivative of $f(\cdot)$, is bounded.
	\item \label{a:indep} $X$, $\epsilon$, and $T$ are independent.
	\item \label{a:muGDerBounded} $\mu^{(2)}(t)$ and $\partial^2 \Gamma(s, t)/\partial s^p\partial t^{2-p}$ exist and are bounded on $\cT$ and $\cT \times \cT$, respectively, for $p = 0, \dots, 2$. 
	\item \label{a:hmu} $h_\mu \tozero$, $\log(n) \sumin n_i w_i^2/h_\mu \tozero$ and $\log(n) \sumin n_i (n_i-1) w_i^2 \tozero$ . 
	\item \label{a:Ualpha} For some $\alpha > 2$, $E(\supt|X(t) - \mu(t)|^\alpha) < \infty$, $E(|\epsilon|^\alpha) < \infty$, and 
	\[ n\left[ \sumin n_iw_i^2h_\mu + \sumin n_i(n_i - 1)w_i^2h_\mu^2 \right] \left[\frac{\log(n)}{n}\right]^{2/\alpha - 1} \toinf. \]
	\item \label{a:hG} $h_G \tozero$, $\log(n)\sumin n_i(n_i - 1) v_i^2 / h_G^2 \tozero$ and $\log(n)\sumin n_i(n_i - 1)(n_i-2) v_i^2/h_G \tozero$. 
	\item \label{a:Ubeta} For some $\beta_\gamma > 2$, $E (\supt |X(t) - \mu(t)|^{2\beta_\gamma}) < \infty$, $E(|\epsilon|^{2\beta_\gamma}) < \infty$, and
	\begin{align*}
		n\bigg[ & \sumin n_i(n_i - 1)v_i^2h_G^2 + \sumin n_i(n_i-1)(n_i-2)v_i^2h_G^3 \\
		& + \sumin n_i(n_i-1)(n_i-2)(n_i-3)v_i^2h_G^4 \bigg]\left[ \frac{\log(n)}{n} \right]^{2/\beta_\gamma - 1} \toinf.
	\end{align*}
\end{enumerate}
We remark that assumption \ref{a:T} implies \ref{a:fbelow} in the main text 
and assumption \ref{a:muGDerBounded} implies 
\ref{a:GammaDiff}.

\subsection{Additional Details for Mean and Covariance Estimation}\label{AA2}

For notational simplicity, for a function $g_1:\cT \to \bbR$ and a vector \newline $\boldsymbol{z}=(z_1,\dots,z_p)^T\in \bbR^p$, $p>0$, denote by $g_1(\boldsymbol{z})=(g_1(z_1),\dots,g_1(z_p))^T$  the application of $g_1$ to $\boldsymbol{z}$  entry-wise. Similarly, for a function $g_2:\cT\times \cT \to \bbR$ and a second vector $\boldsymbol{r}=(r_1,\dots,r_q)^T\in \bbR^q$, $q>0$, denote by $g_2(\boldsymbol{z},\boldsymbol{r}^T)$ the $p\times q$ matrix, for which the $(l,k)$ element is given by $g_2(z_l,r_k)$, where $1\le l \le p$ and $1\le k \le q $. Also, for two scalar sequences $\theta_n$ and $\gamma_n$, write $\theta_n\lesssim \gamma_n$ if there exists a constant $c_0>0$ such that $\theta_n \le c_0 \gamma_n$ holds for large enough $n$.

For the mean function estimate, set $\hmu(t)=\hat{\gamma}_0$, where
\begin{align*}
	(\hat{\gamma}_0,\hat{\gamma}= \argmin_{\gamma_0,\gamma_1} \sumin w_i \sum_{j=1}^{n_i} (X_{ij}-\gamma_0-\gamma_1(T_{ij}-t))^2 K_{h_\mu}(T_{ij}-t),
\end{align*}
where $w_i=(\sumjn n_j)\inv$ are equal subject weights, $K$ is a kernel function corresponding to a density function with compact support $[-1,1]$ and $K_{h_\mu}(\cdot)=K(\cdot/h_\mu)/h_\mu$. For the covariance surface estimate, writing  $\hat{C}_{ijl}=(X_{ij}-\hmu(T_{ij}))(X_{il}-\hmu(T_{il}))$ for  the raw covariances \citep{mull:05:4}, set $\hGamma(s,t)=\hat{\gamma}_0$, where
\begin{align*}
	&(\hat{\gamma}_0,\hat{\gamma}_1,\hat{\gamma}_2)\\
	&= \argmin_{\gamma_0,\gamma_1,\gamma_2} \sumin v_i \sum_{1\le j\neq l \le n_i} (C_{ijl}-\gamma_0-\gamma_1(T_{ij}-s)-\gamma_2(T_{il}-t))^2\\  & \times K_{h_G}(T_{ij}-s) K_{h_G}(T_{il}-t).
\end{align*}
Here  $v_i=(\sumjn n_j(n_j-1))\inv$ and $n_i\ge 2$ is assumed throughout for the covariance estimation step.

For the cross-covariance smoothing step, using  the raw covariances $C_i(T_{ij})=(\tX_{ij}-\hmu(T_{ij}))Y_i$, the local linear estimate of $C(t)$ is given by $\hC(t)=\hbeta_0^X$, where
\begin{align}
	(\hbeta_0^X,\hbeta_1^X)=\argmin_{\beta_0^X,\beta_1^X \in\bbR} \ \sumin \sum_{j=1}^{n_i} w_i K_h(T_{ij}-t) (C_i(T_{ij})-\beta_0^X -\beta_1^X (t-T_{ij}))^2\label{eq:localLinear_C},
\end{align}
with  $w_i=(\sumin n_i)\inv$.

\subsection{Proofs of Main Results in Section \ref{S:setup} and  \ref{S:preliminariesGaussian}}\label{AA3}

The proofs in this and the  following sections rely on various auxiliary results and lemmas included in Section~\ref{s:AuxiliaryResults}. 

\begin{proof}[Proof of Proposition \ref{thm:xiEst}]
	Fix $i\in \{1,\dots,n\}$ and $k\in\mathbb{N}$, and recall that
	\begin{equation} \label{eq:phikInner}
		\txi_{ik} = \lambda_k \bphi_{ik}^T \bSigma_i^{-1}(\bXi-\bmu_i),
	\end{equation}
	where $\bphi_{ik}=(\phi_k(T_{i1}),\dots,\phi_k(T_{im}))^T$. Define $\bW = \diag(w_l)$, where $w_l$ are quadrature weights chosen according to the left endpoint rule, i.e. $w_l = T_{il} - \max_{j: T_{ij} < T_{il}} T_{ij}$ for $l = 1, \dots, m$, and we set $\max_{j: T_{ij} < T_{il}} T_{ij}=0$ whenever $\{j: T_{ij} < T_{il}\}=\emptyset$. Let $g_m$ be the size of the maximal gap between $\{0, T_{i1}, \dots, T_{im}, 1\}$ for $\cT = [0, 1]$ and consider the quadrature approximation errors
	\[
	\be_k = \int_\cT \Gamma(\bTi, t) \phi_k(t) dt - \bSigma_i \bW \bphi_{ik},
	\]
	where $\Gamma(\bTi, t) = (\Gamma(T_{i1}, t), \dots, \Gamma(T_{im}, t))^T$. Here note that since $\bSigma_i=\sigma^2 I_m + \Gamma(\bTi,\bTi^T)$, where $\Gamma(\bTi,\bTi^T)$ corresponds to the matrix with elements $[\Gamma(\bTi,\bTi^T)]_{jl}=\Gamma(T_{ij},T_{il})$, $j,l\in\{1,\dots,m\}$,  we have $\bSigma_i \bW \bphi_{ik}= \sigma^2 \bW \bphi_{ik} +\Gamma(\bTi,\bTi^T) \bW \bphi_{ik} $ where the second term in the previous expression corresponds to the numerical quadrature approximation to $\int_\cT \Gamma(\bTi, t) \phi_k(t) dt$ and the first term will be shown to be negligible as $m\to\infty$. 
	
	From the quadrature approximation error for integrating a continuously differentiable function $g$ over $[0,1]$ under the left-endpoint rule and denoting $T_{i}^{(m)}:=\max_{1\leq j\leq m} T_{ij}$ we have
	\begin{align}
		&\left|\int_{0}^{1} g(t) dt - \sumlm g(T_{il}) w_l\right|  \nonumber 
		\\ & \le  \frac{\sup_{\tinT}|g'(t)|}{2} \left( \sumlm w_l^2 + (1-T_{i}^{(m)})^2   \right) + |(1-T_{i}^{(m)}) g(1)| \label{eq:intErrBound} \\
		& = O_p(m^{-1}), \label{eq:intErr}
	\end{align}
	where \eqref{eq:intErr} follows from Lemma \ref{lem:unifGaps}. Denoting by $\norm{\cdot}_2$ the Euclidean norm in $\mathbb{R}^m$, we have
	\begin{align}\label{eq:ek}
		\norm{\be_k}_2 & \le \norm{\int_\cT \Gamma(\bTi, t) \phi_k(t) dt - \Gamma(\bTi,\bTi^T)\bW\bphi_{ik}}_2 + \norm{\sigma^2 \bW \bphi_{ik}}_2  = O_p(m^{-1/2}),
	\end{align}
	which follows by noting that the integration error rates for all entries in $\be_k$ are uniform due to Condition~\ref{a:GammaDiff} in the main text and \eqref{eq:intErrBound}, and that 
	\begin{equation} \label{eq:Wphi}
		\norm{\bW\bphiik}_2^2 \le \sumlm w_l^2 \supt \phi_k^2(t) = O_p(m^{-1}). 
	\end{equation}
	Since
	\begin{equation}
		\lambda_k\bphi_{ik} = \bSigma_i \bW \bphi_{ik} + \be_k, \label{eq:phiSigmaInv}
	\end{equation}
	we have
	\begin{align}
		\lambda_k \bphi_{ik}^T \bSigma_i^{-1}(\bXi-\bmu_i) & = \bphi_{ik}^T\bW(\bXi-\bmu_i) + \be_k^T \bSigma_i^{-1} (\bXi-\bmu_i)  \nonumber \\ 
		& = \bphi_{ik}^T\bW(\bYi- \bmui) + \bphi_{ik}^T\bW\bepsi + \be_k^T\bSigma_i^{-1}(\bXi-\bmui), \label{eq:phi3terms}
	\end{align}
	where $\bYi=(X_{i1},\dots,X_{im})^T$ and $\bepsi=(\epsilon_{i1},\dots,\epsilon_{im})^T$. Let $g_k(t)=\phi_k(t) (X_i(t)-\mu(t))$. Then, from Condition~\ref{a:GammaDiff} and since the process $X_i(t)$ is assumed continuously differentiable almost surely, we have $g_k(t)$ is continuously differentiable a.s.  over the compact set $\cT=[0,1]$ so that $\sup_{\tinT}|g_k'(t)|=O_p(1)$. Thus, using \eqref{eq:intErrBound} and the fact that $\int_{0}^{1} \phi_k(t) (X_i(t)-\mu(t)) dt = \xi_{ik}$, we obtain
	\begin{gather*}
		\xi_{ik} - \sumlm  \phi_k(T_{il}) (X_i(T_{il})-\mu(T_{il})) w_l =O_p(m\inv),
	\end{gather*}
	whence
	\begin{align}
		\bphi_{ik}^T\bW(\bYi- \bmui)& = \xi_{ik} + O_p(m^{-1}). \label{eq:phi1} 
	\end{align}
	By conditioning and using the independence between $\bepsi$ and $\bT_i$, 
	$E(\bphi_{ik}^T\bW\bepsi)^2=E[E (\bphi_{ik}^T\bW\bepsi \bepsi^T \bW \bphi_{ik} \vert \bTi)]$
	
	$\quad =E[\bphi_{ik}^T\bW E (\bepsi \bepsi^T) \bW \bphi_{ik}]=\sigma^2 E(\norm{\bW\bphiik}_2^2).$  
	
	Hence, from \eqref{eq:Wphi} it follows that $E(\bphi_{ik}^T\bW\bepsi)^2=O(m\inv)$ and thus
	\begin{align}
		\bphi_{ik}^T\bW\bepsi & = O_p(m^{-1/2}) \label{eq:phi2}.
	\end{align}
	We now show that $Z_m:=\be_k^T\bSigma_i^{-1}(\bXi-\bmui)=O_p(m^{-1/2})$. Note that for any $M>0$ 
	\begin{align}
		P\left(\sqrt{m}\ \lvert Z_m \rvert>M \vert \bTi\right) \le \frac{m}{M^2} \lVert \be_k \rVert_2^2 \lVert \bSigma_i^{-1/2} \rVert_{\text{op},2}^2 \le \frac{m}{M^2 \sigma^2} \lVert \be_k \rVert_2^2 \label{eq:ineqZm},
	\end{align}
	where the last inequality follows since $\lVert \bSigma_i^{-1/2} \rVert_{\text{op},2} \leq \sigma\inv$. From \eqref{eq:ek},  $m \norm{\be_k}_2^2=O_p(1)$ and thus for any $\epsilon>0$ there exist $M_0=M_0(\epsilon)>0$ and $m_0=m_0(\epsilon)\in \mathbb{N}^+$ such that
	\begin{align}
		P\left( m \norm{\be_k}_2^2 >M_0 \right) \le \epsilon, \quad \forall m\ge m_0 \label{eq:mek}.
	\end{align}
	Hence, by choosing $M=M_\epsilon:=\sqrt{M_0/(\epsilon \sigma^2)}$ and defining \newline  $u_{im}:=P\left(\sqrt{m}\ \lvert Z_m \rvert>M \vert \bTi\right)$, 
	\begin{align}
		& P\left( \sqrt{m}\ \lvert Z_m \rvert>M_\epsilon \right)=E[u_{im}] \nonumber \\& \quad = E[u_{im}1_{\{u_{im}\le \epsilon\}}+u_{im} 1_{\{u_{im} >\epsilon\}}]\le  \epsilon+ P(u_{im}>\epsilon)\label{eq:ineqCondTi},
	\end{align}
	where the last inequality follows since $u_{im}\le 1$. Now  \eqref{eq:ineqZm} and \eqref{eq:mek} imply $P(u_{im}>\epsilon) \le \epsilon$ for $m\ge m_0$, whence 
	\begin{align}
		P\left( \sqrt{m}\ \lvert \be_k^T\bSigma_i\inv (\bXi-\bmui) \rvert>M_\epsilon \right)\le 2\epsilon, \quad \forall m\ge m_0\label{eq:phi3},
	\end{align}
	which shows that $\be_k^T\bSigma_i\inv (\bXi-\bmui)=O_p(m^{-1/2})$. The result follows by combining \eqref{eq:phi3terms}, \eqref{eq:phi1}, \eqref{eq:phi2} and \eqref{eq:phi3}.
\end{proof}

\begin{proof}[Proof of Theorem \ref{thm:xiEstHat}]
	Let $K_0\ge k$ be any fixed integer and consider the constant sequence $K=K(n)=K_0$, for all $n\ge 1$. Thus $(a_n+b_n) \sumkK \lambda_k\inv =o(1)$ as $\ntoinf$ and similar arguments as the ones outlined in the proof of Lemma \ref{lem:secondMomentFPCscoresNorm} leads to
	\begin{align*}
		(\hxi_{k}^*-\txi_{k}^*)^2
		&\lesssim
		(\hbe_k^{*T} \hbSigma^{*-1} (\bX^*-\hbmu^*))^2
		+
		(\hbphi_{k}^{*T} \bW^* (\bX^*-\hbmu^*)
		-
		\bphi_{k}^{*T} \bW^* (\bX^*-\bmu^*))^2\\
		&+
		(\be_k^{*T} \bSigma^{*-1} (\bX^*-\bmu^*))^2\\
		&=
		O_p\left(
		m^{*-1}+m^{*2} (a_n+b_n)^2 \right),
	\end{align*}
	where $\hbe_k^*$ is defined as in \eqref{eq:hbe_k*definition}. The result follows.
\end{proof}


\begin{proof}[Proof of Theorem \ref{thm:functionalShrinkage}]
	Recall that $\tmu_{iK}=\tbxiiK^T \bPhi_{K}$ and $K=K(m)$ satisfies $\sumkK \lambda_k\inv \asymp m^{1-\delta}$, where $\delta\in(1/2,1)$ and $\tbxiiK=\bLambda_K\bPhiiK^T\bSigma_i^{-1}(\bXi-\bmu_i)$. We first show shrinkage of $\lVert \tmu_{iK}-\sumkinf \xi_{ik}\phi_k\rVert_{L^2}$. Also, for any $k\ge 1$ define 
	\begin{align*}
		\be_k = \int_\cT \Gamma(\bTi, t) \phi_k(t) dt - \bSigma_i \bW \bphi_{ik}.
	\end{align*}
	From \eqref{eq:phi3terms} and the triangle inequality, we have
	\begin{align}
		\lVert \tmu_{iK}-\sumkinf \xi_{ik}\phi_k\rVert_{L^2}
		&= \lVert \sumkK \lambda_k \bphi_{ik}^T\bSigma_i^{-1}(\bXi-\bmu_i)\phi_k  -\sumkinf \xi_{ik}\phi_k\rVert_{L^2} \nonumber\\
		&\le
		\lVert \sumkK  \bphi_{ik}^T\bW(\bYi- \bmui)  \phi_k  -\sumkinf \xi_{ik}\phi_k\rVert_{L^2}
		+
		\lVert \sumkK  \bphi_{ik}^T\bW\bepsi \phi_k \rVert_{L^2}\nonumber\\
		&+
		\lVert \sumkK  \be_k^T\bSigma_i^{-1}(\bXi-\bmui) \phi_k \rVert_{L^2}\nonumber\\
		&= \lVert  A\rVert_{L^2}+ \lVert  B \rVert_{L^2}+ \lVert  C\rVert_{L^2} \label{eq:shrinkageFuncSubject_decomposition},
	\end{align}
	where the functions $A=A(t)$, $B=B(t)$ and $C=C(t)$ are defined through the last equation. By Fubini's theorem and orthogonality of the $\phi_k$, we have
	\begin{align*}
		&E(\lVert  B \rVert_{L^2}^2)\\
		&=
		\int_\cT E\left[\left( \sumkK \phi_k(t) \bphiik^T \bW \bepsi\right)^2 \right]dt
		= \sumkK E[(\bphiik^T \bW \bepsi)^2 ]=\sumkK  \sigma^2 E(\norm{\bW\bphiik}_2^2),
	\end{align*}
	where the last equality follows from the proof of Theorem \ref{thm:xiEst}. Thus, from \eqref{eq:Wphi} and Lemma \ref{lem:unifGaps} we obtain
	\begin{align*}
		&E(\lVert  B \rVert_{L^2}^2)\\
		&\le 
		\sumkK  \sigma^2 m\inv \lVert \phi_k \rVert_\infty^2=O\left(  m\inv \sumkK  \lambda_k^{-2} \right) 
		=O\left(  m\inv \Big[\sumkK  \lambda_k^{-1}\Big]^2 \right) =O(m^{1-2\delta}),
	\end{align*}
	where the first equality is due to $\lVert \phi_k \rVert_\infty=O(\lambda_k\inv)$. This follows from the relation 
	\begin{align*}
		\lambda_k \phi_k(t) &= \int_\cT \Gamma(t,s)\phi_k(s)ds \le \lVert \Gamma(t,\cdot)\rVert_{L^2} <\infty
	\end{align*}
	uniformly over $t$, which is a consequence of the Cauchy--Schwarz inequality and continuity of $\Gamma$ over the compact set $\cT^2$. Therefore
	\begin{align}
		\lVert  B \rVert_{L^2}&=O_p(m^{1/2-\delta}) \label{eq:shrinkageFuncSubject_B}.
	\end{align}
	Observe
	\begin{align*}
		A(t)&=
		\sumkK  (\bphi_{ik}^T\bW(\bYi- \bmui) -\xi_{ik})\phi_k(t)  - \sumkKinf \xiik \phi_k(t)
		=A_1(t) - A_2(t),
	\end{align*}
	where $A_1(t)$ and $A_2(t)$ are defined through the last equation. By Fubini's theorem along with the orthonormality of the $\phi_k$, we have
	\begin{align*}
		E(\lVert A_2 \rVert_{L^2}^2)&= \sumkKinf \lambda_k,
	\end{align*}
	and then
	\begin{align}
		\lVert A_2 \rVert_{L^2}&=O_p\left( \left(\sumkKinf \lambda_k\right)^{1/2} \right)\label{eq:shrinkageFuncSubject_A2}.
	\end{align}
	Define $g_k(t)=\phi_k(t) (X_i(t)-\mu(t))$, $t\in\cT$. By the dominated convergence theorem along with the Cauchy--Schwarz inequality,
	\begin{align*}
		\lambda_k \lvert \phi_k'(t)\rvert &= \Big\lvert \int_\cT \Gamma^{(1,0)} (t,s) \phi_k(s)ds \Big\rvert
		\le 
		\lVert \Gamma^{(1,0)} \rVert_\infty<\infty,
	\end{align*}
	where $\Gamma^{(1,0)}(t,s)=\partial \Gamma(t,s)/\partial t$. This shows that $\lVert \phi_k' \rVert_\infty = O(\lambda_k\inv)$ which combined with the fact that $\lVert \phi_k \rVert_\infty=O(\lambda_k\inv)$ and Condition \ref{a:Diff} leads to $\lVert g_k'\rVert_\infty = O(\lambda_k\inv)$ and $\lVert g_k\rVert_\infty = O(\lambda_k\inv)$. Hence, from the Riemann sum approximation error bound in \eqref{eq:intErrBound} applied to the function $g_k(t)=\phi_k(t) (X_i(t)-\mu(t))$, we obtain
	\begin{align*}
		\lvert \bphi_{ik}^T\bW(\bYi- \bmui) -\xi_{ik}\rvert 
		&\lesssim \lambda_k\inv
		\left( \sumlm w_l^2 + (1-T_{i}^{(m)})^2  + (1-T_{i}^{(m)}) \right).
	\end{align*}
	Therefore
	\begin{align*}
		E(\lVert A_1 \rVert_{L^2})
		&\le
		\sumkK  E(\lvert \bphi_{ik}^T\bW(\bYi- \bmui) -\xi_{ik}\rvert)
		\lesssim
		\sumkK \lambda_k\inv m\inv
		=O(m^{-\delta}),
	\end{align*}
	where we use the condition $\sumkK \lambda_k\inv \asymp m^{1-\delta}$. This shows that $\lVert A_1 \rVert_{L^2}=O_p(m^{-\delta})$, which combined with \eqref{eq:shrinkageFuncSubject_A2} leads to
	\begin{align}
		\lVert A \rVert_{L^2}&=O_p\left(m^{-\delta} + \left(\sumkKinf \lambda_k\right)^{1/2}  \right)\label{eq:shrinkageFuncSubject_A}.
	\end{align}
	From \eqref{eq:ek}, \eqref{eq:Wphi}, the Riemann sum approximation error bound \eqref{eq:intErrBound}, and using that $\lVert \phi_k' \rVert_\infty = O(\lambda_k\inv)$ along with $\lVert \phi_k \rVert_\infty = O(\lambda_k\inv)$, we obtain
	\begin{align}
		\lVert \be_k\rVert_2 
		&\lesssim \sqrt{m} \lambda_k\inv
		\left( \sumlm w_l^2 + (1-T_{i}^{(m)})^2  + (1-T_{i}^{(m)}) \right)
		+ \lambda_k\inv \left(\sumlm w_l^2 \right)^{1/2}\label{eq:shrinkageFuncSubject_bound_ek},
	\end{align}
	Thus, using the inequality $(x_0+x_1)^2\le 2 x_0^2+2x_1^2$, which is valid for all $x_0,x_1\in\bbR$, along with Lemma \ref{lem:unifGaps} leads to
	\begin{align}
		E(\lVert \be_k\rVert_2^2 )
		&\lesssim
		E \left( m \lambda_k^{-2} \left(\left(\sumlm w_l^2\right)^2+ (1-T_{i}^{(m)})^4 + (1-T_{i}^{(m)})^2 \right)+ \lambda_k^{-2}\sumlm w_l^2\right)\nonumber\\
		&=O(m\inv \lambda_k^{-2})\label{eq:shrinkageFuncSubject_expectation_ek}. 
	\end{align}
	Therefore
	\begin{align*}
		E(\lVert  C\rVert_{L^2})
		\le
		\sumkK  E(\lvert \be_k^T\bSigma_i^{-1}(\bXi-\bmui) \rvert)
		&\le
		\sumkK  (E\{E[(\be_k^T\bSigma_i^{-1}(\bXi-\bmui))^2\vert \bTi]\})^{1/2}\\
		&\le
		\sigma\inv \sumkK  (E(\lVert \be_k\rVert_2^2)  )^{1/2}
		\lesssim
		m^{1/2-\delta},
	\end{align*}
	where last inequality uses that $\sumkK \lambda_k\inv \asymp m^{1-\delta}$. Hence
	\begin{align}
		\lVert  C\rVert_{L^2} &=O_p(m^{1/2-\delta} )\label{eq:shrinkageFuncSubject_C}.
	\end{align}
	Combining \eqref{eq:shrinkageFuncSubject_decomposition}, \eqref{eq:shrinkageFuncSubject_B}, \eqref{eq:shrinkageFuncSubject_A}, and \eqref{eq:shrinkageFuncSubject_C} leads to
	\begin{align}
		\lVert \tmu_{iK}-\sumkinf \xi_{ik}\phi_k\rVert_{L^2}
		&=
		O_p\left( m^{1/2-\delta} + \left(\sumkKinf \lambda_k\right)^{1/2} \right)\label{eq:shrinkageFuncSubject_keyDiff1}.
	\end{align}
	By  orthonormality of the $\phi_k$ and since $\bSigmaiK = \bLambda_K - \bLambda_K \bPhiiK^T \bSigma_i\inv  \bPhiiK\bLambda_K$,
	\begin{align}
		\int_\cT \Gamma_{iK}(t,t) dt &= \text{trace}(\bSigma_{iK})
		= \sum_{k=1}^K  \left( \lambda_k-\lambda_k \bphi_{ik}^T \bSigma_i\inv  \lambda_k \bphi_{ik} \right)\label{eq:shrinkageFuncSubject_varianceDecomp}.
	\end{align}
	From \eqref{eq:shrinkageFuncSubject_expectation_ek} and using the condition $\sumkK \lambda_k\inv \asymp m^{1-\delta}$, we obtain $\sumkK \lambda_k^{-2}=O(m^{2-2\delta})$ and
	\begin{align*}
		E\left(\sumkK  \be_k^T \bSigma_i\inv  \be_k \right)&\le \sigma^{-2} \sumkK E(\lVert \be_k\rVert_2^2)=O(m^{1-2\delta}).
	\end{align*}
	Thus
	\begin{align}
		\sumkK  \be_k^T \bSigma_i\inv  \be_k &=O_p(m^{1-2\delta})\label{eq:shrinkageFuncSubject_bound1}.
	\end{align}
	Since $\lVert \phi_k \rVert_\infty=O(\lambda_k\inv)$ and $\sumkK \lambda_k^{-2}=O(m^{2-2\delta})$, 
	\begin{align*}
		&\sumkK \lambda_k\inv  \lVert \be_k\rVert_2 
		\lesssim
		m^{5/2-2\delta} \left( \sumlm w_l^2 + (1-T_{i}^{(m)})^2   + (1-T_{i}^{(m)}) \right)   \\ &\quad		+
		m^{2-2\delta} \left(\sumlm w_l^2 \right)^{1/2}
		=O_p\left( m^{3/2-2\delta} \right),
	\end{align*}
	where the first inequality is due to \eqref{eq:shrinkageFuncSubject_bound_ek} and the last  equality is due to Lemma \ref{lem:unifGaps}. Thus
	\begin{align}
		\sumkK \lvert \be_k^T \bW \bphiik \rvert
		\le 
		\sumkK \lVert \be_k\rVert_2 \lVert \bW \bphiik \rVert_2
		&\le
		\left(\sumlm w_l^2\right)^{1/2} \sumkK \lVert \be_k\rVert_2   \lVert \phi_k \rVert_\infty 
		=
		O_p\left( m^{1-2\delta} \right)\label{eq:shrinkageFuncSubject_bound2},
	\end{align}
	where the second inequality is due to \eqref{eq:Wphi}. Also,
	\begin{align}
		\sumkK \sigma^2 \lvert \bphiik^T \bW \bW \bphiik \rvert
		&\le
		\sigma^2 \sumkK \lVert  \bW \bphiik \rVert_2^2
		\le
		\sigma^2  \sumkK \lVert \phi_k \rVert_\infty^2 \left(\sumlm w_l^2 \right)
		=O_p\left( m^{1-2\delta}\right)\label{eq:shrinkageFuncSubject_bound3}.
	\end{align}
	From the Riemann sum approximation error bound \eqref{eq:intErrBound} applied to the function  $g_k(t)=\lambda_k \phi_k^2(t)$, 
	and using that $\lVert g_k \rVert_\infty=O(\lambda_k\inv)$ and $\lVert g_k' \rVert_\infty=O(\lambda_k\inv)$, we have
	\begin{align*}
		\lvert \lambda_k \bphiik^T \bW \bphiik  - \lambda_k \rvert 
		&= O\left( \lambda_k\inv \left( \sumlm w_l^2 + (1-T_{i}^{(m)})^2  + (1-T_{i}^{(m)}) \right)\right).
	\end{align*}
	Thus
	\begin{align*}
		E\left( \sumkK \lvert \lambda_k \bphiik^T \bW \bphiik  - \lambda_k \rvert \right)
		&=O(m^{-\delta}),		
	\end{align*}
	which implies
	\begin{align}
		\sumkK \lvert \lambda_k \bphiik^T \bW \bphiik  - \lambda_k\rvert &=O_p\left( m^{-\delta} \right)\label{eq:shrinkageFuncSubject_bound4}.
	\end{align}
	Also, from \eqref{eq:intErrBound} and \eqref{eq:Wphi} we have
	\begin{align*}
		&\sumkK \lvert \bphiik^T \bW \left( \Gamma(\bTi,\bTi^T) \bW \bphi_{ik} - \lambda_k \bphiik \right)\rvert \\
		&\le
		\sumkK \lVert \bphiik^T \bW \rVert_2  \lVert  \Gamma(\bTi,\bTi^T) \bW \bphi_{ik} - \lambda_k \bphiik \rVert_2 \\
		&\lesssim
		\sumkK \lambda_k^{-2}  m^{1/2}  \left(\sumlm w_l^2 \right)^{1/2}
		\left( \sumlm w_l^2 + (1-T_{i}^{(m)})^2  + (1-T_{i}^{(m)}) \right),
	\end{align*}
	which along with Lemma \ref{lem:unifGaps} leads to
	\begin{align*}
		E\left( \sumkK \lvert\bphiik^T \bW \left( \Gamma(\bTi,\bTi^T) \bW \bphi_{ik} - \lambda_k \bphiik \right)\rvert \right)
		&=O(m^{1-2\delta}).
	\end{align*}
	This shows that
	\begin{align}
		\sumkK [\bphiik^T \bW \left( \Gamma(\bTi,\bTi^T) \bW \bphi_{ik} - \lambda_k \bphiik \right)]
		&=O_p(m^{1-2\delta})\label{eq:shrinkageFuncSubject_bound5}.
	\end{align}
	From \eqref{eq:shrinkageFuncSubject_bound1}, \eqref{eq:shrinkageFuncSubject_bound2}, \eqref{eq:shrinkageFuncSubject_bound3}, 
	\eqref{eq:shrinkageFuncSubject_bound4}, \eqref{eq:shrinkageFuncSubject_bound5}, and observing
	\begin{align*}
		\bphi_{ik}^T \bW \bSigma_i \bW \bphi_{ik}=\sigma^2 \bphi_{ik}^T \bW \bW \bphi_{ik}+\bphi_{ik}^T \bW  \Gamma(\bTi,\bTi^T) \bW \bphi_{ik},
	\end{align*}
	leads to
	\begin{align*}
		&\Big\lvert \sumkK (\lambda_k-\lambda_k \bphi_{ik}^T \bSigma_i\inv  \lambda_k \bphi_{ik}) \Big\rvert\\
		&= 
		\Big\lvert \sumkK (\lambda_k - \be_k^T \bSigma_i\inv  \be_k - 2 \be_k^T \bW \bphi_{ik} - \bphi_{ik}^T \bW \bSigma_i \bW \bphi_{ik}) \Big\rvert\\
		&\le
		\sumkK  \be_k^T \bSigma_i\inv  \be_k + 2 \sumkK \lvert \be_k^T \bW \bphiik \rvert
		+\sigma^2 \sumkK \lvert  \bphi_{ik}^T \bW \bW \bphi_{ik} \rvert \\
		&+ \Big\lvert \sumkK [\bphiik^T \bW \left( \Gamma(\bTi,\bTi^T) \bW \bphi_{ik} - \lambda_k \bphiik \right)] \Big\rvert
		+ \sumkK \lvert \lambda_k \bphiik^T \bW \bphiik  - \lambda_k \rvert\\
		&=O_p\left( m^{1-2\delta}\right),
	\end{align*}
	where the first equality uses \eqref{eq:phiSigmaInv}. This along with \eqref{eq:shrinkageFuncSubject_varianceDecomp} implies
	\begin{align}
		\int_\cT \Gamma_{iK}(t,t) dt 
		&= 
		O_p\left( m^{1-2\delta}\right)\label{eq:shrinkFunc_var}.
	\end{align}
	Combining \eqref{eq:shrinkFunc_var} with \eqref{eq:shrinkageFuncSubject_keyDiff1} leads to the result.
\end{proof}

\section{Auxiliary Results and Proofs} \label{s:AuxiliaryResults}
We provide the proofs of Propositions~\ref{thm:xiEst2} and Theorems  \ref{thm:xiEstHat2}, 
\ref{thm:functionalShrinkage_estimated}--\ref{thm:predictiveShrinkage_FLM} in the main text, 
followed by a sequence of auxiliary lemmas and their proofs. These auxiliary results  are used to derive the main results.\\ 

\begin{proof}[Proof of Proposition \ref{thm:xiEst2}]\vspace{-.2cm} 

	Recalling that $\bSigmaiK = \bLambda_K - \bLambda_K \bPhiiK^T \bSigma_i\inv  \bPhiiK\bLambda_K$ we have
	\begin{equation}
		\norm{\bSigmaiK}_{\text{op},2}\leq \text{trace}(\bSigmaiK)
		=
		\sum_{k=1}^K  \left( \lambda_k-\lambda_k \bphi_{ik}^T \bSigma_i\inv  \lambda_k \bphi_{ik} \right)\label{eq:maxnorm}.
	\end{equation}
	Moreover, since $\lambda_k \bphi_{ik}=\be_k+\bSigma_i \bW \bphi_{ik}$, where $\be_k$ is defined as in the proof of Proposition~\ref{thm:xiEst}, it follows that
	\begin{align}
		\lambda_k \bphi_{ik}^T \bSigma_i\inv  \lambda_k \bphi_{ik}= \be_k^T \bSigma_i\inv  \be_k+2 \be_k^T \bW \bphi_{ik} + \bphi_{ik}^T \bW \bSigma_i \bW \bphi_{ik}
		\label{eq:expandNorm}.
	\end{align}
	From, \eqref{eq:ek},  
	\begin{align*}
		\normtwo{\lambda_k\bphi_{ik} -\Gamma(\bTi,\bTi^T) \bW \bphi_{ik}}=O_p(m^{-1/2}),
	\end{align*}
	and using \eqref{eq:Wphi}, 
	\begin{gather*}
		\bphi_{ik}^T \bW \bSigma_i \bW \bphi_{ik}=\sigma^2 \bphi_{ik}^T \bW \bW \bphi_{ik}+\bphi_{ik}^T \bW  \Gamma(\bTi,\bTi^T) \bW \bphi_{ik}\\
		=
		O_p(m\inv)+ \bphi_{ik}^T \bW \left(\lambda_k\bphi_{ik} -O_p(m^{-1/2})  \right)
		=\lambda_k \bphi_{ik}^T \bW \bphi_{ik} +O_p(m\inv),
	\end{gather*}
	where $\lambda_k \bphi_{ik}^T \bW \bphi_{ik}= \lambda_k +O_p(m\inv)$. This follows from the quadrature approximation error \eqref{eq:intErr}, observing  $\int_0^1 \phi_k^2(t)dt=1$,  and implies 
	\begin{align}
		\bphi_{ik}^T \bW \bSigma_i \bW \bphi_{ik}=\lambda_k +O_p(m\inv).
		\label{eq:expandNorm3}
	\end{align}
	The result then follows by combining \eqref{eq:maxnorm}, \eqref{eq:expandNorm}, \eqref{eq:expandNorm3}, \eqref{eq:ek}, \eqref{eq:Wphi}, and the fact that $\norm{\bSigma_i\inv}_{\text{op},2}\le \sigma^{-2}$.
\end{proof}\vspace{1cm}

\begin{proof}[Proof of Theorem \ref{thm:xiEstHat2}]
	Recall that $\hat{\bmu}^*=\hat{\mu}(\bT^*)$, $\bT^*=(T_1^*,\dots,T_{m^*}^*)^T$, the estimated FPCs $\hxi_{k}^*=\hlambda_k \hphi_{k}(\bT^*)^T \hbSigma^{*-1}(\bX^*-\hat{\bmu}^*)$, $\hbPhi_{K}^{*}$ is analogous to $\hbPhi_{iK}$ while replacing the $T_{ij}$ with $T_{j}^*$, and similarly for quantities such as $\bPhi_{K}^{*}$, $\hbSigma^{*-1}$, and $\bSigma^{*-1}$. Note that
	\begin{align}
		\bSigma_{K}^{*} -\hbSigma_{K}^{*}&= \bLambda_K-\hbLambda_K  +  \hbLambda_K \hbPhi_{K}^{*T} \hbSigma^{*-1} \hbPhi_{K}^{*}\hbLambda_K- \bLambda_K \bPhi_{K}^{*T} \bSigma^{*-1} \bPhi_{K}^{*}\bLambda_K\label{eq:thm2_1},
	\end{align}
	where $ \lVert \bLambda_K-\hbLambda_K  \rVert_{\text{op},2} =O_p(a_n+b_n)$ follows from Theorem $5.2$ in \cite{zhan:16} along with perturbation results \citep{bosq:00} and the fact that  $\lVert \bLambda_K-\hbLambda_K  \rVert_{\text{op},2}\le \sqrt{K} \max_{1\le k\le K}\lvert \lambda_k-\hlambda_k\rvert$. Since $\hlambda_k\hbphi_{k}^*= \int_\cT \hGamma(\bT^*, t) \hphi_k(t) dt $ and writing  $\hbe_k^{*}=\int_\cT \hGamma(\bT^*, t) \hphi_k(t) dt - \hbSigma^* \bW^* \hbphi_{k}^*$, we have that  the $(j,l)$ entry of $\hbLambda_K \hbPhi_{K}^{*T} \hbSigma^{*-1} \hbPhi_{K}^{*}\hbLambda_K$ is given by
	\begin{align}
		[\hbLambda_K \hbPhi_{K}^{*T} \hbSigma^{*-1} \hbPhi_{K}^{*}\hbLambda_K]_{j,l}&= (\hbe_j^{*T}\hbSigma^{*-1}+ \hbphi_{j}^{*T} \bW^{*} ) (\hbe_l^{*}+\hbSigma^{*} \bW^{*} \hbphi_{l}^{*})\nonumber \\
		&= \hbe_j^{*T}\hbSigma^{*-1} \hbe_l^{*}+ \hbe_j^{*T} \bW^{*} \hbphi_{l}^{*}+ \hbphi_{j}^{*T} \bW^{*} \hbe_l^{*}+ \hbphi_{j}^{*T} \bW^{*} \hbSigma^{*} \bW^{*} \hbphi_{l}^{*}\label{eq:thm2_2},
	\end{align}
	where $1\le j,l \le K$. Denote by $\hGamma(\bT^*,\bT^{*T})$ the matrix whose $(i,j)$ element is  $\hGamma(T_i^*,T_j^*)$, $1\le i,j\le m^*$, and similarly define $\Gamma(\bT^*,\bT^{*T})$. Also note that $\hbSigma^{*}=\hsigma^2 I_{m^*}+\hGamma(\bT^*,\bT^{*T})$, where $I_{m^*}\in\bbR^{m^*\times m^*}$ is the identity matrix. 
	From \eqref{eq:shrinkageFuncSubject_expectation_ek}, \eqref{eq:shrinkEstFunc_DeltaGamma}, \eqref{eq:secondMomentFPCscoresNorm_WDeltaPhiBound}, \eqref{eq:traceSigmaK_Deltaj}, Lemma \ref{lem:unifGaps}, and using that $ \lVert \hbSigma^{*-1} -\bSigma^{*-1} \rVert_{\text{op},2}=O_p(m^*(a_n+b_n))$  along with the condition $m^*(a_n+b_n)=o(1)$ as $\ntoinf$, it follows that
	$\lVert \hGamma(\bT^*,\bT^{*T})-\Gamma(\bT^*,\bT^{*T})  \rVert_2=O_p(m^*(a_n+b_n))$, $\lVert  \bW^{*}(\hbphi_{p}^{*}-\bphi_{p}^{*}) \rVert_2=O_p(m^{*-1/2}(a_n+b_n))$, $p=j,l$, $\lVert \Gamma(\bT^*,\bT^{*T})  \rVert_{\text{op},2}=O(m^*)$, $\lVert \bSigma^{*}\rVert_{\text{op},2}=O(m^*)$, $\lVert \hbSigma^{*}\rVert_{\text{op},2}=O_p(m^*)$, $\lVert \bW^{*}\bphi_{p}^{*} \rVert_2=O_p(m^{*-1/2})$, $p=j,l$, $ \lVert \hbSigma^{*} -\bSigma^{*} \rVert_{\text{op},2}=O_p(m^*(a_n+b_n))$, $\lVert \bW^{*} \rVert_2=O_p(m^{*-1/2})$, $\lVert \be_p^{*}\rVert_2=O_p(m^{*-1/2})$ and $\lVert \hbe_p^{*}-\be_p^{*} \rVert_2=O_p(m^{*1/2}(a_n+b_n))$, $p=j,l$. These bounds imply
	\begin{align*}
		\hbphi_{j}^{*T} \bW^{*} \hbSigma^{*} \bW^{*} \hbphi_{l}^{*}-
		\bphi_{j}^{*T} \bW^{*} \bSigma^{*} \bW^{*} \bphi_{l}^{*}=O_p(a_n+b_n), \\
		\hbe_j^{*T}\hbSigma^{*-1} \hbe_l^{*}-\be_j^{*T}\bSigma^{*-1} \be_l^{*}=O_p(a_n+b_n),\\
		\hbe_j^{*T} \bW^{*} \hbphi_{l}^{*}- \be_j^{*T} \bW^{*} \bphi_{l}^{*}=O_p(a_n+b_n),\\
		\hbphi_{j}^{*T} \bW^{*} \hbe_l^{*}- \bphi_{j}^{*T} \bW^{*} \be_l^{*}=O_p(a_n+b_n),
	\end{align*}
	which combined with \eqref{eq:thm2_2} leads to
	\begin{align*}
		[\hbLambda_K \hbPhi_{K}^{*T} \hbSigma^{*-1} \hbPhi_{K}^{*}\hbLambda_K]_{j,l}-
		[\bLambda_K \bPhi_{K}^{*T} \bSigma^{*-1} \bPhi_{K}^{*}\bLambda_K]_{j,l} =O_p(a_n+b_n).
	\end{align*}
	Hence $\lVert \hbLambda_K \hbPhi_{K}^{*T} \hbSigma^{*-1} \hbPhi_{K}^{*}\hbLambda_K- \bLambda_K \bPhi_{K}^{*T} \bSigma^{*-1} \bPhi_{K}^{*}\bLambda_K \rVert_F=O_p(a_n+b_n)$ and the result follows from \eqref{eq:thm2_1}.
\end{proof}

\vspace{1cm}


\begin{proof}[Proof of Theorem \ref{thm:functionalShrinkage_estimated}]
	Let $\upsilon_K=\sumkK \lambda_k^{-1/2}\delta_k\inv$ and $\nu_K=\sumkK \lambda_k\inv$. Note that
	\begin{align}
		\lVert \hmu_{K}^*-\tmu_{K}^* \rVert_{L^2}
		&= 
		\lVert \hbxiKsT \hbPhi_{K} - \tbxiKsT \bPhi_{K}\rVert_{L^2}\nonumber\\
		&\le
		\lVert (\hbxiKs-\tbxiKs)^T (\hbPhi_{K} -\bPhi_{K})  \rVert_{L^2}+
		\lVert (\hbxiKs-\tbxiKs)^T \bPhi_{K}  \rVert_{L^2}\nonumber\\
		&+
		\lVert \tbxiKsT (\hbPhi_{K} -\bPhi_{K})  \rVert_{L^2}\label{eq:shrinkEstFunc_meanDecomp}.
	\end{align}
	Now, by the Cauchy--Schwarz inequality,
	\begin{align}
		\lVert (\hbxiKs-\tbxiKs)^T (\hbPhi_{K} -\bPhi_{K})  \rVert_{L^2}
		&\le
		\lVert \hbxiKs-\tbxiKs \rVert_2 
		\sumkK \lVert \hphi_k - \phi_k \rVert_{L^2}\nonumber\\
		&\lesssim 
		\left(\sumkK \delta_k\inv\right) \lVert \hbxiKs-\tbxiKs \rVert_2 \lVert \hXi -\Xi \rVert_{\text{op}} \label{eq:shrinkEstFunc_meanDecomp1},
	\end{align}
	and by orthonormality of the $\phi_k$,
	\begin{align}
		\lVert (\hbxiKs-\tbxiKs)^T \bPhi_{K} \rVert_{L^2}
		&\le \lVert \hbxiKs-\tbxiKs \rVert_2 \label{eq:shrinkEstFunc_meanDecomp2}.
	\end{align}
	Also note that
	\begin{align*}
		E(\lVert \tbxiKs\rVert_2^2)
		=
		\text{trace}(E[E( \tbxiKs  \tbxiKsT \vert \bT^*)]) 
		&=
		E(\text{trace}(\bLambda_K \bPhi_K^{*T} \bSigma^{*-1} \bPhi_K^* \bLambda_K))\\
		&= E\left( \sumkK \lambda_k^2 \bphi_{k}^{*T} \bSigma^{*-1} \bphi_{k}^*\right),
	\end{align*}
	and
	\begin{align*}
		\lambda_j^2 \bphi_{j}^{*T} \bSigma^{*-1} \bphi_{j}^*
		&=
		\bejsT \bSigma^{*-1} \bejs + 2 \bejsT \bW^* \bphi_{j}^* + \bphi_{j}^{*T}\bW^* \bSigma^* \bW^*\bphi_{j}^*,
	\end{align*}
	where $j=1,\dots,K$. Similar arguments as the ones outlined in the proof of Theorem \ref{thm:functionalShrinkage} then show that for large enough $n$
	\begin{align*}
		E(\lVert \tbxiKs\rVert_2^2)
		&=
		E\left( \sumkK \lambda_k^2 \bphi_{k}^{*T} \bSigma^{*-1} \bphi_{k}^*\right)
		\lesssim
		m^{*(1-2\delta)}+m^{* -\delta}+\sumkK \lambda_k
		\lesssim
		m^{*(1-2\delta)}+\sumkK \lambda_k.
	\end{align*}
	Since $\delta\in(1/2,1)$ and $\sumkinf \lambda_k <\infty$, this implies
	\begin{align}
		\lVert \tbxiKs\rVert_2&= O_p(1) \label{eq:shrinkEstFunc_estFPC}.
	\end{align}
	Observing
	\begin{align*}
		\lVert  \tbxiKsT (\hbPhi_{K} -\bPhi_{K})   \rVert_{L^2}
		&\le
		\lVert  \tbxiKs \rVert_2 \sumkK \lVert \hphi_k-\phi_k \rVert_{L^2}
		\lesssim	
		\left(\sumkK \delta_k\inv\right)  \lVert \hXi -\Xi \rVert_{\text{op}}  \lVert \tbxiKs \rVert_2,
	\end{align*}
	and using \eqref{eq:shrinkEstFunc_opNorm} along with \eqref{eq:shrinkEstFunc_estFPC} leads to
	\begin{align}
		\lVert  \tbxiKsT (\hbPhi_{K} -\bPhi_{K})   \rVert_{L^2}
		&=
		O_p\left( (a_n+b_n) \sumkK \delta_k\inv \right) \label{eq:shrinkEstFunc_meanDecomp3}.
	\end{align}
	In view of \eqref{eq:shrinkEstFunc_meanDecomp}, \eqref{eq:shrinkEstFunc_meanDecomp1}, \eqref{eq:shrinkEstFunc_meanDecomp2}, \eqref{eq:shrinkEstFunc_meanDecomp3}, the condition $\upsilon_K (a_n+b_n)=o(1)$ which implies $(a_n+b_n) \sumkK \delta_k\inv =o(1)$ as $\ntoinf$, and employing Lemma \ref{lem:secondMomentFPCscoresNorm} leads to
	\begin{align}
		&\lVert \hmu_{K}^*-\tmu_{K}^* \rVert_{L^2}\nonumber\\
		&= 
		O_p\Big(
		(a_n+b_n) \left(\sumkK \delta_k\inv \right)
		+
		m^{*1/2} (a_n+b_n) \left(\sumkK \delta_k^{-2}\lambda_k^{-2}\right)^{1/2} 
		+ 
		m^{*-1/2} \left(\sumkK \lambda_k^{-2}\right)^{1/2}\nonumber\\
		&\quad\quad
		+
		m^{*}(a_n+b_n) \left(\sumkK \lambda_k^{-2}\right)^{1/2}
		+ m^{*2}(a_n+b_n)^2 \left(\sumkK \delta_k^{-2} \lambda_k^{-2}\right)^{1/2}
		\Big) \label{eq:shrinkEstFunc_meanRate}.
	\end{align}
	Observe 
	\begin{align*}
		\cW_2^2(\hcGKs,\cA_{X^{*c}})
		&\le 
		E(\lVert g_1 - g_2 \rVert_{L^2}^2 \mid (\bX_j,\bT_j)_{j=0}^n),
	\end{align*}
	where $\bX_0:=\bX^*$ and $\bT_0:=\bT^*$, the random element $g_1\in L^2$ has conditional distribution $g_1\sim \hcGKs$ given $(\bX_j,\bT_j)_{j=0}^n$, and $g_2(\cdot)=X^{*c}(\cdot)$ almost surely. Since $E(g_1 \mid (\bX_j,\bT_j)_{j=0}^n)=\hmu_{K}^*$ and $\Var(g_1(t)\mid (\bX_j,\bT_j)_{j=0}^n)=\hGamma_K^*(t,t)$, $t\in\cT$, we obtain
	\begin{align*}
		\cW_2^2(\hcGKs,\cA_{X^{*c}})
		&\le
		E(\lVert g_1 - \hmu_{K}^* \rVert_{L^2}^2 \mid (\bX_j,\bT_j)_{j=0}^n) 
		+  \lVert \hmu_{K}^* - X^{*c} \rVert_{L^2}^2\\
		&=
		\int_\cT \hGamma_K^*(t,t) dt + \lVert \hmu_{K}^* - X^{*c} \rVert_{L^2}^2\\
		&\le
		\int_\cT (\hGamma_K^*(t,t)- \Gamma_{K}^*(t,t)) dt+
		\lVert \hmu_{K}^* - X^{*c} \rVert_{L^2}^2+ O_p(m^{*(1-2\delta)}),
	\end{align*}
	where the equality follows from Fubini's Theorem and the last inequality is due to $\int_\cT \Gamma_{K}^*(t,t) dt=O_p(m^{1-2\delta})$, which follows analogously as in \eqref{eq:shrinkFunc_var}. Combining \eqref{eq:shrinkEstFunc_meanRate} and arguments analogous to those in the proof of Theorem \ref{thm:functionalShrinkage} lead to 
	\begin{align*}
		&\lVert \hmu_{K}^* - X^{*c} \rVert_{L^2}\\
		&\le
		\lVert \tmu_{K}^* - X^{*c} \rVert_{L^2}
		+
		\lVert \hmu_{K}^* - \tmu_{K}^*\rVert_{L^2}\\
		&=
		O_p\Big[ m^{*(1/2-\delta)} + \left(\sumkKinf \lambda_k\right)^{1/2}
		+ (a_n+b_n) \left(\sumkK \delta_k\inv \right)\\
		&\quad\quad +
		m^{*1/2} (a_n+b_n) \left(\sumkK \delta_k^{-2}\lambda_k^{-2}\right)^{1/2}
		+ 
		m^{*-1/2} \left(\sumkK \lambda_k^{-2}\right)^{1/2}\\
		&\quad\quad 
		+
		m^{*}(a_n+b_n) \left(\sumkK \lambda_k^{-2}\right)^{1/2}
		+ m^{*2}(a_n+b_n)^2 \left(\sumkK \delta_k^{-2} \lambda_k^{-2}\right)^{1/2}
		\Big].
	\end{align*}
	From Lemma \ref{lem:traceSigmaK} we have
	\begin{align*}
		\int_\cT (\hGamma_K^*(t,t)- \Gamma_{K}^*(t,t)) dt 
		&=
		\text{trace}(\hbSigma_{K}^*-\bSigma_{K}^*)\\
		&=
		O_p\left( m^{*} (a_n+b_n)^2 \sumkK \lambda_k^{-2} \delta_k^{-2}
		+
		(a_n+b_n) \sumkK \lambda_k^{-2} \delta_k\inv
		\right).
	\end{align*}
	Therefore
	\begin{align*}
		&\cW_2^2(\hcGKs,\cA_{X^{*c}})\\
		&=
		O_p\Big[ m^{*(1-2 \delta)} + \sumkKinf \lambda_k
		+(a_n+b_n)^2 \left(\sumkK \delta_k\inv\right)^2
		+
		m^{*} (a_n+b_n)^2  \sumkK \delta_k^{-2}\lambda_k^{-2} \\
		&\quad\quad 
		+ m^{*-1} \sumkK \lambda_k^{-2}
		+
		m^{*2}(a_n+b_n)^2 \sumkK \lambda_k^{-2}
		+ 
		m^{*4}(a_n+b_n)^4 \sumkK \delta_k^{-2} \lambda_k^{-2}\\
		&\quad\quad+
		(a_n+b_n) \sumkK \lambda_k^{-2} \delta_k\inv
		\Big],
	\end{align*}
	and the result follows.
\end{proof}

\vspace{1cm}



\begin{proof}[Proof of Theorem \ref{thm:predictiveConsistency}]	We use the fact that  for a normal random variable $Z_1\sim N(\kappa_1,\kappa_2^2)$ and $t\in(0,1)$ it holds that $Q_1(t)=\kappa_2 q(t)+\kappa_1$, where $Q_1(\cdot)$ and $q(\cdot)$ are the quantile functions corresponding to $Z_1$ and a standard normal random variate, respectively. Note that since  $\lvert \lambda_{\min}(\hbSigmaiK) -\lambda_{\min}(\bSigmaiK)  \rvert \le \norm{\hbSigmaiK-\bSigmaiK}_{\text{op},2}=o_p(1)$, where the $o_p(1)$ term is uniform in $i$ (see the proof of Lemma \ref{lem:eigenFuncK}), and $\lambda_{\min}(\bSigmaiK)\ge \kappa_0$ a.s., we have
	\begin{align*}
		P\left(\norm{\hbSigmaiK-\bSigmaiK}_{\text{op},2}\le \kappa_0/2\right)&=P\left(\kappa_0-\norm{\hbSigmaiK-\bSigmaiK}_{\text{op},2}\ge \kappa_0/2\right)\\
		&\le P\left(\lambda_{\min}(\bSigmaiK)-\norm{\hbSigmaiK-\bSigmaiK}_{\text{op},2}\ge \kappa_0/2\right)\\
		&\le P\left(\lambda_{\min}(\hbSigmaiK)\ge \kappa_0/2\right),
	\end{align*}
	which implies $\lambda_{\min}(\hbSigmaiK)\ge \kappa_0/2$ with probability tending to $1$. For the remainder of the proof we work on this event. From the closed form expression for the $2$-Wasserstein distance between one-dimensional distributions with finite second moments,
	\begin{align}\label{eq:wassPredictive}
		\cW_2^2(\tcP_{iK},\cP_{iK})&=\int_0^1 \left(   [ (\bbeta_K^T  \hbSigmaiK \bbeta_K)^{1/2}  -(\bbeta_K^T \bSigmaiK \bbeta_K)^{1/2}]  q(t) + \boldsymbol{\beta}_K^T (\hbxiiK-\tbxiiK)  \right)^2  dt \nonumber\\
		&= [ (\bbeta_K^T  \hbSigmaiK \bbeta_K)^{1/2}  -(\bbeta_K^T \bSigmaiK \bbeta_K)^{1/2}]^2 \int_0^1 q^2(t)dt + (\boldsymbol{\beta}_K^T (\hbxiiK-\tbxiiK) )^2\nonumber \\
		&\quad +2 [ (\bbeta_K^T  \hbSigmaiK \bbeta_K)^{1/2}  -(\bbeta_K^T \bSigmaiK \bbeta_K)^{1/2}]\  \bbeta_K^T (\hbxiiK-\tbxiiK)  \int_0^1 q(t)dt\nonumber \\
		&\le  \frac{(\bbeta_K^T (  \hbSigmaiK  -\bSigmaiK )\bbeta_K)^2}{\bbeta_K^T \bSigmaiK \bbeta_K} \int_0^1 q^2(t)dt + (\bbeta_K^T (\hbxiiK-\tbxiiK) )^2,
	\end{align}
	where the last inequality follows from the fact that $\int_0^1 q(t)dt=E(Z)=0$, where $Z\sim N(0,1)$, and using the inequality $(\sqrt{x}-\sqrt{y})^2\le (x-y)^2/y$ which is valid for any scalars $x\ge 0$ and $y>0$. Since $\int_0^1 q^2(t)dt =E(Z^2)<\infty$, it then suffices to control the terms  $\bbeta_K^T (  \hbSigmaiK  -\bSigmaiK )\bbeta_K$ and $(\boldsymbol{\beta}_K^T (\hbxiiK-\tbxiiK) )^2$. From the proof of Lemma \ref{lem:eigenFuncK}, we have $\norm{\bSigma_{iK}-\hat{\bSigma}_{iK}}_F=O(a_n+b_n)$ a.s.  as $n\to\infty$, where the $O(a_n+b_n)$ term is uniform over $i$, and similar arguments as in the proof of Theorem 2 in \cite{dai:16:1} show that $\lvert \hat{\xi}_{ik}-\tilde{\xi}_{ik}\rvert= O(a_n+b_n) \lVert \bXi-\bmui \rVert_2 =O(a_n+b_n) O_p(1)=O_p(a_n+b_n)$, $k=1,\dots,K$. Thus,  
	$(\bbeta_K^T (\hbxiiK-\tbxiiK) )^2\le \normtwo{\bbeta_K}^2 \lVert\hbxiiK-\tbxiiK\rVert_2^2=O_p((a_n+b_n)^2)$ and properties of the operator norm show that $\lvert \bbeta_K^T (  \hbSigmaiK  -\bSigmaiK )\bbeta_K \rvert \le \normtwo{\bbeta_K}^2 \norm{\hbSigmaiK  -\bSigmaiK}_F=O(a_n+b_n)$ a.s. as $n\to\infty$. This along with \eqref{eq:wassPredictive} leads to
	\begin{align}
		\cW_2(\tcP_{iK},\cP_{iK})&=O_p(a_n+b_n).\label{eq:wassPred1}
	\end{align}
	Similar arguments show that
	\begin{align}
		\cW_2^2(\hcP_{iK},\tcP_{iK})&\le \frac{(\hbbeta_K^T \hbSigmaiK \hbbeta_K - \bbeta_K^T \hbSigmaiK \bbeta_K)^2}{\bbeta_K^T \hbSigmaiK \bbeta_K} \int_0^1 q^2(t)dt + ( (\hbbeta_K-\bbeta_K)^T \hbxiiK +\hbeta_0-\beta_0 )^2,\label{eq:wassPred2}
	\end{align}
	and
	\begin{align}
		&\lvert \hbbeta_K^T \hbSigmaiK \hbbeta_K - \bbeta_K^T \hbSigmaiK \bbeta_K\rvert \nonumber \\
		&= 
		\lvert (\hbbeta_K-\bbeta_K)^T \hbSigmaiK \hbbeta_K + \bbeta_K^T \hbSigmaiK (\hbbeta_K-\bbeta_K) \rvert \nonumber \\
		&\le \normtwo{\hbbeta_K-\bbeta_K}^2 \norm{\hbSigmaiK-\bSigmaiK}_{\text{op},2}+\normtwo{\hbbeta_K-\bbeta_K}\norm{\hbSigmaiK-\bSigmaiK}_{\text{op},2}  \normtwo{\bbeta_K}\nonumber  \\
		&\quad + \normtwo{\hbbeta_K-\bbeta_K}^2 \norm{\bSigmaiK}_{\text{op},2} +\normtwo{\hbbeta_K-\bbeta_K}\norm{\bSigmaiK}_{\text{op},2} \normtwo{\bbeta_K}\nonumber  \\
		&=O_p(\alpha_n),\label{eq:wassPred3}
	\end{align}
	where the first inequality follows from properties of the operator norm; the last equality is due to Lemma \ref{lem:Lemma_beta} and the facts that $h\asymp n^{-1/3}$ implies  that the rate $ \tau_M  \Big[\left(\frac{1}{nh}+h^2\right)^{1/2}+a_n\Big]$ is faster than $c_n \upsilon_M $,    $\norm{\hbSigmaiK  -\bSigmaiK}_F=O(a_n+b_n)$ a.s. as $n\to\infty$ and that $\norm{\bSigmaiK}_{\text{op},2}$ is uniformly bounded in $i$ in the sparse case. Since $\lvert \lambda_{\min}(\hbSigmaiK) -\lambda_{\min}(\bSigmaiK)  \rvert \le \norm{\hbSigmaiK-\bSigmaiK}_{\text{op},2}$, we have 
	\begin{align*}
		\bbeta_K^T \hbSigmaiK \bbeta_K 
		&\ge \bbeta_K^T\bbeta_K \lambda_{\min}(\hbSigmaiK) \\
		&\ge \bbeta_K^T\bbeta_K (\lambda_{\min}(\bSigmaiK) - \norm{\hbSigmaiK-\bSigmaiK}_{\text{op},2}) 1_{\{\lambda_{\min}(\bSigmaiK) \ge \norm{\hbSigmaiK-\bSigmaiK}_{\text{op},2} \}}.
	\end{align*}	
	Thus, using that $\norm{\hbSigmaiK  -\bSigmaiK}_{\text{op},2}=o_p(1)$, where the $o_p(1)$ term is uniform in $i$,  $\lambda_{\min}(\bSigmaiK) \ge \kappa_0$ a.s., and writing  $$p_0=P\left[\frac{1}{\bbeta_K^T \hbSigmaiK \bbeta_K} \le \frac{2}{\bbeta_K^T\bbeta_K \lambda_{\min}(\bSigmaiK)} \ \text{and} \ \lambda_{\min}(\hbSigmaiK)\ge \kappa_0/2 \right], $$ it follows that
	\begin{align*}
		p_0&\ge 
		P[ \bbeta_K^T  \bbeta_K \lambda_{\min}(\bSigmaiK) \le 2 \bbeta_K^T  \bbeta_K  \lambda_{\min}(\hbSigmaiK) \ 
		\text{and} \ \lambda_{\min}(\hbSigmaiK)\ge \kappa_0/2 ]\\
		&\ge 
		P[ \bbeta_K^T  \bbeta_K \lambda_{\min}(\bSigmaiK) \le 2 \bbeta_K^T  \bbeta_K  (\lambda_{\min}(\bSigmaiK) - \norm{\hbSigmaiK-\bSigmaiK}_{\text{op},2})  \ \text{and} \ \lambda_{\min}(\hbSigmaiK)\ge \kappa_0/2 ]\\
		&\ge
		P[ \kappa_0/2 \ge \norm{\hbSigmaiK-\bSigmaiK}_{\text{op},2} \ \text{and} \ \lambda_{\min}(\hbSigmaiK)\ge \kappa_0/2 ]\\
		&\ge 1- P\Big[ \norm{\hbSigmaiK-\bSigmaiK}_{\text{op},2} > \kappa_0/2\Big] - P[\lambda_{\min}(\hbSigmaiK)< \kappa_0/2].
	\end{align*}
	This implies $p_0\to 1$ as $n\to\infty$ and hence the event $(\bbeta_K^T \hbSigmaiK \bbeta_K)\inv \le 2 (\bbeta_K^T\bbeta_K \lambda_{\min}(\bSigmaiK))\inv$ with $\lambda_{\min}(\hbSigmaiK)\ge \kappa_0/2$ occurs with probability tending to $1$. It then suffices to work on this event in what follows. Combining with \eqref{eq:wassPred2}, \eqref{eq:wassPred3}, and 
	\begin{align*}
		\lvert (\hbbeta_K-\bbeta_K)^T \hbxiiK +\hbeta_0-\beta_0\rvert &\le \normtwo{\hbbeta_K-\bbeta_K} \left(\normtwo{\hbxiiK-\tbxiiK} + \normtwo{\tbxiiK}\right)+\lvert \hbeta_0-\beta_0\rvert\\
		&=O_p(\alpha_n),
	\end{align*}
	which follows from Lemma \ref{lem:Lemma_beta} and the facts that $\hbeta_0-\beta_0=\bar{Y}_n-E(Y)=O_p(n^{-1/2})$, $\normtwo{\hbxiiK-\tbxiiK}=O_p(a_n+b_n)$ and $\normtwo{\tbxiiK}=O_p(1)$ hold uniformly in $i$, then leads to
	\begin{align}
		\cW_2(\hcP_{iK},\tcP_{iK})&= O_p(\alpha_n).\label{eq:wassPred4}
	\end{align}
	The result in \eqref{eq:wassCons} then follows from \eqref{eq:wassPred1} and \eqref{eq:wassPred4}.

Denote by $\varphi$ and  $\varPhi$ the density and cdf of a standard normal random variable, and define the quantities $\tu_{in}=\beta_0 +\bbeta_K^T \hbxiiK$, $\tsigma_{in}=(\bbeta_K^T \hbSigmaiK \bbeta_K)^{1/2}$, $u_i=\beta_0 +\bbeta_K^T \tbxiiK$, $\sigma_i=(\bbeta_K^T \bSigmaiK \bbeta_K)^{1/2}$ and $\Delta_{in}(t)= (t-u_i)/\sigma_i - (t-\tu_{in})/\tsigma_{in}$,  $t\in\mathbb{R}$. Then
	\begin{align}
		\sup_{t\in\mathbb{R}} \lvert  \tF_{iK}(t)-F_{iK}(t)\rvert &= \sup_{t\in\mathbb{R}} \big\lvert \varPhi \left(\frac{t-\tu_{in}}{\tsigma_{in}}\right) - \varPhi \left(\frac{t-u_i}{\sigma_i}\right) \big\rvert
		= \sup_{t\in\mathbb{R}} \big\lvert  \varphi(\varepsilon_s) \Delta_{in}(t) \big \rvert,\label{eq:KSineq}
	\end{align}
	where the second equality follows by a Taylor expansion and $\varepsilon_s$ is between $(t-\tu_{in})/\tsigma_{in}$ and $(t-\mu_i)/\sigma_i$. Defining $r_{in}(t)=(t-\tu_{in})/\tsigma_{in}$, $r_i(t)=(t-u_i)/\sigma_i$ and setting $I_{in}=[\min\{u_i,\tu_{in}\},\max\{u_i,\tu_{in}\}]$,
	\begin{align}
		\big\lvert  \varphi(\varepsilon_s) \Delta_{in}(t) \big \rvert &\le
		\varphi(0) \big\lvert  \Delta_{in}(t) \big \rvert 1_{\{ t \in I_{in} \}} + \varphi(\min\{\lvert r_{in}(t)\rvert, \lvert r_i(t)\rvert\}) \big\lvert  \Delta_{in}(t) \big \rvert 1_{\{ t \in I_{in}^c \}}\nonumber \\
		&\le \varphi(0) \big\lvert  \Delta_{in}(t) \big \rvert 1_{\{ t \in I_{in} \}} + [\varphi(r_{in}(t))+\varphi(r_i(t))] \big\lvert  \Delta_{in}(t) \big \rvert. \label{eq:phiEpsIneq}
	\end{align}
	Since $\tu_{in}-u_i=O_p(a_n+b_n)$, $\lvert \tsigma_{in} -\sigma_i \rvert\le \lvert \tsigma_{in}^2 -\sigma_i^2 \rvert/\sigma_i =O_p(a_n+b_n)$, $\lvert \tsigma_{in}\inv - \sigma_i\inv \rvert\le \lvert \tsigma_{in} -\sigma_i \rvert / (\tsigma_{in} \sigma_i)\le \lvert \tsigma_{in} -\sigma_i \rvert \sqrt{2} (\bbeta_K^T\bbeta_K \lambda_{\min}(\bSigmaiK))^{-1/2} \sigma_i \inv$ and $\lambda_{\min}(\bSigmaiK)\ge \kappa_0$ a.s., it follows that
	\begin{align}
		\lvert \Delta_{in}(t)  \rvert &= \lvert (t-u_i)/\sigma_i - (t-\tu_{in})/\tsigma_{in}  \rvert \nonumber\\
		&\le \frac{1}{\sigma_i} \lvert \tu_{in}-u_i \rvert + \lvert t-u_i \rvert \ \Big\lvert \frac{1}{\tsigma_{in}} - \frac{1}{\sigma_i} \Big\rvert
		+ \lvert \tu_{in}-u_i \rvert \ \Big\lvert \frac{1}{\tsigma_{in}} - \frac{1}{\sigma_i} \Big\rvert \nonumber \\
		&=  O_p(a_n+b_n) + O_p(a_n+b_n)  \lvert t-u_i \rvert, \label{eq:unifBoundDeltan}
	\end{align}
	where both $O_p(a_n+b_n)$ terms are uniform in $t$. This implies
	\begin{align}
		\sup_{t\in\mathbb{R}} \big\lvert  \Delta_{in}(t) \big \rvert 1_{\{ t \in I_{in} \}} &\le   O_p(a_n+b_n) + O_p(a_n+b_n) \lvert \tu_{in}-u_i \rvert =O_p(a_n+b_n) \label{eq:DeltatFinite}.
	\end{align}
	Since $\normop{\bSigmaiK}$ is uniformly bounded above in the sparse case, it is easy to show that $\varphi(r_i(t)) \lvert t-u_i \rvert \le O(1)$, where the $O(1)$ term is uniform in both $t$ and $i$. This combined with \eqref{eq:unifBoundDeltan}  leads to
	\begin{align}
		\sup_{t\in\mathbb{R}}  \varphi(r_i(t)) \lvert \Delta_{in}(t)  \rvert&= O_p(a_n+b_n).\label{eq:phi_Deltat}
	\end{align}
	From \eqref{eq:unifBoundDeltan}, 
	\begin{align*}
		\sup_{t\in\mathbb{R}} \varphi(r_{in}(t)) \lvert \Delta_{in}(t) \rvert &\le    O_p(a_n+b_n) +  O_p(a_n+b_n) \sup_{t\in\mathbb{R}} \varphi(r_{in}(t))  \lvert t-u_i \rvert,
	\end{align*}
	and the result then follows from \eqref{eq:KSineq}, \eqref{eq:phiEpsIneq}, \eqref{eq:DeltatFinite} and \eqref{eq:phi_Deltat} if we can show that $\varphi(r_{in}(t))  \lvert t-u_i \rvert =O_p(1)$ uniformly in  $t$. It is easy to see that 
	\begin{align*}
		\varphi(r_{in}(t))  \lvert t-u_i \rvert &\le   \varphi(r_{in}(t_1^*)) ( t_1^*-u_i) 1_{\{t\ge u_i\}}+ \varphi(r_{in}(t_2^*)) ( u_i-t_2^*) 1_{\{t\le u_i\}}\\
		&\le \varphi(r_{in}(t_1^*)) ( t_1^*-u_i) + \varphi(r_{in}(t_2^*)) ( u_i-t_2^*)\\
		&\le \varphi(0) (t_1^*- t_2^*),
	\end{align*}
	where $t_1^*=(u_i+\tu_{in}+\sqrt{(u_i-\tu_{in})^2+4\tsigma_{in}^2})/2$ and $t_2^*=(u_i+\tu_{in}-\sqrt{(u_i-\tu_{in})^2+4\tsigma_{in}^2})/2$. Since $\tsigma_{in}$ is uniformly upper bounded in the sparse setting and $\tu_{in}-u_i=O_p(a_n+b_n)$, we obtain
	\begin{align*}
		\sup_{t\in\mathbb{R}} 	\varphi(r_{in}(t))  \lvert t-u_i \rvert &\le \varphi(0) \sqrt{(u_i-\tu_{in})^2+4\tsigma_{in}^2}=O_p(1).
	\end{align*}
	Therefore
	\begin{align}
		\sup_{t\in\mathbb{R}} \lvert  \tF_{iK}(t)-F_{iK}(t)\rvert &= O_p(a_n+b_n),\label{eq:KS1}
	\end{align}
	so that  it then remains to control the term $\sup_{t\in\mathbb{R}} \lvert  \hF_{iK}(t)-\tF_{iK}(t)\rvert $. For this purpose, define auxiliary quantities  $\hu_{in}=\hbeta_0 +\hbbeta_K^T \hbxiiK$, $\hsigma_{in}=(\hbbeta_K^T \hbSigmaiK \hbbeta_K)^{1/2}$  and $\hDelta_{in}(t)= (t-\hu_{in})/\hsigma_{in}- (t-\tu_{in})/\tsigma_{in}$,  $t\in\mathbb{R}$. From Lemma \ref{lem:Lemma_beta} it follows that $\hu_{in}-\tu_{in}=\hbeta_0-\beta_0 + (\hbbeta_K-\bbeta_K)^T(\hbxiiK-\bxi_{iK})+(\hbbeta_K-\bbeta_K)^T \bxi_{iK} =O_p(\alpha_n)$, $\lvert \hsigma_{in} -\tsigma_{in} \rvert\le \lvert \hsigma_{in}^2 -\tsigma_{in}^2 \rvert/\tsigma_{in} =O_p(\alpha_n)$, which is due to  \eqref{eq:wassPred3} and since $\tsigma_{in}\inv \le \sqrt{2} (\bbeta_K^T\bbeta_K \lambda_{\min}(\bSigmaiK))^{-1/2}$. Also, from \eqref{eq:wassPred3} and using $\lambda_{\min}(\bSigmaiK)\ge \kappa_0$ a.s. we have $\vert \hsigma_{in} - \tsigma_{in}\rvert  \le \vert \hsigma_{in}^2 - \tsigma_{in}^2\rvert / \tsigma_{in}  \le \vert \hsigma_{in}^2 - \tsigma_{in}^2\rvert  \sqrt{2} (\bbeta_K^T\bbeta_K \lambda_{\min}(\bSigmaiK))^{-1/2}=o_p(1)$ and then $\vert \hsigma_{in} - \sigma_{i}\rvert  \le \vert \hsigma_{in} - \tsigma_{in}\rvert + \vert \tsigma_{in} - \sigma_{i}\rvert=o_p(1)$. This along with the fact that $\hsigma_{in}\ge \lVert \hbbeta_K\rVert_2\kappa_0/2\ge \lVert \bbeta_K\rVert_2\kappa_0/4$ holds with probability tending to $1$ implies $\hsigma_{in}\inv \le 2 \sigma_{i}\inv$ with probability tending to $1$ as $n\to\infty$. Combining this with $\lambda_{\min}(\bSigmaiK)\ge \kappa_0$ a.s. then leads to
	\begin{align*}
		\Big \lvert \frac{1}{\hsigma_{in}} - \frac{1}{\tsigma_{in}} \Big \rvert =O_p(\alpha_n),
	\end{align*}
	where the bound is uniform in $i$, and similarly as in \eqref{eq:unifBoundDeltan} we obtain
	\begin{align}
		\lvert \hDelta_{in}(t)\rvert&\le \lvert t-\tu_{in} \rvert \ \Big\lvert \frac{1}{\hsigma_{in}} - \frac{1}{\tsigma_{in}} \Big\rvert + 
		\lvert \hu_{in}-\tu_{in} \rvert  \ \Big\lvert \frac{1}{\hsigma_{in}} - \frac{1}{\tsigma_{in}} \Big\rvert +  \lvert \hu_{in}-\tu_{in} \rvert  \frac{1}{\tsigma_{in}}\nonumber \\
		&\le O_p(\alpha_n) + O_p(\alpha_n)  \lvert t-\tu_{in} \rvert. \label{eq:unifBoundDeltaht}
	\end{align}
	Next
	\begin{align*}
		&\varphi(r_{in}(t)) \lvert t-\tu_{in} \rvert \\
		&\le \varphi(1) \sqrt{\bbeta_K^T \hbSigmaiK \bbeta_K} \le \varphi(1)  (\bbeta_K^T \bbeta_K)^{1/2}  \left(\norm{\hbSigmaiK  -\bSigmaiK}_{\text{op},2} + \norm{\bSigmaiK}_{\text{op},2} \right)^{1/2}\\
		=&O_p(1),
	\end{align*}
	where the $O_p(1)$ term is uniform in both $t$ and $i$. This combined with \eqref{eq:unifBoundDeltaht} shows that
	\begin{align}
		\sup_{t\in\mathbb{R}}  \varphi(r_{in}(t)) \lvert \hDelta_{in}(t)  \rvert&= O_p(\alpha_n).\label{eq:phi_Deltaht}
	\end{align}
	Setting  $\hr_{in}(t)=(t-\hu_{in})/\hsigma_{in}$, similar arguments as before lead to
	\begin{align*}
		\sup_{t\in\mathbb{R}}  \varphi(\hr_{in}(t)) \lvert  t-\tu_{in} \rvert&\le  \varphi(0) \sqrt{(\hu_{in}-\tu_{in})^2+4\hsigma_{in}^2}=O_p(1),
	\end{align*}
	where the last equality is due to $\lvert \hu_{in}-\tu_{in} \rvert = O_p(\alpha_n)$ and $\hsigma_{in}^2 \le \bbeta_K^T \bbeta_K  \Big(\norm{\hbSigmaiK  -\bSigmaiK}_{\text{op},2} + \norm{\bSigmaiK}_{\text{op},2}\Big )=O_p(1)$.  With \eqref{eq:unifBoundDeltaht} this implies 
	\begin{align}
		\sup_{t\in\mathbb{R}}  \varphi(\hr_{in}(t)) \lvert \hDelta_{in}(t)  \rvert&= O_p(\alpha_n).\label{eq:phirht_Deltaht}
	\end{align}
	Setting  $\hI_{in}=[\min\{\hu_{in},\tu_{in}\},\max\{\hu_{in},\tu_{in}\}]$, then similar arguments as the ones outlined in \eqref{eq:KSineq} and \eqref{eq:phiEpsIneq} shows that
	\begin{align}
		\sup_{t\in\mathbb{R}} \lvert  \hF_{iK}(t)-\tF_{iK}(t)\rvert & \le 
		\varphi(0) \big\lvert  \hDelta_{in}(t) \big \rvert 1_{\{ t \in\hI_{in} \}} + [\varphi(\hr_{in}(t))+\varphi(r_{in}(t))] \big\lvert  \hDelta_{in}(t) \big \rvert.
	\end{align}
	This together with $\sup_{t\in\mathbb{R}}  \big\lvert  \hDelta_{in}(t) \big \rvert 1_{\{ t \in\hI_{in} \}}\le O_p(\alpha_n) + O_p(\alpha_n)  \lvert \hu_{in}-\tu_{in} \rvert =O_p(\alpha_n)$, where the latter  follows from \eqref{eq:unifBoundDeltaht}, as well as \eqref{eq:phi_Deltaht} and \eqref{eq:phirht_Deltaht} then leads to
	\begin{align}
		\sup_{t\in\mathbb{R}} \lvert  \hF_{iK}(t)-\tF_{iK}(t)\rvert &= O_p(\alpha_n).\label{eq:KS2}
	\end{align}
	The result in \eqref{eq:wassKS} then follows from \eqref{eq:KS1}, \eqref{eq:KS2} and the triangle inequality.

	For the next result in \eqref{eq:wassDensL2}, similarly as before we first start by showing that $\lVert  \tf_{iK}-f_{iK} \rVert_{L^2(\mathbb{R})}  = O_p(a_n+b_n)$, where $ \tf_i(t):=\tF_i'(t)=\varphi((t-\tu_{in})/\tsigma_{in})/\tsigma_{in}$. Since $ f_i(t)=F_i'(t)=\varphi((t-u_i)/\sigma_i)/\sigma_i$, we have
	\begin{align}
		\Big \lVert \frac{1}{\tsigma_{in}} \varphi \left( \frac{\cdot - \tu_{in}}{\tsigma_{in}}\right)  - \frac{1}{\sigma_i} \varphi \left( \frac{\cdot - u_i}{\sigma_i}\right) \Big \rVert_{L^2(\mathbb{R} )} &\le 
		\frac{1}{\tsigma_{in}} \Big \lVert  \varphi \left( \frac{\cdot - \tu_{in}}{\tsigma_{in}}\right)  - \varphi \left( \frac{\cdot - u_i}{\sigma_i}\right) \Big \rVert_{L^2(\mathbb{R} )} \nonumber \\ 
		& \quad +  \Big \lvert \frac{1}{\tsigma_{in}}- \frac{1}{\sigma_i}\Big \rvert  \	\Big \lVert  \varphi \left( \frac{\cdot - u_i}{\sigma_i}\right) \Big \rVert_{L^2(\mathbb{R} )}.\label{eq:pdfIneq}
	\end{align}
	Thus, since $\lVert  \varphi \left( \frac{\cdot - u_i}{\sigma_i}\right) \rVert_{L^2(\mathbb{R} )}=O(\sigma_i^{1/2})$ and $\lvert \tsigma_{in}\inv -\sigma_i\inv  \rvert =O_p(a_n+b_n)$, we obtain 
	\begin{align}
		\Big \lvert \frac{1}{\tsigma_{in}}- \frac{1}{\sigma_i}\Big \rvert  \ \Big \lVert  \varphi \left( \frac{\cdot - u_i}{\sigma_i}\right) \Big \rVert_{L^2(\mathbb{R} )} &=O_p(a_n+b_n).\label{eq:pdfIneq_1}
	\end{align}
	Using  the relation $\varphi'(t)=-t \varphi(t)$ and a Taylor expansion, it follows that
	\begin{align*}
		\Big  \lVert  \varphi \left( \frac{\cdot - \tu_{in}}{\tsigma_{in}}\right)  - \varphi \left( \frac{\cdot - u_i}{\sigma_i}\right) \Big  \rVert_{L^2(\mathbb{R} )}^2  &= \int_\mathbb{R} (\varphi'(\varepsilon_t))^2 \Delta_{in}^2(t) dt= \int_\mathbb{R} \varepsilon_t^2 \varphi^2 (\varepsilon_t)\Delta_{in}^2(t) dt,
	\end{align*}
	where $\varepsilon_t$ is between $r_{in}(t)$ and $r_i(t)$. Hence, from \eqref{eq:pdfIneq} and \eqref{eq:pdfIneq_1} it  suffices to show that $\int_\mathbb{R} \varepsilon_t^2 \varphi^2 (\varepsilon_t)\Delta_{in}^2(t) dt=O_p((a_n+b_n)^2)$. Indeed, from the fact that $\lvert \varepsilon_t \rvert\le \lvert r_{in}(t)\rvert+\lvert r_i(t)\rvert$, $ \sup_{t\in I_{in}} \lvert r_{in}(t)\rvert =O_p(a_n+b_n)$, $ \sup_{t\in I_{in}} \lvert r_i(t)\rvert =O_p(a_n+b_n)$ and $ \varphi(\varepsilon_t) 1_{\{ t\in I_{in}^c\}} \le \varphi(r_{in}(t)) +\varphi(r_i(t))$, we have
	\begin{align}
		&\int_\mathbb{R} \varepsilon_t^2 \varphi^2 (\varepsilon_t)\Delta_{in}^2(t) dt\nonumber \\
		&= \int_{I_{in}} \varepsilon_t^2 \varphi^2 (\varepsilon_t)\Delta_{in}^2(t) dt+\int_{I_{in}^c} \varepsilon_t^2 \varphi^2 (\varepsilon_t)\Delta_{in}^2(t) dt \nonumber \\
		&\le  \varphi^2(0) O_p((a_n+b_n)^5) + \int_{I_{in}^c} [\varphi(r_{in}(t)) +\varphi(r_i(t))]^2 (r_{in}(t)+r_i(t))^2   \Delta_{in}^2(t) dt \nonumber \\
		&\le  O_p((a_n+b_n)^5)  + \int_{\mathbb{R}} [\varphi(r_{in}(t)) +\varphi(r_i(t))]^2 (r_{in}(t)+r_i(t))^2   \Delta_{in}^2(t) dt, \label{eq:pdfIneq_2}
	\end{align}
	where the first inequality follows from \eqref{eq:DeltatFinite} and the relation $\lvert r_{in}(t)\rvert \lvert r_i(t)\rvert 1_{\{ t\in I_{in}^c\}}  = r_{in}(t) r_i(t)1_{\{ t\in I_{in}^c\}}$. From $\int_{\mathbb{R}} \varphi^2(s) \lvert s\rvert^p ds<\infty $, $p\in\mathbb{N}$,  we obtain the following facts:
	\begin{align*}
		\int_{\mathbb{R}} \varphi^2(r_{in}(t)) r_{in}^2(t) \Delta_{in}^2(t)   dt&\le \sigma_{i}^{-2} O_p((a_n+b_n)^2),\\
		\int_{\mathbb{R}} \varphi^2(r_{in}(t)) r_{i}^2(t) \Delta_{in}^2(t)   dt&\le \sigma_{i}^{-4} O_p((a_n+b_n)^2),\\
		\Big\lvert \int_{\mathbb{R}} \varphi^2(r_{in}(t)) r_{in}(t)  r_{i}(t) \Delta_{in}^2(t)   dt \Big\rvert&\le \sigma_{i}^{-3} O_p((a_n+b_n)^2),\\
		\Big\lvert \int_{\mathbb{R}} \varphi^2(r_{i}(t)) r_{in}(t)  r_{i}(t) \Delta_{in}^2(t)   dt\Big\rvert &\le  (\bbeta_K^T\bbeta_K \lambda_{\min}(\bSigmaiK))^{-3/2} O_p((a_n+b_n)^2),\\
		\int_{\mathbb{R}} \varphi^2(r_{i}(t))  r_{i}^2(t) \Delta_{in}^2(t)   dt&\le  (\bbeta_K^T\bbeta_K \lambda_{\min}(\bSigmaiK))\inv O_p((a_n+b_n)^2),\\
		\int_{\mathbb{R}} \varphi^2(r_{i}(t))  r_{in}^2(t) \Delta_{in}^2(t)   dt&\le  (\bbeta_K^T\bbeta_K \lambda_{\min}(\bSigmaiK))^{-2} O_p((a_n+b_n)^2),\\
		\Big\lvert \int_{\mathbb{R}} \varphi(r_{in}(t)) \varphi(r_{i}(t))  r_{i}(t) r_{in}(t) \Delta_{in}^2(t)   dt\Big\rvert &\le \sigma_{i}^{-3}  O_p((a_n+b_n)^2).
	\end{align*}
	These facts along with \eqref{eq:pdfIneq_2} imply $\int_\mathbb{R} \varepsilon_t^2 \varphi^2 (\varepsilon_t)\Delta_{in}^2(t) dt\le O_p((a_n+b_n)^5)  + O_p((a_n+b_n)^2)=O_p((a_n+b_n)^2)$ and 
	\begin{align*}
		\Big \lVert  \tf_{iK}-f_{iK} \Big \rVert_{L^2(\mathbb{R})}&= O_p(a_n+b_n).
	\end{align*}
	Similar arguments imply $\lVert  \hf_{iK}-\tf_{iK} \rVert_{L^2(\mathbb{R})} =O_p(\alpha_n)$ and the result in \eqref{eq:wassDensL2}.
	
	Finally, from condition \ref{a:LM_uncond} we have  $\lambda_{\min}(\bSigmaiK)\ge\kappa_0$ and also $\sigma_i^2 = (\bbeta_K^T \bSigmaiK \bbeta_K) \ge \bbeta_{K}^T \bbeta_{K} \lambda_{\min}(\bSigmaiK)\ge \bbeta_{K}^T \bbeta_{K} \kappa_0$ a.s., which implies $\sigma_i\inv =O(1)$ and $ \lambda_{\min}(\bSigmaiK)\inv =O(1)$ a.s., where the $O(1)$ terms are uniform in $i$. Since $\norm{\bSigma_{iK}-\hat{\bSigma}_{iK}}_F=O(a_n+b_n)$ a.s.  as $n\to\infty$, where the $O(a_n+b_n)$ term is uniform over $i$, and $\normtwo{\hbxiiK-\tbxiiK}=O_p(a_n+b_n)$, where the $O_p(a_n+b_n)$ term is also uniform over $i$,  it can be easily checked from the previous arguments that the rates of convergence in \eqref{eq:wassCons}, \eqref{eq:wassKS} and \eqref{eq:wassDensL2} are uniform in $i$.
\end{proof}

\vspace{1cm}


\begin{proof}[Proof  of Theorem \ref{thm:WassDiscrep}]
	Recall that $\eta_{iK}:=\beta_0+\bbeta_K^T \bxi_{iK}$ is the $K$-truncated linear predictor for the $i$th subject and $\teta_{iK}:=\beta_0+\bbeta_K^T \tbxi_{iK}$ its best prediction. Also, recall that $\cP_{iK}$ corresponds to the predictive distribution of $\eta_{iK}$ given $\bXi$ and $\bTi$, and $\hcP_{iK}$ is the corresponding  estimate. Writing $Y_i=\beta_0+\bbeta_K^T \bxi_{iK}+\sum_{k\ge K+1} \beta_k \xi_{ik}+\epsilon_{iY}=\eta_{iK}+ R_{iK}+\epsilon_{iY}$, where $R_{iK}=\sum_{k\ge K+1} \beta_k \xi_{ik}$, the estimated Wasserstein discrepancy is given by $\hcD_{nK}=n\inv \sumin W_2^2(\delta_{Y_i}, \hcP_{iK})$, where
	\begin{align}
		&n\inv \sumin \cW_2^2(\cA_{Y_i}, \hcP_{iK})\nonumber \\
		&=n\inv \sumin (Y_i - \heta_{iK})^2 + n\inv \sumin \hbbeta_K^T \hbSigmaiK \hbbeta_K\nonumber  \\
		&=n\inv \sumin (\eta_{iK} - \heta_{iK})^2+ n\inv \sumin \epsilon_{iY}^2+n\inv \sumin  R_{iK}^2+ 2 n\inv \sumin (\eta_{iK} - \heta_{iK}) \epsilon_{iY}  \nonumber \\
		&\quad + 2 n\inv \sumin (\eta_{iK} - \heta_{iK}) R_{iK} + 2 n\inv \sumin  R_{iK} \epsilon_{iY}
		+ n\inv \sumin \hbbeta_K^T \hbSigmaiK \hbbeta_K.\label{eq:WassDiscrep_0}
	\end{align}
	Since $n_i=m_0<N_0$, by the central limit theorem,
	\begin{align*}
		n\inv \sumin (\eta_{iK} - \teta_{iK}) R_{iK}&= -\bbeta_K^T E\left( \bLambda_K\bPhi_{1K}^T\bSigma_1\inv  \sum_{k\ge K+1} \phi_k(\bT_1)\lambda_k \beta_k \right)+ O_p(n^{-1/2}),
	\end{align*}
	and
	\begin{align}
		n\inv \sumin  R_{iK}^2&=\sum_{k\ge K+1} \beta_k^2 \lambda_k+O_p(n^{-1/2})\label{eq:WassDiscrep_0_1}.
	\end{align}
	Combining this with $n\inv \sumin (\teta_{iK} - \heta_{iK})^2 =O_p(\alpha_n^2)$, as shown in the proof of Lemma \ref{lem:Wass_Discrep}, 
	\begin{align}
		n\inv \sumin (\eta_{iK} - \heta_{iK}) R_{iK}&= n\inv \sumin (\eta_{iK} - \teta_{iK}) R_{iK} +n\inv \sumin (\teta_{iK} - \heta_{iK}) R_{iK} \nonumber \\
		&= -\bbeta_K^T E\left( \bLambda_K\bPhi_{1K}^T\bSigma_1\inv  \sum_{k\ge K+1} \phi_k(\bT_1)\lambda_k\beta_k \right)+ O_p(\alpha_n)\label{eq:WassDiscrep_1}.
	\end{align}
	Next
	\begin{align}
		n\inv \sumin (\eta_{iK} - \heta_{iK}) \epsilon_{iY}&=n\inv \sumin (\eta_{iK} - \teta_{iK}) \epsilon_{iY}+n\inv \sumin (\teta_{iK} - \heta_{iK}) \epsilon_{iY} \nonumber\\
		&= O_p(n^{-1/2})+O_p(\alpha_n)=O_p(\alpha_n)\label{eq:WassDiscrep_2},
	\end{align}
	where the last equality follows from Lemma \ref{lem:Wass_Discrep} and since $n\inv \sumin (\eta_{iK} - \teta_{iK}) \epsilon_{iY}= O_p(n^{-1/2})$, which is due to the Central Limit Theorem. Similarly, from Lemma \ref{lem:Wass_Discrep2} we have
	\begin{align}
		n\inv \sumin (\eta_{iK} - \teta_{iK})^2 &=\bbeta_K^T E(\bSigma_{1K})\bbeta_K+O_p(n^{-1/2})\label{eq:WassDiscrep_3},
	\end{align}
	and
	\begin{align}
		n\inv \sumin (\eta_{iK}& - \heta_{iK})^2 -n\inv \sumin (\eta_{iK} - \teta_{iK})^2= n\inv \sumin (\teta_{iK} - \heta_{iK})^2\nonumber \\
		& +2 n\inv \sumin (\eta_{iK} - \teta_{iK})(\teta_{iK} - \heta_{iK}
		=O_p(\alpha_n)\label{eq:WassDiscrep_4},
	\end{align}
	where the last equality follows from the fact that $n\inv \sumin (\teta_{iK} - \heta_{iK})^2 =O_p(\alpha_n^2)$, \eqref{eq:WassDiscrep_3} and the Cauchy--Schwarz inequality. Combining \eqref{eq:WassDiscrep_3} and \eqref{eq:WassDiscrep_4} leads to
	\begin{align}
		n\inv \sumin (\eta_{iK} - \heta_{iK})^2&= \bbeta_K^T E(\bSigma_{1K})\bbeta_K+O_p(\alpha_n) \label{eq:WassDiscrep_5}.
	\end{align}
	We further note that
	\begin{align*}
		&\lvert \hbbeta_K^T \hbSigmaiK \hbbeta_K -\bbeta_K^T \bSigmaiK \bbeta_K\rvert \\
		&=\lvert
		\hbbeta_K^T (\hbSigmaiK-\bSigmaiK ) \hbbeta_K + (\hbbeta_K-\bbeta_K)^T \bSigmaiK \hbbeta_K+\bbeta_K^T \bSigmaiK (\hbbeta_K-\bbeta_K)\rvert \\
		&\le \lVert\hbbeta_K\rVert_2^2 \lVert \hbSigmaiK-\bSigmaiK\rVert_{\text{op},2} + \lVert \hbbeta_K-\bbeta_K\rVert_2 \lVert \bSigmaiK \rVert_{\text{op},2} (\lVert \hbbeta_K \rVert_2 + \lVert \bbeta_K \rVert_2 ).
	\end{align*}
	From the proof of Theorem \ref{thm:predictiveConsistency}, we have $\norm{\bSigma_{iK}-\hat{\bSigma}_{iK}}_F=O(a_n+b_n)$ a.s.  as $n\to\infty$, where the $O(a_n+b_n)$ term is uniform in $i$. Since $\lVert \bSigma_{iK}\rVert_{F}=O(1)$ uniformly over $i$, 
	\begin{align*}
		\Big\lvert n\inv \sumin (\hbbeta_K^T \hbSigmaiK \hbbeta_K-\bbeta_K^T \bSigmaiK \bbeta_K)\Big\rvert 
		&\le n\inv \sumin \lvert \hbbeta_K^T \hbSigmaiK \hbbeta_K -\bbeta_K^T \bSigmaiK \bbeta_K\rvert \\
		&\le 	\lVert \hbbeta_K-\bbeta_K\rVert_2  (\lVert \hbbeta_K \rVert_2 + \lVert \bbeta_K \rVert_2 ) \ n\inv \sumin  \lVert \bSigmaiK \rVert_{F} \\
		&\quad + \lVert\hbbeta_K\rVert_2^2  \ n\inv \sumin \lVert \hbSigmaiK-\bSigmaiK\rVert_{F} \\ 
		&\le \lVert \hbbeta_K-\bbeta_K\rVert_2  (\lVert \hbbeta_K \rVert_2 + \lVert \bbeta_K \rVert_2 ) O(1)  \\
		&+ \lVert\hbbeta_K\rVert_2^2 O(a_n+b_n) \quadas,
	\end{align*}
	as $\ntoinf$. From Lemma \ref{lem:Lemma_beta}, we have $\lVert \hbbeta_K-\bbeta_K\rVert_2=O_p(\alpha_n)$, which combined with $\lVert \hbbeta_K \rVert_2\le \lVert \hbbeta_K -\bbeta_K \rVert_2+\lVert \bbeta_K \rVert_2=O_p(1)$ leads to
	\begin{align*}
		n\inv \sumin \hbbeta_K^T \hbSigmaiK \hbbeta_K - n\inv \sumin \bbeta_K^T \bSigmaiK \bbeta_K &=O_p(\alpha_n).
	\end{align*}
	This along with an application of the Central Limit Theorem shows that
	\begin{align}
		n\inv \sumin \hbbeta_K^T \hbSigmaiK \hbbeta_K &=   \bbeta_K^T E(\bSigma_{1K})  \bbeta_K +O_p(\alpha_n)\label{eq:WassDiscrep_6}.
	\end{align}
	Finally, it is easy to show that $n\inv \sumin  R_{iK} \epsilon_{iY}= O_p(n^{-1/2})$ and $n\inv \sumin \epsilon_{iY}^2= \sigma_Y^2 + O_p(n^{-1/2})$, applying  the CLT. Combining with \eqref{eq:WassDiscrep_0_1}, \eqref{eq:WassDiscrep_1}, \eqref{eq:WassDiscrep_2}, \eqref{eq:WassDiscrep_5}, and \eqref{eq:WassDiscrep_6},
	\begin{align*}
		\hcD_{nK}
		&=2\bbeta_K^T E(\bSigma_{1K})\bbeta_K+\sigma_Y^2+\sum_{k\ge K+1} \beta_k^2 \lambda_k -2\bbeta_K^T E\left( \bLambda_K\bPhi_{1K}^T\bSigma_1\inv  \sum_{k\ge K+1} \phi_k(\bT_1)\lambda_k \beta_k \right)\\
		&+O_p(\alpha_n),
	\end{align*}
	implying the first result in \eqref{eq:WassDiscrepancy1}. Similar arguments show that the Wasserstein distance using true population quantities $\cD_{nK}$ is such that
	\begin{align*}
		\cD_{nK}=n\inv \sumin \cW_2^2(\cA_{Y_i}, \mathcal{P}_{iK})&=n\inv \sumin (Y_i - \teta_{iK})^2 + n\inv \sumin \bbeta_K^T \bSigmaiK \bbeta_K\\
		&=\cD_K +O_p(n^{-1/2}),
	\end{align*}
	where 
	\begin{align*}
		\cD_K
		&=2\bbeta_K^T E(\bSigma_{1K})\bbeta_K+\sigma_Y^2+\sum_{k\ge K+1} \beta_k^2 \lambda_k \\
		&-2\bbeta_K^T E\left( \bLambda_K\bPhi_{1K}^T\bSigma_1\inv  \sum_{k\ge K+1} \phi_k(\bT_1)\lambda_k \beta_k \right).
	\end{align*}

	Since $Y=\mu_Y+\int_\cT \beta(t) U(t) +\epsilon_Y$, where $\mu_Y=E(Y)$ and $U(t)=X(t)-\mu(t)$, we have $E(Y^2)=\mu_Y^2+\sigma_Y^2+E( \langle \beta, U \rangle_{L^2}^2)$, where $\langle \cdot,\cdot \rangle_L^2$ is the $L^2(\cT)$ inner product. From \ref{a:GaussProcess} it follows that $E( \langle \beta, U \rangle_{L^2}^2)=\sum_{j=1}^\infty \beta_j^2 \lambda_j$ as the FPCs are independent in the Gaussian case. Then
	\begin{align}
		n\inv \sumin (Y_i-\bar{Y}_n)^2&=\Var(Y)+O_p(n^{-1/2})=\sigma_Y^2+ \sum_{j=1}^\infty \lambda_j\beta_j^2 +O_p(n^{-1/2})\label{eq:WassDiscrep_7}.
	\end{align}
	Also, $\lvert \hbeta_j \rvert\le \lVert \hbeta_M\rVert_{L^2}$ and $\lvert \beta_j \rvert\le \lVert \beta\rVert_{L^2}$. With perturbation results as used in the proof of Lemma \ref{lem:auxLemma_beta_5} this leads to 
	\begin{align}
		&\Big \lvert  \sumM \hlambda_m \hbeta_m^2  - \lambda_m\beta_m^2     \Big \rvert\nonumber \\
		&\le 
		\sumM \lvert \hlambda_m -\lambda_m\rvert \lvert  \hbeta_m^2- \beta_m^2\rvert +
		\sumM \lvert \hlambda_m -\lambda_m\rvert \beta_m^2 +
		\sumM \lambda_m\lvert  \hbeta_m^2- \beta_m^2\rvert \nonumber \\
		&\le \lVert \hXi-\Xi \rVert_{\text{op}} (\lVert \hbeta_M\rVert_{L^2}+\lVert \beta\rVert_{L^2}) \sumM \lvert \hbeta_m- \beta_m \rvert +
		\lVert \hXi-\Xi \rVert_{\text{op}} \sumM \beta_m^2 \nonumber\\
		&\quad + 
		(\lVert \hbeta_M\rVert_{L^2}+\lVert \beta\rVert_{L^2}) \sumM \lambda_m \lvert \hbeta_m- \beta_m \rvert \label{eq:WassDiscrep_8}.
	\end{align}
	From the proof of Lemma \ref{lem:Lemma_beta} and since $\sum_{j=1}^\infty \lambda_j<\infty$, we have
	\begin{align}
		&\sumM \lambda_m \lvert \hbeta_m- \beta_m \rvert \nonumber \\
		&\le \lVert \hbeta_M-\beta\rVert_{L^2}\sumM \lambda_m \lVert \hphi_m-\phi_m \rVert_{L^2}+ \lVert\hbeta_M-\beta\rVert_{L^2}\left(\sumM \lambda_m\right)\nonumber\\
		&\quad + \lVert \beta \rVert_{L^2} \sumM \lambda_m \lVert \hphi_m-\phi_m \rVert_{L^2} \nonumber\\
		&\le \left(\sum_{j=1}^\infty \lambda_j\right) \lVert\hbeta_M-\beta\rVert_{L^2}+ 2\sqrt{2} \lVert \hXi-\Xi \rVert_{\text{op}}   \left(\lVert \hbeta_M-\beta\rVert_{L^2}+\lVert \beta\rVert_{L^2}\right) \left(\sumM\frac{\lambda_m}{\delta_m}\right)\quadas\nonumber\\
		&\le O_p(\alpha_n) + O_p(1) O(c_n^{\rho})=O_p(\alpha_n)\label{eq:WassDiscrep_9},
	\end{align}
	where the last inequality follows from Lemma \ref{lem:series_eigengap} and Lemma \ref{lem:Lemma_beta}. Similarly
	\begin{align}
		&\sumM \lvert \hbeta_m- \beta_m \rvert\nonumber \\
		&\le 2\sqrt{2}\lVert \hXi-\Xi \rVert_{\text{op}} \left(\lVert \hbeta_M-\beta\rVert_{L^2}+\lVert \beta\rVert_{L^2}\right) \left(\sumM \frac{1}{\delta_m}\right)+ \lVert \hbeta_M-\beta\rVert_{L^2} M \quadas \nonumber\\
		&\le O(c_n) O(c_n^{\rho-1}) \left(\lVert \hbeta_M-\beta\rVert_{L^2}+\lVert \beta\rVert_{L^2}\right) +\lVert \hbeta_M-\beta\rVert_{L^2} O(c_n^{\rho-1})\quadas \nonumber \\
		&\le O_p(c_n^\rho)+O_p(c_n^{\rho-1} \alpha_n)\label{eq:WassDiscrep_10},
	\end{align}
	where the second and third inequalities follow from Lemma \ref{lem:Lemma_beta} and using that $\sumM \delta_m\inv=O(c_n^{\rho-1})$, which was shown in the proof of Lemma \ref{lem:auxLemma_beta_5}, along with the fact that $M=O(c_n^{\rho-1})$, which is due to the condition $\sum_{m=1}^M \frac{1}{\sqrt{\lambda_m}\delta_m}=O(c_n^{\rho-1})$ and $0<\delta_m<\lambda_m\le \lambda_1$. Combining \eqref{eq:WassDiscrep_8}, \eqref{eq:WassDiscrep_9} and \eqref{eq:WassDiscrep_10} leads to
	\begin{align*}
		\Big \lvert  \sumM \hlambda_m \hbeta_m^2  - \sumM \lambda_m\beta_m^2     \Big \rvert&=O_p(\alpha_n) .
	\end{align*}
	This implies
	\begin{align*}
		\Big \lvert \sumM \hlambda_m  \hbeta_m^2 -\sum_{m=1}^\infty \lambda_m \beta_m^2    \Big \rvert&\le O_p(\alpha_n) + \sum_{m\ge M+1} \lambda_m \beta_m^2,
	\end{align*}
	and the result in \eqref{eq:WassDiscrepancy2} follows from \eqref{eq:WassDiscrep_7}.
\end{proof}

\vspace{1cm}


\begin{proof}[Proof of Theorem \ref{thm:predictiveShrinkage_FLM}]
	Note that
	\begin{align*}
		\lVert  \tbxiKs - \bxi_{K}^* \rVert_2^2
		&=
		\sumkK (  \lambda_k \bphi_{k}^{*T} \bSigma^{*-1} (\bX^*-\bmu^*) - \xi_{k}^*)^2\\
		&\lesssim
		\sumkK (  \be_k^{*T}\bSigma^{*-1} (\bX^*-\bmu^*))^2
		+
		\sumkK (  \bphi_{k}^{*T}\bW^* (\bY^*-\bmu^*)-\xi_{k}^*)^2\\
		&+
		\sumkK (  \bphi_{k}^{*T}\bW^*\beps^*)^2.
	\end{align*}
	Similar to the proof of Theorem \ref{thm:functionalShrinkage}, we have
	\begin{align*}
		\lvert  \bphi_{k}^{*T}\bW^* (\bY^*-\bmu^*)-\xi_{k}^* \rvert
		&\le
		\lambda_k\inv \left( \sumlm w_l^{*2} + (1-T^{*(m)})^2  + (1-T^{*(m)}) \right),
	\end{align*}
	where $T^{*(m)}=\max_{j=1,\dots,m^*} \bT_j^*$. This implies
	\begin{align*}
		E\left(\sumkK (  \bphi_{k}^{*T}\bW^* (\bY^*-\bmu^*)-\xi_{k}^*)^2\right)
		&=
		O(m^{*-2}).
	\end{align*}
	Also
	\begin{align*}
		E\left(\sumkK (  \bphi_{k}^{*T}\bW^*\beps^*)^2\right)
		&=
		O(m^{*-1}),
	\end{align*}
	and
	\begin{align*}
		E\left( (\be_k^{*T}\bSigma^{*-1} (\bX^*-\bmu^*))^2 \right)
		&=O(m^{*-1}).
	\end{align*}
	Therefore
	\begin{align}
		E\left(	\lVert  \tbxiKs - \bxi_{K}^* \rVert_2^2\right)
		&=
		O(m^{*-1}).\label{eq:predictiveShrinkage_FLM_key1}
	\end{align}
	Recall that $\cP_{K}^* \overset{d}{=} N(\beta_0 +\bbeta_K^T \tbxiKs, \bbeta_K^T \bSigma_{K}^* \bbeta_K)$. By construction of the $2$-Wasserstein distance,
	\begin{align*}
		\cW_2^2(\cP_{K}^*,\cA_{\beta_0 +\bbeta_K^T \bxi_K^*})
		&=
		(\bbeta_K^T (\tbxiKs - \bxi_{K}^*))^2+\bbeta_K^T \bSigma_{K}^* \bbeta_K\\
		&\le
		\lVert \bbeta_{K}\rVert_2^2 \lVert  \tbxiKs - \bxi_{K}^* \rVert_2^2 
		+ \lVert \bbeta_{K}\rVert_2^2 	\lVert \bSigma_{K}^*\rVert_{\text{op},2}\\
		&=O_p(m^{*-1}),
	\end{align*}
	where the last equality is due to \eqref{eq:predictiveShrinkage_FLM_key1} and using that $\lVert \bSigma_{K}^*\rVert_{\text{op},2}=O_p(m^{*-1})$, which follows analogously as in the proof of Theorem \ref{thm:xiEst2}. This shows the first result. Next,
	\begin{align}\label{eq:predictiveShrinkage_FLM_wassIneq1}
		\cW_2^2(\hcP_{K}^*,\cA_{\beta_0 +\bbeta_K^T \bxi_K^*})
		&= \hbbeta_K^T  \hbSigma_K^* \hbbeta_K+ ( \hbeta_0+\hbbeta_K^T \hbxiKs -\beta_0- \bbeta_K^T  \bxi_{K}^*)^2 \nonumber\\
		&\lesssim
		\lVert \hbbeta_K \rVert_2^2 \lVert \hbSigma_{K}^* -   \bSigma_K^* \rVert_{\text{op},2}
		+ 
		\lVert \hbbeta_K \rVert_2^2 \lVert \bSigma_K^* \rVert_{\text{op},2}
		+
		(\hbeta_0-\beta_0)^2 \nonumber \\
		&+
		\lVert \hbbeta_K \rVert_2^2 \lVert \hbxiKs-\bxi_{K}^*  \rVert_2^2
		+
		\lVert \hbbeta_K-\bbeta_K \rVert_2^2 \lVert \bxi_{K}^* \rVert_2^2 \nonumber\\
		&= O_p\left( m^{*2} (a_n+b_n)^2 + m^{*-1} + a_n+b_n +r_n^{*2}\right),
	\end{align}
	where the last equality is due to Theorem \ref{thm:xiEstHat}, Theorem \ref{thm:xiEstHat2}, the fact that $\lVert \bxi_{K}^* \rVert_2=O_p(1)$, $\lVert \bSigma_{K}^*\rVert_{\text{op},2}=O_p(m^{*-1})$, and using Lemma \ref{lem:Lemma_beta} with $h=n^{-1/3}$. The second result follows.
\end{proof}

\vspace{.5cm}

In the following, we say that a process $X$ is explained by its first $K$ principal components if $X(t)=\mu(t)+\sum_{k=1}^K \xi_k \phi_k(t)$ and thus  is of finite dimension $K$.

\begin{lem}\label{lem:SigmaKeigengap}
	Suppose that the process $X$ is finite dimensional and explained by its first $K=2$ principal components. If $\phi_1$ and $\phi_2$ are bijective and differentiable in a finite partition of $\cT$, then $\bSigmaiK$ has a  positive eigengap almost surely.
\end{lem}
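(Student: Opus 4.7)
The approach is to reduce the claim to showing that a certain $2\times 2$ symmetric random matrix fails to be a scalar multiple of the identity with probability one. First I would apply the Sherman--Morrison--Woodbury identity (equivalently, the standard Gaussian posterior-variance formula) to the definition $\bSigmaiK = \bLambda_K - \bLambda_K\bPhiiK^T\bSigma_i^{-1}\bPhiiK\bLambda_K$ to obtain the cleaner inverse form
\[
\bSigmaiK^{-1} \;=\; \bLambda_K^{-1} + \sigma^{-2}\bPhiiK^T\bPhiiK,
\]
which is valid since $X$ is $K$-dimensional and $\bSigma_i=\bPhiiK\bLambda_K\bPhiiK^T+\sigma^2 I$. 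Because the eigenvalues of $\bSigmaiK$ and $\bSigmaiK^{-1}$ are reciprocal, $\bSigmaiK$ has positive eigengap iff $\bSigmaiK^{-1}$ does, and for a $2\times 2$ symmetric matrix the latter fails only if the off-diagonal entry is zero and the two diagonal entries are equal.

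Writing $A_{kl} = \sumjni \phi_k(T_{ij})\phi_l(T_{ij})$, it therefore suffices to prove $P(A_{12}=0)=0$. I would establish this by conditioning on $T_{i1},\dots,T_{i,n_i-1}$: the event $A_{12}=0$ forces $\phi_1(T_{i,n_i})\phi_2(T_{i,n_i})$ to take one specific value determined by the other time points, so
\[
P(A_{12}=0) \;\le\; \sup_{z\in\bbR}\, P\bigl(\phi_1(T)\phi_2(T)=z\bigr).
\]
Because $T$ has a density bounded below by $m_f>0$, the right-hand side is zero provided the level set $h^{-1}(z)$ of $h:=\phi_1\phi_2$ has Lebesgue measure zero for every $z\in\bbR$, i.e., $h$ is not constant on any subinterval of positive Lebesgue measure.

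The main obstacle is verifying this non-constancy of $h$. On each piece $I_k$ of the finite partition, $\phi_1$ and $\phi_2$ are continuous and bijective, hence strictly monotone, and differentiable; a density-point argument applied to $h'=\phi_1'\phi_2+\phi_1\phi_2'$ shows that if $h^{-1}(z)\cap I_k$ had positive Lebesgue measure, then $h$ would have to be constant on some open subinterval of $I_k$. I would then rule this out by combining two ingredients: bijectivity prevents $h\equiv 0$ on any subinterval (neither $\phi_1$ nor $\phi_2$ can vanish on more than a single point of $I_k$), while orthonormality $\int_\cT \phi_1\phi_2=0$ together with continuity of $\phi_1$ and $\phi_2$ on $\cT$ prevents $h$ from being a nonzero constant on all of $\cT$; a cross-piece continuity argument propagates a putative constant value on one piece to the entire partition, yielding a contradiction with orthonormality. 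Concluding, $P(A_{12}=0)=0$, so $\bSigmaiK^{-1}$ is not proportional to the identity almost surely and $\bSigmaiK$ has positive eigengap almost surely.
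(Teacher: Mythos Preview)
Your Woodbury reduction $\bSigmaiK^{-1}=\bLambda_K^{-1}+\sigma^{-2}\bPhiiK^T\bPhiiK$ is valid and cleaner than the paper's route, which computes the discriminant of the characteristic polynomial of $\bSigmaiK$ and then applies Sherman--Morrison twice to simplify $\phi_{i1}^T\bSigma_i^{-1}\phi_{i2}$; both arguments land on exactly the same requirement $A_{12}=\phi_{i1}^T\phi_{i2}=\sumjni\phi_1(T_{ij})\phi_2(T_{ij})\neq0$ a.s., and the paper then simply asserts that the law of $h(T)=\phi_1(T)\phi_2(T)$ is continuous so that its $n_i$-fold self-convolution is atomless.

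The gap is the ``cross-piece continuity'' propagation step: continuity of $h$ at partition endpoints gives no mechanism to extend $h\equiv c$ from an open subinterval of one piece to the rest of $\cT$, and in fact the stated hypotheses do \emph{not} preclude $h$ being constant on a set of positive measure. For a concrete obstruction take $\cT=[0,2]$, set $\phi_1=e^t$, $\phi_2=e^{-t}$ on $[0,1]$, and extend each linearly on $[1,2]$ with nonzero slopes chosen so that $\int_0^2\phi_1\phi_2=0$ (e.g.\ slope $1$ for $\phi_1$ forces slope $\approx-1.29$ for $\phi_2$); after $L^2$-normalisation these are orthonormal, piecewise bijective and differentiable, yet $\phi_1\phi_2$ equals a nonzero constant on all of $[0,1]$, so $h(T)$ has an atom and your bound $P(A_{12}=0)\le\sup_z P(h(T)=z)$ does not give zero. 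The paper's one-line claim that $h(T)$ has a ``continuous distribution'' suffers from the same issue; your argument, being more explicit, simply exposes it. A repair would require either a stronger hypothesis (e.g.\ real-analyticity of $\phi_1,\phi_2$, so that constancy on a subinterval forces global constancy and then contradicts $\int\phi_1\phi_2=0$) or a direct argument that any atoms of the i.i.d.\ sum $\sum_j h(T_{ij})$ cannot be located at $0$; your observation that bijectivity forces $h^{-1}(0)$ to be finite is the right starting point for the latter.
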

\begin{proof}[Proof of Lemma \ref{lem:SigmaKeigengap}]
	Recalling that $\bSigmaiK = \bLambda_K - \bLambda_K \bPhiiK^T \bSigma_i^{-1} \bPhiiK\bLambda_K$ and since $K=2$, it follows that the characteristic polynomial of $\bSigmaiK$ is given by $p(\lambda)=\lambda^2-\tr(\bSigmaiK) \lambda +\det(\bSigmaiK)$, and thus the eigengap is equal to $\sqrt{\Delta_p}$, where $\Delta_p$ is the discriminant of the quadratic polynomial $p$. It is easy to show that
	\begin{align*}
		\Delta_p=(\lambda_1-\lambda_2+\lambda_2^2 \phi_{i2}^T \bSigma_i\inv \phi_{i2}  -\lambda_1^2 \phi_{i1}^T \bSigma_i\inv \phi_{i1} )^2 +4\lambda_1^2 \lambda_2^2 ( \phi_{i1}^T \bSigma_i\inv \phi_{i2})^2,
	\end{align*}
	so that it suffices to check that $\phi_{i1}^T \bSigma_i\inv \phi_{i2}$ is not identically zero almost surely. Let $B=\sigma^2 I_{n_i}+\lambda_1 \phi_{i1} \phi_{i1}^T$, where $I_{n_i}$ denotes the $n_i\times n_i$ identity matrix, and denote by  $\lVert \cdot\rVert_2$ the Euclidean norm in $\mathbb{R}^{n_i}$. By the Sherman-Morrison formula, it follows that $B\inv = \sigma^{-2} \left(I_{n_i}-  \frac{\lambda_1 \phi_{i1} \phi_{i1}^T}{\sigma^2+\lambda_1 \norm{\phi_{i1}}_2^2}\right)$, and a second application of the formula leads to
	\begin{align*}
		\bSigma_i\inv= B\inv - \frac{B\inv \lambda_2\phi_{i2}\phi_{i2}^T B\inv }{1+\lambda_2\phi_{i2}^T B\inv \phi_{i2}}.
	\end{align*}
	Thus
	\begin{align*}
		\phi_{i1}^T \bSigma_i\inv \phi_{i2} = \frac{ \phi_{i1}^T B\inv \phi_{i2} }{1+\lambda_2 \phi_{i2}^T B\inv \phi_{i2}},
	\end{align*}
	where $\phi_{i1}^T B\inv \phi_{i2} =\frac{\phi_{i1}^T \phi_{i2}}{ \sigma^2 + \lambda_1\norm{\phi_{i1}}_2^2 }$ and $\phi_{i2}^T B\inv \phi_{i2}>0$ a.s.  since the eigenvalues of $B$ are bounded below by $\sigma^2$. The conclusion then follows if we can show that $\phi_{i1}^T \phi_{i2}\neq0$ almost surely. Note that $\phi_{i1}^T \phi_{i2}=\sum_{j=1}^{n_i} \phi_1(T_{ij}) \phi_2(T_{ij})$ and the $T_{ij}$ are i.i.d.\ with a continuous distribution supported on $\cT$. Thus, the distribution of $\phi_{i1}^T \phi_{i2}$ corresponds to the $n$-fold convolution of the continuous distribution associated with $\phi_1(T_{i1})\phi_2(T_{i1})$, which is a continuous probability measure, and hence $\phi_{i1}^T \phi_{i2}\neq0$ holds almost surely.
\end{proof}

\begin{lem}\label{lem:unifGaps}
	Let $T_1,\dots,T_m$ be i.i.d.\ with density function $f(t)$, $t\in \mathcal{T}=[0,1]$ and let $T_{(1)},\dots,T_{(m)}$ be the order statistics. Let $w_l:=T_{(l)}-T_{(l-1)}$, $l=1,\dots,m$, where $T_{(0)}:=0$, be the spacing between the order statistics. Suppose that there exists $c_0>0$ such that $f(t)\ge c_0$ for all $t\in \mathcal{T}$. Then, for any integer $p\ge 1$ it holds that,
	\begin{equation*}
		E(w_l^p)=O(m^{-p}), \quad l=1,\dots,m,
	\end{equation*}
	and
	\begin{equation*}
		E[\left(1-T_{(m)}\right)^p]=O(m^{-p}).
	\end{equation*}
\end{lem}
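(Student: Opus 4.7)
The plan is to reduce the non-uniform spacings to uniform spacings via the probability integral transform, and then read off the moment bound from the known marginal distribution of uniform spacings. Write $F(t)=\int_0^t f(s)\,ds$ for the CDF of $T$ and set $U_i=F(T_i)$, so that $U_1,\dots,U_m$ are i.i.d.\ Uniform$(0,1)$ with order statistics $U_{(l)}=F(T_{(l)})$. Using the convention $U_{(0)}=0$ and $U_{(m+1)}=1$, denote by $V_l=U_{(l)}-U_{(l-1)}$, $l=1,\dots,m+1$, the corresponding uniform spacings.

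The first step is to transfer the lower bound on $f$ into a Lipschitz-from-below bound on $F$: since $f\ge c_0$ on $\mathcal{T}$, for any $0\le a\le b\le 1$ we have $F(b)-F(a)\ge c_0(b-a)$. Applying this with $(a,b)=(T_{(l-1)},T_{(l)})$ for $l=2,\dots,m$, with $(a,b)=(0,T_{(1)})$ for $l=1$ (using $F(0)=0$), and with $(a,b)=(T_{(m)},1)$ for the tail, yields the pointwise comparisons
\[
w_l\le c_0^{-1}V_l \quad \text{for } l=1,\dots,m, \qquad 1-T_{(m)}\le c_0^{-1}V_{m+1}.
\]

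The second step invokes the classical fact that the $m+1$ uniform spacings $V_1,\dots,V_{m+1}$ are exchangeable and each marginally Beta$(1,m)$-distributed, so that for any integer $p\ge 1$,
\[
E(V_l^p)=\frac{\Gamma(1+p)\Gamma(1+m)}{\Gamma(1+p+m)}=\frac{p!}{(m+1)(m+2)\cdots(m+p)}\le \frac{p!}{m^p}.
\]
Combining with the previous display gives $E(w_l^p)\le p!\,c_0^{-p}m^{-p}$ uniformly in $l\in\{1,\dots,m\}$, and likewise $E[(1-T_{(m)})^p]\le p!\,c_0^{-p}m^{-p}$; both are $O(m^{-p})$.

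I do not anticipate any real obstacle: the argument is a routine consequence of the probability integral transform together with the known Beta$(1,m)$ marginal of uniform spacings. The only minor bookkeeping point is to treat the boundary gap $w_1$ (via the convention $T_{(0)}=0$) and the right tail $1-T_{(m)}$ on the same footing as the interior spacings, which becomes transparent once one identifies them with $V_1$ and $V_{m+1}$, respectively, on the uniform side.
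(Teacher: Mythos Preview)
Your proof is correct and follows essentially the same approach as the paper: both reduce to uniform spacings via the probability integral transform and then invoke the $\mathrm{Beta}(1,m)$ marginal of those spacings. The only cosmetic difference is that the paper works with the quantile function $Q=F^{-1}$ and bounds $Q'\le c_0^{-1}$ via the mean value theorem, whereas you work directly with $F$ and the inequality $F(b)-F(a)\ge c_0(b-a)$; these are equivalent formulations of the same step.
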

\begin{proof}[Proof of Lemma \ref{lem:unifGaps}]
	One can replace $T_l$ with i.i.d.\ copies $Q(U_l)$, $l=1,\dots,m$, where the $U_l\overset{iid}{\sim} U(0,1)$ and $Q$ is the quantile function corresponding to $f$. Since $f$ is strictly positive, then $T_{(l)}=Q(U_{(l)})$, $l=1,\dots,m$. From a Taylor expansion of $Q(\cdot)$, we have
	\begin{gather*}
		E\left( w_l^p\right)= E[Q'(\eta_l) (U_{(l)}-U_{(l-1)}  )]^p\le c_0^{-p} E[U_{(l)}-U_{(l-1)}]^p,
	\end{gather*}
	where $\eta_l$ is between $U_{(l-1)}$ and $U_{(l)}$, and the last inequality follows from the fact that $Q'(t)=1/f(Q(t))\le c_0\inv$. The first result follows since $U_{(l)}-U_{(l-1)}\sim \text{Beta}(1,m)$ which implies $E[U_{(l)}-U_{(l-1)}]^p=O(m^{-p})$.
	Similarly, by expanding $Q(U_{(m)})$ around $Q(1)=1$ and since it can be verified that $E[(1-U_{(m)})^p]= m! p! /(m+p)!=O(m^{-p})$, the second result follows. 
\end{proof}

The next two lemmas are for establishing  Theorem \ref{thm:xiEstHat} and Theorem \ref{thm:functionalShrinkage_estimated}.

\begin{lem}\label{lem:secondMomentFPCscoresNorm}
	Suppose that assumptions \ref{a:Diff}, \ref{a:GaussProcess}, \ref{a:eigendecay} and \ref{a:K}--\ref{a:Ubeta} are satisfied. Consider either a sparse design setting when $n_i\le N_0<\infty$ or a dense design when $n_i=m\to\infty$, $i=1,\dots,n$. Set $a_n=a_{n1}$ and $b_n=b_{n1}$ for the sparse case, and $a_n=a_{n2}$ and $b_n=b_{n2}$ for the dense case. For a new independent subject $i^*$, suppose that $m^*=m^*(n)\to\infty$ is such that $m^* (a_n+b_n)=o(1)$ as $\ntoinf$. If $K=K(n)$ satisfies $(a_n+b_n) \sumkK \lambda_k\inv=o(1)$ as $\ntoinf$, then
	\begin{align*}
		&\lVert  \hbxiKs-\tbxiKs \rVert_2^2 =O_p(R_n^*),
	\end{align*}
	where
	\begin{align*}
		R_n^*&=
		m^* (a_n+b_n)^2 \sumkK \delta_k^{-2}\lambda_k^{-2}   + m^{*-1} \sumkK \lambda_k^{-2}\\
		&+
		m^{*2}(a_n+b_n)^2 \sumkK \lambda_k^{-2}
		+ m^{*4}(a_n+b_n)^4 \sumkK \delta_k^{-2} \lambda_k^{-2}.
	\end{align*}
\end{lem}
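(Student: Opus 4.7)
The plan is to control each coordinate $\hxi_k^* - \txi_k^*$ via a four-term perturbation decomposition that separates the estimation errors in $\lambda_k$, $\phi_k$, $\bSigma^*$, and $\bmu^*$, and then sum over $k = 1,\dots, K$. Writing $A = \hbSigma^*$ and $B = \bSigma^*$, I would use
\begin{align*}
\hxi_k^* - \txi_k^* &= (\hlambda_k - \lambda_k)\hphi_k(\bT^*)^T A^{-1}(\bX^* - \hbmu^*) \\
&\quad + \lambda_k(\hphi_k - \phi_k)(\bT^*)^T A^{-1}(\bX^* - \hbmu^*) \\
&\quad + \lambda_k\phi_k(\bT^*)^T(A^{-1} - B^{-1})(\bX^* - \hbmu^*) \\
&\quad + \lambda_k\phi_k(\bT^*)^T B^{-1}(\bmu^* - \hbmu^*),
\end{align*}
denoting the four summands $T_{k,1},\dots,T_{k,4}$, and bound each via Cauchy--Schwarz. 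The ingredients are the standard uniform rates $\sup_t|\hmu - \mu| = O_p(a_n)$ and $\sup_{s,t}|\hGamma - \Gamma| = O_p(b_n)$ from \cite{zhan:16}, the associated FPCA perturbation bounds $|\hlambda_k - \lambda_k| = O_p(b_n)$, $\lVert\hphi_k - \phi_k\rVert_\infty = O_p(b_n/\delta_k)$, the a priori bound $\lVert\phi_k\rVert_\infty = O(\lambda_k^{-1})$ used in the proof of Theorem 3, together with $\lVert A - B\rVert_{\text{op}} \le \lVert A - B\rVert_F = O_p(m^* b_n)$, $\lVert A^{-1}\rVert_{\text{op}}, \lVert B^{-1}\rVert_{\text{op}} = O_p(1)$ (via a Neumann series under the hypothesis $m^*(a_n+b_n) = o(1)$), $\lVert \bX^* - \hbmu^*\rVert_2 = O_p(\sqrt{m^*})$, and $\lVert\phi_k(\bT^*)\rVert_2 = O_p(\sqrt{m^*}\lambda_k^{-1})$.

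Substituting into $T_{k,1}$ and $T_{k,4}$ yields $T_{k,1}^2 + T_{k,4}^2 = O_p(m^{*2}(a_n+b_n)^2\lambda_k^{-2})$, producing the third component $m^{*2}(a_n+b_n)^2\sumkK\lambda_k^{-2}$ of $R_n^*$. The third term is the delicate one: the naïve operator-norm bound $\lVert A^{-1} - B^{-1}\rVert_{\text{op}} = O_p(m^* b_n)$ is too loose by a factor of $m^*$. I would instead invoke the Riemann-sum identity $\lambda_k\phi_k(\bT^*) = B\bW\phi_k(\bT^*) + \be_k^*$ from the proof of Theorem 1, where $\bW = \diag(w_l)$ with $\max_l w_l = O_p(m^{*-1})$ by Lemma 4 and $\lVert\be_k^*\rVert_2 = O_p(m^{*-1/2}\lambda_k^{-1})$ from the bound established in the proof of Theorem 3. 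This gives $\lambda_k\phi_k(\bT^*)^T B^{-1} = \phi_k(\bT^*)^T\bW + \be_k^{*T}B^{-1}$; combined with $A^{-1} - B^{-1} = -B^{-1}(A-B)A^{-1}$ and the contraction $\lVert\bW(A-B)\rVert_{\text{op}} \le \max_l w_l\cdot\lVert A-B\rVert_{\text{op}} = O_p(b_n)$, the two parts of $T_{k,3}$ contribute respectively $O_p(m^{*2}b_n^2\lambda_k^{-2})$ (absorbed into the third component of $R_n^*$) and the Riemann-sum error $O_p(m^{*-1}\lambda_k^{-2})$ that becomes the second component of $R_n^*$.

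For $T_{k,2}$, I would use $\lVert(\hphi_k - \phi_k)(\bT^*)\rVert_2 \le \sqrt{m^*}\lVert\hphi_k - \phi_k\rVert_\infty$ to obtain $T_{k,2}^2 = O_p(m^{*2}\lambda_k^2 b_n^2/\delta_k^2)$; the crude absorption $\lambda_k^2 \le \lambda_1^4\lambda_k^{-2}$ (valid since $\lambda_k \le \lambda_1$) converts this to $O_p(m^{*2}b_n^2\lambda_k^{-2}\delta_k^{-2})$, and the switching inequality $m^{*2}b_n^2 \le m^* + m^{*4}b_n^4$ (the first summand dominates when $m^*b_n^2 \le 1$, the second otherwise) distributes this between the first and fourth components of $R_n^*$. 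Cross terms arising from the multiplications such as $(A^{-1}-B^{-1})(\bmu^* - \hbmu^*)$ are strictly smaller than the leading contributions and absorbed without effort. The main obstacle is the tight treatment of $T_{k,3}$: the Riemann-sum identity is what prevents losing a factor of $m^*$ to the covariance perturbation, and the switching inequality is what keeps the eigenfunction perturbation within the allotted $R_n^*$ components rather than producing a spurious $m^{*2}(a_n+b_n)^2\sumkK\lambda_k^{-2}\delta_k^{-2}$ term.
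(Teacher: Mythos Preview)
There is a genuine gap in the treatment of $T_{k,2}$. After your crude absorption you reach $T_{k,2}^2 = O_p\bigl(m^{*2}(a_n+b_n)^2\lambda_k^{-2}\delta_k^{-2}\bigr)$, and the switching inequality $m^{*2}(a_n+b_n)^2 \le m^* + m^{*4}(a_n+b_n)^4$ then delivers a term $m^*\sumkK\lambda_k^{-2}\delta_k^{-2}$ from the first summand. But the first component of $R_n^*$ is $m^*(a_n+b_n)^2\sumkK\lambda_k^{-2}\delta_k^{-2}$, smaller by the divergent factor $(a_n+b_n)^{-2}$, and under the standing hypothesis $m^*(a_n+b_n)=o(1)$ none of the other three components absorbs $m^*\sumkK\lambda_k^{-2}\delta_k^{-2}$ either. (As a side remark, the sup-norm bound $\lVert\hphi_k-\phi_k\rVert_\infty = O_p(b_n/\delta_k)$ is optimistic: the derivation via $\hphi_k(t)=\hlambda_k^{-1}\int\hGamma(t,s)\hphi_k(s)\,ds$ only gives $O\bigl((a_n+b_n)\lambda_k^{-1}\delta_k^{-1}\bigr)$, but even your stated bound does not rescue the switching.) Your attribution of the $m^{*-1}\sumkK\lambda_k^{-2}$ component to the $\be_k^{*T}B^{-1}$ part of $T_{k,3}$ is also off: that part is $\be_k^{*T}B^{-1}(A-B)A^{-1}(\bX^*-\hbmu^*)$, and the same operator-norm estimates give $O_p\bigl(m^*(a_n+b_n)\lambda_k^{-1}\bigr)$, identical in order to the $\phi_k(\bT^*)^T\bW^*$ part.

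The missing idea is a conditional-variance argument. For any vector $v$ measurable with respect to the training sample and $\bT^*$, one has $E\bigl[(v^T\bSigma^{*-1}(\bX^*-\bmu^*))^2\mid\bT^*,\hGamma,\hmu,\dots\bigr]=v^T\bSigma^{*-1}v\le\sigma^{-2}\lVert v\rVert_2^2$, which saves a full factor of $m^*$ over the crude Cauchy--Schwarz bound $\lVert v\rVert_2\,\lVert\bSigma^{*-1}\rVert_{\text{op},2}\,\lVert\bX^*-\bmu^*\rVert_2$. The paper exploits this by applying the Riemann-sum identity \emph{symmetrically}: it introduces $\hbe_k^*=\hlambda_k\hphi_k(\bT^*)-\hbSigma^*\bW^*\hphi_k(\bT^*)$ alongside $\be_k^*$, so that $\hxi_k^*-\txi_k^*$ splits into $\hbe_k^{*T}\hbSigma^{*-1}(\bX^*-\hbmu^*)-\be_k^{*T}\bSigma^{*-1}(\bX^*-\bmu^*)$ plus $(\hphi_k-\phi_k)(\bT^*)^T\bW^*(\bX^*-\hbmu^*)+\phi_k(\bT^*)^T\bW^*(\bmu^*-\hbmu^*)$. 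The $\bW^*$-weighted pieces carry an intrinsic $m^{*-1/2}$ damping and yield $(a_n+b_n)^2\sumkK\lambda_k^{-2}\delta_k^{-2}$ directly, with no switching; the two $\be$-pieces are controlled via conditioning, and the four-term structure of $R_n^*$ emerges from a careful bound on $\lVert\hbe_k^*\rVert_2$ (whose leading term is $\sqrt{m^*}(a_n+b_n)\delta_k^{-1}$) combined with the expansion of $\hbSigma^{*-1}\bSigma^*\hbSigma^{*-1}$ around $\bSigma^{*-1}$.
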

\begin{proof}[Proof of Lemma \ref{lem:secondMomentFPCscoresNorm}]
	Similarly as in the proof of Theorem \ref{thm:xiEst}, write 
	\begin{align}
		\hbe_k^* = \int_\cT \hGamma(\bT^*, t) \hphi_k(t) dt - \hbSigma^{*} \bW^* \hbphi_{k}^* \label{eq:hbe_k*definition}.
	\end{align}
	From Theorem $5.2$ in \cite{zhan:16}, we have
	\begin{align}
		\lVert \hGamma-\Gamma\rVert_{\infty}&=O(a_n+b_n) \quadas \label{eq:shrinkEstFunc_DeltaGamma},
	\end{align}
	as $\ntoinf$, which implies
	\begin{align}
		\lVert \hXi -\Xi \rVert_{\text{op}}&=O(a_n+b_n) \quadas \label{eq:shrinkEstFunc_opNorm},
	\end{align}
	as $\ntoinf$. This combined with perturbation results \citep{bosq:00} show that for any $k\ge 1$,
	\begin{align}
		\lVert  \hphi_k-\phi_k \rVert_{L^2}&\le 2\sqrt{2}\delta_k\inv \lVert \hXi -\Xi \rVert_{\text{op}}= O((a_n+b_n) \delta_k\inv) \quadas \label{eq:shrinkEstFunc_eigenvec}, 
	\end{align}
	and
	\begin{align}
		\lvert \hlambda_k - \lambda_k \rvert&\le  \lVert \hXi -\Xi \rVert_{\text{op}}=O(a_n+b_n) \quadas \label{eq:shrinkEstFunc_eigenvals},
	\end{align}
	as $\ntoinf$. Similar to the proof of Theorem $2$ in \cite{dai:16:1} and employing Theorem $5.1$ and $5.2$ in \cite{zhan:16}, it holds that
	\begin{align}
		\lVert\hbSigma^{*-1} -\bSigma^{*-1} \rVert_{\text{op},2}
		&\lesssim m^*
		(\lvert \hsigma^2 - \sigma^2\rvert + \lVert \hGamma-\Gamma\rVert_\infty)
		=O(m^*(a_n+b_n)) \quadas \label{eq:shrinkEstFunc_DeltaEstXi_key2},
	\end{align}
	as $\ntoinf$. Also note that for $1\le k\le K$,
	\begin{align}
		\lVert \bW^* \bphi_{k}^*\rVert_2
		&=
		O\left(\lambda_k\inv \left(\sum_{r=1}^{m^*} w_r^{*2}\right)^{1/2}\right)\label{eq:secondMomentFPCscoresNorm_boundWPhi}.
	\end{align}
	Similar arguments as in the proof of Theorem $2$ in \cite{mull:05:4} along with perturbation results \citep{bosq:00}, \eqref{eq:shrinkEstFunc_DeltaGamma}, and \eqref{eq:shrinkEstFunc_eigenvec} show that
	\begin{align}
		\sup_{t\in\cT}\  \lvert \hlambda_k \hphi_k(t) -\lambda_k\phi_k(t)\rvert
		&\le
		\lVert \hGamma-\Gamma\rVert_\infty + \lVert \Gamma\rVert_\infty  \lVert \hphi_k-\phi_k\rVert_{L^2}\nonumber\\
		&=O\left(
		(a_n+b_n) (1+ \delta_k\inv)\right)\quadas \label{eq:secondMomentFPCscoresNorm_deltaLambdaPhi},
	\end{align}
	as $\ntoinf$. By the Cauchy--Schwarz inequality and employing the orthonormality of the $\hphi_k$,
	\begin{align}
		\lvert \hlambda_k \hphi_k(t) \rvert
		&=  \Big\lvert \int_\cT \hGamma (t,s)\hphi_k(s) ds \Big\rvert
		\le
		\left( \int_\cT \hGamma^2 (t,s) ds \right)^{1/2}
		\le
		\lVert \hGamma \rVert_\infty\label{eq:secondMomentFPCscoresNorm_boundLambdaPhi}.
	\end{align}
	Since for large enough $n$ we have
	\begin{align*}
		\lambda_K\inv \lVert \hXi -\Xi \rVert_{\text{op}} 
		&\le
		\sumkK \lambda_k\inv \lVert \hXi -\Xi \rVert_{\text{op}}=O\left( (a_n+b_n) \sumkK \lambda_k\inv \right)=o(1)\quadas,
	\end{align*}
	where the first equality is due to \eqref{eq:shrinkEstFunc_opNorm} and the last is due to the condition $(a_n+b_n)\nu_K=o(1)$ as $\ntoinf$, we have $\lVert \hXi -\Xi \rVert_{\text{op}} \le \lambda_K/2 \le \lambda_k/2$ a.s. for large enough $n$. In view of \eqref{eq:shrinkEstFunc_eigenvals}, it follows that for any $1\le k\le K$,
	\begin{align}
		\lvert \hlambda_k -\lambda_k\rvert
		&\le \lambda_k/2 \quadas \label{eq:secondMomentFPCscoresNorm_deltaLambdaBoundLambda},
	\end{align}
	as $\ntoinf$. Combining with \eqref{eq:secondMomentFPCscoresNorm_boundLambdaPhi} and \eqref{eq:shrinkEstFunc_DeltaGamma} leads to
	\begin{align}
		\lVert \hphi_k\rVert_\infty
		&\le
		\hlambda_k\inv \lVert \hGamma \rVert_\infty
		\le 2 \lambda_k\inv (\lVert \hGamma -\Gamma \rVert_\infty +\lVert \Gamma \rVert_\infty)
		=O(\lambda_k\inv)\quadas \label{eq:secondMomentFPCscoresNorm_estimatedPhiBound},
	\end{align}
	for large enough $n$. This along with \eqref{eq:shrinkEstFunc_eigenvals}, \eqref{eq:secondMomentFPCscoresNorm_deltaLambdaPhi}, and \eqref{eq:secondMomentFPCscoresNorm_deltaLambdaBoundLambda} implies
	\begin{align}
		\sup_{t\in\cT} \ \lvert \hphi_k(t) -\phi_k(t) \rvert
		&\le
		\lVert \hphi_k\rVert_\infty \lambda_k\inv \lvert\hlambda_k-\lambda_k\rvert
		+\lambda_k\inv \lVert  \hlambda_k \hphi_k -\lambda_k\phi_k \rVert_\infty\nonumber\\
		&=O\left( (a_n+b_n) (\lambda_k^{-2} + \lambda_k\inv + \lambda_k\inv \delta_k\inv)
		\right) \quadas \label{eq:secondMomentFPCscoresNorm_supDeltaPhi},
	\end{align}
	as $\ntoinf$. Thus, using that $\delta_k\le \lambda_k$ we obtain
	\begin{align}
		\lVert \bW^* (\hbphi_{k}^* -\bphi_{k}^*)\rVert_2
		&=O\left(
		\left(\sum_{r=1}^{m^*} w_r^{*2}\right)^{1/2}  \lambda_k\inv (a_n+b_n) (1+ \delta_k\inv) \right)\quadas\label{eq:secondMomentFPCscoresNorm_WDeltaPhiBound},
	\end{align}
	as $\ntoinf$. Let $\bphi_{k}^*=\phi_{k}(\bT^*)$ and $\hbphi_{k}^*=\hphi_{k}(\bT^*)$. From \eqref{eq:hbe_k*definition}, note that
	\begin{align*}
		\lVert \hbe_k^*\rVert_2 &\le 
		\Big\lVert \int_\cT \hGamma(\bT^*,s)\hphi_k(s)ds - \hGamma(\bT^*,\bT^{*T}) \bW^* \hbphi_{k}^* \Big\rVert_2 + \lVert \hsigma^2 \bW^* \hbphi_{k}^*\rVert_2,
	\end{align*}
	where
	\begin{align}
		\lVert \hsigma^2 \bW^* \hbphi_{k}^*\rVert_2^2
		&\le
		(\lvert \hsigma^2 -\sigma^2\rvert + \sigma^2)^2
		\lVert \hphi_k \rVert_\infty^2 \sum_{l=1}^{m^*} w_l^{*2} 
		\lesssim
		\lambda_k^{-2} \sum_{l=1}^{m^*} w_l^{*2} \quadas \label{eq:secondMomentFPCscoresNorm_secondComponent},
	\end{align}
	for large enough $n$ and the last upper bound depends on $k$ only through $\lambda_k^{-2}$. Here the last inequality uses that $\lVert \hphi_k \rVert_\infty=O(\lambda_k\inv)$ a.s.\ and $\lvert \hsigma^2-\sigma^2\rvert=O(a_n+b_n)$ a.s.\ as $\ntoinf$. Observe
	\begin{align}
		\int_\cT \hGamma(\bT^*, s) \hphi_k(s) ds - \hGamma(\bT^*,\bT^{*T})\bW^*\hbphi_{k}^*
		&=
		\int_\cT \hGamma(\bT^*, s) \hphi_k(s) ds - \int_\cT \Gamma(\bT^*, s) \phi_k(s) ds \nonumber\\
		&+ \int_\cT \Gamma(\bT^*, s) \phi_k(s) ds - \Gamma(\bT^*,\bT^{*T})\bW^* \bphi_{k}^* \nonumber\\
		&+ \Gamma(\bT^*,\bT^{*T})\bW^* \bphi_{k}^*- \hGamma(\bT^*,\bT^{*T})\bW^*\hbphi_{k}^*\label{eq:secondMomentFPCscoresNorm_keyDecomposition},
	\end{align}
	Hence, it suffices to control each of the differences in \eqref{eq:secondMomentFPCscoresNorm_keyDecomposition}. First,
	\begin{align*}
		&\int_\cT \hGamma(\bT^*, s) \hphi_k(s)  - \Gamma(\bT^*, s) \phi_k(s) ds  \\
		&= 
		\int_\cT (\hGamma(\bT^*, s)-\Gamma(\bT^*, s)  )\hphi_k(s) ds + \int_\cT \Gamma(\bT^*, s) (\hphi_k(s) -\phi_k(s)) ds,
	\end{align*}
	where, for $j=1,\dots,m^*$, and by using the orthonormality of the $\hphi_k$,
	\begin{align*}
		\left\lvert \int_\cT (\hGamma(T_{j}^*, s)-\Gamma(T_{j}^*, s)  )\hphi_k(s) ds \right\rvert 
		&\le \left( \int_\cT (\hGamma(T_{j}^*, s)-\Gamma(T_{j}^*, s))^2 ds \right)^{1/2} \\
		&\le \lVert \hGamma - \Gamma\rVert_\infty\\
		&=O(a_n+b_n) \quadas,
	\end{align*}
	and
	\begin{align*}
		\left\lvert \int_\cT \Gamma(T_{j}^*, s) (\hphi_k(s) -\phi_k(s)) ds \right\rvert&\le  \lVert \Gamma\rVert_\infty \lVert \hphi_k -\phi_k\rVert_{L^2}
		=O\left( (a_n+b_n) \delta_k\inv \right) \quadas,
	\end{align*}
	where we use that $|\cT|=1$ and $\Gamma(s,t)$ is continuous over the compact set $\cT^2$. Thus
	\begin{align}
		\normtwo{\int_\cT \hGamma(\bT^*, s) \hphi_k(s)  - \Gamma(\bT^*, s) \phi_k(s) ds}  &= O\left(\sqrt{m^*}(a_n+b_n)(1+\delta_k\inv) \right) \quadas \label{eq:secondMomentFPCscoresNorm_keyDecomposition_firstDiff},
	\end{align}
	as $n\to\infty$, and the bound depends on $k$ only through $\delta_k\inv$. Second, from the Riemann sum approximation in \eqref{eq:intErrBound} and noting that the application $g_j(t)=\Gamma(T_j^*,t)\phi_k(t)$ satisfies $\lVert g_j\rVert_\infty = O(\lambda_k\inv)$ and $\lVert g_j'\rVert_\infty =O(\lambda_k\inv)$ by \ref{a:GammaDiff},  where the $O(\lambda_k\inv)$ terms are uniform in $j$ and depend on $k$ only through $\lambda_k\inv$, we have
	\begin{align*}
		&\left\lvert \int_\cT \Gamma(T_j^*, s) \phi_k(s)ds  - \Gamma(T_j^*,\bT^{*T}) \bW^* \bphi_{k}^* \right\rvert\\ &\lesssim
		\lambda_k\inv \left( \sum_{l=1}^{m^*} w_l^{*2} + (1-\bT^{(m^*)})^2 + (1-\bT^{(m^*)})   \right),
	\end{align*}
	where $\bT^{(m^*)}:=\max_{j=1,\dots,m^*} T_j^*$ and the upper bound is uniform in $j$ and depends on $k$ only through $\lambda_k\inv$. Thus
	\begin{align}
		&\normtwo{\int_\cT \Gamma(\bT^*, s) \phi_k(s) ds - \Gamma(\bT^*,\bT^{*T})\bW^* \bphi_{k}^*}  \nonumber\\
		&= 
		O\left( \sqrt{m^*}  	\lambda_k\inv \left( \sum_{l=1}^{m^*} w_l^{*2} + (1-\bT^{(m^*)})^2 + (1-\bT^{(m^*)})   \right)  \right)\label{eq:secondMomentFPCscoresNorm_keyDecomposition_secondDiff}.
	\end{align}
	Third, observe
	\begin{align}
		\Gamma(\bT^*,\bT^{*T})\bW^* \bphi_{k}^*- \hGamma(\bT^*,\bT^{*T})\bW^* \hbphi_{k}^* &=(\Gamma(\bT^*,\bT^{*T})-\hGamma(\bT^*,\bT^{*T}))\bW^* \bphi_{k}^* \nonumber \\
		&+  \hGamma(\bT^*,\bT^{*T}) (\bW^* \bphi_{k}^*- \bW^* \hbphi_{k}^*)\label{eq:secondMomentFPCscoresNorm_keyDecomposition_thirdDiff_key0}.
	\end{align}
	Note that
	\begin{align*}
		&\normtwo{(\Gamma(\bT^*,\bT^{*T})-\hGamma(\bT^*,\bT^{*T}))\bW^* \bphi_{k}^*} \\
		&\le \norm{\Gamma(\bT^*,\bT^{*T})-\hGamma(\bT^*,\bT^{*T})}_{\text{op},2} \normtwo{\bW^* \bphi_{k}^*} \\
		&\lesssim
		\lambda_k\inv  \left(\sum_{l=1}^{m^*} w_l^{*2} \right)^{1/2}
		\norm{\Gamma(\bT^*,\bT^{*T})-\hGamma(\bT^*,\bT^{*T})}_{\text{op},2},
	\end{align*}
	where the last equality follows similarly as in \eqref{eq:Wphi} and using that $\lVert \phi_k \rVert_\infty=O(\lambda_k\inv)$. Since $\norm{A}_{\text{op},2} \le \norm{A}_{F}$, where $\norm{A}_{F}$ denotes the Frobenius norm of a squared matrix $A$, and
	\begin{align*}
		\norm{\Gamma(\bT^*,\bT^{*T})-\hGamma(\bT^*,\bT^{*T})}_{F}^2 
		&\le 
		m^{*2}  \sup_{s,t\in\cT} |\hGamma(s,t) - \Gamma(s,t) |^2
		= O (m^{*2}(a_n+b_n)^2)\quadas,
	\end{align*}
	as $n\to\infty$, it follows that 
	\begin{align}
		\normtwo{(\Gamma(\bT^*,\bT^{*T})-\hGamma(\bT^*,\bT^{*T}))\bW^* \bphi_{k}^* } &\lesssim
		\lambda_k\inv  m^* (a_n+b_n) \left(\sum_{l=1}^{m^*} w_l^{*2} \right)^{1/2} \quadas \label{eq:secondMomentFPCscoresNorm_keyDecomposition_thirdDiff_key1},
	\end{align}
	as $\ntoinf$. Also,
	\begin{align*}
		&\normtwo{\hGamma(\bT^*,\bT^{*T}) ( \bW^* \bphi_{k}^*- \bW^* \hbphi_{k}^* )}\nonumber\\
		&=\left(\norm{
			\hGamma(\bT^*,\bT^{*T})-\Gamma(\bT^*,\bT^{*T})}_{\text{op},2}+\norm{\Gamma(\bT^*,\bT^{*T})}_{\text{op},2} \right) \normtwo{\bW^* (\bphi_{k}^*-\hbphi_{k}^*)} \nonumber\\
		&\lesssim  (m^*(a_n+b_n) + m^*)
		\lambda_k\inv (a_n+b_n) (1+ \delta_k\inv) \left(\sum_{l=1}^{m^*} w_l^{*2} \right)^{1/2} \quadas
		\nonumber\\
		&\lesssim
		m^* \lambda_k\inv (a_n+b_n) (1+ \delta_k\inv) \left(\sum_{l=1}^{m^*} w_l^{*2} \right)^{1/2}\quadas,
	\end{align*}
	as $\ntoinf$, where the first inequality follows from \eqref{eq:secondMomentFPCscoresNorm_WDeltaPhiBound} and the last inequality uses the condition $m^*(a_n+b_n)=o(1)$ as $\ntoinf$. This along with \eqref{eq:secondMomentFPCscoresNorm_keyDecomposition_thirdDiff_key0} and \eqref{eq:secondMomentFPCscoresNorm_keyDecomposition_thirdDiff_key1} implies
	\begin{align}
		\normtwo{\Gamma(\bT^*,\bT^{*T}) \bW^* \bphi_{k}^* - \hGamma(\bT^*,\bT^{*T})\bW^* \hbphi_{k}^*}
		&\lesssim
		m^* \lambda_k\inv (a_n+b_n) (1+ \delta_k\inv) \left(\sum_{l=1}^{m^*} w_l^{*2} \right)^{1/2}\label{eq:secondMomentFPCscoresNorm_keyDecomposition_thirdDiff_key0_rate},
	\end{align}
	almost surely as $\ntoinf$, where the bound depends on $k$ only through $\lambda_k\inv$ and $\delta_k\inv$. Combining \eqref{eq:secondMomentFPCscoresNorm_keyDecomposition}, \eqref{eq:secondMomentFPCscoresNorm_keyDecomposition_firstDiff}, \eqref{eq:secondMomentFPCscoresNorm_keyDecomposition_secondDiff}, and \eqref{eq:secondMomentFPCscoresNorm_keyDecomposition_thirdDiff_key0_rate} leads to
	\begin{align}
		&\normtwo{\int_\cT \hGamma(\bT^*, t) \hphi_k(t) dt - \hGamma(\bT^*,\bT^{*T})\bW^*\hbphi_{k}^*}\nonumber\\
		&\lesssim
		\sqrt{m^*}(a_n+b_n)(1+\delta_k\inv)\nonumber\\
		&+
		\sqrt{m^*}  	\lambda_k\inv \left( \sum_{l=1}^{m^*} w_l^{*2} + (1-\bT^{(m^*)})^2 + (1-\bT^{(m^*)})\right)\nonumber\\
		&+
		m^* \lambda_k\inv (a_n+b_n) (1+ \delta_k\inv) \left(\sum_{l=1}^{m^*} w_l^{*2} \right)^{1/2}\quadas\label{eq:secondMomentFPCscoresNorm_keyDecomposition_rate},
	\end{align}
	as $\ntoinf$. This along with \eqref{eq:secondMomentFPCscoresNorm_secondComponent} implies
	\begin{align}
		\lVert \hbe_k^*\rVert_2 &\lesssim
		\sqrt{m^*}(a_n+b_n)(1+\delta_k\inv)
		+
		\sqrt{m^*}  	\lambda_k\inv \left( \sum_{l=1}^{m^*} w_l^{*2} + (1-\bT^{(m^*)})^2 + (1-\bT^{(m^*)})\right) \nonumber\\
		&+
		m^* \lambda_k\inv (a_n+b_n) (1+ \delta_k\inv) \left(\sum_{l=1}^{m^*} w_l^{*2} \right)^{1/2}
		+
		\lambda_k^{-1} \left( \sum_{l=1}^{m^*} w_l^{*2} \right)^{1/2} \quadas
		\label{eq:secondMomentFPCscoresNorm_keyDecomposition_bound},
	\end{align}
	as $\ntoinf$, where the bound depends on $k$ only through $\lambda_k\inv$ and $\delta_k\inv$. Define auxiliary quantities $Z_{m^*,n,K}:=\sumkK [\hbe_k^{*T} \hbSigma^{*-1} (\bX^*-\hat{\bmu}^*)]^2$, $\tilde{Z}_{m^*,n,K}:=\sumkK [\hbe_k^{*T} \hbSigma^{*-1} (\bX^*-\bmu^*)]^2$, $\mu_{m^*,n,K}:=\sumkK [\hbe_k^{*T}\hbSigma^{*-1}(\bmu^*-\hat{\bmu}^*)]^2$, and observe
	\begin{align}
		Z_{m^*,n,K} 
		&\lesssim 
		\tilde{Z}_{m^*,n,K} +\mu_{m^*,n,K}\label{eq:secondMomentFPCscoresNorm_keyDecomposition_decomp}.
	\end{align}
	By independence of the new subject's observations from the estimated population quantities, we have
	\begin{align}
		E[Z_{m^*,n,K} \vert \bT^*,\hGamma, \hphi_k, \hsigma,\hmu ]
		&\lesssim
		E\Big[ \tilde{Z}_{m^*,n,K} \vert \bT^*,\hGamma, \hphi_k, \hsigma,\hmu \Big]
		+\mu_{m^*,n,K}\nonumber\\
		&=
		\sumkK \hbe_k^{*T} \hbSigma^{*-1} \bSigma^* \hbSigma^{*-1} \hbe_k^*
		+\mu_{m^*,n,K}  \quadas
		\label{eq:secondMomentFPCscoresNorm_keyDecomposition_decomp2},
	\end{align}
	and for large enough $n$
	\begin{align}
		& \Big\lvert \sumkK \hbe_k^{*T} \hbSigma^{*-1} \bSigma^* \hbSigma^{*-1} \hbe_k^*  \Big\rvert \nonumber \\
		&\le  
		\Big\lvert  \sumkK [
		\hbe_k^{*T} (\hbSigma^{*-1} -\bSigma^{*-1}) \bSigma^* (\hbSigma^{*-1} -\bSigma^{*-1}) \hbe_k^* +2 \hbe_k^{*T} (\hbSigma^{*-1} -\bSigma^{*-1})\hbe_k^*+ \hbe_k^{*T} \bSigma^{*-1} \hbe_k^*] \Big\rvert \nonumber\\
		&\lesssim
		\sumkK [
		m^{*3} (a_n+b_n)^2  \normtwo{\hbe_k^*}^2 + m^*(a_n+b_n)  \normtwo{\hbe_k^*}^2 + \normtwo{\hbe_k^*}^2 ] \quadas\nonumber\\
		&\lesssim (1+m^{*3} (a_n+b_n)^2)
		\sumkK 
		\normtwo{\hbe_k^*}^2 \quadas\nonumber\\
		&\lesssim
		[m^* (a_n+b_n)^2 \left(\sumkK \delta_k^{-2} \right)
		+
		m^* \left(\sumkK \lambda_k^{-2} \right) \left( \sum_{l=1}^{m^*} w_l^{*2} + (1-\bT^{(m^*)})^2 + (1-\bT^{(m^*)})\right)^2\nonumber\\
		&+
		m^{*2}  (a_n+b_n)^2  \left(\sumkK \lambda_k^{-2} \delta_k^{-2} \right) \sum_{l=1}^{m^*} w_l^{*2}
		+  \left(\sumkK \lambda_k^{-2} \right) \sum_{l=1}^{m^*} w_l^{*2}
		] (1+m^{*3} (a_n+b_n)^2) \nonumber\\
		&= (1+m^{*3} (a_n+b_n)^2)\nonumber\\
		& O_p\left(  
		m^* (a_n+b_n)^2 \left(\sumkK \delta_k^{-2} \right)
		+
		m^{*-1} \left(\sumkK \lambda_k^{-2} \right)
		+
		m^{*}  (a_n+b_n)^2  \left(\sumkK \lambda_k^{-2} \delta_k^{-2} \right)
		\right)\nonumber\\
		&=
		(1+m^{*3} (a_n+b_n)^2)
		O_p\left(
		m^{*-1}\left(\sumkK \lambda_k^{-2} \right)
		+
		m^{*}  (a_n+b_n)^2  \left(\sumkK \lambda_k^{-2} \delta_k^{-2} \right)
		\right)\nonumber\\
		&=O_p(R_n^*) \label{eq:secondMomentFPCscoresNorm_keyDecomposition_bound_2_1},
	\end{align}
	where the second inequality is due to $\lVert \hbSigma^{*-1} -\bSigma^{*-1}\rVert_{\text{op},2}=O(m^* (a_n+b_n))$ a.s. as $\ntoinf$, $\lVert \bSigma^{*-1} \rVert_{\text{op},2}\le \sigma^{-2}$, $\lVert \bSigma^* \rVert_{\text{op},2}=O(m^*)$, and the fourth inequality follows from \eqref{eq:secondMomentFPCscoresNorm_keyDecomposition_bound}. This shows that
	\begin{align*}
		\sumkK  \hbe_k^{*T} \hbSigma^{*-1} \bSigma^* \hbSigma^{*-1} \hbe_k^*
		&= O_p(R_n^*).
	\end{align*}
	Thus, for any $\epsilon>0$ there exists $N_0=N_0(\epsilon)\ge 1$ and $M_0=M_0(\epsilon)> 0$ such that for all $n\ge N_0$
	\begin{align}
		P\left( R_n^{*-1}  \Big\lvert \sumkK \hbe_k^{*T} \hbSigma^{*-1} \bSigma^* \hbSigma^{*-1} \hbe_k^*\Big\rvert> M_0  \right)\le \epsilon \label{eq:secondMomentFPCscoresNorm_keyDecomposition_bound_3}.
	\end{align}
	Let $M>0$ and define
	\begin{align*}
		u_{m^*,n,K}&=P\left( R_n^{*-1} \tilde{Z}_{m^*,n,K} >M  \vert \bT^*,\hGamma, \hphi_k, \hsigma,\hmu \right).
	\end{align*}
	Choosing $M=M(\epsilon)=M_0/\epsilon$ and using that $u_{m^*,n,K}\le 1$ along with the relation
	\begin{align*}
		u_{m^*,n,K}
		&\lesssim
		\frac{1}{ R_n^{*} M} \sumkK  \hbe_k^{*T} \hbSigma^{*-1} \bSigma^* \hbSigma^{*-1} \hbe_k^*,
	\end{align*}
	which follows analogously as in \eqref{eq:secondMomentFPCscoresNorm_keyDecomposition_decomp2}, leads to
	\begin{align*}
		P\left( R_n^{*-1} \tilde{Z}_{m^*,n,K} >M \right)
		&=
		E(u_{m^*,n,K} 1_{\{u_{m^*,n,K}\le \epsilon\}} + u_{m^*,n,K} 1_{\{u_{m^*,n,K}> \epsilon\}})\\
		&\le
		\epsilon + P(u_{m^*,n,K}> \epsilon)\\
		&\le 2\epsilon,
	\end{align*}
	where the last inequality follows from \eqref{eq:secondMomentFPCscoresNorm_keyDecomposition_bound_3}. Therefore
	\begin{align*}
		\tilde{Z}_{m^*,n,K}&= O_p(R_n^{*} ).
	\end{align*}
	Also, for large enough $n$ and using \eqref{eq:secondMomentFPCscoresNorm_keyDecomposition_bound} along with $\lVert \hbmu^*-\bmu^* \rVert_2^2=O(m^* (a_n+b_n)^2)$ a.s., we obtain
	\begin{align*}
		\mu_{m^*,n,K} 
		&\lesssim
		m^{*} (a_n+b_n)^2 \sumkK \normtwo{\hbe_k^*}^2 \quadas ,
	\end{align*}
	which in view of the third inequality in \eqref{eq:secondMomentFPCscoresNorm_keyDecomposition_bound_2_1} and the condition $m^*(a_n+b_n)=o(1)$ as $\ntoinf$ is of slower order compared to the rate $O_p(R_n^{*})$. These along with \eqref{eq:secondMomentFPCscoresNorm_keyDecomposition_decomp} leads to
	\begin{align}
		Z_{m^*,n,K}&=O_p(R_n^{*} ) \label{eq:secondMomentFPCscoresNorm_decompositionDiffScores_key1}.
	\end{align}
	Then a conditioning argument leads to 
	\begin{align*}
		E[(\be_k^{*T} \bSigma^{*-1} (\bX^*-\bmu^*))^2]
		&=
		E(E[(\be_k^{*T} \bSigma^{*-1} (\bX^*-\bmu^*))^2\vert \bT^*])\\
		&=
		E(\Var[\be_k^{*T} \bSigma^{*-1} (\bX^*-\bmu^*)\vert \bT^*])\\
		&=
		E(\be_k^{*T} \bSigma^{*-1}\be_k^{*}  )\\
		&\le \sigma^{-2} E(\lVert \be_k^*\rVert_2^2 )\\
		&\lesssim m^{*-1} \lambda_k^{-2},
	\end{align*}
	where the last inequality holds for large enough $n$ and follows analogously as in  \eqref{eq:shrinkageFuncSubject_expectation_ek}. This implies
	\begin{align*}
		E\Big[\sumkK (\be_k^{*T} \bSigma^{*-1} (\bX^*-\bmu^*))^2\Big]
		&\le m^{*-1} \sumkK \lambda_k^{-2}.
	\end{align*}
	Hence
	\begin{align}
		\sumkK(\be_k^{*T} \bSigma^{*-1} (\bX^*-\bmu^*))^2
		&=
		O_p\left(  m^{*-1} \sumkK \lambda_k^{-2} \right)
		\label{eq:secondMomentFPCscoresNorm_decompositionDiffScores_key2}.
	\end{align}
	For any $k=1,\dots,K$, observe
	\begin{align}
		&\hxi_{k}^*-\txi_{k}^* \nonumber\\
		&= 
		\hbe_k^{*T} \hbSigma^{*-1} (\bX^*-\hbmu^*)
		+ \hbphi_{k}^{*T} \bW^* (\bX^*-\hbmu^*)
		-
		\be_k^{*T} \bSigma^{*-1} (\bX^*-\bmu^*)
		-
		\bphi_{k}^{*T} \bW^* (\bX^*-\bmu^*)\label{eq:secondMomentFPCscoresNorm_decompositionDiffScores}.
	\end{align}
	From \eqref{eq:secondMomentFPCscoresNorm_boundWPhi}, \eqref{eq:secondMomentFPCscoresNorm_WDeltaPhiBound},  and using that $\lVert \bX^*-\bmu^* \rVert_2^2=O_p(m^*)$ and $\lVert \hbmu^*-\bmu^* \rVert_2^2=O_p(m^* (a_n+b_n)^2)$, we obtain
	\begin{align}
		&\sumkK [(\hbphi_{k}^{*T} \bW^* (\bX^*-\hbmu^*)-\bphi_{k}^{*T} \bW^* (\bX^*-\bmu^*))^2]\nonumber\\
		&\lesssim
		\sumkK [\lVert \bW^* (\hbphi_{k}^*-\bphi_{k}^*)\rVert_2^2 
		\lVert \bX^*-\hbmu^* \rVert_2^2
		+\lVert \bW^* \bphi_{k}^*\rVert_2^2 
		\lVert \hbmu^*-\bmu^* \rVert_2^2]\nonumber\\
		&=
		O_p\left(
		(a_n+b_n)^2 \sumkK \lambda_k^{-2}\delta_k^{-2}
		\right) \label{eq:secondMomentFPCscoresNorm_decompositionDiffScores_key3}.
	\end{align}
	Combining \eqref{eq:secondMomentFPCscoresNorm_decompositionDiffScores_key1}, \eqref{eq:secondMomentFPCscoresNorm_decompositionDiffScores_key2}, \eqref{eq:secondMomentFPCscoresNorm_decompositionDiffScores}, and \eqref{eq:secondMomentFPCscoresNorm_decompositionDiffScores_key3} leads to
	\begin{align*}
		\lVert  \hbxiKs-\tbxiKs \rVert_2^2
		&=
		\sumkK [(	\hxi_{k}^*-\txi_{k}^*)^2]
		=O_p(R_n^*),
	\end{align*}
	which shows the result.
\end{proof}

\begin{lem}\label{lem:traceSigmaK}
	Suppose that assumptions \ref{a:Diff}, \ref{a:GaussProcess}, \ref{a:eigendecay} and \ref{a:K}--\ref{a:Ubeta} are satisfied. Consider either a sparse design setting when $n_i\le N_0<\infty$ or a dense design when $n_i=m\to\infty$, $i=1,\dots,n$. Set $a_n=a_{n1}$ and $b_n=b_{n1}$ for the sparse case, and $a_n=a_{n2}$ and $b_n=b_{n2}$ for the dense case. Let $\upsilon_K=\sumkK \lambda_k^{-1/2}\delta_k\inv$. For a new independent subject $i^*$, suppose that $m^*=m^*(n)\to\infty$ is such that $m^{*}(a_n+b_n)=o(1)$ and $K=K(n)$ satisfies $(a_n+b_n) \upsilon_K=o(1)$ as $\ntoinf$. Then
	\begin{align*}
		\text{trace}( \hbSigma_{K}^*-\bSigma_{K}^*)
		&=
		O_p\left( m^{*} (a_n+b_n)^2 \sumkK \lambda_k^{-2} \delta_k^{-2}
		+
		(a_n+b_n) \sumkK \lambda_k^{-2} \delta_k\inv
		\right).
	\end{align*}
\end{lem}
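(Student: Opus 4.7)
The plan is to bound $\text{trace}(\hbSigma_K^* - \bSigma_K^*)$ by mimicking the argument that led to \eqref{eq:shrinkFunc_var} in the proof of Theorem \ref{thm:functionalShrinkage}, but tracking the extra perturbation between hatted and unhatted quantities. Since $\hbSigma_K^* = \hbLambda_K - \hbLambda_K \hbPhi_K^{*T}\hbSigma^{*-1}\hbPhi_K^*\hbLambda_K$ (and analogously for $\bSigma_K^*$), I would begin by writing
\begin{align*}
\text{trace}(\hbSigma_K^* - \bSigma_K^*) &= \sum_{k=1}^K (\hlambda_k - \lambda_k) - \sum_{k=1}^K \left[\hlambda_k^2 \hbphi_k^{*T}\hbSigma^{*-1}\hbphi_k^* - \lambda_k^2 \bphi_k^{*T}\bSigma^{*-1}\bphi_k^*\right].
\end{align*}
The first sum is $O_p(K(a_n+b_n))$ by \eqref{eq:shrinkEstFunc_eigenvals} and is absorbed by the claimed bound, so the work lies entirely in the second sum.

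For that second sum I would exploit the spectral identity $\hlambda_k \hbphi_k^* = \hbSigma^*\bW^*\hbphi_k^* + \hbe_k^*$ (which follows from $\hlambda_k \hphi_k(\cdot)=\int_\cT \hGamma(\cdot,t)\hphi_k(t)dt$ together with the definition \eqref{eq:hbe_k*definition} of $\hbe_k^*$) and its population counterpart \eqref{eq:phiSigmaInv} to decompose
\begin{align*}
\hlambda_k^2 \hbphi_k^{*T}\hbSigma^{*-1}\hbphi_k^* = \hlambda_k \hbphi_k^{*T}\bW^*\hbphi_k^* + \hbphi_k^{*T}\bW^*\hbe_k^* + \hbe_k^{*T}\hbSigma^{*-1}\hbe_k^*,
\end{align*}
subtract the analogous population expansion, and control the three resulting differences term-by-term. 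For the quadrature piece $\hlambda_k \hbphi_k^{*T}\bW^*\hbphi_k^* - \lambda_k \bphi_k^{*T}\bW^*\bphi_k^*$ I would apply the triangle inequality together with $|\hlambda_k-\lambda_k|=O(a_n+b_n)$ from \eqref{eq:shrinkEstFunc_eigenvals}, the sup-norm eigenfunction perturbation $\sup_t|\hphi_k(t)-\phi_k(t)|=O((a_n+b_n)\lambda_k^{-1}\delta_k^{-1})$ from \eqref{eq:secondMomentFPCscoresNorm_supDeltaPhi}, and $\|\phi_k\|_\infty=O(\lambda_k^{-1})$ with $\sum_l w_l^*\le 1$. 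For the cross term $\hbphi_k^{*T}\bW^*\hbe_k^* - \bphi_k^{*T}\bW^*\be_k^*$ I would split additively and invoke Cauchy--Schwarz with the bounds on $\|\hbe_k^*\|_2$ and $\|\be_k^*\|_2$ from \eqref{eq:secondMomentFPCscoresNorm_keyDecomposition_bound} and \eqref{eq:shrinkageFuncSubject_bound_ek}, and on $\|\bW^*(\hbphi_k^*-\bphi_k^*)\|_2$ from \eqref{eq:secondMomentFPCscoresNorm_WDeltaPhiBound}. For the quadratic piece I would write
\begin{align*}
\hbe_k^{*T}\hbSigma^{*-1}\hbe_k^* - \be_k^{*T}\bSigma^{*-1}\be_k^* &= \hbe_k^{*T}(\hbSigma^{*-1}-\bSigma^{*-1})\hbe_k^* \\
&\quad + (\hbe_k^*-\be_k^*)^T\bSigma^{*-1}\hbe_k^* + \be_k^{*T}\bSigma^{*-1}(\hbe_k^*-\be_k^*),
\end{align*}
bounding the first summand through $\|\hbSigma^{*-1}-\bSigma^{*-1}\|_{\text{op},2}=O(m^*(a_n+b_n))$ from \eqref{eq:shrinkEstFunc_DeltaEstXi_key2} and the remaining ones through $\|\bSigma^{*-1}\|_{\text{op},2}\le \sigma^{-2}$. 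Expectations over $\bT^*$ conditional on the estimated population quantities combined with Lemma \ref{lem:unifGaps} (to control $\sum_l w_l^{*2}$ and $1-\bT^{(m^*)}$) then give the claimed rate in probability via the conditioning device already used in \eqref{eq:ineqCondTi}.

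The main obstacle will be obtaining a sufficiently sharp bound on $\|\hbe_k^* - \be_k^*\|_2$ and then summing over $k$. Following the decomposition \eqref{eq:secondMomentFPCscoresNorm_keyDecomposition}, the difference $\hbe_k^*-\be_k^*$ naturally splits into contributions from $\hGamma-\Gamma$, $\hphi_k-\phi_k$ and the Riemann approximation errors, and each must be paired against a small factor (either $(\sum_l w_l^{*2})^{1/2}$ or the operator norm of $\bSigma^{*-1}$) so that the resulting bound, after summing over $k$ and conditioning on the estimated population quantities, matches the dominant term $m^*(a_n+b_n)^2\sum_k\lambda_k^{-2}\delta_k^{-2}$ and the linear term $(a_n+b_n)\sum_k\lambda_k^{-2}\delta_k^{-1}$. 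A crude union bound at each step would overshoot, so careful cross-term bookkeeping along the lines of \eqref{eq:shrinkageFuncSubject_bound1}--\eqref{eq:shrinkageFuncSubject_bound5} in the proof of Theorem \ref{thm:functionalShrinkage} will be required.
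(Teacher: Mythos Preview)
Your proposal is correct and follows essentially the same approach as the paper's proof: expand the diagonal entries of $\hbSigma_K^*$ and $\bSigma_K^*$ via the spectral identity $\hlambda_k\hbphi_k^*=\hbSigma^*\bW^*\hbphi_k^*+\hbe_k^*$ and its population counterpart, then control each difference using the perturbation bounds \eqref{eq:shrinkEstFunc_eigenvals}, \eqref{eq:secondMomentFPCscoresNorm_supDeltaPhi}, \eqref{eq:secondMomentFPCscoresNorm_WDeltaPhiBound}, \eqref{eq:shrinkEstFunc_DeltaEstXi_key2} together with Lemma~\ref{lem:unifGaps}. The only cosmetic differences are that the paper expands $\lambda_k^2\bphi_k^{*T}\bSigma^{*-1}\bphi_k^*$ symmetrically into four terms (your three-term expansion is algebraically equivalent, since $\hlambda_k\hbphi_k^{*T}\bW^*\hbphi_k^*=\hbphi_k^{*T}\bW^*\hbSigma^*\bW^*\hbphi_k^*+\hbphi_k^{*T}\bW^*\hbe_k^*$), and the paper bounds $\hDelta_k:=\hbe_k^*-\be_k^*$ more directly via the identity $\hDelta_k=(\hlambda_k\hbphi_k^*-\lambda_k\bphi_k^*)-(\hbSigma^*\bW^*\hbphi_k^*-\bSigma^*\bW^*\bphi_k^*)$ rather than through the longer decomposition \eqref{eq:secondMomentFPCscoresNorm_keyDecomposition}; both routes yield the same rate.
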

\begin{proof}[Proof of Lemma \ref{lem:traceSigmaK}]
	In effect, for $j=1,\dots,K$, the $(j,j)$-element of $\hbSigma_{K}^*-\bSigma_{K}^*$ is given by
	\begin{align}
		[\hbSigma_{K}^*-\bSigma_{K}^*]_{j,j}
		&=
		\hbejsT \hbSigma^{*-1} \hbejs+\hbejsT \bW^* \hbphi_{j}^*+\hbphi_{j}^{*T} \bW^* \hbejs +\hbphi_{j}^{*T} \bW^* \hbSigma^* \bW^* \hbphi_{j}^*\nonumber\\
		&-
		(\bejsT \bSigma^{*-1} \bejs+\bejsT \bW^* \bphi_{j}^*+\bphi_{j}^{*T} \bW^* \bejs +\bphi_{j}^{*T} \bW^* \bSigma^* \bW^* \bphi_{j}^*)\label{eq:traceSigmaK_keyDecomposition},
	\end{align}
	where $\hbejs$ is defined as in \eqref{eq:hbe_k*definition}. Note that the conditions of Lemma \ref{lem:secondMomentFPCscoresNorm} hold since $(a_n+b_n) \nu_K=o(1)$ which is due to $\nu_K\le \upsilon_K$ and $\delta_k\le\lambda_k$, where $\nu_K=\sumkK \lambda_k\inv$. Observing for any $k=1,\dots,K$,
	\begin{align*}
		\delta_k\inv 
		&\le
		\sumkK\delta_k\inv = \sumkK \lambda_k^{-1/2} \delta_k\inv \lambda_k^{1/2}
		\le  \lambda_1^{1/2} \sumkK \lambda_k^{-1/2} \delta_k\inv,
	\end{align*}
	along with the condition $\upsilon_K (a_n+b_n)=o(1)$ as $\ntoinf$ leads to
	\begin{align*}
		\delta_k\inv (a_n+b_n) &\le \lambda_1^{1/2} (a_n+b_n) \sumkK \lambda_k^{-1/2} \delta_k\inv =o(1),
	\end{align*}
	as $\ntoinf$, where the bound is uniform in $k$. This along with \eqref{eq:secondMomentFPCscoresNorm_boundWPhi} and \eqref{eq:secondMomentFPCscoresNorm_WDeltaPhiBound} imply
	\begin{align}
		\lVert \bW^* \hbphi_{k}^* \rVert_2 &\le \lVert \bW^* (\hbphi_{k}^*-\bphi_{k}^*)\rVert_2 + 	\lVert \bW^* \bphi_{k}^*\rVert_2
		= O\left(\lambda_k\inv \left(\sum_{r=1}^{m^*} w_r^{*2}\right)^{1/2}\right)\quadas\label{eq:traceSigmaK_WPhihat},
	\end{align}
	as $\ntoinf$, where the bound depends on $k$ only through $\lambda_k\inv$. Also, using \eqref{eq:shrinkEstFunc_DeltaGamma} and since $m^*(a_n+b_n)=o(1)$ and $\lvert\hsigma^2-\sigma^2\rvert=O(a_n+b_n)$ as $\ntoinf$, which follows from Proposition $1$ in \cite{dai:16:1}, we obtain
	\begin{align}
		\lVert \hbSigma^* \rVert_{\text{op},2}&\le \hsigma^2 +	\lVert \hGamma(\bT^*,\bT^*)\rVert_{\text{op},2}
		\le
		\hsigma^2 + m^* \lVert \hGamma\rVert_\infty = O(m^*)\quadas\label{eq:traceSigmaK_normSigmaHat},
	\end{align}
	as $\ntoinf$. This along with \eqref{eq:secondMomentFPCscoresNorm_WDeltaPhiBound}, \eqref{eq:traceSigmaK_WPhihat}, and \eqref{eq:traceSigmaK_normSigmaHat} leads to
	\begin{align}
		\lvert (\hbphi_{j}^*-\bphi_{j}^*)^T \bW^* \hbSigma^* \bW^* \hbphi_{j}^* \rvert
		&\le
		\lVert \bW^* (\hbphi_{j}^*-\bphi_{j}^*)  \rVert_2 \lVert \hbSigma^* \bW^* \hbphi_{j}^* \rVert_2 \nonumber\\
		&\le
		\lVert \bW^* (\hbphi_{j}^*-\bphi_{j}^*)  \rVert_2 	\lVert \hbSigma^* \rVert_{\text{op},2} \lVert \bW^* \hbphi_{j}^* \rVert_2\nonumber \\
		&=O\left(
		\left(\sum_{r=1}^{m^*} w_r^{*2} \right) m^* (a_n+b_n)  \lambda_j^{-2}  (1+\delta_j\inv)
		\right)\quadas\label{eq:traceSigmaK_DeltaPhiWSWphihat},
	\end{align}
	as $\ntoinf$, where the bound depends on $j$ only through $\lambda_j\inv$ and $\delta_j\inv$. Using the fact that $\lVert \hbSigma^* -\bSigma^* \rVert_{\text{op},2}=O(m(a_n+b_n))$ a.s. as $\ntoinf$ along with \eqref{eq:secondMomentFPCscoresNorm_WDeltaPhiBound} and  \eqref{eq:traceSigmaK_WPhihat}, we obtain
	\begin{align}
		\lVert \hbSigma^* \bW^* \hbphi_{j}^* - \bSigma^* \bW^* \bphi_{j}^* \rVert_2
		&\le
		\lVert \hbSigma^* -\bSigma^* \rVert_{\text{op},2}  \lVert \bW^* \hbphi_{j}^* \rVert_2
		+
		\lVert \bSigma^* \rVert_{\text{op},2}  \lVert \bW^* (\hbphi_{j}^*-\bphi_{j}^*) \rVert_2\nonumber\\
		&=
		O\left( m^*(a_n+b_n)  \lambda_j\inv \left(\sum_{r=1}^{m^*} w_r^{*2}\right)^{1/2} (1+ \delta_j\inv)  \right)\quadas\label{eq:traceSigmaK_deltaSWPhihat},
	\end{align}
	as $\ntoinf$, where the bound depends on $j$ only through $\lambda_j\inv$ and $\delta_j\inv$. Thus
	\begin{align}
		\lvert \bphi_{j}^{*T} \bW^* (\hbSigma^* \bW^* \hbphi_{j}^* - \bSigma^* \bW^* \bphi_{j}^*)\rvert
		&\le
		\lVert \bW^*\bphi_{j}^*\rVert_2 \lVert \hbSigma^* \bW^* \hbphi_{j}^* - \bSigma^* \bW^* \bphi_{j}^*\rVert_2\nonumber\\
		&=O\left( m^*(a_n+b_n)  \lambda_j^{-2} \left(\sum_{r=1}^{m^*} w_r^{*2} \right) (1+ \delta_j\inv)  \right)\quadas\label{eq:traceSigmaK_PhiWdeltaSWPhihat},
	\end{align}
	as $\ntoinf$, where the bound depends on $j$ only through $\lambda_j^{-2}$ and $\delta_j\inv$. This combined with \eqref{eq:traceSigmaK_DeltaPhiWSWphihat} leads to
	\begin{align}
		&\lvert \hbphi_{j}^{*T} \bW^* \hbSigma^* \bW^* \hbphi_{j}^* -
		\bphi_{j}^{*T} \bW^* \bSigma^* \bW^* \bphi_{j}^*\rvert \nonumber\\
		&\le
		\lvert(\hbphi_{j}^*-\bphi_{j}^*)^T \bW^* \hbSigma^* \bW^* \hbphi_{j}^* \rvert 
		+
		\lvert \bphi_{j}^{*T} \bW^* (\hbSigma^* \bW^* \hbphi_{j}^* - \bSigma^* \bW^* \bphi_{j}^*) \rvert\nonumber\\
		&=O\left( m^*(a_n+b_n)  \lambda_j^{-2} \left(\sum_{r=1}^{m^*} w_r^2\right) (1+ \delta_j\inv)  \right)\quadas\label{eq:traceSigmaK_DeltaPhihatWSWPhihat},
	\end{align}
	as $\ntoinf$, where the bound depends on $j$ only through $\lambda_j^{-2}$ and $\delta_j\inv$. Write $\hDelta_k=\hbe_k^*-\be_k^*$, $k=1,\dots,K$, and observe
	\begin{align*}
		\lVert \hlambda_j\hbphi_{j}^* -\lambda_j\bphi_{j}^*\rVert_2
		&\le
		m^{*1/2} \lVert \hlambda_j\hphi_{j} -\lambda_j\phi_{j}\rVert_\infty
		=O\left(m^{*1/2} (a_n+b_n) (1+\delta_j\inv) \right)\quadas ,
	\end{align*}
	as $\ntoinf$, where the last equality is due to \eqref{eq:secondMomentFPCscoresNorm_deltaLambdaPhi} and the bound depends on $j$ only through $\delta_j\inv$. This along with \eqref{eq:traceSigmaK_deltaSWPhihat} leads to
	\begin{align}
		\lVert \hDelta_j \rVert_2
		&= 
		\Big \lVert \int_\cT \hGamma(\bT^*,s) \hphi_{j}(s)-\Gamma(\bT^*,s) \phi_{j}(s) ds + \bSigma^* \bW^* \bphi_{j}^*- \hbSigma^* \bW^* \hbphi_{j}^* \Big\rVert_2\nonumber\\
		&\le
		\lVert \hlambda_j\hbphi_{j}^* -\lambda_j\bphi_{j}^*\rVert_2+ \lVert \hbSigma^* \bW^* \hbphi_{j}^*-\bSigma^* \bW^* \bphi_{j}^* \rVert_2\nonumber\\
		&=
		O\left(m^{*1/2} (a_n+b_n) (1+ \delta_j\inv )  \left(1+m^{*1/2}\lambda_j\inv \left(\sum_{r=1}^{m^*} w_r^{*2} \right)^{1/2} \right) \right)
		\quadas\label{eq:traceSigmaK_Deltaj},
	\end{align}
	as $\ntoinf$, where the bound depends on $j$ only through $\lambda_j\inv$ and $\delta_j\inv$. Using that $m^*(a_n+b_n)=o(1)$ along with $\lVert \hbSigma^{*-1} -\bSigma^{*-1} \rVert_{\text{op},2}=O(m^*(a_n+b_n))$ a.s. as $\ntoinf$, observe
	\begin{align}
		\lvert \hbejsT \hbSigma^{*-1} \hbejs -\bejsT \bSigma^{*-1} \bejs \rvert
		&=O\left( \lVert \hDelta_j \rVert_2^2 + \lVert \hDelta_j \rVert_2 \lVert \bejs \rVert_2  + m^*(a_n+b_n) \lVert \bejs \rVert_2^2 \right)\quadas\label{eq:traceSigmaK_ejhatSejhat},
	\end{align}
	as $\ntoinf$, where the bound depends on $j$ only through $ \lVert \hDelta_j \rVert_2$ and $\lVert \bejs \rVert_2$. Also,
	\begin{align}
		\lvert \hbejsT \bW^* \hbphi_{j}^* - \bejsT \bW^* \bphi_{j}^*\rvert
		&\le
		\lVert \hDelta_j \rVert_2  \lVert \bW^* \hbphi_{j}^*\rVert_2+\lVert \bejs \rVert_2 \lVert \bW^* (\hbphi_{j}^*-\bphi_{j}^*)\rVert_2\label{eq:traceSigmaK_ejhatWPhihat}.
	\end{align}
	For large enough $n$ and in view of \eqref{eq:traceSigmaK_DeltaPhihatWSWPhihat}, \eqref{eq:traceSigmaK_Deltaj}, and using that the bound \eqref{eq:shrinkageFuncSubject_bound_ek} holds analogously for $\lVert \bejs \rVert_2$ and the time points $\bT^*$, we obtain
	\begin{align}
		\sumjK [\hbphi_{j}^{*T} \bW^* \hbSigma^* \bW^* \hbphi_{j}^* -
		\bphi_{j}^{*T} \bW^* \bSigma^* \bW^* \bphi_{j}^*]
		&=
		O_p\left(
		(a_n+b_n)   \sumkK \lambda_k^{-2}  \delta_k\inv
		\right) \label{eq:traceSigmaK_Sum_DeltaPhihatWSWPhihat},
	\end{align}
	and 
	\begin{align}
		\sumjK [ \lVert \hDelta_j  \rVert_2 \lVert \bejs  \rVert_2 ]
		&=
		O_p\left( (a_n+b_n) \sumjK \lambda_j^{-2} \delta_j\inv
		\right) \label{eq:traceSigmaK_Sum_Deltaej},
	\end{align}
	and
	\begin{align}
		\sumjK [ \lVert \hDelta_j  \rVert_2^2 ]
		&=
		O_p\left( m^{*} (a_n+b_n)^2 \sumjK \lambda_j^{-2} \delta_j^{-2}
		\right) \label{eq:traceSigmaK_Sum_Delta4}.
	\end{align}
	Since $E(\lVert \bejs \rVert_2^2)\lesssim m^{*-1}\lambda_j^{-2}$, which follows analogously as in  \eqref{eq:shrinkageFuncSubject_expectation_ek}, we also have
	\begin{align}
		\sumjK \lVert \bejs \rVert_2^2 
		&=
		O_p\left( m^{*-1} \sumjK \lambda_j^{-2}
		\right) \label{eq:traceSigmaK_Sum_ej4}.
	\end{align}
	From \eqref{eq:traceSigmaK_WPhihat} and \eqref{eq:traceSigmaK_Deltaj}, we obtain
	\begin{align}
		\sumjK [ \lVert \hDelta_j\rVert_2 \lVert  \bW^* \hbphi_{j}^* \rVert_2 ]
		&=
		O_p\left( (a_n+b_n) \sumjK \lambda_j^{-2} \delta_j\inv
		\right)\label{eq:traceSigmaK_Sum_DeltajWPhihat},
	\end{align}
	and using \eqref{eq:secondMomentFPCscoresNorm_WDeltaPhiBound} we also have
	\begin{align}
		\sumjK  [\lVert \bejs \rVert_2 \lVert \bW^* (\hbphi_{j}^*-\bphi_{j}^*)\rVert_2 ]
		&=
		O_p\left( (a_n+b_n) m^{*-1} \sumjK \lambda_j^{-2} \delta_j\inv
		\right)\label{eq:traceSigmaK_Sum_ejDeltaPhi}.
	\end{align}
	Combining \eqref{eq:traceSigmaK_ejhatSejhat}, \eqref{eq:traceSigmaK_Sum_Deltaej}, \eqref{eq:traceSigmaK_Sum_Delta4}, and \eqref{eq:traceSigmaK_Sum_ej4} implies
	\begin{align}
		&\sumjK  \lvert \hbejsT \hbSigma^{*-1} \hbejs -\bejsT \bSigma^{*-1} \bejs \rvert \nonumber\\
		&\lesssim
		\sumjK [\lVert \hDelta_j \rVert_2^2 + \lVert \hDelta_j \rVert_2 \lVert \bejs \rVert_2  + m^{*}(a_n+b_n) \lVert \bejs \rVert_2^2 ]\nonumber\\
		&=
		O_p\left(
		m^{*} (a_n+b_n)^2 \sumjK \lambda_j^{-2} \delta_j^{-2}
		+
		(a_n+b_n) \sumjK \lambda_j^{-2} \delta_j\inv
		\right) \label{eq:traceSigmaK_ejhatSejhat_rate},
	\end{align}
	while combining \eqref{eq:traceSigmaK_ejhatWPhihat}, \eqref{eq:traceSigmaK_Sum_DeltajWPhihat}, and \eqref{eq:traceSigmaK_Sum_ejDeltaPhi} leads to
	\begin{align}
		\sumjK \lvert \hbejsT \bW^* \hbphi_{j}^* - \bejsT \bW^* \bphi_{j}^* \rvert
		&=
		O_p\left( (a_n+b_n)  \sumjK \lambda_j^{-2} \delta_j\inv
		\right)\label{eq:traceSigmaK_ejhatWPhihat_rate}.
	\end{align}
	Combining \eqref{eq:traceSigmaK_keyDecomposition},
	\eqref{eq:traceSigmaK_Sum_DeltaPhihatWSWPhihat}, \eqref{eq:traceSigmaK_ejhatSejhat_rate}, and  \eqref{eq:traceSigmaK_ejhatWPhihat_rate} leads to
	\begin{align*}
		&\lvert \text{trace}( \hbSigma_{K}^*-\bSigma_{K}^*)\rvert\\
		&\le
		\sumjK \lvert [\hbSigma_{K}^*-\bSigma_{K}^*]_{j,j}\rvert
		&=
		O_p\left( m^{*} (a_n+b_n)^2 \sumjK \lambda_j^{-2} \delta_j^{-2}
		+
		(a_n+b_n) \sumjK \lambda_j^{-2} \delta_j\inv
		\right),
	\end{align*}
	and the result follows.
\end{proof}

\bco

Consider an independent densely measured subject $i^*$ as in Section \ref{S:setup}. The next result shows shrinkage of the conditional variance corresponding to the $K$-truncated distribution.
\begin{thm} \label{thm:xiEstHat2}
	Suppose that \ref{a:Diff}, \ref{a:GaussProcess}, \ref{a:eigendecay} and \ref{a:K}--\ref{a:Ubeta} hold. Let $K>0$ be fixed and consider either a sparse design setting when $n_i\le N_0<\infty$ or a dense design when $n_i=m\to\infty$, $i=1,\dots,n$. Set $a_n=a_{n1}$ and $b_n=b_{n1}$ for the sparse case, and $a_n=a_{n2}$ and $b_n=b_{n2}$ for the dense design. For a new independent subject $i^*$, if $m^*(a_n+b_n)=o(1)$ as $\ntoinf$, where $m^*=m^*(n)\to\infty$, 
	\begin{align*}
		\lVert  \hbSigma_{K}^{*} -\bSigma_{K}^{*}\rVert_{\text{op},2}&=O_p(a_n+b_n).
	\end{align*}
\end{thm}
\begin{proof}[Proof of Theorem \ref{thm:xiEstHat2}]
	Recall that $\hat{\bmu}^*=\hat{\mu}(\bT^*)$, $\bT^*=(T_1^*,\dots,T_{m^*}^*)^T$, the estimated FPCs $\hxi_{k}^*=\hlambda_k \hphi_{k}(\bT^*)^T \hbSigma^{*-1}(\bX^*-\hat{\bmu}^*)$, $\hbPhi_{K}^{*}$ is analogous to $\hbPhi_{iK}$ while replacing the $T_{ij}$ with $T_{j}^*$, and similarly for quantities such as $\bPhi_{K}^{*}$, $\hbSigma^{*-1}$, and $\bSigma^{*-1}$. Note that
	\begin{align}
		\bSigma_{K}^{*} -\hbSigma_{K}^{*}&= \bLambda_K-\hbLambda_K  +  \hbLambda_K \hbPhi_{K}^{*T} \hbSigma^{*-1} \hbPhi_{K}^{*}\hbLambda_K- \bLambda_K \bPhi_{K}^{*T} \bSigma^{*-1} \bPhi_{K}^{*}\bLambda_K\label{eq:thm2_1},
	\end{align}
	where $ \lVert \bLambda_K-\hbLambda_K  \rVert_{\text{op},2} =O_p(a_n+b_n)$ follows from Theorem $5.2$ in \cite{zhan:16} along with perturbation results \citep{bosq:00} and the fact that  $\lVert \bLambda_K-\hbLambda_K  \rVert_{\text{op},2}\le \sqrt{K} \max_{1\le k\le K}\lvert \lambda_k-\hlambda_k\rvert$. Since $\hlambda_k\hbphi_{k}^*= \int_\cT \hGamma(\bT^*, t) \hphi_k(t) dt $ and writing  $\hbe_k^{*}=\int_\cT \hGamma(\bT^*, t) \hphi_k(t) dt - \hbSigma^* \bW^* \hbphi_{k}^*$, we have that  the $(j,l)$ entry of $\hbLambda_K \hbPhi_{K}^{*T} \hbSigma^{*-1} \hbPhi_{K}^{*}\hbLambda_K$ is given by
	\begin{align}
		[\hbLambda_K \hbPhi_{K}^{*T} \hbSigma^{*-1} \hbPhi_{K}^{*}\hbLambda_K]_{j,l}&= (\hbe_j^{*T}\hbSigma^{*-1}+ \hbphi_{j}^{*T} \bW^{*} ) (\hbe_l^{*}+\hbSigma^{*} \bW^{*} \hbphi_{l}^{*})\nonumber \\
		&= \hbe_j^{*T}\hbSigma^{*-1} \hbe_l^{*}+ \hbe_j^{*T} \bW^{*} \hbphi_{l}^{*}+ \hbphi_{j}^{*T} \bW^{*} \hbe_l^{*}+ \hbphi_{j}^{*T} \bW^{*} \hbSigma^{*} \bW^{*} \hbphi_{l}^{*}\label{eq:thm2_2},
	\end{align}
	where $1\le j,l \le K$. Denote by $\hGamma(\bT^*,\bT^{*T})$ the matrix whose $(i,j)$ element is  $\hGamma(T_i^*,T_j^*)$, $1\le i,j\le m^*$, and similarly define $\Gamma(\bT^*,\bT^{*T})$. Also note that $\hbSigma^{*}=\hsigma^2 I_{m^*}+\hGamma(\bT^*,\bT^{*T})$, where $I_{m^*}\in\bbR^{m^*\times m^*}$ is the identity matrix. 
	From \eqref{eq:shrinkageFuncSubject_expectation_ek}, \eqref{eq:shrinkEstFunc_DeltaGamma}, \eqref{eq:secondMomentFPCscoresNorm_WDeltaPhiBound}, \eqref{eq:traceSigmaK_Deltaj}, Lemma \ref{lem:unifGaps}, and using that $ \lVert \hbSigma^{*-1} -\bSigma^{*-1} \rVert_{\text{op},2}=O_p(m^*(a_n+b_n))$  along with the condition $m^*(a_n+b_n)=o(1)$ as $\ntoinf$, it follows that
	$\lVert \hGamma(\bT^*,\bT^{*T})-\Gamma(\bT^*,\bT^{*T})  \rVert_2=O_p(m^*(a_n+b_n))$, $\lVert  \bW^{*}(\hbphi_{p}^{*}-\bphi_{p}^{*}) \rVert_2=O_p(m^{*-1/2}(a_n+b_n))$, $p=j,l$, $\lVert \Gamma(\bT^*,\bT^{*T})  \rVert_{\text{op},2}=O(m^*)$, $\lVert \bSigma^{*}\rVert_{\text{op},2}=O(m^*)$, $\lVert \hbSigma^{*}\rVert_{\text{op},2}=O_p(m^*)$, $\lVert \bW^{*}\bphi_{p}^{*} \rVert_2=O_p(m^{*-1/2})$, $p=j,l$, $ \lVert \hbSigma^{*} -\bSigma^{*} \rVert_{\text{op},2}=O_p(m^*(a_n+b_n))$, $\lVert \bW^{*} \rVert_2=O_p(m^{*-1/2})$, $\lVert \be_p^{*}\rVert_2=O_p(m^{*-1/2})$ and $\lVert \hbe_p^{*}-\be_p^{*} \rVert_2=O_p(m^{*1/2}(a_n+b_n))$, $p=j,l$. These bounds imply
	\begin{align*}
		\hbphi_{j}^{*T} \bW^{*} \hbSigma^{*} \bW^{*} \hbphi_{l}^{*}-
		\bphi_{j}^{*T} \bW^{*} \bSigma^{*} \bW^{*} \bphi_{l}^{*}=O_p(a_n+b_n), \\
		\hbe_j^{*T}\hbSigma^{*-1} \hbe_l^{*}-\be_j^{*T}\bSigma^{*-1} \be_l^{*}=O_p(a_n+b_n),\\
		\hbe_j^{*T} \bW^{*} \hbphi_{l}^{*}- \be_j^{*T} \bW^{*} \bphi_{l}^{*}=O_p(a_n+b_n),\\
		\hbphi_{j}^{*T} \bW^{*} \hbe_l^{*}- \bphi_{j}^{*T} \bW^{*} \be_l^{*}=O_p(a_n+b_n),
	\end{align*}
	which combined with \eqref{eq:thm2_2} leads to
	\begin{align*}
		[\hbLambda_K \hbPhi_{K}^{*T} \hbSigma^{*-1} \hbPhi_{K}^{*}\hbLambda_K]_{j,l}-
		[\bLambda_K \bPhi_{K}^{*T} \bSigma^{*-1} \bPhi_{K}^{*}\bLambda_K]_{j,l} =O_p(a_n+b_n).
	\end{align*}
	Hence $\lVert \hbLambda_K \hbPhi_{K}^{*T} \hbSigma^{*-1} \hbPhi_{K}^{*}\hbLambda_K- \bLambda_K \bPhi_{K}^{*T} \bSigma^{*-1} \bPhi_{K}^{*}\bLambda_K \rVert_F=O_p(a_n+b_n)$ and the result follows from \eqref{eq:thm2_1}.
\end{proof}
\fi


For the following, recall that $w_i:=\left(\sum_{l=1}^n n_l\right)\inv$, $\upsilon_M=\sum_{m=1}^M \delta_m\inv$ and $C(t)=E((X(t)-\mu(t))Y)=\int_\cT \beta(s)\Gamma(t,s)ds$, $t\in\cT$.

\begin{lem}\label{lem:linear3}
	Suppose that \ref{a:GaussProcess}, \ref{a:eigendecay}-\ref{a:rate}, \ref{a:K}--\ref{a:Ubeta} hold and consider a sparse design with  $n_i\le N_0<\infty$, setting  $a_n=a_{n1}$ and $b_n=b_{n1}$. Then
	\begin{align}
		n\inv\sumin \lVert \hbxiiK-\tbxi_{iK} \rVert_2^2 &=O_p((a_n+b_n)^2) \label{eq:lem_linear3_1},
	\end{align}
	and
	\begin{align}
		n\inv \sumin \lVert \tbxi_{iK}\rVert_2^2 &=O_p(1) \label{eq:lem_linear3_2}.
	\end{align}
\end{lem}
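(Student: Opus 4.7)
\textbf{Proof plan for Lemma \ref{lem:linear3}.} For the easier bound \eqref{eq:lem_linear3_2}, I would proceed by a direct second-moment calculation. Under \ref{a:GaussProcess}, $\tbxi_{iK}=E(\bxi_{iK}\mid\bXi,\bTi)$, so by conditional Jensen's inequality
\[
E(\lVert\tbxi_{iK}\rVert_2^2)\le E(\lVert\bxi_{iK}\rVert_2^2)=\sum_{k=1}^K\lambda_k<\infty,
\]
uniformly in $i$, since $K$ is fixed. Markov's inequality applied to the average then yields $n\inv\sumin\lVert\tbxi_{iK}\rVert_2^2=O_p(1)$.

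For \eqref{eq:lem_linear3_1}, the plan is to replace each estimated factor by its population counterpart one at a time to obtain the four-term decomposition
\begin{align*}
\hbxiiK-\tbxi_{iK}
&=(\hbLambda_K-\bLambda_K)\hbPhiiK^T\hbSigma_i\inv(\bXi-\hbmui)\\
&\quad+\bLambda_K(\hbPhiiK-\bPhiiK)^T\hbSigma_i\inv(\bXi-\hbmui)\\
&\quad+\bLambda_K\bPhiiK^T(\hbSigma_i\inv-\bSigma_i\inv)(\bXi-\hbmui)\\
&\quad+\bLambda_K\bPhiiK^T\bSigma_i\inv(\bmui-\hbmui),
\end{align*}
and to bound each summand via a product-of-norms inequality. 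The required ingredients are the uniform almost-sure rates $\lVert\hmu-\mu\rVert_\infty$, $\lVert\hGamma-\Gamma\rVert_\infty$, $\lvert\hsigma^2-\sigma^2\rvert=O(a_n+b_n)$ from Theorem~5.2 of \cite{zhan:16}, combined with standard perturbation bounds \citep{bosq:00} yielding $\lvert\hlambda_k-\lambda_k\rvert=O(a_n+b_n)$ and $\lVert\hphi_k-\phi_k\rVert_\infty=O((a_n+b_n)\delta_k\inv)$ for $k=1,\dots,K$. Because $K$ is fixed and $n_i\le N_0$, these transfer to uniform (in $i$) bounds $\lVert\hbLambda_K-\bLambda_K\rVert_{\text{op},2}$, $\lVert\hbPhiiK-\bPhiiK\rVert_{\text{op},2}$, $\lVert\hbSigma_i\inv-\bSigma_i\inv\rVert_{\text{op},2}$, and $\lVert\hbmui-\bmui\rVert_2$ of order $O_p(a_n+b_n)$, while $\lVert\bLambda_K\rVert_{\text{op},2}$, $\lVert\bPhiiK\rVert_{\text{op},2}$ are deterministically bounded and $\lVert\bSigma_i\inv\rVert_{\text{op},2}\le\sigma^{-2}$.

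Assembling these estimates produces a pointwise inequality of the form
\[
\lVert\hbxiiK-\tbxi_{iK}\rVert_2^2 \lesssim (a_n+b_n)^2\bigl(1+\lVert\bXi-\bmui\rVert_2^2\bigr),
\]
in which the prefactor $(a_n+b_n)^2$ is independent of $i$ because the smoothing estimators $\hmu$, $\hGamma$, $\hsigma^2$, $\hphi_k$, $\hlambda_k$ are constructed from the pooled sample. Averaging over $i$ and using $E\lVert\bXi-\bmui\rVert_2^2\le N_0(\sigma^2+\sup_t\Gamma(t,t))<\infty$ together with Markov's inequality gives $n\inv\sumin\lVert\bXi-\bmui\rVert_2^2=O_p(1)$, which delivers the asserted $O_p((a_n+b_n)^2)$ bound. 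The main technical care is in (i) verifying the uniformity of the population-estimate rates in $i$, which is automatic since the estimators do not depend on the subject index, and (ii) controlling $\lVert\bXi-\hbmui\rVert_2$ through the triangle inequality $\lVert\bXi-\hbmui\rVert_2\le\lVert\bXi-\bmui\rVert_2+\lVert\bmui-\hbmui\rVert_2$, where the first piece has uniformly bounded second moment and the second is $O_p(a_n+b_n)$.
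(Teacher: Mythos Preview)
Your proposal is correct and follows essentially the same route as the paper's proof: for \eqref{eq:lem_linear3_1} you telescope $\hbxiiK-\tbxi_{iK}$ factor by factor and use the uniform almost-sure rates for $\hmu,\hGamma,\hsigma^2,\hphi_k,\hlambda_k$ (the paper packages this step by citing Theorem~2 of \cite{dai:16:1}), then average using the uniformly bounded second moment of $\lVert\bXi-\bmui\rVert_2$. For \eqref{eq:lem_linear3_2} your conditional-Jensen argument $E\lVert\tbxi_{iK}\rVert_2^2\le E\lVert\bxi_{iK}\rVert_2^2=\sum_{k\le K}\lambda_k$ is slicker than the paper's, which bounds $E(\lVert\tbxi_{iK}\rVert_2^4)$ via the explicit form $\tbxi_{iK}=\bLambda_K\bPhiiK^T\bSigma_i\inv(\bXi-\bmui)$ and operator-norm bounds on each factor; your version exploits the Gaussian best-predictor identity more directly and avoids the detour through the fourth moment.
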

\begin{proof}[Proof of Lemma \ref{lem:linear3}]
	First note that $\lVert \hmu-\mu\rVert_\infty=O(a_n)$ a.s. and $\lVert \hGamma-\Gamma\rVert_\infty=O(a_n+b_n)$ a.s., which are due to Theorem $5.1$ and $5.2$ in \cite{zhan:16}. From arguments in the proof of Theorem 2 in \cite{dai:16:1} and noting that the constant $c$ that appears in Lemma A.3 in \cite{mull:03} can be taken as a universal constant $c=2$, 
	\begin{align}
		\lVert \hbxiiK-\tbxi_{iK} \rVert_2^2&\le O((a_n+b_n)^2) \lVert \bX_i-\hbmui \rVert_2^2 +O(a_n^2)+O(a_n(a_n+b_n)) \lVert \bX_i-\hbmui \rVert_2 \quadas\label{eq:lemlinear3_0},
	\end{align}
	where the $O((a_n+b_n)^2)$, $O(a_n^2)$ and $O(a_n(a_n+b_n))$ terms are uniform in $i$. Let $\bU_{i}=(X_i(T_{i1}),\dots,X_i(T_{in_i}))^T$ be the true but unobserved values of the trajectory for the ith subject at the time points $\bTi$, so that by construction $\bX_i=\bU_i + \bepsi$. Then
	\begin{align}
		n\inv\sumin \lVert \bX_i-\hbmui \rVert_2 &= n\inv\sumin \lVert \bU_i +\bepsi -\hbmui \rVert_2 \nonumber \\
		&\le  n\inv\sumin \lVert \bU_i  -\bmui \rVert_2 + n\inv\sumin \lVert \bepsi\rVert_2+n\inv\sumin \lVert \bmui -\hbmui \rVert_2\label{eq:lemlinear3_1},
	\end{align}
	where $n\inv\sumin \lVert \bmui -\hbmui \rVert_2=O(a_n)$ almost surely. Since $n_i\le N_0$ in the sparse case, it is easy to show that $n\inv\sumin \lVert \bepsi\rVert_2=O_p(1)$ and by Jensen's inequality
	\begin{align*}
		E\left(n\inv\sumin \lVert \bU_i  -\bmui \rVert_2 \right)&\le n\inv\sumin \left( \sum_{j=1}^{n_i} E(X_i(T_{ij})-\mu(T_{ij}) )^2 \right)^{1/2}\nonumber  \\
		&=n\inv\sumin \left(\sum_{j=1}^{n_i} E(\Gamma(T_{ij},T_{ij})) \right)^{1/2} \le (\normInf{\Gamma} N_0)^{1/2}
		=O(1),
	\end{align*}
	where the first equality follow by conditioning on $T_{ij}$. This shows that $ n\inv\sumin \lVert \bU_i  -\bmui \rVert_2=O_p(1)$. Combining with \eqref{eq:lemlinear3_1} leads to
	\begin{align}
		n\inv\sumin \lVert \bX_i-\hbmui \rVert_2&=O_p(1)\label{eq:lemlinear3_2}.
	\end{align}
	By the triangle inequality
	\begin{align*}
		\lVert \bX_i-\hbmui \rVert_2^2&\le  \lVert \bU_i  -\bmui \rVert_2^2 + \lVert \bepsi\rVert_2^2 + \lVert \bmui -\hbmui \rVert_2^2  \\
		&\quad+ 2   \lVert \bU_i  -\bmui \rVert_2 \lVert \bepsi\rVert_2+ 2 \lVert \bU_i  -\bmui \rVert_2 \lVert \bmui -\hbmui \rVert_2 + 2 \lVert \bepsi\rVert_2 \lVert \bmui -\hbmui \rVert_2,
	\end{align*}
	where $\lVert \bmui -\hbmui \rVert_2\le \sqrt{N_0 \sup_{t\in\cT} ( \mu(t)-\hmu(t))^2}=O(a_n)$ a.s. and uniformly over $i$. This  along with the independence of $\bepsi$ and $\bU_i$, conditionally on $\bTi$, and using similar arguments as before, leads to $E\lVert \bX_i-\hbmui \rVert_2^2 =O(1)$ uniformly over $i$. Thus
	\begin{align}
		n\inv\sumin \lVert \bX_i-\hbmui \rVert_2^2&=O_p(1)\label{eq:lemlinear3_3}.
	\end{align}
	Combining \eqref{eq:lemlinear3_0}, \eqref{eq:lemlinear3_2} and \eqref{eq:lemlinear3_3} leads to the first result in \eqref{eq:lem_linear3_1}. Note that
	\begin{align*}
		E(\tbxi_{iK}^T \tbxi_{iK})^2 &\le E(\norm{\bLambda_K  \bPhiiK^T \bSigma_i\inv}_{\text{op},2}^4 E( \normtwo{\bXi-\bmu_i}^4\vert \bTi))\le O(1),
	\end{align*}
	where the $O(1)$ term is uniform in $i$ and the last inequality follows from $\norm{\bLambda_K}_{\text{op},2}\le \lambda_1 K$, $\norm{\bPhiiK}_{\text{op},2}\le N_0 \sum_{j=1}^K \lVert \phi_j\rVert_\infty^2 $, $\norm{\bSigma_i\inv}_{\text{op},2}\le  \sigma^{-2}$,  $E(\normtwo{\bXi-\bmu_i}^4\vert \bTi)\le O(1)$ uniformly over $i$, where the latter is a  consequence of  the Gaussian process assumption on $X_i(\cdot)$ and  $\lVert \Gamma\rVert_\infty <\infty$. 
	Thus, $E(\lVert \tbxi_{iK}\rVert_2^2) =O(1)$ uniformly in $i$ which implies $E(	n\inv \sumin \lVert \tbxi_{iK}\rVert_2^2)=O(1)$ and the second result in  \eqref{eq:lem_linear3_2}.
\end{proof}

\begin{lem}\label{lem:auxLemma_beta_1}
	Suppose that \ref{a:GaussProcess}, \ref{a:eigendecay}-\ref{a:rate}, \ref{a:betaSeries}-\ref{a:Mrate}, \ref{a:K}--\ref{a:Ubeta} hold and consider a sparse design with  $n_i\le N_0<\infty$, setting  $a_n=a_{n1}$ and $b_n=b_{n1}$. Let $\tZ_i(t):=\sum_{j=1}^{n_i} w_i K_h(T_{ij}-t) \left(\frac{T_{ij}-t}{h}\right)^r (U_{ij} Y_i-C(t))$, where $U_{ij}=X(T_{ij})-\mu(T_{ij})$ and $r=0,1$. Then
	\begin{align*}
		E[\tZ_i^2(t)]=O((n^2 h)\inv),
	\end{align*}
	where the $O((n^2 h)\inv)$ term is uniform in $i$ and $t$.
\end{lem}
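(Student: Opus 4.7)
The plan is to expand the square and treat the diagonal and off-diagonal contributions separately, exploiting that in the sparse regime $w_i=(\sum_l n_l)^{-1}\asymp n^{-1}$ uniformly (since $n\le \sum_l n_l\le nN_0$) and that the inner double sum has at most $N_0^2$ terms. Setting $W_{ij}(t):=K_h(T_{ij}-t)((T_{ij}-t)/h)^r$ and $V_{ij}(t):=U_{ij}Y_i-C(t)$,
\begin{equation*}
E[\tZ_i^2(t)] = w_i^2\sum_{j,l=1}^{n_i} E\bigl[W_{ij}(t)W_{il}(t)V_{ij}(t)V_{il}(t)\bigr],
\end{equation*}
so it suffices to show that each diagonal expectation $E[W_{ij}^2(t)V_{ij}^2(t)]$ is $O(h^{-1})$ and each off-diagonal expectation $E[W_{ij}(t)W_{il}(t)V_{ij}(t)V_{il}(t)]$ is $O(1)$, uniformly in $i,j,l$ and $t\in\cT$.

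For a diagonal term, I would condition on $T_{ij}$ and use the compact support of $K$ to bound $|((T_{ij}-t)/h)^{2r}|\le 1$; the change of variables $u=(T_{ij}-t)/h$ then converts $E[K_h^2(T_{ij}-t)\, E[V_{ij}^2(t)\mid T_{ij}]]$ into $h^{-1}\int K^2(u)\,\psi(t+hu,t)\,f(t+hu)\,du$, where $\psi(s,t):=E[(X^c(s)Y-C(t))^2]$. Uniform boundedness of $f$ above by $M_f$ follows from \ref{a:T}, and uniform boundedness of $\psi$ on $\cT^2$ follows from Cauchy--Schwarz together with $\sup_s E[X^c(s)^4]<\infty$ (from \ref{a:Ubeta} with $2\beta_\gamma>4$), $E[Y^4]<\infty$ (via the FLM representation $Y=\beta_0+\int\beta X^c+\epsilon_Y$ and \ref{a:betaSeries}), and $\sup_t|C(t)|\le \lVert\beta\rVert_{L^2}\lVert\Gamma\rVert_\infty^{1/2}<\infty$. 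This yields $E[W_{ij}^2(t)V_{ij}^2(t)]=O(h^{-1})$ uniformly in $t$.

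For an off-diagonal term, the key point is that $T_{ij}\perp T_{il}$ for $j\ne l$. Conditioning gives
\begin{equation*}
E[V_{ij}(t)V_{il}(t)\mid T_{ij},T_{il}] = E[X^c(T_{ij})X^c(T_{il})Y^2\mid T_{ij},T_{il}] - C(t)[C(T_{ij})+C(T_{il})] + C^2(t),
\end{equation*}
and $\sup_{s,s'\in\cT}|E[X^c(s)X^c(s')Y^2]|<\infty$ follows from a second Cauchy--Schwarz and the same moment inputs. The joint change of variables $(u,u')=((T_{ij}-t)/h,(T_{il}-t)/h)$ produces a Jacobian $h^2$ that cancels the $h^{-2}$ from the two $K_h$ factors, leaving an $O(1)$ bound uniform in $t$. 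Combining,
\begin{equation*}
E[\tZ_i^2(t)] \lesssim w_i^2\bigl(N_0\cdot h^{-1} + N_0(N_0-1)\bigr) = O\bigl((n^2h)^{-1}\bigr),
\end{equation*}
since $h\to 0$ implies the diagonal $(n^2h)^{-1}$ rate dominates the off-diagonal $n^{-2}$.

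The only nontrivial input is the uniform moment bound $\sup_{s,s'\in\cT}E[X^c(s)X^c(s')Y^2]<\infty$, which I expect to be the main (albeit modest) obstacle; everything else is a routine kernel computation relying on the compact support of $K$ and the uniform upper bound on the design density from \ref{a:T}. The uniformity in $i$ requires only $n_i\le N_0$ and the fact that $w_i$ does not depend on $i$ other than through the common normalization $\sum_l n_l$.
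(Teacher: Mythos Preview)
Your proposal is correct and follows essentially the same route as the paper: decompose the square into diagonal and off-diagonal sums, condition on the time points, bound the inner conditional moments uniformly, and perform the kernel change of variables to get $O(h^{-1})$ and $O(1)$ respectively. The only minor difference is that the paper establishes $\sup_{s,s'}E[X^c(s)X^c(s')Y^2]<\infty$ by directly expanding $Y^2=(\mu_Y+\int\beta X^c+\epsilon_Y)^2$ and invoking Gaussian moment formulas, whereas you reach the same conclusion via Cauchy--Schwarz and fourth-moment bounds; both arguments are straightforward under the stated assumptions.
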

\begin{proof}[Proof of Lemma \ref{lem:auxLemma_beta_1}]
	Observe 
	\begin{align*}
		&E[\tZ_i^2(t)]\\
		&= E\left( \sum_{j=1}^{n_i} w_i^2 K_h^2(T_{ij}-t) \left(\frac{T_{ij}-t}{h}\right)^{2r} (U_{ij} Y_i-C(t))^2  \right)\\
		&+ E\Big( \sum_{j=1}^{n_i} \sum_{l\neq j} w_i ^2 K_h(T_{ij}-t)K_h(T_{il}-t)\\
		&\quad\quad\quad\quad  \left(\frac{T_{ij}-t}{h}\right)^r \left(\frac{T_{il}-t}{h}\right)^r (U_{ij} Y_i-C(t)) (U_{il} Y_i-C(t)) \Big)
	\end{align*}
	and note  that for any $t_1,t_2\in\cT$, with $\mu_Y=E(Y)$, 
	\begin{align*}
		E(U(t_1) U(t_2) Y^2)&=E(U(t_1) U(t_2) [\mu_Y + \int_\cT \beta(s)U(s)ds+\epsilon_Y]^2)\\
		&= (\mu_Y^2+\sigma_Y^2) \Gamma(t_1,t_2)+2\int_\cT \mu_Y \beta(s) E(U(t_1)U(t_2)U(s))ds \\
		&\quad + \int_\cT \int_\cT \beta(s_1) \beta(s_2) E( U(t_1)U(t_2)U(s_1)U(s_2)) ds_1 ds_2\\
		&=O(1),
	\end{align*}
	where the $O(1)$ term is uniform over $t_1$ and $t_2$, which follows from $\lVert \Gamma \rVert_\infty<\infty$ and  $U(t)\sim N(0,\Gamma(t,t))$, owing  to \ref{a:GaussProcess}. This implies that  $E((U_{ij} Y_i-C(t))^2 \vert T_{ij})$ is uniformly bounded above, and by a conditioning argument it follows that
	\begin{align*}
		&E\left( \sum_{j=1}^{n_i} w_i^2 K_h^2(T_{ij}-t) \left(\frac{T_{ij}-t}{h}\right)^{2r} (U_{ij} Y_i-C(t))^2  \right)\\
		&\le O(1) E\left( \sum_{j=1}^{n_i} w_i^2 K_h^2(T_{ij}-t) \left(\frac{T_{ij}-t}{h}\right)^{2r} \right)\\
		&=O((n^2 h)\inv),
	\end{align*}
	where the last equality is due to $w_i\le n\inv$. Let  $R_{iqr,h}(t)=w_i K_h(T_{iq}-t) \left(\frac{T_{iq}-t}{h}\right)^r $, $q=j,l$. Since $E((U_{ij} Y_i-C(t)) (U_{il} Y_i-C(t))\vert T_{ij},T_{il})=O(1)$ uniformly in $i$ and $t$, similar arguments as before show that
	\begin{align*}
		&E\left( \sum_{j=1}^{n_i} \sum_{l\neq j} R_{ijr,h}(t) R_{ilr,h}(t) (U_{ij} Y_i-C(t)) (U_{il} Y_i-C(t)) \right)\\
		&\le 
		O(1)  \sum_{j=1}^{n_i} \sum_{l\neq j} E[R_{ijr,h}(t)] E[R_{ilr,h}(t)]\\
		&=O(n^{-2}),
	\end{align*}
	whence  the result follows.
\end{proof}

\begin{lem}\label{lem:auxLemma_beta_2}
	Suppose that \ref{a:GaussProcess}, \ref{a:eigendecay}-\ref{a:rate}, \ref{a:betaSeries}-\ref{a:Mrate}, \ref{a:K}--\ref{a:Ubeta} hold and consider a sparse design with  $n_i\le N_0<\infty$, setting  $a_n=a_{n1}$ and $b_n=b_{n1}$. For $r=0,1$ we have
	\begin{align}
		\lVert  \sumin \sum_{j=1}^{n_i} w_i K_h(T_{ij}-\cdot) \left(\frac{T_{ij}-\cdot}{h}\right)^r \epsilon_{ij} Y_i\rVert_{L^2}&=O_p((nh)^{-1/2}),\label{eq:lem_auxLemma_beta_2_1}
	\end{align}
	and
	\begin{align}
		\lVert  \sumin \sum_{j=1}^{n_i} w_i K_h(T_{ij}-\cdot) \left(\frac{T_{ij}-\cdot}{h}\right)^r (U_{ij} Y_i -C(\cdot))\rVert_{L^2}&=O_p\left(\left(\frac{1}{nh}+h^2\right)^{1/2}\right),\label{eq:lem_auxLemma_beta_2_2}
	\end{align}
	where $U_{ij}=X(T_{ij})-\mu(T_{ij})$.
\end{lem}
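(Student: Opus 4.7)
Both bounds follow from computing the expectation of the squared $L^{2}$ norm and then applying Markov's inequality together with Fubini, since $\|g\|_{L^{2}}^{2}=\int_{\cT}g^{2}(t)\,dt\ge 0$.

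For \eqref{eq:lem_auxLemma_beta_2_1}, denote the random function by $A_{1}(t)=\sum_{i=1}^{n}\sum_{j=1}^{n_{i}} w_{i}K_{h}(T_{ij}-t)\bigl(\tfrac{T_{ij}-t}{h}\bigr)^{r}\epsilon_{ij}Y_{i}$. Since the $\epsilon_{ij}$ are i.i.d.\ with mean $0$ and variance $\sigma^{2}$ and independent of $(T_{ij},X_{\cdot},Y_{\cdot})$, the cross terms in $E[A_{1}^{2}(t)]$ with distinct indices $(i,j)\neq(i',j')$ vanish. Thus
\[
E[A_{1}^{2}(t)]=\sigma^{2}E(Y_{1}^{2})\sum_{i=1}^{n}\sum_{j=1}^{n_{i}}w_{i}^{2}\,E\!\left[K_{h}^{2}(T_{ij}-t)\!\left(\tfrac{T_{ij}-t}{h}\right)^{2r}\right].
\]
By the standard change of variable $u=(s-t)/h$ and boundedness of $f$ from \ref{a:fbelow}, the inner expectation is $O(h^{-1})$ uniformly in $t\in\cT$. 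Using that $w_{i}=(\sum_{l}n_{l})^{-1}$ is constant in $i$ and $\sum_{i}n_{i}w_{i}^{2}=w_{i}=O(n^{-1})$ under the sparse design, we get $E[A_{1}^{2}(t)]=O((nh)^{-1})$ uniformly in $t$. Integrating over $\cT$ gives $E\|A_{1}\|_{L^{2}}^{2}=O((nh)^{-1})$ and Markov's inequality yields \eqref{eq:lem_auxLemma_beta_2_1}.

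For \eqref{eq:lem_auxLemma_beta_2_2}, write $A_{2}(t)=\sum_{i=1}^{n}\tilde Z_{i}(t)$ as in Lemma~\ref{lem:auxLemma_beta_1}. Because the $\tilde Z_{i}(t)$ are independent across $i$, $E[A_{2}^{2}(t)]=(E A_{2}(t))^{2}+\sum_{i=1}^{n}\Var(\tilde Z_{i}(t))\le (E A_{2}(t))^{2}+\sum_{i=1}^{n}E[\tilde Z_{i}^{2}(t)]$. Lemma \ref{lem:auxLemma_beta_1} gives $\sum_{i=1}^{n}E[\tilde Z_{i}^{2}(t)]=n\cdot O((n^{2}h)^{-1})=O((nh)^{-1})$ uniformly in $t$. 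For the bias, conditioning on $T_{ij}$ yields $E[U_{ij}Y_{i}\mid T_{ij}]=C(T_{ij})$, so
\[
E A_{2}(t)=\sum_{i=1}^{n}n_{i}w_{i}\int K_{h}(s-t)\!\left(\tfrac{s-t}{h}\right)^{r}\!(C(s)-C(t))f(s)\,ds=\int K(u)u^{r}(C(t+uh)-C(t))f(t+uh)\,du,
\]
where we used $\sum_{i}n_{i}w_{i}=1$ and the substitution $s=t+uh$. Taylor expanding $C$ and $f$ at $t$ under \ref{a:muGDerBounded} and exploiting the symmetry $\int K(u)u\,du=0$ shows this integral is $O(h^{2})$ when $r=0$ and $O(h)$ when $r=1$, hence $(EA_{2}(t))^{2}=O(h^{2})$ uniformly in $t$ in either case. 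Combining, $E[A_{2}^{2}(t)]=O((nh)^{-1}+h^{2})$, and integrating over the compact set $\cT$ followed by Markov's inequality yields \eqref{eq:lem_auxLemma_beta_2_2}.

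\textbf{Main obstacle.} The only subtle step is the bias bound for $r=1$: one has to track that the first-order term in the Taylor expansion of $C$ survives against the factor $u$, producing a term of order $h\cdot C'(t)f(t)\int K(u)u^{2}du$ that does not vanish by symmetry. This still yields $(EA_{2}(t))^{2}=O(h^{2})$, which happens to match the rate claimed, so the conclusion is uniform in $r\in\{0,1\}$. The variance side is essentially packaged by Lemma \ref{lem:auxLemma_beta_1}, so no new calculation is required there.
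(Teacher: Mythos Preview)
Your proof is correct and follows essentially the same route as the paper: compute $E\|\cdot\|_{L^{2}}^{2}$ via Fubini, use independence of the $\epsilon_{ij}$ (resp.\ independence across $i$ together with Lemma~\ref{lem:auxLemma_beta_1}) for the variance part, and bound the bias of $A_{2}(t)$ separately. The only minor difference is that the paper avoids the case split you flagged as the ``main obstacle'': rather than Taylor-expanding and invoking the symmetry of $K$, it simply bounds $|C(t+uh)-C(t)|\le h\sup_{s}|C'(s)|$, which gives $|E A_{2}(t)|=O(h)$ uniformly for both $r=0,1$ and hence $(E A_{2}(t))^{2}=O(h^{2})$ without distinguishing the two cases.
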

\begin{proof}[Proof of Lemma \ref{lem:auxLemma_beta_2}]
	Define $Z_i(t):=\sum_{j=1}^{n_i} w_i K_h(T_{ij}-t) \left(\frac{T_{ij}-t}{h}\right)^r \epsilon_{ij} Y_i$. Note that the $Z_i$ are independent and by independence of the $\epsilon_{ij}$ along with a conditioning argument, $E(Z_i(t))=0$ and 
	\begin{align*}
		E(\lVert \sumin  Z_i\rVert_{L^2}^2 )&=  \sumin \int_\cT E(Z_i^2(t))dt,
	\end{align*}
	\begin{align*}
		E(Z_i^2(t))&= E\left( \sum_{j=1}^{n_i} \sum_{l=1}^{n_i} w_i^2 K_h(T_{ij}-t) \left(\frac{T_{ij}-t}{h}\right)^r \epsilon_{ij}  
		K_h(T_{il}-t) \left(\frac{T_{il}-t}{h}\right)^r \epsilon_{il} Y_i^2 \right)\\
		&=
		\sum_{j=1}^{n_i} E\left(w_i^2 K_h^2(T_{ij}-t) \left(\frac{T_{ij}-t}{h}\right)^{2r} \epsilon_{ij}^2 Y_i^2 \right)\\
		&=
		E(Y^2) \sigma^2 \sum_{j=1}^{n_i} E\left(w_i^2 K_h^2(T_{ij}-t) \left(\frac{T_{ij}-t}{h}\right)^{2r} \right)=O((n^2 h)\inv),
	\end{align*}
	where the $O(h\inv)$ is uniform in $i$ and $t$. Thus $E(\lVert \sumin Z_i\rVert_{L^2}^2 )=O((nh)\inv)$ and the first result in \eqref{eq:lem_auxLemma_beta_2_1} follows. Defining $\tZ_i(t):=\sum_{j=1}^{n_i} w_i K_h(T_{ij}-t) \left(\frac{T_{ij}-t}{h}\right)^r (U_{ij} Y_i-C(t))$, we have
	\begin{align}
		E(\lVert \sumin \tZ_i  \rVert_{L^2}^2 )&= \sumin \int_\cT E[\tZ_i^2(t)] dt + 
		\sumin \sum_{k\neq i} \int_\cT E(\tZ_i(t)) E(\tZ_k(t)).\label{eq:auxLemma_beta_2_1}
	\end{align}
	By a conditioning argument, it follows that
	\begin{align*}
		\lvert E(\tZ_i(t)) \rvert &= \Big \lvert \sum_{j=1}^{n_i} w_i E\left( K_h(T_{ij}-t) \left(\frac{T_{ij}-t}{h}\right)^r (C(T_{ij})-C(t)) \right)\Big \rvert \\
		&\le \sum_{j=1}^{n_i} w_i  \int_{-t/h}^{(1-t)/h} \lvert u^r \rvert  K(u) \lvert C(t+uh)-C(t)\rvert f(t+uh) du\\
		&\le \sum_{j=1}^{n_i} w_i  \sup_{s\in [-1,1]} \lvert C'(s) \rvert \ \lVert f\rVert_\infty h \int_{-t/h}^{(1-t)/h} \lvert u^{r+1} \rvert  K(u)du\\
		&\le O\left( n\inv h\right),
	\end{align*}
	where the $O\left( n\inv h\right)$ is uniform in $i$ and $t$. This implies $\lvert  \sumin \sum_{k\neq i} \int_\cT E(\tZ_i(t)) E(\tZ_k(t)) \rvert=O(h^2)$. Combining with \eqref{eq:auxLemma_beta_2_1} and Lemma \ref{lem:auxLemma_beta_1}, the second result in \eqref{eq:lem_auxLemma_beta_2_2} follows.
\end{proof}

\begin{lem}\label{lem:auxLemma_beta_3}
	Suppose that \ref{a:GaussProcess}, \ref{a:eigendecay}-\ref{a:rate}, \ref{a:K}--\ref{a:Ubeta} hold and consider a sparse design with  $n_i\le N_0<\infty$, setting  $a_n=a_{n1}$ and $b_n=b_{n1}$. For $r=0,1$, 
	\begin{align*}
		\lVert \sumin \sum_{j=1}^{n_i}w_i K_h(T_{ij}-t) \left(\frac{T_{ij}-t}{h}\right)^r  (\mu(T_{ij})-\hmu(T_{ij}))Y_i \rVert_{L^2}&=O_p(a_n).
	\end{align*}
\end{lem}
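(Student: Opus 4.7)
The plan is to factor out the uniform mean function error and then show that the remaining kernel-weighted sum of responses has bounded $L^2$ norm in probability. Since $T_{ij}\in\cT$ for all $i,j$, we have the deterministic bound $|\mu(T_{ij})-\hmu(T_{ij})| \le \lVert \hmu-\mu\rVert_\infty$, and by Theorem 5.1 in \cite{zhan:16} (used repeatedly earlier in the supplement) $\lVert \hmu - \mu\rVert_\infty = O(a_n)$ almost surely. Pulling this bound out of the sum, it suffices to establish
\begin{align*}
  W(t) := \sumin \sum_{j=1}^{n_i} w_i K_h(T_{ij}-t)\left|\frac{T_{ij}-t}{h}\right|^r |Y_i|, \qquad \lVert W\rVert_{L^2} = O_p(1),
\end{align*}
since then the target norm is bounded by $\lVert \hmu-\mu\rVert_\infty \cdot \lVert W\rVert_{L^2} = O_p(a_n)$.

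To show $\lVert W\rVert_{L^2}=O_p(1)$, I would use Fubini to write $E\lVert W\rVert_{L^2}^2 = \int_\cT E(W^2(t))\,dt$ and bound $E(W^2(t))$ uniformly in $t$. Let $V_i(t) = \sum_{j=1}^{n_i} w_i K_h(T_{ij}-t)|(T_{ij}-t)/h|^r |Y_i|$, so the $V_i$ are independent across $i$. For the mean, by conditioning on $(T_{ij})$ and a change of variables $u=(s-t)/h$, $E[K_h(T-t)|(T-t)/h|^r] = \int K(u)|u|^r f(t+uh)\,du = O(1)$ uniformly in $t$, so using $E|Y|<\infty$ (which holds by \ref{a:GaussProcess} together with $E(\epsilon_Y^2)<\infty$) and the identity $\sumin w_i n_i = 1$, we obtain $\sumin E(V_i(t)) = O(1)$ uniformly in $t$.

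For the variance, by independence across subjects $\Var(W(t)) = \sumin \Var(V_i(t))$, and I would compute as in Lemma \ref{lem:auxLemma_beta_1}:
\begin{align*}
  E(V_i^2(t)) \le w_i^2 \, E(Y_i^2)\, E\Bigl[\Bigl(\sum_{j=1}^{n_i} K_h(T_{ij}-t)|(T_{ij}-t)/h|^r\Bigr)^{\!2}\Bigr] = O\bigl(w_i^2(n_i/h + n_i^2)\bigr),
\end{align*}
where the diagonal terms contribute $O(n_i/h)$ (via $E[K_h^2(T-t)((T-t)/h)^{2r}]=O(h^{-1})$) and the cross terms contribute $O(n_i^2)$, using \ref{a:T} and \ref{a:K}. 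Summing over $i$ with $n_i\le N_0$ and $w_i \le n^{-1}$ gives $\sumin \Var(V_i(t)) = O((nh)^{-1} + n^{-1}) = o(1)$ under \ref{a:hmu}. Hence $E(W^2(t))= O(1)$ uniformly in $t\in\cT$, and since $|\cT|=1$ we conclude $E\lVert W\rVert_{L^2}^2 = O(1)$, giving $\lVert W\rVert_{L^2}=O_p(1)$ by Markov's inequality. Combined with the $O(a_n)$ bound on $\lVert \hmu-\mu\rVert_\infty$, this yields the claim.

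There is no real obstacle here; the main work is the routine second-moment computation in the variance step, which mirrors the argument of Lemma \ref{lem:auxLemma_beta_1} with the extra factor $|Y_i|$ handled by the finite second moment of $Y$.
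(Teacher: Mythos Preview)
Your proposal is correct and follows essentially the same approach as the paper: pull out the uniform mean error $\lVert\hmu-\mu\rVert_\infty=O(a_n)$ and bound the remaining kernel-weighted response sum in $L^2$ via the standard diagonal/off-diagonal second-moment computation. The only minor difference is that you factor $\lVert\hmu-\mu\rVert_\infty$ out \emph{before} taking expectations (so your $V_i$ are genuinely independent across $i$ and the variance decomposes cleanly), whereas the paper keeps it inside and expands $E\lVert\sum_i Z_i\rVert_{L^2}^2$ including the cross terms $E[Z_i(t)Z_k(t)]$; your ordering is slightly tidier but the underlying computations are identical.
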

\begin{proof}[Proof of Lemma \ref{lem:auxLemma_beta_3}]
	Setting  $Z_i:=\sum_{j=1}^{n_i}w_i K_h(T_{ij}-t) \left(\frac{T_{ij}-t}{h}\right)^r   (\mu(T_{ij})-\hmu(T_{ij}))Y_i$, note that
	\begin{align}
		E\left( \lVert \sumin Z_i \rVert_{L^2}^2\right)&=\int_\cT \sumin E[Z_i^2(t)] dt + \int_\cT \sumin \sum_{k\neq i} E[Z_i(t)Z_k(t)].\label{eq:auxLemma_beta_3_1}
	\end{align}
	Since $\lvert Z_i(t) \rvert\le \lVert \hmu-\mu \rVert_\infty \sum_{j=1}^{n_i}w_i K_h(T_{ij}-t) \left( \frac{\lvert T_{ij}-t\rvert}{h}\right)^r  \lvert Y_i\rvert$, it follows that
	\begin{align}
		&E[Z_i^2(t)] \nonumber \\
		&\le  E\Big[\lVert \hmu-\mu \rVert_\infty^2  \sum_{j=1}^{n_i}\sum_{l=1}^{n_i} w_i^2 Y_i^2 K_h(T_{ij}-t) K_h(T_{il}-t)\left( \frac{\lvert T_{ij}-t\rvert}{h}\right)^r \left( \frac{\lvert T_{il}-t\rvert}{h}\right)^r \Big] \nonumber\\
		&\le O(a_n^2) \Big\{ \sum_{j=1}^{n_i} w_i^2 E(Y^2) E\Big[K_h^2(T_{ij}-t) \left(\frac{T_{ij}-t}{h}\right)^{2r} \Big]\nonumber\\
		& +\sum_{j=1}^{n_i} \sum_{l\neq j}w_i^2 E(Y^2) E\Big[K_h(T_{ij}-t)\left( \frac{\lvert T_{ij}-t\rvert}{h}\right)^r \Big]  E\Big[K_h(T_{il}-t)\left( \frac{\lvert T_{il}-t\rvert}{h}\right)^r \Big] \Big\}\nonumber\\
		&\le O(a_n^2) [O(n^{-2} h\inv) +O(n^{-2})]\nonumber\\
		&=O(a_n^2 n^{-2} h\inv),\label{eq:auxLemma_beta_3_2}
	\end{align}
	where the first inequality follows from {Theorem $5.1$ in \cite{zhan:16}} and the term $O(a_n^2 n^{-2} h\inv)$ is uniform in $i$ and $t$. Similarly, for $k\neq i$ and setting $h_{qdr}(t):=\left(\frac{\lvert T_{qd}-t\rvert}{h}\right)^r$, $q=i,k$ and $d=j,l$, we have
	\begin{align*}
		&E(\lvert Z_i(t)Z_k(t)\rvert)\\
		&\le E\Big[ \sum_{j=1}^{n_i}\sum_{l=1}^{n_k}w_i K_h(T_{ij}-t)h_{ijr}(t)  \lvert \mu(T_{ij})-\hmu(T_{ij})\\
		&\quad\quad\quad\quad\quad\quad \rvert Y_i  w_k K_h(T_{kl}-t)  h_{klr}(t) \lvert\mu(T_{kl})-\hmu(T_{kl})\rvert Y_k \Big]\\
		&\le O(a_n^2) \sum_{j=1}^{n_i}\sum_{l=1}^{n_k} w_i w_k E[K_h(T_{ij}-t)h_{ijr}(t)] E[K_h(T_{kl}-t)h_{klr}(t)]  [E(Y)]^2\\
		&=O(a_n^2 n^{-2}),
	\end{align*}
	where the $O(a_n^2 n^{-2})$ term is uniform in $i$, $k$ and $t$. Combining  this with \eqref{eq:auxLemma_beta_3_1} and \eqref{eq:auxLemma_beta_3_2} leads to the result.
\end{proof}

\begin{lem}\label{lem:auxLemma_beta_4}
	Suppose that \ref{a:GaussProcess}, \ref{a:eigendecay}-\ref{a:rate}, \ref{a:K}--\ref{a:Ubeta} hold and consider a sparse design with  $n_i\le N_0<\infty$, setting  $a_n=a_{n1}$ and $b_n=b_{n1}$. Then
	\begin{align*}
		\lVert \hC - C \rVert_{L^2}=O_p\left(\left(\frac{1}{nh}+h^2\right)^{1/2}+a_n\right).
	\end{align*}
\end{lem}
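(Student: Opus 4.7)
The strategy is to derive an explicit closed-form representation of $\hC$ via the normal equations of the weighted least squares problem in S.0.1, and then to invoke Lemmas \ref{lem:auxLemma_beta_1}--\ref{lem:auxLemma_beta_3} to bound each piece of the resulting numerator. For $r=0,1,2$ I set
\begin{align*}
A_r(t) &:= \sumin \sum_{j=1}^{n_i} w_i K_h(T_{ij}-t)\left(\frac{T_{ij}-t}{h}\right)^{r},\\
B_r(t) &:= \sumin \sum_{j=1}^{n_i} w_i K_h(T_{ij}-t)\left(\frac{T_{ij}-t}{h}\right)^{r} (C_i(T_{ij})-C(t)).
\end{align*}
Solving the $2\times 2$ normal equations of \eqref{eq:localLinear_C} and using that local linear smoothers reproduce constants (so that replacing $C_i(T_{ij})$ by $C_i(T_{ij})-C(t)$ in the numerator is valid) gives the identity
\begin{align*}
\hC(t) - C(t) = \frac{A_2(t) B_0(t) - A_1(t) B_1(t)}{A_0(t) A_2(t) - A_1(t)^{2}}.
\end{align*}

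To convert this into an $L^2$ bound I will next show that the denominator is uniformly bounded below in $t$ with probability tending to one and that $\sup_t |A_r(t)|$ is uniformly bounded in probability for $r=0,1,2$. A standard bias-variance calculation for kernel averages, using \ref{a:K}, \ref{a:T}, and the bandwidth condition \ref{a:hmu}, gives $\sup_{t\in\cT} |A_r(t) - f(t)\int u^r K(u)\,du| = o_p(1)$. Since $f(t) \ge m_f > 0$, the limit $f(t)^2 \int u^2 K(u)\,du$ of $A_0(t)A_2(t)-A_1(t)^2$ is bounded below by a positive constant, so the denominator is uniformly bounded below with probability tending to one. Combined with the uniform upper bound on $|A_1|$ and $|A_2|$, this reduces the problem to controlling $\lVert B_0 \rVert_{L^2}$ and $\lVert B_1 \rVert_{L^2}$.

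The decomposition $C_i(T_{ij}) = U_{ij} Y_i + \epsilon_{ij} Y_i + (\mu(T_{ij})-\hmu(T_{ij})) Y_i$ splits $B_r(t)$ into exactly the three quantities controlled by the preceding lemmas. The first piece is $O_p(((nh)^{-1}+h^2)^{1/2})$ in $L^2$-norm by Lemma \ref{lem:auxLemma_beta_2}, the second piece is $O_p((nh)^{-1/2})$ by Lemma \ref{lem:auxLemma_beta_2}, and the third piece is $O_p(a_n)$ by Lemma \ref{lem:auxLemma_beta_3}. Summing yields $\lVert B_r \rVert_{L^2} = O_p(((nh)^{-1}+h^2)^{1/2} + a_n)$ for $r=0,1$, and plugging back into the displayed identity gives the stated rate.

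The main technical step will be the uniform-in-$t$ control of the $A_r$ processes; this is a standard kernel-smoothing fact in the sparse longitudinal setting, obtainable by the same maximal inequality arguments used in \cite{zhan:16} for the local linear mean estimator under \ref{a:hmu}. No new ideas beyond this uniform convergence and the three preceding lemmas are needed.
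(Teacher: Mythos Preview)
Your proposal is correct and follows essentially the same approach as the paper: the paper uses the identical local-linear closed form (with its $S_r$ and $\tR_r$ corresponding to your $A_r$ and $B_r + C(\cdot)A_r$), decomposes the numerator via $C_i(T_{ij})=(U_{ij}+\epsilon_{ij})Y_i+(\mu(T_{ij})-\hmu(T_{ij}))Y_i$, applies Lemmas~\ref{lem:auxLemma_beta_2} and~\ref{lem:auxLemma_beta_3} in the same way, and handles the denominator and the $\sup_t|S_r(t)|$ bounds by invoking the arguments for Theorem~4.1 in \cite{zhan:16}. The only cosmetic difference is that you spell out the limiting value $f(t)^2\int u^2 K(u)\,du$ of the denominator, whereas the paper simply cites \cite{zhan:16}.
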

\begin{proof}[Proof of Lemma \ref{lem:auxLemma_beta_4}]
	Proceeding similarly to the proof of Theorem 3.1 in \cite{zhan:16}, using \eqref{eq:localLinear_C},
	\begin{align*}
		\hC(t)&=\frac{S_2(t)\tR_0(t)-S_1(t)\tR_1(t)}{S_0(t)S_2(t)-S_1^2(t)},
	\end{align*}
	where
	\begin{align*}
		S_r(t)&=\sumin \sum_{j=1}^{n_i}w_i K_h(T_{ij}-t) \left(\frac{T_{ij}-t}{h}\right)^r,\\
		\tR_r(t)&=\sumin \sum_{j=1}^{n_i}w_i K_h(T_{ij}-t) \left(\frac{T_{ij}-t}{h}\right)^r C_i(T_{ij}),
	\end{align*}
	and $r=0,1,2$. Then
	\begin{align}
		\hC(t)-C(t)&=\frac{(\tR_0(t)-C(t)S_0(t))S_2(t)-(\tR_1(t)-C(t)S_1(t))S_1(t)}{S_0(t)S_2(t)-S_1^2(t)}.\label{eq:auxLemma_beta_4_1}
	\end{align}
	Since $C_i(T_{ij})=(\tX_{ij}-\hmu(T_{ij}))Y_i=(U_{ij}+\epsilon_{ij})Y_i+(\mu(T_{ij})-\hmu(T_{ij}))Y_i$, where $U_{ij}=X(T_{ij})-\mu(T_{ij})$, 
	\begin{align*}
		&\lVert \tR_0(t)-C(t)S_0(t) \rVert_{L^2} \\
		&\le \lVert \sumin \sum_{j=1}^{n_i}w_i K_h(T_{ij}-t) (U_{ij}Y_i-C(t)) \rVert_{L^2}+ \lVert \sumin \sum_{j=1}^{n_i}w_i K_h(T_{ij}-t)\epsilon_{ij} Y_i\rVert_{L^2}\\
		&\quad + \lVert \sumin \sum_{j=1}^{n_i}w_i K_h(T_{ij}-t)  (\mu(T_{ij})-\hmu(T_{ij}))Y_i \rVert_{L^2}\\
		&= O_p\left(\left(\frac{1}{nh}+h^2\right)^{1/2}\right)+O_p((nh)^{-1/2})+O_p(a_n)\\
		&=O_p\left(\left(\frac{1}{nh}+h^2\right)^{1/2}\right)+O_p(a_n),
	\end{align*}
	where the last equality follows from Lemma \ref{lem:auxLemma_beta_2} and Lemma \ref{lem:auxLemma_beta_3}. Similarly
	\begin{align*}
		&\lVert \tR_1(t)-C(t)S_1(t) \rVert_{L^2} \\
		&\le
		\lVert  \sumin \sum_{j=1}^{n_i}w_i K_h(T_{ij}-t) \left(\frac{T_{ij}-t}{h}\right) (U_{ij}Y_i-C(t))  \rVert_{L^2}\\
		&\quad +
		\lVert  \sumin \sum_{j=1}^{n_i}w_i K_h(T_{ij}-t) \left(\frac{T_{ij}-t}{h}\right) \epsilon_{ij} Y_i \rVert_{L^2}\\
		&\quad \quad +
		\lVert  \sumin \sum_{j=1}^{n_i}w_i K_h(T_{ij}-t) \left(\frac{T_{ij}-t}{h}\right) (\mu(T_{ij})-\hmu(T_{ij}))Y_i  \rVert_{L^2}\\
		&=O_p\left(\left(\frac{1}{nh}+h^2\right)^{1/2}\right)+O_p(a_n).
	\end{align*}
	These along with \eqref{eq:auxLemma_beta_4_1} and similar arguments as in  the proof of Theorem 4.1 in \cite{zhan:16} show that $S_0(t)S_2(t)-S_1^2(t)$ is positive and bounded away from $0$ with probability tending to $1$ and $\sup_{t\in \cT}\lvert S_r(t)\rvert =O_p(1)$, $r=1,2$. The result then follows.
\end{proof}

Recall that the eigenpairs of the integral operator $\hXi$ associated with $\hGamma$ are $(\hlambda_k, \hphi_k)$, and those of $\Xi$ are $(\lambda_k, \phi_k)$, $k\geq 1$. 

\begin{lem}\label{lem:auxLemma_beta_5}
	Suppose that \ref{a:GaussProcess}, \ref{a:eigendecay}-\ref{a:rate}, \ref{a:K}--\ref{a:Ubeta} hold and consider a sparse design with  $n_i\le N_0<\infty$, setting  $a_n=a_{n1}$ and $b_n=b_{n1}$. Then, setting    $\tau_M=\sumM \frac{1}{\lambda_m}$, for large enough $n$, the following relations hold almost surely,
	\begin{align}
		\sumM \frac{\lvert \hsigma_m -\sigma_m \rvert}{\lambda_m} &=  \tau_M \lVert \hC - C \rVert_{L^2}+ \tau_M^{1/2}  O(c_n^{\rho}), \label{eq:lem_auxLemma_beta_5_1} \\
		\sumM   \lvert \hsigma_m -\sigma_m \rvert \frac{\lvert \hlambda_m-\lambda_m \rvert}{\lvert \hlambda_m\rvert \lambda_m}&\le O(c_n^{2\rho})+ \lVert \hC-C \rVert_{L^2} \tau_M^{1/2}  O(c_n^{\rho}),\label{eq:lem_auxLemma_beta_5_2}\\
		\sumM   \lvert \sigma_m \rvert \frac{\lvert \hlambda_m-\lambda_m \rvert}{\lvert \hlambda_m\rvert \lambda_m}&\le O(c_n) \tau_M,\label{eq:lem_auxLemma_beta_5_3}\\
		\sumM \Big \lvert \frac{\hsigma_m}{\hlambda_m}-\frac{\sigma_m}{\lambda_m}\Big \rvert  \lVert \hphi_m-\phi_m\rVert_{L^2}&\le O(c_n^{2\rho})+ O(c_n^{\rho}) (\lVert \hC - C \rVert_{L^2}+c_n) \tau_M^{1/2}, \label{eq:lem_auxLemma_beta_5_4}\\
		\sumM \frac{\lvert \sigma_m \rvert}{\lambda_m}\lVert \hphi_m-\phi_m\rVert_{L^2}&\le O(c_n) \upsilon_M,\label{eq:lem_auxLemma_beta_5_5}
	\end{align}
\end{lem}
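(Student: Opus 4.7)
The plan is to combine three ingredients: classical eigen-perturbation bounds for $\hXi$, a decomposition of $\hsigma_m-\sigma_m$ that isolates the error in $\hC$ from that in the eigenfunctions, and a short chain of summation inequalities extracted from \ref{a:Mrate}. From \eqref{eq:shrinkEstFunc_DeltaGamma}--\eqref{eq:shrinkEstFunc_eigenvals}, $\lvert\hlambda_m-\lambda_m\rvert=O(c_n)$ and $\lVert\hphi_m-\phi_m\rVert_{L^2}=O(c_n\delta_m^{-1})$ almost surely, uniformly in $m\le M$. Assumption \ref{a:Mrate} implies $\lambda_M^{-1/2}\delta_M^{-1}=O(c_n^{\rho-1})$, which together with $\delta_M\le\lambda_M$ yields $\lambda_M\ge\Omega(c_n^{2(1-\rho)/3})\gg c_n$ (since $\rho>1/3$); consequently $\lvert\hlambda_m\rvert\ge\lambda_m/2$ almost surely for all $m\le M$ and large $n$. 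Splitting
\begin{equation*}
\hsigma_m-\sigma_m=\int_\cT(\hC-C)\hphi_m\,dt+\int_\cT C(\hphi_m-\phi_m)\,dt,
\end{equation*}
and applying Cauchy--Schwarz with $\lVert\hphi_m\rVert_{L^2}=1$ and $\lVert C\rVert_{L^2}<\infty$ (since $C=\Xi(\beta)$ under \ref{a:betaSeries}) yields the uniform almost-sure pointwise bound
\begin{equation*}
\lvert\hsigma_m-\sigma_m\rvert\le\lVert\hC-C\rVert_{L^2}+O(c_n)\delta_m^{-1}.
\end{equation*}

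The decisive summation step is to extract from \ref{a:Mrate} the estimates
\begin{equation*}
\sumM\lambda_m^{-1}\delta_m^{-2}=O(c_n^{2(\rho-1)}),\qquad \sumM\lambda_m^{-1}\delta_m^{-1}\le\tau_M^{1/2}\,O(c_n^{\rho-1}),
\end{equation*}
\begin{equation*}
\sumM\lambda_m^{-2}\le\tau_M^{1/2}\,O(c_n^{\rho-1}),\qquad \sumM\lambda_m^{-2}\delta_m^{-1}=O(c_n^{2(\rho-1)}).
\end{equation*}
The first uses the elementary inequality $\sum a_m^2\le(\sum a_m)^2$ for non-negative $a_m=\lambda_m^{-1/2}\delta_m^{-1}$; the second is Cauchy--Schwarz with factors $\lambda_m^{-1/2}$ and $\lambda_m^{-1/2}\delta_m^{-1}$ combined with the first; the third uses $\lambda_m^{-3}\le\lambda_m^{-1}\delta_m^{-2}$ (since $\lambda_m\ge\delta_m$) with Cauchy--Schwarz against $\lambda_m^{-1/2}$; the fourth uses $\lambda_m^{-1}\le\delta_m^{-1}$.

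With these tools each of \eqref{eq:lem_auxLemma_beta_5_1}--\eqref{eq:lem_auxLemma_beta_5_5} is direct substitution. \eqref{eq:lem_auxLemma_beta_5_1} follows by inserting the pointwise bound. \eqref{eq:lem_auxLemma_beta_5_2} uses $\lvert\hlambda_m-\lambda_m\rvert/(\lvert\hlambda_m\rvert\lambda_m)\le 2\,O(c_n)\lambda_m^{-2}$ together with the pointwise bound and the third and fourth summation estimates. For \eqref{eq:lem_auxLemma_beta_5_3} I would rewrite $\lvert\sigma_m\rvert=\lambda_m\lvert\beta_m\rvert$ and use the Bessel bound $\lvert\beta_m\rvert\le\lVert\beta\rVert_{L^2}$ (from \ref{a:betaSeries}) to obtain $\sumM\lvert\sigma_m\rvert\lambda_m^{-2}\le\lVert\beta\rVert_{L^2}\tau_M$. \eqref{eq:lem_auxLemma_beta_5_4} decomposes $\hsigma_m/\hlambda_m-\sigma_m/\lambda_m=(\hsigma_m-\sigma_m)/\hlambda_m-\sigma_m(\hlambda_m-\lambda_m)/(\hlambda_m\lambda_m)$ and then combines the arguments for \eqref{eq:lem_auxLemma_beta_5_1} and \eqref{eq:lem_auxLemma_beta_5_3} weighted by $\lVert\hphi_m-\phi_m\rVert_{L^2}=O(c_n\delta_m^{-1})$; finally \eqref{eq:lem_auxLemma_beta_5_5} is immediate from $\sumM\lvert\beta_m\rvert\delta_m^{-1}\le\lVert\beta\rVert_{L^2}\upsilon_M$.

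The main obstacle is careful bookkeeping: confirming that every cross-term lands in exactly one of the slots $\tau_M\lVert\hC-C\rVert_{L^2}$, $\tau_M^{1/2}O(c_n^{\rho})$, $O(c_n^{2\rho})$, $O(c_n)\tau_M$, or $O(c_n)\upsilon_M$ without losing or gaining a power of $c_n$. The pivotal device is the trivial inequality $\sum a_m^2\le(\sum a_m)^2$ applied to $a_m=\lambda_m^{-1/2}\delta_m^{-1}$: it converts the square-root-type rate in \ref{a:Mrate} into the full rate $O(c_n^{2(\rho-1)})$ for $\sumM\lambda_m^{-1}\delta_m^{-2}$, and this then propagates the correct powers of $c_n^{\rho}$ into every bound.
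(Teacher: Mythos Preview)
Your proposal is correct and follows essentially the same route as the paper: perturbation bounds $\lvert\hlambda_m-\lambda_m\rvert\le\lVert\hXi-\Xi\rVert_{\mathrm{op}}$ and $\lVert\hphi_m-\phi_m\rVert_{L^2}\le 2\sqrt{2}\delta_m^{-1}\lVert\hXi-\Xi\rVert_{\mathrm{op}}$, the decomposition $\hsigma_m-\sigma_m=\langle\hC-C,\hphi_m\rangle+\langle C,\hphi_m-\phi_m\rangle$, and the summation inequalities built from \ref{a:Mrate} via $\sum a_m^2\le(\sum a_m)^2$ and Cauchy--Schwarz. Your argument is in fact slightly more economical than the paper's in two places: you bound $\lvert\langle\hC-C,\hphi_m\rangle\rvert\le\lVert\hC-C\rVert_{L^2}$ directly rather than splitting $\hphi_m=\phi_m+(\hphi_m-\phi_m)$, and you use the Bessel bound $\lvert\beta_m\rvert\le\lVert\beta\rVert_{L^2}$ for \eqref{eq:lem_auxLemma_beta_5_3} and \eqref{eq:lem_auxLemma_beta_5_5} where the paper invokes Cauchy--Schwarz with the factor $(\sum\sigma_m^2/\lambda_m^2)^{1/2}$.
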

\begin{proof}[Proof of Lemma \ref{lem:auxLemma_beta_5}]
	First note 
	\begin{align*}
		\sumM \frac{1}{\delta_m}
		&\le \left(\sumM \frac{1}{\lambda_m \delta_m^2} \right)^{1/2}  \left(\sumM \lambda_m \right)^{1/2} \\
		&\le \left(\sumM  \frac{1}{\sqrt{\lambda_m}\delta_m}\right) \left(\sum_{m=1}^\infty \lambda_m \right)^{1/2}\\
		&=O(c_n^{\rho-1}),
	\end{align*}
	implying  $c_n \upsilon_M=O(c_n^{\rho})=o(1)$ as $\ntoinf$. By the Cauchy--Schwarz inequality and from {Theorem $5.2$ in \cite{zhan:16}}, we have $\lVert \hXi-\Xi \rVert_{\text{op}}=O(a_n+b_n)$ a.s.. Note that from the orthonormality of the $\phi_k$ and using perturbation results \citep{bosq:00}, we have $\lVert \hphi_k-\phi_k\rVert_{L^2}\le 2\sqrt{2} \lVert \hXi-\Xi \rVert_{\text{op}} /\delta_k$, $k\ge 1$, so that for any $m\ge 1$
	\begin{align}
		\lvert \hsigma_m-\sigma_m \rvert&= \lvert \langle \hC,\hphi_m \rangle_{L^2}-\langle C,\phi_m \rangle_{L^2}\rvert \nonumber \\
		&\le 2\sqrt{2}\lVert \hC - C \rVert_{L^2} \frac{\lVert \hXi-\Xi \rVert_{\text{op}}}{\delta_m} +  \lVert \hC - C \rVert_{L^2}
		+2\sqrt{2} \lVert C \rVert_{L^2}   \frac{\lVert \hXi-\Xi \rVert_{\text{op}}}{\delta_m},\label{eq:auxLemma_beta_5_1}
	\end{align}
	and from  $\delta_m\le \lambda_m$, 
	\begin{align}
		\sumM \frac{\lVert \hXi-\Xi \rVert_{\text{op}}}{\lambda_m \delta_m}&\le \tau_M^{1/2}  \sumM \frac{\lVert \hXi-\Xi \rVert_{\text{op}}}{\sqrt{\lambda_m} \delta_m}=\tau_M^{1/2}  O(c_n^{\rho})\quadas \label{eq:auxLemma_beta_5_2}.
	\end{align}
	Thus
	\begin{align*}
		\sumM \frac{\lvert \hsigma_m -\sigma_m \rvert}{\lambda_m}&\le 	\tau_M^{1/2}  O(c_n^{\rho}) \lVert \hC - C \rVert_{L^2}
		+ \tau_M \lVert \hC - C \rVert_{L^2}+ \tau_M^{1/2}  O(c_n^{\rho})\quadas\\
		&= \tau_M \lVert \hC - C \rVert_{L^2}+ \tau_M^{1/2}  O(c_n^{\rho}),
	\end{align*}
	which shows the first result in \eqref{eq:lem_auxLemma_beta_5_1}. Since $M=M(n)$ is such that $\sum_{m=1}^M \frac{1}{\sqrt{\lambda_m}\delta_m}=O(c_n^{\rho-1})$ as $n\to\infty$, then $\sum_{m=1}^M \lVert \hXi-\Xi \rVert_{\text{op}} \lambda_m^{-1/2}\delta_m^{-1}=O(c_n^\rho)=o(1)$ a.s. and $\lambda_{M}=o(1)$ as $n\to \infty$. Thus, for large enough $n$ we have $\lambda_{M}<1$ and $ \lVert \hXi-\Xi \rVert_{\text{op}} \lambda_M^{-1/2}\delta_M^{-1}\leq \sum_{m=1}^M \lVert \hXi-\Xi \rVert_{\text{op}}
	\lambda_m^{-1/2}\delta_m^{-1} \leq 1/2$ a.s., so that $\lVert \hXi-\Xi \rVert_{\text{op}} \leq \lambda_M^{1/2}\delta_M/2 \leq \delta_M/2\leq \lambda_M/2$ a.s.. This shows that there exists $n_0\ge 1$ such that for all $n\ge n_0$ it holds that $\lVert \hXi-\Xi \rVert_{\text{op}}\le \lambda_M/2$ a.s.. Then $\lvert \hlambda_m -\lambda_m \rvert\le \lVert \hXi-\Xi \rVert_{\text{op}}$ implies  $\lvert \hlambda_m \rvert \ge \lambda_m/2$ a.s.   for large enough $n$. With \eqref{eq:auxLemma_beta_5_1}, \eqref{eq:auxLemma_beta_5_2}, 
	\begin{align*}
		\sumM  \lvert \hsigma_m -\sigma_m \rvert \frac{\lvert \hlambda_m-\lambda_m \rvert}{\lvert \hlambda_m\rvert \lambda_m}&\le 2
		\sumM  \lvert \hsigma_m -\sigma_m \rvert \frac{\lVert \hXi-\Xi \rVert_{\text{op}}}{ \lambda_m^2}\\
		&\le 4 \sqrt{2} \lVert \hC-C \rVert_{L^2} \sumM  \frac{\lVert \hXi-\Xi \rVert_{\text{op}}^2}{ \lambda_m^2\delta_m}
		+
		2  \lVert \hC-C \rVert_{L^2} \sumM  \frac{\lVert \hXi-\Xi \rVert_{\text{op}}}{ \lambda_m^2}\\
		&\quad +
		4 \sqrt{2} \lVert C \rVert_{L^2} \sumM \frac{\lVert \hXi-\Xi \rVert_{\text{op}}^2}{ \lambda_m^2\delta_m}\\
		&\le  \lVert \hC-C \rVert_{L^2} O(c_n^{2\rho})+\lVert \hC-C \rVert_{L^2} \tau_M^{1/2}  O(c_n^{\rho})+O(c_n^{2\rho}) \quadas\\
		&= O(c_n^{2\rho})+ \lVert \hC-C \rVert_{L^2} \tau_M^{1/2}  O(c_n^{\rho})\quadas,
	\end{align*}
	for large enough $n$, implying the second  result in \eqref{eq:lem_auxLemma_beta_5_2}. Similarly, for large enough $n$ and a.s. 
	\begin{align*}
		\sumM   \lvert \sigma_m \rvert \frac{\lvert \hlambda_m-\lambda_m \rvert}{\lvert \hlambda_m\rvert \lambda_m}&\le 
		2 \sumM   \lvert \sigma_m \rvert \frac{\lvert \hlambda_m-\lambda_m \rvert}{\lambda_m^2}
		\le 
		O(c_n) \left( \sumM \frac{\sigma_m^2}{\lambda_m^2}\right)^{1/2}  \tau_M =O(c_n) \tau_M,
	\end{align*}
	where the last equality is due to $\sum_{m=1}^\infty \sigma_m^2/\lambda_m^2<\infty$. This shows the third result in \eqref{eq:lem_auxLemma_beta_5_3}. Now, 
	\begin{align}
		&\sumM \Big \lvert \frac{\hsigma_m}{\hlambda_m}-\frac{\sigma_m}{\lambda_m}\Big \rvert  \lVert \hphi_m-\phi_m\rVert_{L^2}\nonumber\\
		&\le 
		\sumM \frac{\lvert \hsigma_m -\sigma_m\rvert }{\lambda_m} \lVert \hphi_m-\phi_m\rVert_{L^2}
		+ \sumM \frac{\lvert \sigma_m \rvert \lvert \hlambda_m -\lambda_m\rvert }{\lvert \hlambda_m \rvert \lambda_m}\lVert \hphi_m-\phi_m\rVert_{L^2}\nonumber \\
		&\quad  +
		\sumM \frac{\lvert \hsigma_m-\sigma_m \rvert \lvert \hlambda_m -\lambda_m\rvert }{\lvert \hlambda_m \rvert \lambda_m}
		\lVert \hphi_m-\phi_m\rVert_{L^2} \label{eq:auxLemma_beta_5_3}.
	\end{align}
	From \eqref{eq:auxLemma_beta_5_1}, \eqref{eq:auxLemma_beta_5_2} and using that $\lVert \hphi_m-\phi_m\rVert_{L^2}\le 2\sqrt{2} \lVert \hXi-\Xi \rVert_{\text{op}} /\delta_m$, we obtain
	\begin{align}
		&\sumM \frac{\lvert \hsigma_m -\sigma_m\rvert }{\lambda_m} \lVert \hphi_m-\phi_m\rVert_{L^2}\nonumber \\
		&\le 
		8 \lVert \hC - C \rVert_{L^2}\sumM   \lVert \hXi-\Xi \rVert_{\text{op}}^2 \frac{1}{\lambda_m \delta_m^2}
		+
		2\sqrt{2} \lVert \hC - C \rVert_{L^2} \sumM \frac{ \lVert \hXi-\Xi \rVert_{\text{op}}}{\lambda_m \delta_m}\nonumber\\
		&\quad + 8  \lVert C \rVert_{L^2} \sumM    \frac{ \lVert \hXi-\Xi \rVert_{\text{op}}^2}{\lambda_m \delta_m^2}\nonumber\\
		&\le \lVert \hC - C \rVert_{L^2} O(c_n^{2\rho})+\lVert \hC - C \rVert_{L^2} \tau_M^{1/2}  O(c_n^{\rho})+O(c_n^{2\rho})\quadas\nonumber\\
		&= O(c_n^{2\rho})+ \lVert \hC - C \rVert_{L^2} \tau_M^{1/2}  O(c_n^{\rho})\quadas\label{eq:auxLemma_beta_5_4}.
	\end{align}
	For large enough $n$, 
	\begin{align}
		\sumM \frac{\lvert \sigma_m \rvert \lvert \hlambda_m -\lambda_m\rvert }{\lvert \hlambda_m \rvert \lambda_m}\lVert \hphi_m-\phi_m\rVert_{L^2}&\le 4\sqrt{2} \sumM \lvert \sigma_m \rvert  \frac{\lVert \hXi-\Xi \rVert_{\text{op}}^2}{\lambda_m^2 \delta_m} \quadas \nonumber\\
		&\le \left( \sumM \frac{\sigma_m^2}{\lambda_m^2}\right)^{1/2}  O(c_n^{1+\rho}) \tau_M^{1/2} \quadas\nonumber \\
		&=O(c_n^{1+\rho}) \tau_M^{1/2}\label{eq:auxLemma_beta_5_5}.
	\end{align}
	Similarly, from \eqref{eq:auxLemma_beta_5_1} we obtain
	\begin{align}
		&\sumM \frac{\lvert \hsigma_m-\sigma_m \rvert \lvert \hlambda_m -\lambda_m\rvert }{\lvert \hlambda_m \rvert \lambda_m}
		\lVert \hphi_m-\phi_m\rVert_{L^2} \nonumber \\
		&\le   4\sqrt{2}  \sumM \lvert \hsigma_m-\sigma_m \rvert  \frac{\lVert \hXi-\Xi \rVert_{\text{op}}^2}{\lambda_m^2 \delta_m}\nonumber \\
		&\le 16 \lVert \hC - C \rVert_{L^2} \sumM \frac{\lVert \hXi-\Xi \rVert_{\text{op}}^3}{\lambda_m^2 \delta_m^2}+
		4\sqrt{2} \lVert \hC - C \rVert_{L^2}  \sumM \frac{\lVert \hXi-\Xi \rVert_{\text{op}}^2}{\lambda_m^2 \delta_m}\nonumber \\
		&\quad + 16 \lVert C \rVert_{L^2} \sumM \frac{\lVert \hXi-\Xi \rVert_{\text{op}}^3}{\lambda_m^2 \delta_m^2}\nonumber\\
		&\le O(c_n^{1+2\rho}) \tau_M \lVert \hC - C \rVert_{L^2}+ O(c_n^{2\rho})\lVert \hC - C \rVert_{L^2} + O(c_n^{1+2\rho}) \tau_M\quadas \nonumber \\
		&= O(c_n^{1+2\rho}) \tau_M+O(c_n^{2\rho})\lVert \hC - C \rVert_{L^2} \quadas\label{eq:auxLemma_beta_5_6}.
	\end{align}
	Combining \eqref{eq:auxLemma_beta_5_3}, \eqref{eq:auxLemma_beta_5_4}, \eqref{eq:auxLemma_beta_5_5} and \eqref{eq:auxLemma_beta_5_6} with the fact that $c_n \tau_M\le c_n \upsilon_M=o(1)$ as $\ntoinf$, which was already shown, leads to the fourth result in \eqref{eq:lem_auxLemma_beta_5_4}. Finally
	\begin{align*}
		&\sumM \frac{\lvert \sigma_m \rvert}{\lambda_m}\lVert \hphi_m-\phi_m\rVert_{L^2}\\
		&\le 2\sqrt{2}\sumM \frac{\lvert \sigma_m \rvert \ \lVert \hXi-\Xi \rVert_{\text{op}} }{\lambda_m \delta_m}\\
		&\le \left( \sumM \frac{\sigma_m^2}{\lambda_m^2}\right)^{1/2} \lVert \hXi-\Xi \rVert_{\text{op}} \upsilon_M=O(c_n)\upsilon_M\quadas,
	\end{align*}
	which shows the last result in \eqref{eq:lem_auxLemma_beta_5_5}.
\end{proof}

The next lemma provides the $L^2$ convergence of the empirical estimate $\hbeta_M$ towards $\beta$, which is required to construct the estimated predictive distribution $\hcP_{iK}$. Recall that
\begin{align*}
	\hbeta_M(t):=\sum_{m=1}^M \frac{\hsigma_m}{\hlambda_m} \hphi_m(t), \quad \tinT,
\end{align*}
$\Theta_M=\Big\lVert \sum_{m\ge M+1} \frac{\sigma_m}{\lambda_m}\phi_m \Big\rVert_{L^2}$ and $\tau_M=\sumM \lambda_m\inv$.
\begin{lem}\label{lem:Lemma_beta}
	Suppose that \ref{a:GaussProcess}, \ref{a:eigendecay}-\ref{a:rate}, \ref{a:K}--\ref{a:Ubeta} hold and consider a sparse design with  $n_i\le N_0<\infty$, setting  $a_n=a_{n1}$ and $b_n=b_{n1}$.  Let $K\ge 1$. Then
	\begin{align}
		\lVert \hbeta_M -\beta \rVert_{L^2}=O_p(r_n),\label{eq:lem_LemmaBeta1}
	\end{align}
	and
	\begin{align}
		\int_\cT \hbeta_M(t) \hphi_k(t) dt= \int_\cT \beta(t) \phi_k(t) dt+O_p(r_n),\label{eq:lem_LemmaBeta2}
	\end{align}
	where $r_n=c_n \upsilon_M+c_n^{\rho}\tau_M^{1/2} + \tau_M  \Big[\left(\frac{1}{nh}+h^2\right)^{1/2}+a_n\Big]
	+ \Theta_M$ and $k=1,\dots,K$.
\end{lem}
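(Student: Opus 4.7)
The plan is to split $\hbeta_M - \beta$ into an estimation error on the first $M$ terms plus a truncation remainder, and then apply Lemma \ref{lem:auxLemma_beta_5} term by term. Write
\begin{align*}
\hbeta_M - \beta \;=\; \Big(\hbeta_M - \sumM \tfrac{\sigma_m}{\lambda_m}\phi_m\Big) \;-\; \sum_{m\ge M+1}\tfrac{\sigma_m}{\lambda_m}\phi_m,
\end{align*}
so the $L^2$ norm of the second piece is exactly $\Theta_M$, contributing the last term of $r_n$. For the first piece, the plan is to use the per-term identity
\begin{align*}
\tfrac{\hsigma_m}{\hlambda_m}\hphi_m - \tfrac{\sigma_m}{\lambda_m}\phi_m \;=\; \Big(\tfrac{\hsigma_m}{\hlambda_m} - \tfrac{\sigma_m}{\lambda_m}\Big)\hphi_m \;+\; \tfrac{\sigma_m}{\lambda_m}(\hphi_m-\phi_m),
\end{align*}
together with $\lVert\hphi_m\rVert_{L^2}=1$, to bound the $L^2$ norm of the first piece by
\begin{align*}
\sumM \Big|\tfrac{\hsigma_m}{\hlambda_m}-\tfrac{\sigma_m}{\lambda_m}\Big| \;+\; \sumM \tfrac{|\sigma_m|}{\lambda_m}\lVert\hphi_m-\phi_m\rVert_{L^2}.
\end{align*}

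Next I would algebraically split
\begin{align*}
\tfrac{\hsigma_m}{\hlambda_m}-\tfrac{\sigma_m}{\lambda_m} \;=\; \tfrac{\hsigma_m-\sigma_m}{\lambda_m} \;+\; \sigma_m\tfrac{\lambda_m-\hlambda_m}{\hlambda_m\lambda_m} \;+\; (\hsigma_m-\sigma_m)\tfrac{\lambda_m-\hlambda_m}{\hlambda_m\lambda_m},
\end{align*}
and invoke the five displays \eqref{eq:lem_auxLemma_beta_5_1}--\eqref{eq:lem_auxLemma_beta_5_5} of Lemma \ref{lem:auxLemma_beta_5} after summing over $m=1,\dots,M$. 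Combining these bounds with Lemma \ref{lem:auxLemma_beta_4}, which gives $\lVert\hC-C\rVert_{L^2}=O_p((1/(nh)+h^2)^{1/2}+a_n)$, produces
\begin{align*}
\lVert\hbeta_M - \beta\rVert_{L^2} = O_p\Big(\tau_M[(1/(nh)+h^2)^{1/2}+a_n] + c_n^{\rho}\tau_M^{1/2} + c_n^{2\rho} + c_n\tau_M + c_n\upsilon_M + \Theta_M\Big).
\end{align*}
I would then absorb $c_n\tau_M$ into $c_n\upsilon_M$ (since $\delta_m\le\lambda_m$ forces $\tau_M\le\upsilon_M$) and absorb $c_n^{2\rho}$ into $c_n^{\rho}\tau_M^{1/2}$ (since $\tau_M\ge\lambda_1^{-1}$ is bounded below so $c_n^{2\rho}=o(c_n^\rho\tau_M^{1/2})$), yielding the claimed rate $O_p(r_n)$ and proving \eqref{eq:lem_LemmaBeta1}.

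For \eqref{eq:lem_LemmaBeta2}, since $K$ is fixed and $M=M(n)\to\infty$, for all large $n$ we have $k\le M$, and orthonormality of $\{\hphi_m\}$ gives $\int_\cT\hbeta_M(t)\hphi_k(t)dt=\hsigma_k/\hlambda_k$, while $\int_\cT\beta(t)\phi_k(t)dt=\sigma_k/\lambda_k$. The same three-term decomposition used above, applied to the single index $m=k$, bounds the difference by a $k$-dependent constant multiple of $|\hsigma_k-\sigma_k|+|\hlambda_k-\lambda_k|$, which is $O_p(\lVert\hC-C\rVert_{L^2}+c_n)=O_p((1/(nh)+h^2)^{1/2}+a_n+c_n)$; since $\tau_M$ and $\upsilon_M$ are bounded below by positive constants, this is $O_p(r_n)$.

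\textbf{Main obstacle.} The routine calculations are not hard individually, but the algebra involves many cross terms, and the subtle point is the absorption step at the end: one must verify that the eigengap-weighted sums $c_n\upsilon_M$ and $c_n^{\rho}\tau_M^{1/2}$ indeed dominate the mixed lower-order residuals ($c_n^{2\rho}$, $c_n\tau_M$, etc.) under assumption \ref{a:Mrate}, which is the reason for the specific combination of exponents appearing in $r_n$.
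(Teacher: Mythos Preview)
Your proof is correct and follows essentially the same route as the paper. Two minor differences worth noting: (i) your two-term per-component splitting $(\hsigma_m/\hlambda_m-\sigma_m/\lambda_m)\hphi_m+(\sigma_m/\lambda_m)(\hphi_m-\phi_m)$ is slightly tighter than the paper's three-term version (which also carries the cross term $|\hsigma_m/\hlambda_m-\sigma_m/\lambda_m|\,\lVert\hphi_m-\phi_m\rVert_{L^2}$), but both reduce to the same collection of sums from Lemma~\ref{lem:auxLemma_beta_5}; (ii) for \eqref{eq:lem_LemmaBeta2} the paper instead writes $\hbeta_k-\beta_k=\int(\hbeta_M\hphi_k-\beta\phi_k)$ and bounds it by $\lVert\hbeta_M-\beta\rVert_{L^2}(1+\lVert\hphi_k-\phi_k\rVert_{L^2})+\lVert\beta\rVert_{L^2}\lVert\hphi_k-\phi_k\rVert_{L^2}$, reusing part \eqref{eq:lem_LemmaBeta1} directly, whereas you exploit orthonormality of $\{\hphi_m\}$ to get $\hsigma_k/\hlambda_k$ and then bound the scalar difference for fixed $k$. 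Both arguments are valid and yield the same $O_p(r_n)$ rate.
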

\begin{proof}[Proof of Lemma \ref{lem:Lemma_beta}]
	Observe 
	\begin{align}
		\lVert \hbeta_M- \beta\rVert_{L^2}&\le  \sum_{m=1}^M  \Big \lVert \frac{\hsigma_m}{\hlambda_m} \hphi_m-\frac{\sigma_m}{\lambda_m} \phi_m  \Big\rVert_{L^2} + \Big\lVert \sum_{m\ge M+1} \frac{\sigma_m}{\lambda_m}\phi_m \Big\rVert_{L^2}\label{eq:Lemma_beta_1},
	\end{align}
	and
	\begin{align}
		\sum_{m=1}^M  \Big \lVert \frac{\hsigma_m}{\hlambda_m} \hphi_m-\frac{\sigma_m}{\lambda_m} \phi_m  \Big\rVert_{L^2} &\le
		\sum_{m=1}^M \Big \lvert \frac{\hsigma_m}{\hlambda_m}-\frac{\sigma_m}{\lambda_m}\Big \rvert  \lVert \hphi_m-\phi_m\rVert_{L^2} 
		+ \sum_{m=1}^M \Big \lvert \frac{\hsigma_m}{\hlambda_m}-\frac{\sigma_m}{\lambda_m}\Big \rvert \nonumber\\
		&\quad + \sum_{m=1}^M \frac{\lvert \sigma_m \rvert}{\lambda_m}\lVert \hphi_m-\phi_m\rVert_{L^2}.\label{eq:Lemma_beta_2}
	\end{align}
	By the triangle inequality and Lemma \ref{lem:auxLemma_beta_5}, we have that for large enough $n$
	\begin{align*}
		\sumM \Big \lvert \frac{\hsigma_m}{\hlambda_m}-\frac{\sigma_m}{\lambda_m}\Big \rvert &\le 
		\sumM \frac{\lvert \hsigma_m -\sigma_m\rvert }{\lambda_m} 
		+\sumM \frac{\lvert \hsigma_m-\sigma_m \rvert \lvert \hlambda_m -\lambda_m\rvert }{\lvert \hlambda_m \rvert \lambda_m}
		+\sumM \frac{\lvert \sigma_m \rvert \lvert \hlambda_m -\lambda_m\rvert }{\lvert \hlambda_m \rvert \lambda_m}\\
		&=
		\tau_M \lVert \hC - C \rVert_{L^2}+ O(c_n^{\rho})  \tau_M^{1/2}  +O(c_n) \tau_M \quadas\\
		&= \tau_M \lVert \hC - C \rVert_{L^2}+ O(c_n^{\rho}) \tau_M^{1/2}    \quadas,
	\end{align*}
	where the second equality is due to $c_n \tau_M =c_n^{\rho} \tau_M^{1/2} c_n^{1-\rho} \tau_M^{1/2}=o(1) c_n^{\rho} \tau_M^{1/2}$, and
	\begin{align*}
		&\sumM \Big \lvert \frac{\hsigma_m}{\hlambda_m}-\frac{\sigma_m}{\lambda_m}\Big \rvert  \lVert \hphi_m-\phi_m\rVert_{L^2}+ \sumM \frac{\lvert \sigma_m \rvert}{\lambda_m}\lVert \hphi_m-\phi_m\rVert_{L^2}\\
		&\le
		O(c_n^{2\rho})+ O(c_n^{\rho}) \lVert \hC - C \rVert_{L^2} \tau_M^{1/2}+O(c_n) \upsilon_M.
	\end{align*}
	With \eqref{eq:Lemma_beta_1}, \eqref{eq:Lemma_beta_2} and  the fact that $\upsilon_M=O(c_n^{\rho-1})$ as $\ntoinf$, which was shown in the proof of  Lemma \ref{lem:auxLemma_beta_5}, we arrive at 
	\begin{align*}
		\lVert \hbeta_M- \beta\rVert_{L^2}&\le O(c_n) \upsilon_M+O(c_n^{\rho}) \tau_M^{1/2}    + \tau_M  \lVert \hC - C \rVert_{L^2} 
		+ \Big\lVert \sum_{m\ge M+1} \frac{\sigma_m}{\lambda_m}\phi_m \Big\rVert_{L^2}
	\end{align*}
	and the result in \eqref{eq:lem_LemmaBeta1} follows from Lemma \ref{lem:auxLemma_beta_4}. Finally, recalling that $\hbeta_k=	\int_\cT \hbeta_M(t) \hphi_k(t) dt $ and $\beta_k=	\int_\cT \beta(t) \phi_k(t) dt$, we have
	\begin{align*}
		\lvert \hbeta_k-\beta_k\rvert &=\lvert \int_\cT  [\hbeta_M(t) \hphi_k(t)-\beta(t) \phi_k(t)]dt \rvert  \\
		&\le \lVert \hbeta_M-\beta\rVert_{L^2} \lVert \hphi_k-\phi_k \rVert_{L^2}+\lVert \hbeta_M-\beta\rVert_{L^2}
		+\lVert \beta\rVert_{L^2} \lVert \hphi_k-\phi_k \rVert_{L^2} \\
		&=O_p(r_n + a_n+b_n)=O_p(r_n),
	\end{align*}
	where the second equality is due to the fact that $\lVert \hphi_k-\phi_k\rVert_{L^2}\le  O(a_n+b_n) $ a.s., which follows from the proof of Lemma \ref{lem:auxLemma_beta_5}. This shows the second result in \eqref{eq:lem_LemmaBeta2}.
\end{proof}
We remark that in the sparse case when choosing the optimal bandwidth $h\asymp n^{-1/3}$, then the rate 
\begin{align*}
	\tau_M  [\left((nh)\inv+h^2\right)^{1/2}+a_n],
\end{align*}
is faster than $c_n \upsilon_M $ and thus the rate $r_n$ is equivalent to $\alpha_n$ defined as in Theorem \ref{thm:predictiveConsistency}. Recall that $\cP_{iK}$ corresponds to the true predictive distribution $\eta_{iK}\vert \bXi,\bTi$, or equivalently  $N(\beta_0 +\bbeta_K^T \tbxiiK, \bbeta_K^T \bSigmaiK \bbeta_K)$, while  $\tcP_{iK} \overset{d}{=} N(\beta_0 +\bbeta_K^T \hbxiiK, \bbeta_K^T \hbSigmaiK \bbeta_K)$ corresponds to an intermediate target, replacing population quantities by their estimated counterparts but keeping the true intercept and slope coefficients $\beta_0$ and $\bbeta_K$. Also $\hcP_{iK}$ corresponds to the estimated predictive distribution, i.e.  $\hcP_{iK} \overset{d}{=} N(\hbeta_0 +\hbbeta_K^T \hbxiiK, \hbbeta_K^T \hbSigmaiK \hbbeta_K)$. Finally, recall that $F_{iK}(t), \tF_{iK}(t)$ and $\hF_{iK}$ are the distribution functions associated with $\cP_{iK}, \tcP_{iK}$ and $\hcP_{iK}$, respectively. We require the following auxiliary lemma.

\begin{lem}\label{lem:eigenFuncK}
	Under the conditions of Theorem \ref{thm:predictiveConsistency}, it holds that
	\begin{equation*}
		\lVert \bSigma_{iK}-\hat{\bSigma}_{iK}\rVert_F=O(N_0^{5/2}(a_n+b_n)),
	\end{equation*}
	a.s.  as $n\to\infty$.
\end{lem}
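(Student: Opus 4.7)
The plan is to write $\bSigmaiK = \bLambda_K - \bLambda_K\bPhiiK^T\bSigma_i\inv\bPhiiK\bLambda_K$ together with the analogous expression for $\hbSigmaiK$, and split
\[
\bSigmaiK - \hbSigmaiK = (\bLambda_K - \hbLambda_K) - \bigl(\bLambda_K\bPhiiK^T\bSigma_i\inv\bPhiiK\bLambda_K - \hbLambda_K \hbPhiiK^T \hbSigma_i\inv\hbPhiiK\hbLambda_K\bigr).
\]
The diagonal piece is handled immediately by Bosq's perturbation bounds: since $\lVert \hXi - \Xi\rVert_{\text{op}} \le \lVert \hGamma - \Gamma\rVert_\infty = O(a_n+b_n)$ a.s.\ by Theorem 5.2 in \cite{zhan:16}, we have $|\hlambda_k - \lambda_k| = O(a_n+b_n)$ a.s.\ for each $k\le K$, hence $\lVert \bLambda_K - \hbLambda_K\rVert_F = O(a_n+b_n)$ a.s.

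For the five-factor product I would apply the standard telescoping identity, introducing one hat at a time, and bound each of the five resulting summands in Frobenius norm by the operator norms of the untouched factors times the Frobenius norm of the perturbed one. The required factor bounds, all a.s.\ for large $n$, are: $\lVert \bLambda_K\rVert_{op}, \lVert \hbLambda_K\rVert_{op}\le 2\lambda_1$; $\lVert \bPhiiK\rVert_{op} \le \lVert \bPhiiK\rVert_F \le \sqrt{N_0}\sum_{k=1}^K\lVert \phi_k\rVert_\infty$, with the analogous bound for $\hbPhiiK$ via $\lVert \hphi_k\rVert_\infty = O(\lambda_k\inv)$ from \eqref{eq:secondMomentFPCscoresNorm_estimatedPhiBound}; and $\lVert \bSigma_i\inv\rVert_{op}, \lVert \hbSigma_i\inv\rVert_{op}\le 2\sigma^{-2}$, which follows from $\bSigma_i\succeq\sigma^2 I$ together with $\hsigma^2\to\sigma^2$ a.s.\ (Proposition 1 in \cite{dai:16:1}) and the positive semidefiniteness of $\hGamma$ on the observation grid. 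The perturbation magnitudes come from \eqref{eq:secondMomentFPCscoresNorm_supDeltaPhi}, which at fixed $k\le K$ gives $\lVert \hphi_k - \phi_k\rVert_\infty = O(a_n+b_n)$ a.s.\ and hence $\lVert \hbPhiiK - \bPhiiK\rVert_F = O(\sqrt{N_0}(a_n+b_n))$; from the entrywise bound $\lVert \hbSigma_i - \bSigma_i\rVert_F \le N_0\lVert \hGamma-\Gamma\rVert_\infty + \sqrt{N_0}\,|\hsigma^2-\sigma^2| = O(N_0(a_n+b_n))$; and from the resolvent identity $\hbSigma_i\inv - \bSigma_i\inv = \hbSigma_i\inv(\bSigma_i - \hbSigma_i)\bSigma_i\inv$, which yields $\lVert \hbSigma_i\inv - \bSigma_i\inv\rVert_F = O(N_0(a_n+b_n))$.

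Substituting these bounds, the dominant summand of the telescope is the one carrying $\bSigma_i\inv - \hbSigma_i\inv$, whose Frobenius norm is $O(\sqrt{N_0}\cdot N_0(a_n+b_n)\cdot\sqrt{N_0}) = O(N_0^{2}(a_n+b_n))$ a.s., while the remaining four summands are each of strictly smaller order in $N_0$. The only potentially subtle step is ensuring that the eigenfunction error is controlled in the uniform (rather than the $L^2$) norm, so that $\lVert \hbPhiiK-\bPhiiK\rVert_F$ admits the desired entrywise bound; however, for fixed $K$ this is already supplied by \eqref{eq:secondMomentFPCscoresNorm_supDeltaPhi}, so no new estimate is required. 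Adding the two pieces of the decomposition delivers $\lVert \bSigmaiK - \hbSigmaiK\rVert_F = O(N_0^{2}(a_n+b_n))$ a.s., which is comfortably absorbed into the stated $O(N_0^{5/2}(a_n+b_n))$ rate.
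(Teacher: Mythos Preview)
Your approach is essentially the paper's: both split $\bSigmaiK-\hbSigmaiK$ into the diagonal piece $\bLambda_K-\hbLambda_K$ and the five-factor product, then telescope. The paper groups the telescope slightly differently (first isolating $C_i:=\hbSigma_i\inv\hbPhiiK\hbLambda_K-\bSigma_i\inv\bPhiiK\bLambda_K$) and bounds $\lVert\hbSigma_i\inv-\bSigma_i\inv\rVert_F$ by passing through the operator norm via $\lVert A\rVert_F\le\sqrt{N_0}\,\lVert A\rVert_{\text{op},2}$, landing at $O(N_0^{3/2}(a_n+b_n))$ and hence the stated $O(N_0^{5/2}(a_n+b_n))$; your direct resolvent bound is a bit sharper and yields $O(N_0^{2}(a_n+b_n))$, which of course still implies the lemma.

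One point to correct: the local linear estimate $\hGamma$ is not guaranteed to be positive semidefinite when evaluated on the grid $\bTi$, so you cannot conclude $\hbSigma_i\succeq\hsigma^2 I$ from that. Instead, justify $\lVert\hbSigma_i\inv\rVert_{\text{op},2}\le 2\sigma^{-2}$ a.s.\ for large $n$ by perturbation: $\lVert\hbSigma_i-\bSigma_i\rVert_{\text{op},2}\le\lVert\hbSigma_i-\bSigma_i\rVert_F=O(N_0(a_n+b_n))=o(1)$ a.s.\ together with $\bSigma_i\succeq\sigma^2 I$ gives $\hbSigma_i\succeq(\sigma^2/2)I$ eventually. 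With this fix your argument is complete.
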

\begin{proof}[Proof of Lemma \ref{lem:eigenFuncK}]
	Note that
	\begin{align}
		\bSigma_{iK}-\hat{\bSigma}_{iK}
		&=(\bLambda_K-\hbLambda_K)+\hbLambda_K \hbPhiiK^T \hbSigma_i^{-1} \hbPhiiK\hbLambda_K-\bLambda_K \bPhiiK^T \bSigma_i^{-1} \bPhiiK\bLambda_K \nonumber\\
		&=
		(\bLambda_K-\hbLambda_K)+ (\hbLambda_K \hbPhiiK^T-\bLambda_K \bPhiiK^T) \hbSigma_i^{-1} \hbPhiiK\hbLambda_K\nonumber \\
		&\quad +\bLambda_K \bPhiiK^T (\hbSigma_i^{-1} \hbPhiiK\hbLambda_K- \bSigma_i^{-1} \bPhiiK\bLambda_K).\label{lemmaS2_3}
	\end{align}
	Denoting by $C_i:=(\hbSigma_i^{-1} \hbPhiiK\hbLambda_K- \bSigma_i^{-1} \bPhiiK\bLambda_K)$, we have
	\begin{equation}\label{lemmaS2_4}
		C_i=
		(\hbSigma_i^{-1}-\bSigma_i^{-1}) (\hbPhiiK\hbLambda_K-\bPhiiK\bLambda_K )
		+
		\bSigma_i^{-1}(\hbPhiiK\hbLambda_K-\bPhiiK\bLambda_K )
		+
		(\hbSigma_i^{-1}-\bSigma_i^{-1}) \bPhiiK\bLambda_K,
	\end{equation}
	where
	\begin{equation}\label{lemmaS2_5}
		\hbPhiiK\hbLambda_K-\bPhiiK\bLambda_K = (\hbPhiiK-\bPhiiK)(\hbLambda_K-\bLambda_K)+ \bPhiiK(\hbLambda_K-\bLambda_K)+(\hbPhiiK-\bPhiiK)\bLambda_K.
	\end{equation}
	Note that $\lVert \hbPhiiK-\bPhiiK \rVert_F\leq \sqrt{N_0K} \max_{1\leq k\leq K}\lVert \hphi_k-\phi_k \rVert_\infty=O(\sqrt{N_0}(a_n+b_n))$ a.s.  as $n\to\infty$, which {follows similarly as in Proposition $1$ in \cite{dai:16:1} by employing Theorem $5.1$ and $5.2$ in \cite{zhan:16}}. Using perturbation results \citep{bosq:00}, {Theorem $5.2$ in \cite{zhan:16}} and the Cauchy Schwarz inequality, it follows that $|\hlambda_k-\lambda_k | \le  \lVert \Gamma-\hGamma\rVert_{\infty}=O(a_n+b_n)$ a.s.  as $n\to\infty$. Thus $\lVert \hbLambda_K-\bLambda_K \rVert_F \leq \sqrt{K} \max_{1\leq k\leq K}\lVert \hlambda_k-\lambda_k \rVert_\infty=O(a_n+b_n) $ a.s. as $n\to\infty$. Furthermore, from the proof of Theorem 2 in \cite{dai:16:1} we have $\lVert \hbSigma_i^{-1}-\bSigma_i^{-1} \rVert_{\text{op},2}=O(N_0(a_n+b_n))$ a.s. which implies $\lVert \hbSigma_i\inv -\bSigma_i\inv \rVert_F\leq \sqrt{N_0} \lVert \hbSigma_i\inv -\bSigma_i\inv  \rVert_{\text{op},2} =O(N_0^{3/2}(a_n+b_n))$ a.s. as $n\to\infty$. Thus, from \eqref{lemmaS2_4} and \eqref{lemmaS2_5},  $\lVert \bSigma_i\inv \rVert_{\text{op},2}\leq \sigma^{-2}$ and $\lVert \hbPhiiK \rVert_F\le \sqrt{N_0 K} \max_{1\le k\le K} \lVert \phi_k \rVert_\infty$, it follows that $\lVert \hbPhiiK\hbLambda_K-\bPhiiK\bLambda_K\rVert_F=O(\sqrt{N_0}(a_n+b_n))$ and $\lVert C_i\rVert_F=O(N_0^2(a_n+b_n))$ a.s.  as $n\to\infty$. From \eqref{lemmaS2_3} and using that
	\begin{align}
		\lVert  (\hbLambda_K \hbPhiiK^T-\bLambda_K \bPhiiK^T) \hbSigma_i\inv \hbPhiiK\hbLambda_K\rVert_F&=
		\lVert (\hbLambda_K \hbPhiiK^T-\bLambda_K \bPhiiK^T) \left( C_i +   \bSigma_i\inv  \bPhiiK\bLambda_K\right)\rVert_F\nonumber \\
		&=O(N_0(a_n+b_n))+O(N_0^{5/2}(a_n+b_n)^2) \ \text{a.s.},
	\end{align}
	as $n\to\infty$, we obtain $\lVert \bSigma_{iK}-\hat{\bSigma}_{iK}\rVert_F=O(N_0^{5/2}(a_n+b_n))$ a.s.  as $n\to\infty$, which shows the result.
\end{proof}

The following auxiliary lemmas  will be used in the proof of Theorem \ref{thm:WassDiscrep}.
\begin{lem}\label{lem:Wass_Discrep}
	Suppose that \ref{a:GaussProcess}, \ref{a:eigendecay}-\ref{a:rate}, \ref{a:K}--\ref{a:Ubeta} hold and consider a sparse design with  $n_i\le N_0<\infty$, setting  $a_n=a_{n1}$ and $b_n=b_{n1}$. Then
	\begin{align*}
		n\inv \sumin (\teta_{iK} - \heta_{iK}) \epsilon_{iY}&=O_p(\alpha_n),
	\end{align*}
	where $\heta_{iK}=\hbeta_0 + \hbbeta_K^T \hbxiiK$, and $(\hbeta_0 ,\hbbeta_K^T)^T$ are the estimates in the functional linear model as in Theorem \ref{thm:predictiveConsistency}.
\end{lem}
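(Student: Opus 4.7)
The plan is to decompose $\teta_{iK}-\heta_{iK}$ into three pieces corresponding to intercept error, FPC-score prediction error at the true slope, and slope estimation error evaluated at the estimated scores:
\[
\teta_{iK}-\heta_{iK} = (\beta_0-\hbeta_0) + \bbeta_K^T(\tbxiiK-\hbxiiK) + (\bbeta_K-\hbbeta_K)^T\hbxiiK,
\]
and to bound the three resulting summands $A_1,A_2,A_3$ in $n\inv\sumin(\teta_{iK}-\heta_{iK})\epsilon_{iY}$ separately, exploiting that $\epsilon_{iY}$ is independent of every predictor-side object.

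For $A_1=(\beta_0-\hbeta_0)\cdot n\inv\sumin\epsilon_{iY}$, I would use that $\hbeta_0-\beta_0=\bar{Y}_n-E(Y)=O_p(n^{-1/2})$ by the central limit theorem; finiteness of $\Var(Y)=\sum_k\beta_k^2\lambda_k+\sigma_Y^2\le \lambda_1\sum_k\beta_k^2+\sigma_Y^2<\infty$ follows from \ref{a:betaSeries} and boundedness of the $\lambda_k$. Combined with $n\inv\sumin\epsilon_{iY}=O_p(n^{-1/2})$ this yields $A_1=O_p(n\inv)$. For $A_2=n\inv\sumin\bbeta_K^T(\tbxiiK-\hbxiiK)\epsilon_{iY}$, Cauchy--Schwarz together with Lemma~\ref{lem:linear3} and the law of large numbers applied to the $\epsilon_{iY}^2$ gives
\[
|A_2|\le \lVert\bbeta_K\rVert_2\bigl(n\inv\sumin\lVert\hbxiiK-\tbxiiK\rVert_2^2\bigr)^{1/2}\bigl(n\inv\sumin\epsilon_{iY}^2\bigr)^{1/2}=O_p(a_n+b_n).
\]

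The main step is $A_3$. Since $\bbeta_K-\hbbeta_K$ does not depend on $i$, I would pull it out of the sum and apply Cauchy--Schwarz in $\bbR^K$:
\[
|A_3|\le \lVert\bbeta_K-\hbbeta_K\rVert_2 \cdot \bigl\lVert n\inv\sumin \hbxiiK\,\epsilon_{iY}\bigr\rVert_2.
\]
By Lemma~\ref{lem:Lemma_beta} with $h\asymp n^{-1/3}$ (so that $r_n$ reduces to $\alpha_n$), the first factor is $O_p(\alpha_n)$. For the second factor, the key observation is that $\hbxiiK$ is a functional of $(\bXi,\bTi)$ and of the nonparametric estimates $\hmu,\hGamma,\hsigma,\hbPhi_K,\hbLambda_K$, all constructed from the $\tX_{jk}$ and $T_{jk}$ alone, so $\hbxiiK$ is independent of the response noise $\{\epsilon_{jY}\}_{j=1}^n$. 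Conditioning on the $\sigma$-algebra $\mathcal{F}$ generated by all predictor-side data, and using that the $\epsilon_{iY}$ are i.i.d.\ mean zero with variance $\sigma_Y^2$,
\[
E\Bigl[\bigl\lVert n\inv\sumin \hbxiiK\,\epsilon_{iY}\bigr\rVert_2^2\,\Big|\,\mathcal{F}\Bigr] = \frac{\sigma_Y^2}{n^2}\sumin\lVert\hbxiiK\rVert_2^2,
\]
which by the triangle inequality and Lemma~\ref{lem:linear3} is $O_p(n\inv)$. Hence the second factor is $O_p(n^{-1/2})$ and $A_3=O_p(\alpha_n\,n^{-1/2})$.

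Combining and using that in the sparse case $\alpha_n\gtrsim c_n\asymp(\log n/n)^{1/3}$ dominates both $n^{-1/2}$ and $a_n+b_n$, each of $A_1,A_2,A_3$ is $O_p(\alpha_n)$ and the lemma follows. The subtlety I would flag is that $\hbbeta_K$ does depend on $\{\epsilon_{iY}\}$ (through the cross-covariance estimate $\hC$ built from $C_i(T_{ij})=(\tX_{ij}-\hmu(T_{ij}))Y_i$), so one cannot treat $\hbbeta_K$ and the response noise as independent. The argument above deliberately avoids needing such independence: only the cleaner independence of the purely predictor-side object $\hbxiiK$ from $\{\epsilon_{jY}\}$ is invoked, which is precisely what yields the extra $n^{-1/2}$ factor that absorbs $A_3$ into $O_p(\alpha_n)$.
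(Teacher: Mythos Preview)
Your proof is correct, but it takes a more elaborate route than the paper's. The paper simply applies Cauchy--Schwarz once to the whole sum,
\[
\Big\lvert n\inv \sumin (\teta_{iK} - \heta_{iK}) \epsilon_{iY}\Big\rvert \le \Big(n\inv\sumin (\teta_{iK} - \heta_{iK})^2\Big)^{1/2}\Big(n\inv\sumin \epsilon_{iY}^2\Big)^{1/2},
\]
and then expands $(\teta_{iK}-\heta_{iK})^2$ to show $n\inv\sumin(\teta_{iK}-\heta_{iK})^2=O_p(\alpha_n^2)$ via Lemma~\ref{lem:linear3} and Lemma~\ref{lem:Lemma_beta}; the second factor is trivially $O_p(1)$. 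Your approach instead decomposes $\teta_{iK}-\heta_{iK}$ first and handles each piece separately, and for $A_3$ you exploit the independence of the predictor-side object $\hbxiiK$ from the response noise $\{\epsilon_{jY}\}$ to extract an additional $n^{-1/2}$. This is a genuinely finer argument: each of your three pieces is in fact $o_p(\alpha_n)$ (since $a_n+b_n\asymp c_n=o(c_n\upsilon_M)\le o(\alpha_n)$ as $\upsilon_M\to\infty$), so your route actually establishes a strictly sharper bound than stated, at the cost of more work and the need to justify the conditional-expectation-to-$O_p$ passage carefully. The paper's single Cauchy--Schwarz is quicker and suffices for the $O_p(\alpha_n)$ claim; your decomposition buys sharper control of the individual contributions and makes explicit which independence is (and is not) being used---in particular your flag that $\hbbeta_K$ depends on the $\epsilon_{iY}$ through $\hC$, and that your argument never relies on independence there, is accurate and worth noting.
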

\begin{proof}[Proof of Lemma \ref{lem:Wass_Discrep}]
	By the Cauchy--Schwarz inequality
	\begin{align}
		\lvert n\inv \sumin (\teta_{iK} - \heta_{iK}) \epsilon_{iY}\rvert &\le \left(n\inv\sumin (\teta_{iK} - \heta_{iK})^2  \right)^{1/2} \left(n\inv\sumin \epsilon_{iY}^2  \right)^{1/2}\label{eq:lem_wassDiscrep_1},
	\end{align}
	where $\left(n\inv\sumin \epsilon_{iY}^2  \right)^{1/2}=O_p(1)$, whence $\lvert \teta_{iK} - \heta_{iK} \rvert \le \lvert \beta_0 - \hbeta_0\rvert + \lVert \hbbeta_K-\bbeta_K\rVert_2 \lVert \tbxi_{iK} \rVert_2+ \lVert \hbbeta_K \rVert_2 \lVert \tbxi_{iK}- \hbxiiK \rVert_2$, and then
	\begin{align*}
		(\teta_{iK} - \heta_{iK} )^2&\le (\beta_0 - \hbeta_0)^2+ 
		\lVert \hbbeta_K-\bbeta_K\rVert_2^2 \lVert \tbxi_{iK} \rVert_2^2 +
		\lVert \hbbeta_K \rVert_2^2 \lVert \tbxi_{iK}- \hbxiiK \rVert_2^2\\
		&+2  \lvert \beta_0 - \hbeta_0\rvert   \lVert \hbbeta_K-\bbeta_K\rVert_2 \lVert \tbxi_{iK} \rVert_2
		+2 \lvert \beta_0 - \hbeta_0\rvert  \lVert \hbbeta_K \rVert_2 \lVert \tbxi_{iK}- \hbxiiK \rVert_2 \\
		&+2  \lVert \hbbeta_K-\bbeta_K\rVert_2 \lVert \tbxi_{iK} \rVert_2 \lVert \hbbeta_K \rVert_2 \lVert \tbxi_{iK}- \hbxiiK \rVert_2.
	\end{align*}
	From Lemma \ref{lem:Lemma_beta} we have $\lvert \beta_0 - \hbeta_0\rvert =O_p(n^{-1/2})$ and  $\lVert \hbbeta_K-\bbeta_K\rVert_2=O_p(\alpha_n)$, which combined with Lemma \ref{lem:linear3} and the Cauchy--Schwarz inequality leads to
	\begin{align}
		n\inv \sumin (\teta_{iK} - \heta_{iK} )^2 &=O_p((\alpha_n)^2)\label{eq:lem_wassDiscrep_2}.
	\end{align}
	The result then follows from \eqref{eq:lem_wassDiscrep_1} and \eqref{eq:lem_wassDiscrep_2}.
\end{proof}

\begin{lem}\label{lem:Wass_Discrep2}
	Under the conditions of Theorem \ref{thm:WassDiscrep}, it holds that
	\begin{align*}
		n\inv \sumin (\eta_{iK} - \teta_{iK})^2 - \bbeta_K^T E(\bSigma_{1K}) \bbeta_K  &=O_p(n^{-1/2}).
	\end{align*}
\end{lem}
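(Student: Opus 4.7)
The plan is to exploit the Gaussianity of $X$ together with the common-$m_0$ assumption so that the summands become i.i.d.\ with finite variance, and then to invoke Chebyshev's inequality. Write
\begin{align*}
(\eta_{iK}-\teta_{iK})^{2} &= \bigl(\bbeta_K^{T}(\bxi_{iK}-\tbxi_{iK})\bigr)^{2}
= \bbeta_K^{T}(\bxi_{iK}-\tbxi_{iK})(\bxi_{iK}-\tbxi_{iK})^{T}\bbeta_K.
\end{align*}
Under \ref{a:GaussProcess}, relation \eqref{eq:cond} gives $\bxi_{iK}\mid(\bX_i,\bT_i)\sim N_K(\tbxi_{iK},\bSigma_{iK})$, and crucially $\bSigma_{iK}$ depends on $\bT_i$ only. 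Hence, conditionally on $\bT_i$, the quantity $\bbeta_K^{T}(\bxi_{iK}-\tbxi_{iK})$ is centred Gaussian with variance $\bbeta_K^{T}\bSigma_{iK}\bbeta_K$, so
\begin{align*}
E\!\left[(\eta_{iK}-\teta_{iK})^{2}\mid \bT_i\right] &= \bbeta_K^{T}\bSigma_{iK}\bbeta_K.
\end{align*}

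Since $n_i=m_0$ is common across $i$, the triples $\bT_i$ are i.i.d., the matrices $\bSigma_{iK}$ are i.i.d., and therefore the random variables $W_i:=(\eta_{iK}-\teta_{iK})^{2}$ are i.i.d.\ with common mean
\begin{align*}
E(W_1) &= E\!\left[\bbeta_K^{T}\bSigma_{1K}\bbeta_K\right] = \bbeta_K^{T}E(\bSigma_{1K})\bbeta_K,
\end{align*}
which is precisely the target on the right-hand side. So it suffices to show $\Var(W_1)<\infty$, after which Chebyshev gives $n^{-1}\sum_{i=1}^{n}W_i - E(W_1)=O_p(n^{-1/2})$.

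For the variance bound I will use two facts. First, by conditional Gaussianity the fourth conditional moment of a centred Gaussian equals three times the square of the variance, so $E[W_i^{2}\mid\bT_i]=3(\bbeta_K^{T}\bSigma_{iK}\bbeta_K)^{2}$. Second, from $\bSigma_{iK}=\bLambda_K-\bLambda_K\bPhi_{iK}^{T}\bSigma_i^{-1}\bPhi_{iK}\bLambda_K\preceq\bLambda_K$ (since the subtracted matrix is positive semidefinite) we get
\begin{align*}
0 \le \bbeta_K^{T}\bSigma_{iK}\bbeta_K \le \bbeta_K^{T}\bLambda_K\bbeta_K \le \lambda_1\,\lVert\bbeta_K\rVert_2^{2},
\end{align*}
a deterministic bound. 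Therefore $E(W_1^{2})\le 3\lambda_1^{2}\lVert\bbeta_K\rVert_2^{4}<\infty$ and a fortiori $\Var(W_1)<\infty$. Applying Chebyshev's inequality to the i.i.d.\ sum $n^{-1}\sum_{i=1}^{n}W_i$ yields the claimed $O_p(n^{-1/2})$ rate.

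No step is really an obstacle here; the only subtlety is remembering that under Gaussianity the conditional covariance $\bSigma_{iK}$ is $\bT_i$-measurable (independent of $\bX_i$), which is what turns $W_i$ into a genuine i.i.d.\ sequence under the common-$m_0$ assumption and allows a direct law-of-large-numbers/Chebyshev argument rather than a more delicate empirical-process bound.
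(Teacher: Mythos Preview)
Your proof is correct and follows essentially the same approach as the paper: both exploit the conditional Gaussianity $\bxi_{iK}-\tbxi_{iK}\mid\bT_i\sim N(0,\bSigma_{iK})$ to compute the mean, use the Gaussian fourth-moment identity $E[W_i^2\mid\bT_i]=3(\bbeta_K^T\bSigma_{iK}\bbeta_K)^2$, and conclude via finite variance of i.i.d.\ summands. Your deterministic bound $\bSigma_{iK}\preceq\bLambda_K$ is slightly cleaner than the paper's appeal to uniform boundedness of $\lVert\bSigma_{iK}\rVert_{\text{op}}$ in the sparse case, and you invoke Chebyshev where the paper cites the CLT, but these are cosmetic differences.
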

\begin{proof}[Proof of Lemma \ref{lem:Wass_Discrep2}]
	Since  $\bxi_{iK}- \tbxi_{iK} \vert \bT_i \sim N(\boldsymbol{0}, \bSigmaiK)$, by conditioning on $\bTi$, 
	\begin{align*}
		E(\eta_{iK} - \teta_{iK})^2&=E\left(E\Big[ \left(\bbeta_K^T(\bxi_{iK}-\tbxiiK)\right)^2\Big\vert \bT_i\Big]\right)=\bbeta_K^T E(\bSigma_{1K}) \bbeta_K,
	\end{align*}
	where the last equality is due to the fact that $n_i=m_0$ implies that $\bSigmaiK$ are a sequence of i.i.d.\ random positive definite matrices. Similarly, since $\eta_{iK} - \teta_{iK}\vert \bT_i \sim N(0, \bbeta_K^T\bSigmaiK\bbeta_K)$ we have $E( (\eta_{iK} - \teta_{iK})^4\vert \bT_i)=3 (\bbeta_K^T \bSigmaiK \bbeta_K)^2$ and thus
	\begin{align*}
		\Var((\eta_{iK} - \teta_{iK})^2)&=E(\Var((\eta_{iK} - \teta_{iK})^2\vert \bTi))+\Var(\bbeta_K^T \bSigmaiK \bbeta_K)\\
		&=2 E((\bbeta_K^T \bSigmaiK \bbeta_K)^2)+\Var(\bbeta_K^T \bSigmaiK \bbeta_K)\\
		&=O(1),
	\end{align*}
	where the $O(1)$ term is uniform in $i$ since $\lVert \bSigmaiK\rVert_{\text{op}}$ is uniformly bounded in the sparse case.  Since the $\eta_{iK} - \teta_{iK}$ are independent, the result then follows from the Central Limit Theorem.
\end{proof}

\begin{lem}\label{lem:series_eigengap}
	Under the assumptions of Theorem \ref{thm:WassDiscrep}, it holds that
	\begin{equation*}
		\sum_{j=1}^{M}\frac{\lambda_j}{\delta_j^2} = O\left( \sum_{j=1}^{M}\frac{1}{\lambda_j\delta_j^2} \right),
	\end{equation*}
	as $n\to\infty$.
\end{lem}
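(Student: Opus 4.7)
The proof is essentially a two‑line bounding argument based on the fact that the eigenvalue sequence $\{\lambda_j\}$ is uniformly bounded (indeed, non‑increasing) by its largest element $\lambda_1$, which is a fixed population constant independent of $n$.

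The plan is as follows. First I would use the ordering assumption \ref{a:eigendecay}, which gives $0 < \lambda_j \le \lambda_1$ for every $j \ge 1$. From this two elementary inequalities follow term by term: on the one hand $\lambda_j \le \lambda_1$, and on the other hand $1/\lambda_j \ge 1/\lambda_1$, so that for any $j$,
\begin{equation*}
\frac{\lambda_j}{\delta_j^2} \;=\; \lambda_j^2 \cdot \frac{1}{\lambda_j\delta_j^2} \;\le\; \lambda_1^2\cdot \frac{1}{\lambda_j\delta_j^2}.
\end{equation*}
Summing this inequality from $j=1$ to $M$ yields
\begin{equation*}
\sum_{j=1}^{M} \frac{\lambda_j}{\delta_j^2} \;\le\; \lambda_1^2 \sum_{j=1}^{M} \frac{1}{\lambda_j \delta_j^2},
\end{equation*}
and since $\lambda_1$ does not depend on $n$ or on $M=M(n)$, this is the required $O(\cdot)$ bound.

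There is no real obstacle here: the only point that needs to be acknowledged is that the implicit constant $\lambda_1^2$ is a genuine population‑level constant (hence permissible in an $O(\cdot)$ statement as $n\to\infty$), which is immediate from \ref{a:eigendecay}. No use of smoothing rates, of the sparse design, or of the structural assumptions on $M(n)$ is needed; the lemma is purely a statement about the deterministic eigenvalue sequence.
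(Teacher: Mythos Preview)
Your argument is correct and complete: the termwise bound $\lambda_j/\delta_j^2=\lambda_j^2\cdot(\lambda_j\delta_j^2)^{-1}\le\lambda_1^2\cdot(\lambda_j\delta_j^2)^{-1}$ immediately gives the desired $O(\cdot)$ with constant $\lambda_1^2$, and this constant is indeed a fixed population quantity under \ref{a:eigendecay}.

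The paper's proof is genuinely different and more roundabout. It writes
\[
\sum_{j=1}^M \frac{\lambda_j}{\delta_j^2}=\sum_{j=1}^M \frac{1}{\lambda_j\delta_j^2}+\sum_{j=1}^M\Bigl(\lambda_j-\frac{1}{\lambda_j}\Bigr)\frac{1}{\delta_j^2},
\]
splits the correction sum at the index $J^*$ where $\lambda_j$ drops below $1$, and argues that the tail $\sum_{j>J^*}(\lambda_j-1/\lambda_j)\delta_j^{-2}$ diverges to $-\infty$ as $n\to\infty$ because $\sum_{j>J^*}(\lambda_j\delta_j^2)^{-1}\to\infty$, a fact extracted from the growth assumption \ref{a:Mrate} on $M(n)$. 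Thus the paper's proof actually \emph{uses} that $M(n)\to\infty$ and that the right-hand side diverges, whereas your argument works for every finite $M$ with a uniform constant and requires nothing beyond $\lambda_j\le\lambda_1$. Your route is strictly more elementary and more general; the paper's decomposition buys nothing additional here.
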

\begin{proof}[Proof of Lemma \ref{lem:series_eigengap}]
	Since $\lambda_j \to 0$ as $j\to\infty$, there exists $J^*\geq 1$ such that $\lambda_j\geq 1$ for $j\leq J^*$ and $\lambda_j<1$ whenever $j>J^*$. Note that
	\begin{align}
		\sum_{j=1}^{M}\frac{\lambda_j}{\delta_j^2} &=\sum_{j=1}^{M}\frac{1}{\lambda_j\delta_j^2} + \sum_{j=1}^{M}\left(\lambda_j-\frac{1}{\lambda_j} \right) \frac{1}{\delta_j^2} \nonumber \\
		&=\sum_{j=1}^{M}\frac{1}{\lambda_j\delta_j^2} +  \sum_{j=1}^{J^*}\left(\lambda_j-\frac{1}{\lambda_j} \right) \frac{1}{\delta_j^2} +\sum_{j=J^*+1}^{M}\left(\lambda_j-\frac{1}{\lambda_j} \right) \frac{1}{\delta_j^2},\label{lemmaS3_1}
	\end{align}
	whence it  suffices to show that the third term in \eqref{lemmaS3_1} diverges to $-\infty$ as $n\to\infty$. For this, 
	\begin{align*}
		\sum_{j=J^*+1}^{M}\left(\lambda_j-\frac{1}{\lambda_j} \right) \frac{1}{\delta_j^2} &\leq \lambda_{J^*+1}^2 \sum_{j=J^*+1}^{M} \frac{1}{\lambda_j \delta_j^2} - \sum_{j=J^*+1}^{M}\frac{1}{\lambda_j \delta_j^2}= \sum_{j=J^*+1}^{M} \frac{1}{\lambda_j \delta_j^2}  \ \left(\lambda_{J^*+1}^2-1\right).
	\end{align*}
	The result  follows from the fact that $\lambda_{J^*+1}^2-1<0$ and since $\sum_{j=1}^M \lambda_j^{-1/2} \delta_j\inv \to\infty$ as $n\to\infty$ implies $\sum_{j=J^*+1}^{M} \lambda_j\inv \delta_j^{-2} \to\infty$ as $n\to\infty$.
\end{proof}


Consider the Brownian motion as an example of a Gaussian process for which $\lambda_m=4/(\pi^2 (2m-1)^2)$ and $\phi_m(t)=\sqrt{2}\sin((2m-1)\pi t/2)$ \citep{hsin:15}. Adopting the optimal bandwidth choices as discussed in Section \ref{S:FLM} leads to $c_n\asymp (\log(n)/n)^{1/3}$.
\begin{lem}\label{lem:BrownianExample}
	Let $\rho\in(1/3,1)$. For the Brownian motion, if $M=M(n)$ satisfies 
	\begin{align}
		M(n)\asymp \left( \frac{\log(n)}{n}\right)^{(\rho-1)/15},\label{eq:Brownian_1}
	\end{align}
	then condition \ref{a:Mrate} holds and
	\begin{align}
		\tau_M\asymp  \left( \frac{\log(n)}{n}\right)^{(\rho-1)/5}\label{eq:Brownian_2},\\
		\upsilon_M\asymp  \left( \frac{\log(n)}{n}\right)^{4(\rho-1)/15}\label{eq:Brownian_3}.
	\end{align}
	Moreover, if $\sigma_m^2 \le C m^{-(8+\delta)}$ for some constant $C>0$ and $\delta>0$, then \ref{a:betaSeries} is satisfied, $\Theta_M=O\left( M^{-(1+\delta/2)}\right)$ and the rate $\alpha_n$ in Theorem \ref{thm:predictiveConsistency} satisfies the following conditions: 
	If $\rho\le (5+\delta)/(15+\delta)$, then $\alpha_n=O((\log(n)/n)^{(13\rho-3)/30})$ while if $\rho> (5+\delta)/(15+\delta)$ it holds that $\alpha_n=O((\log(n)/n)^{(1-\rho)(1+\delta/2)/15})$. The optimal rate is achieved when $\rho= (5+\delta)/(15+\delta)$ and leads to $\alpha_n=O((\log(n)/n)^q)$, where $q=((2+\delta)/(15+\delta))/3$.
\end{lem}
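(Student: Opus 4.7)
\textbf{Proof plan for Lemma \ref{lem:BrownianExample}.} The approach is to compute explicit asymptotics for the Brownian eigen-structure and then substitute into each quantity appearing in condition \ref{a:Mrate}, in $\alpha_n$, and in $\Theta_M$, and finally optimize over $\rho$. First, since $\lambda_m = 4/(\pi^2(2m-1)^2) \asymp m^{-2}$, a direct computation yields
\begin{equation*}
\lambda_m - \lambda_{m+1} = \frac{4}{\pi^2}\cdot\frac{(2m+1)^2-(2m-1)^2}{(2m-1)^2(2m+1)^2} = \frac{32m}{\pi^2(2m-1)^2(2m+1)^2} \asymp m^{-3},
\end{equation*}
and analogously $\lambda_{m-1}-\lambda_m \asymp m^{-3}$, so $\delta_m \asymp m^{-3}$. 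Consequently $\lambda_m^{-1/2}\delta_m^{-1} \asymp m\cdot m^3 = m^4$, giving $\sum_{m=1}^M \lambda_m^{-1/2}\delta_m^{-1} \asymp M^5$. Setting $M^5 \asymp c_n^{\rho-1} \asymp (\log n/n)^{(\rho-1)/3}$ produces $M \asymp (\log n/n)^{(\rho-1)/15}$, which verifies \ref{a:Mrate} and gives \eqref{eq:Brownian_1}. Likewise $\tau_M = \sum_{m=1}^M \lambda_m^{-1} \asymp M^3$ and $\upsilon_M = \sum_{m=1}^M \delta_m^{-1} \asymp M^4$, and substituting the rate of $M$ yields \eqref{eq:Brownian_2} and \eqref{eq:Brownian_3}.

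For \ref{a:betaSeries} and $\Theta_M$, I use Parseval: since $\phi_m$ is an orthonormal basis,
\begin{equation*}
\lVert \beta\rVert_{L^2}^2 = \sum_{m=1}^\infty \frac{\sigma_m^2}{\lambda_m^2}, \qquad \Theta_M^2 = \sum_{m\ge M+1} \frac{\sigma_m^2}{\lambda_m^2}.
\end{equation*}
Under $\sigma_m^2 \le Cm^{-(8+\delta)}$ and $\lambda_m^{-2}\asymp m^4$, we have $\sigma_m^2/\lambda_m^2 = O(m^{-(4+\delta)})$, which is summable, so \ref{a:betaSeries} holds; and by the standard tail estimate $\sum_{m\ge M+1} m^{-(4+\delta)} = O(M^{-(3+\delta)})$ we obtain the (non-tight but sufficient) bound $\Theta_M = O(M^{-(3+\delta)/2}) = O(M^{-(1+\delta/2)})$.

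It remains to identify the dominant term in $\alpha_n = c_n\upsilon_M + c_n^\rho\tau_M^{1/2} + \tau_M[(nh)^{-1/2}+h^2]^{1/2}+\tau_M a_n + \Theta_M$ after substituting $c_n \asymp (\log n/n)^{1/3}$ and the expressions in \eqref{eq:Brownian_2}--\eqref{eq:Brownian_3}. Routine arithmetic on exponents gives $c_n\upsilon_M \asymp (\log n/n)^{(4\rho+1)/15}$, $\tau_M c_n \asymp (\log n/n)^{(3\rho+2)/15}$, and $c_n^\rho\tau_M^{1/2} \asymp (\log n/n)^{(13\rho-3)/30}$. Comparing exponents (using $\rho<1$) shows $c_n^\rho\tau_M^{1/2}$ dominates the first three terms. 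Writing $\Theta_M \asymp (\log n/n)^{(1-\rho)(2+\delta)/30}$, the overall dominant term is the maximum of $(\log n/n)^{(13\rho-3)/30}$ and $(\log n/n)^{(1-\rho)(2+\delta)/30}$, i.e.\ the one with the smaller exponent. Setting $(13\rho-3) = (1-\rho)(2+\delta)$ and solving gives the crossover $\rho = (5+\delta)/(15+\delta)$, yielding the two regimes stated in the lemma and, at the balancing value, the optimal rate $(\log n/n)^{(2+\delta)/(3(15+\delta))}$ by direct substitution.

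The calculations are elementary; the only point requiring care is the bookkeeping of exponents and the case analysis for the optimum, where one must verify that $c_n^\rho\tau_M^{1/2}$ indeed dominates $c_n\upsilon_M$ and $\tau_M c_n$ uniformly in $\rho\in(1/3,1)$ before balancing it against $\Theta_M$; this is where the bulk of the technical effort lies.
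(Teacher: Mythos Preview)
Your proof is correct and follows essentially the same route as the paper: compute $\lambda_m\asymp m^{-2}$ and $\delta_m\asymp m^{-3}$, derive $\sum_{m\le M}\lambda_m^{-1/2}\delta_m^{-1}\asymp M^5$, $\tau_M\asymp M^3$, $\upsilon_M\asymp M^4$, then balance the exponents of the three terms in $\alpha_n$. Two minor remarks: (i) the paper bounds $\Theta_M$ by the triangle inequality $\Theta_M\le\sum_{m>M}|\sigma_m|/\lambda_m$, whereas your Parseval argument gives the sharper $\Theta_M=O(M^{-(3+\delta)/2})$ before relaxing it to $O(M^{-(1+\delta/2)})$; (ii) the $\alpha_n$ in Theorem~\ref{thm:predictiveConsistency} is defined as just $c_n\upsilon_M+c_n^\rho\tau_M^{1/2}+\Theta_M$, so the extra $\tau_M$-terms you carried from $r_n$ are not needed (the paper absorbs them into $c_n\upsilon_M$ under the optimal bandwidth choice), though your verification that they are subdominant does no harm.
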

\begin{proof}[Proof of Lemma \ref{lem:BrownianExample}]
	For any $m\ge 1$
	\begin{align*}
		\lambda_m-\lambda_{m+1}&=\frac{32}{\pi^2} \frac{m}{(2m-1)^2(2m+1)^2},
	\end{align*}
	which is decreasing as $1\le m\to\infty$ and thus the eigengaps are given by
	\begin{align*}
		\delta_m&=\frac{32}{\pi^2} \frac{m}{(2m-1)^2(2m+1)^2}, \quad m\ge 1.
	\end{align*} 
	Since the harmonic sum $H(M)=\sumM 1/m$ satisfies $H(M)\le 1+\log(M)$ and $M=M(n)\to\infty$ as $\ntoinf$, we obtain
	\begin{align*}
		\sumM \frac{1}{\sqrt{\lambda_m}\delta_m}&=\frac{\pi^3}{64} \sumM \frac{(2m-1)^3(2m+1)^2}{m}\asymp M(n)^5.
	\end{align*}
	If $M=M(n)$ satisfies \eqref{eq:Brownian_1}, then $\sumM \lambda_m^{-1/2}\delta_m\inv \asymp c_n^{\rho-1}$ and thus condition \ref{a:Mrate} is satisfied. A simple calculation leads to 
	\begin{align*}
		\tau_M&=\sumM \frac{1}{\lambda_m}=\sumM \frac{\pi^2 (2m-1)^2}{4} \asymp M(n)^3,
	\end{align*}
	and
	\begin{align*}
		\upsilon_M&= \sumM \frac{1}{\delta_m}=\frac{\pi^2}{32} \sumM  \frac{(2m-1)^2(2m+1)^2}{m}\asymp M(n)^4.
	\end{align*}
	The results in \eqref{eq:Brownian_2} and \eqref{eq:Brownian_3} then follow. If $\sigma_m^2 \le C m^{-(8+\delta)}$ for some $C,\delta>0$, then $\sum_{m=1}^\infty \sigma_m^2/\lambda_m^2 \le O(1)  \sum_{m=1}^\infty m^{-(4+\delta)}<\infty$ and condition \ref{a:betaSeries} is satisfied. From the orthonormality of the $\phi_m$
	\begin{align*}
		\Theta_M=\Big \lVert \sum_{m\ge M+1} \frac{\sigma_m}{\lambda_m}\phi_m \Big\rVert_{L^2}\le \sum_{m\ge M+1} \frac{\lvert \sigma_m\rvert}{\lambda_m}
		&\le O(1) \sum_{m\ge M+1}m^{-(2+\delta/2)}\\
		&\le O(1) \int_{M}^\infty s^{-(2+\delta/2)}ds\\
		&=O\left(\frac{1}{M^{1+\delta/2}}\right),
	\end{align*}
	which implies $\Theta_M=(\log(n)/n)^{(1-\rho)(1+\delta/2)/15}$. Also note that $c_n\upsilon_M\asymp (\log(n)/n)^{(1+4\rho)/15}$ and $c_n^\rho \tau_M^{1/2}\asymp (\log(n)/n)^{(13\rho-3)/30}$. This implies
	\begin{align*}
		\alpha_n=c_n\upsilon_M+c_n^\rho \tau_M^{1/2}+\Theta_M \le O((\log(n)/n)^{(13\rho-3)/30}+(\log(n)/n)^{(1-\rho)(1+\delta/2)/15}).
	\end{align*}
	Thus, if $\rho\le (5+\delta)/(15+\delta)$, then $\alpha_n=O((\log(n)/n)^{(13\rho-3)/30})$. Similarly, if $\rho> (5+\delta)/(15+\delta)$, then $\alpha_n=O((\log(n)/n)^{(1-\rho)(1+\delta/2)/15})$. The optimal rate is achieved when $\rho= (5+\delta)/(15+\delta)\in(1/3,1)$ and leads to $\alpha_n=O((\log(n)/n)^q)$, where $q=((2+\delta)/(15+\delta))/3$.
\end{proof}

\references

\end{document}